\newcommand{\Z}{{\mathbb{Z}}}
\newcommand{\R}{\mathbb{R}}
\newcommand{\C}{\mathbb{C}}
\newcommand{\N}{\mathbb{N}}
\newcommand{\tr}{{\mathrm{tr}}}
\newcommand{\Tr}{\operatorname{Tr}}
\newtheorem{theorem}{Theorem}
\newtheorem{definition}[theorem]{Definition}
\newtheorem{lemma}[theorem]{Lemma}
\newtheorem{example}[theorem]{Example}
\begin{document}

\title{Thermalization and canonical typicality in translation-invariant quantum lattice systems}

\author{Markus P.\ M\"uller}
\affiliation{Institut f\"ur Theoretische Physik, Universit\"at Heidelberg, Philosophenweg 19, D-69120 Heidelberg, Germany}
\affiliation{Department of Applied Mathematics, Department of Philosophy, University of Western Ontario, 1151 Richmond Street, London, ON N6A 5B7, Canada}
\affiliation{Perimeter Institute for Theoretical Physics, 31 Caroline Street North, Waterloo, ON N2L 2Y5, Canada}
\author{Emily Adlam}
\affiliation{Perimeter Institute for Theoretical Physics, 31 Caroline Street North, Waterloo, ON N2L 2Y5, Canada}
\affiliation{Centre for Quantum Information and Foundations, DAMTP, Centre for Mathematical Sciences, University of Cambridge, Wilberforce Road, Cambridge, CB3 0WA, U.K.}
\author{Llu\'is Masanes}
\affiliation{H.\ H.\ Wills Physics Laboratory, University of Bristol, Bristol BS8 1TL, UK}
\affiliation{University College London, Department of Physics \& Astronomy, London WC1E 6BT, UK}
\author{Nathan Wiebe}
\affiliation{Quantum Architectures and Computation Group, Microsoft Research, Redmond, WA 98052, USA}
\affiliation{Institute for Quantum Computing, University of Waterloo, Waterloo, Ontario, N2L 3G1, Canada}
\affiliation{Department of Combinatorics \& Opt., University of Waterloo, Waterloo, Ontario, N2L 3G1, Canada}

\date{September 9, 2015, \textbf{correction added on March 30, 2021}}

\begin{abstract}
It has previously been suggested that small subsystems of closed quantum systems thermalize under some assumptions;
however, this has been rigorously shown so far only for systems with very weak interaction between subsystems.
In this work, we give rigorous analytic results on
thermalization for translation-invariant quantum lattice systems with finite-range interaction of arbitrary strength,
in all cases where there is a unique equilibrium state at the corresponding temperature.
We clarify the physical picture by showing that subsystems relax towards the reduction of the global Gibbs state, not the local Gibbs state,
if the initial state has close to maximal population entropy and certain non-degeneracy conditions on the spectrum are satisfied.
Moreover, we show that almost all pure states with support on a small energy window are locally thermal in the sense of canonical typicality.
We derive our results from a statement on equivalence of ensembles generalizing earlier results by Lima, and
give numerical and analytic finite-size bounds, relating the Ising model to the finite de Finetti theorem. Furthermore,
we prove that global energy eigenstates are locally close to diagonal in the local energy eigenbasis, which
constitutes a part of the eigenstate thermalization hypothesis that is valid regardless of the integrability of the model.
\end{abstract}

\maketitle

\tableofcontents

\parskip .75ex

\section{Introduction}

How do closed quantum systems thermalize?
The last few years have seen a resurgence of interest in this old question,
motivated by new experimental~\cite{Trotzky} and numerical~\cite{Banuls} methods, relying on new ideas and methods from quantum information
theory~\cite{Goldstein,Popescu,Reimann,LindenPopescu,Short,Riera,ShortFarrelly,UWE13}.
Clearly, closed quantum systems in any given pure initial state cannot literally thermalize: unitary time evolution enforces that the global
state remains pure and will never become thermal, unless there is at least a tiny interaction with some environment.
However, small subsystems of closed quantum systems can equilibrate in a certain sense, as entanglement between the subsystem
and its remainder will lead to locally mixed states, and one may hope that these will in many cases resemble the ensembles of statistical physics.

Along these lines, it was suggested in~\cite{Goldstein} that typical pure quantum states in many-body systems resemble thermal states
on small subsystems due to entanglement, a property called ``canonical typicality''. However, no rigorous mathematical formulation of this was given in~\cite{Goldstein}.
Almost at the same time, it was rigorously proven in~\cite{Popescu} that typical pure quantum states in subspaces of bipartite Hilbert spaces are locally
close to some equilibrium state. However, this equilibrium state is not thermal in general. This raises the question what conditions are needed to
ensure that the local equilibrium state will be thermal, i.e.\ a Gibbs state.

In addition to these kinematical results, there has been major progress in understanding how closed quantum systems equilibrate
dynamically~\cite{Reimann,LindenPopescu,Short,ShortFarrelly,UWE13}. Regarding the emergence of the Gibbs state, the situation is
similar to the kinematical case: the subsystems approach some equilibrium state (for most times in some time interval), which is however not thermal in general.
The question is thus the same: under what conditions will the equilibrium state be thermal?

Important progress on this question was made in~\cite{Riera}: a rigorous bound on the distance $\mathcal{D}$ between the local equilibrium state and a thermal state was
established. This result has two drawbacks, however. First, the given bound is rather cumbersome, which is due to the great generality
of considering arbitrary Hamiltonians. Second, and more importantly, the upper bound on the distance $\mathcal{D}$ grows with the operator norm
of the interaction Hamiltonian which couples the subsystem to its surroundings. Thus, the bound becomes trivial as soon as the boundary of the
subsystem becomes moderately large, or the interaction becomes strong.

In this work, we give rigorous analytic proofs of dynamical and kinematic formulations of thermalization for interactions of
finite range, but arbitrary strength. By restricting to the special case of translation-invariant lattice systems as in Fig.~\ref{fig_setup},
we are able to prove the common belief that small subsystems are indeed close to thermal, under various natural conditions
on the spectrum and the initial state that depend on the specific setup and boundary conditions.
Our work also clarifies how thermalization should generally be formalized by showing that the resulting state will in general not be the local Gibbs state; rather,
it is the reduction of the global system's Gibbs state. This identification is made clear from the fact that the expected distance between the local reduced state and the
thermal state goes to zero in the thermodynamic limit.  In contrast, we show that boundary effects cause the local Gibbs state in general
to remain distinct from the thermal state even in the thermodynamic limit. This shows why earlier work led to bounds on the distance that necessarily grow with the interaction strength.

\begin{figure}[!hbt]
\begin{center}
\includegraphics[angle=0, width=7cm]{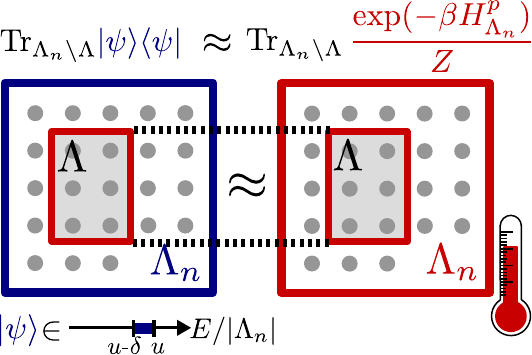}
\caption{Canonical typicality. A rectangular lattice $\Lambda_n$ evolves according to a translation-invariant finite-range interaction Hamiltonian $H_{\Lambda_n}^p$,
where ``$p$'' is for periodic boundary conditions (the case of arbitrary boundary conditions is treated in the Section~\ref{Sec3}).
If $|\psi\rangle$ is a generic state occupying only energies $E$ with $u-\delta\leq E/|\Lambda_n|\leq u$,
then small subsystems $\Lambda\subset\Lambda_n$ will, for large $n$, behave as if the full system was in a Gibbs state of the
corresponding temperature, for all possible measurements in the subsystem. Dynamically, the same will be true for $|\psi(t)\rangle$ for most times $t$ if
the initial state $|\psi(0)\rangle$ has close to maximal population entropy, and the spectrum satisfies certain non-degeneracy conditions.
}
\label{fig_setup}
\end{center}
\end{figure}

We are further able to provide concrete finite-size bounds, rather than asymptotic bounds, for two cases of interest.
We give tight analytic bounds for the distance between the reduction of a typical global pure state and
the local Gibbs state in the non-interacting case, which already turns out
to be a non-trivial problem, and we give numerical finite-size estimates for interacting models in one lattice dimension.
Building on the results by Low~\cite{Low}, we also show that the kinematical result on canonical typicality
remains true if global pure states are not drawn with respect to the unitarily invariant measure (which is hard to implement),
but according to an approximation of this measure (an ``$8$-design'') that can be sampled efficiently.

Finally, we address the question whether the given thermalization results can hold even on the level of single
energy eigenstates, as conjectured in the eigenstate thermalization hypothesis (ETH)~\cite{Deutsch, Srednicki}.
In a nutshell, the ETH claims that global energy eigenstates are locally close to a thermal state.
It is easy to see that the ETH cannot be true for all the models that we consider, and that additional assumptions (along the
lines of nonintegrability) are needed. However, we prove a result that constitutes
a part of the ETH which is true for all models with finite interaction range: global energy eigenstates are locally close to diagonal in the local energy eigenbasis.
We hope that this result (proven via Lieb-Robinson bounds) may serve as a first step towards a complete resolution of the ETH in future work.

\section{Summary of the main results}
\label{SecSummary}

We provide a self-contained summary of the main results of this paper in this section, focusing on periodic boundary conditions. The case of arbitrary boundary
conditions will be treated in Section~\ref{Sec3}.
While the detailed definitions will be given in Section~\ref{Sec3} (and are close to~\cite{Simon}), here we describe the setup and notation in a less formal way.

Our work considers the thermalization of interacting $d$-dimensional systems in a cubic or rectangular lattice in $\nu$ spatial dimensions.
These spins are constrained to interact with each other via finite-range translationally invariant Hamiltonians with arbitrary boundary conditions.
Although these restrictions are stringent, many models relevant to condensed matter physics, such as the Ising and Heisenberg models, satisfy these requirements.

We introduce the following notation to describe the lattice. We define the set of lattice sites to be $\Lambda:=[\lambda_1,\mu_1]\times\ldots
\times[\lambda_\nu,\mu_\nu]$, where $[\lambda,\mu]\subset\Z$ denotes the interval of integers between $\lambda$ and $\mu\geq\lambda$.
In particular, we consider sequences of regions $\Lambda_1\subset\Lambda_2\subset\Lambda_3\ldots$
that converge to the full infinite lattice $\mathbb{Z}^\nu$; for example, we may have the sequence of hypercubes $\Lambda_n=[-n,n]^\nu$. The physical interpretation is that
a region $\Lambda_n$ describes the actual physical system in the laboratory, and a subregion $\Lambda\subset\Lambda_n$ describes a small subsystem, cf.\ Fig.~\ref{fig_setup}.
The number of sites in a region $\Lambda$ is denoted $|\Lambda|$.  The ``particles'' located at each of these sites carry a $d$-dimensional Hilbert space $\C^d$.

Time evolution in $\Lambda_n$ is determined by a Hamiltonian $H_{\Lambda_n}^{BC}$ where the superscript explicitly denotes the type of boundary conditions that the Hamiltonian satisfies.
The choice of Hamiltonian is subject to some conditions defined as follows.
To every finite region $X\subset\Z^\nu$, we associate a self-adjoint operator $h_X$, and define the Hamiltonian with open boundary conditions
to be $H_\Lambda:=\sum_{X\subset\Lambda}h_X$.
We assume translation-invariance, i.e.\ $h_{X+y}$ equals $h_X$ (translated to the corresponding lattice sites), and finite-range of interaction,
i.e.\ there is some $r<\infty$ such that $h_X=0$ whenever the diameter of $X$ is larger than $r$. In the following,
we will exclude the case that the map $X\mapsto h_X$ is, up to physical equivalence~\cite{Simon},
everywhere identically zero.
As a simple example in one dimension, the Heisenberg model $H_{[1,n]}=-J\sum_{i=1}^{n-1} \vec \sigma_i \cdot \vec \sigma_{i+1}-h\sum_{i=1}^n \sigma_i^Z$,
with Pauli matrices $\vec\sigma=(\sigma^X,\sigma^Y,\sigma^Z)$, fits into  this framework, if we define $h_X$ as $-h\sigma_i^Z$ if $X=\{i\}$ for some integer $i$,
as $-J\vec\sigma_i\cdot \vec\sigma_{i+1}$ if $X=\{i,i+1\}$, and as zero for all other $X$.

The Hamiltonian with open boundary conditions, $H_{\Lambda_n}$, can be augmented with additional non-translationally invariant terms on the boundary of $\Lambda_n$ to obtain some $H_{\Lambda_n}^{BC}$.  The case of periodic boundary conditions is of particular importance to the remainder of the discussion and we denote such Hamiltonians by $H_{\Lambda_n}^p$.  More general boundary conditions are also permitted.  The only assumption will be that $\|H_{\Lambda_n}^{BC}-H_{\Lambda_n}\|_\infty/|\Lambda_n|\to 0$
as $n\to\infty$, where $\|\cdot\|_\infty$ is the operator norm. That is, the boundary terms only contribute a vanishing energy density.

While we aim at statements for \emph{finite} regions $\Lambda_n$, the thermodynamic limit $n\to\infty$ becomes important as a proof tool and
an indicator of phase transitions~\cite{Simon,BratteliRobinson}. We make extensive use of the following properties, which characterize the system's behavior in the thermodynamic limit.   States $\omega$ on the infinite lattice $\Z^\nu$ are given by families of density matrices
$(\omega_{\Lambda})_{\Lambda\subset\Z^\nu {\rm finite}}$, with $\omega_{\Lambda}=\Tr_{\Lambda'\setminus\Lambda}\omega_{\Lambda'}$ if $\Lambda\subseteq\Lambda'$.
Translation-invariant states $\omega$ on $\Z^\nu$ have entropy density $s(\omega):=\lim_{n\to\infty} \frac 1 {|\Lambda_n|} S(\omega_{\Lambda_n})$, with $S(\rho)=-\tr(\rho\log\rho)$ the
von Neumann entropy, and energy density $u(\omega):=\lim_{n\to\infty} \frac 1 {|\Lambda_n|} \tr(\omega_{\Lambda_n} H_{\Lambda_n})$.
A characteristic quantity for any given model and $\beta > 0$ is the equilibrium Helmholtz free energy density
$f_{\rm th}(\beta):=(-1/\beta)\lim_{n\to\infty}\frac 1 {|\Lambda_n|}\log\tr\exp(-\beta H_{\Lambda_n})$.
It holds
\[
   f_{\rm th}(\beta)= \inf\{f(\omega)\,\,|\,\, \omega\mbox{ translation-invariant state}\},
\]
where $f(\omega):=u(\omega)-s(\omega)/\beta$ is the Helmholtz free energy density~\cite{Simon} of state $\omega$.
For any finite region $\Lambda$, the Gibbs state at inverse temperature $\beta$ is $\gamma_\Lambda^{BC}(\beta):=\exp(-\beta H_{\Lambda}^{BC})/Z$,
with $Z$ the partition function. {Gibbs states on the infinite lattice can be defined in several different equivalent ways; here we use a variational principle:}
a translation-invariant state $\omega$ on the infinite lattice is by definition a Gibbs state at inverse temperature $\beta$
if it minimizes the free energy density, i.e.\ if $f(\omega)=f_{\rm th}(\beta)$. This definition is equivalent to the well-known KMS condition~\cite{Araki74}.

For every inverse temperature $\beta$, there is at least one Gibbs state $\omega_\beta$ on the infinite lattice; however,
the possibility of finite-temperature phase transitions implies that there may be more than one Gibbs state at the same $\beta$.
Consequently, we say that \emph{there is a unique equilibrium state around inverse temperature $\beta$} if there is a small interval around $\beta$ such that
for all $\beta'$ in that interval, there is only one Gibbs state at inverse temperature $\beta'$. This is true, for example,
if $\beta$ is smaller than some model-dependent critical inverse temperature~\cite{Kliesch}, and it is true for all $\beta$ if the lattice dimension is $\nu=1$~\cite{Araki75}.
A given energy density value $u$ will be called \emph{thermal} if it is strictly larger than the ground state energy density $u_{\min}$,
and strictly smaller than the infinite-temperature energy density $u_{\max}$. These are given by
$u_{\min}=\lim_{n\to\infty} \lambda_{\min}(H_{\Lambda_n})/|\Lambda_n|$ with $\lambda_{\min}$ the smallest eigenvalue,
and $u_{\max}:=\lim_{n\to\infty}\tr(H_{\Lambda_n})/(|\Lambda_n|d^{|\Lambda_n|})$.
If $u$ is thermal, then there is exactly one positive inverse temperature $\beta\equiv \beta(u)$ such
that the energy density $u(\omega_\beta)$ of the corresponding Gibbs state $\omega_\beta$ equals $u$~\cite{Simon}.

\subsection{Canonical typicality}
As suggested in~\cite{Goldstein}, we show that the Gibbs state arises in translation-invariant quantum lattice systems due to entanglement
between small subsystems and the remainder. Consider any model with a given thermal energy density $u$ such that there is a unique equilibrium state
around the corresponding inverse temperature $\beta=\beta(u)$. For $\delta>0$, define the \emph{microcanonical subspace}
\begin{equation}
   T_n^p:={\rm span}\left\{|E\rangle\,\,\left|\,\, u-\delta\leq E/|\Lambda_n|\leq u\right.\right\},
   \label{eqMicrocanDelta}
\end{equation}
where $H_{\Lambda_n}^p |E\rangle=E|E\rangle$ denotes the periodic boundary condition energy eigenstates on the global region $\Lambda_n$.
Choose any pure state $|\psi\rangle\in T_n^p$ at random according to the unitarily invariant measure. Then, with high probability,
this state will locally in $\Lambda\subset\Lambda_n$ be very close to the reduction of the global Gibbs state, as depicted in Fig.~\ref{fig_setup}:
\begin{theorem}[Summary of Theorem~\ref{TheCanonicalTypicalityPeriodic}]
\label{TheCanonTypMainText}
Fix $\delta>0$ and $u$ thermal. Then for every $\varepsilon\geq 0$,
the probability $p$ that a state $|\psi\rangle\in T_n^p$ sampled according to the unitarily invariant measure satisfies
\[
   \left\|\Tr_{\Lambda_n\setminus\Lambda}|\psi\rangle\langle\psi| - \Tr_{\Lambda_n\setminus\Lambda} \frac{\exp(-\beta H_{\Lambda_n}^p)}{Z}\right\|_1
   \geq \varepsilon+\Delta_{n,\Lambda}
\]
is doubly-exponentially small in the lattice size $|\Lambda_n|$; that is, $p\leq \exp\left( -\varepsilon^2 \exp(|\Lambda_n|s+o(|\Lambda_n|))\right)$, where
$s=s(\omega_\beta)$ is the entropy density of the corresponding Gibbs state, and $\Delta_{n,\Lambda}$ is a sequence of positive real numbers
with $\lim_{n\to\infty}\Delta_{n,\Lambda}=0$ for every fixed $\Lambda$. Here, $\beta$ can either be set equal to $\beta(u)$ as defined above,
or equal to the solution of $\tr(H_{\Lambda_n}^p \gamma_{\Lambda_n}^p(\beta))/|\Lambda_n|=u$ (which depends on $n$).
\end{theorem}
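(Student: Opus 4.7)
The plan is to split the bound into two independent pieces: a concentration-of-measure step for the random part, and an equivalence-of-ensembles step for the deterministic part. The Haar expectation over the unit sphere of $T_n^p$ is
\[
\mean\!\left[\Tr_{\Lambda_n\setminus\Lambda}|\psi\rangle\langle\psi|\right] \;=\; \Tr_{\Lambda_n\setminus\Lambda}\!\left(\frac{\Pi_{T_n^p}}{\dim T_n^p}\right) \;=:\; \rho^{\rm mc}_\Lambda,
\]
so by the triangle inequality Theorem~\ref{TheCanonTypMainText} reduces to (i)~bounding $\|\Tr_{\Lambda_n\setminus\Lambda}|\psi\rangle\langle\psi|-\rho^{\rm mc}_\Lambda\|_1\leq \varepsilon$ with the stated probability, and (ii)~bounding $\|\rho^{\rm mc}_\Lambda-\Tr_{\Lambda_n\setminus\Lambda}\gamma_{\Lambda_n}^p(\beta)\|_1 \leq \Delta_{n,\Lambda}\to 0$ deterministically.

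For step (i) I would apply L\'evy's lemma to the (Lipschitz up to a universal constant) function $|\psi\rangle\mapsto\|\Tr_{\Lambda_n\setminus\Lambda}|\psi\rangle\langle\psi|-\rho^{\rm mc}_\Lambda\|_1$ on the unit sphere of $T_n^p$, which yields a Gaussian tail of the form $\exp(-c\,\varepsilon^2\dim T_n^p)$. To upgrade this into the doubly-exponential form stated in the theorem, I would then establish the dimension estimate
\[
\tfrac{1}{|\Lambda_n|}\log\dim T_n^p \;=\; s(\omega_\beta) + o(1).
\]
The upper bound is a standard consequence of $f_{\rm th}(\beta)$ being the Legendre transform of the entropy density, combined with uniqueness of $\omega_\beta$. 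The lower bound uses that, in the uniqueness regime, the energy density of $\gamma_{\Lambda_n}^p(\beta)$ concentrates sharply at $u$ (by standard specific-heat/large-deviation bounds), so at least an $\exp(|\Lambda_n|s(\omega_\beta)-o(|\Lambda_n|))$-dimensional portion of its spectrum lies in the window $[u-\delta,u]$.

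The main obstacle is step (ii), the equivalence-of-ensembles statement flagged in the abstract as a generalization of Lima's result. The strategy is to prove the following general lemma: \emph{any} sequence of (nearly) translation-invariant states on $\Lambda_n$ whose energy density tends to $u$ and whose entropy density tends to the maximal value $s(\omega_\beta)$ must have its local reductions converge in trace norm to those of $\omega_\beta$, for every fixed $\Lambda$. The microcanonical state $\rho^{\rm mc}$, translation-invariant by construction because we use periodic boundary conditions, satisfies both hypotheses by the dimension estimate above. Any weak-$\ast$ accumulation point of its sequence of local reductions is then a translation-invariant infinite-volume state whose free energy density equals $u - s(\omega_\beta)/\beta = f_{\rm th}(\beta)$, hence a Gibbs state. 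This is precisely where the uniqueness-of-equilibrium hypothesis is indispensable: it forces every accumulation point to coincide with $\omega_\beta$, and compactness of the state space on each finite region upgrades subsequential convergence to trace-norm convergence of the full sequence on that region.

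Finally, the theorem allows $\beta$ either to equal $\beta(u)$ or to solve the finite-volume equation $\tr(H_{\Lambda_n}^p \gamma_{\Lambda_n}^p(\beta))/|\Lambda_n|=u$; I would handle this by showing that the two inverse temperatures differ by $o(1)$ in the uniqueness regime (by continuity and strict monotonicity of the finite-volume energy density in $\beta$), so that the corresponding Gibbs reductions differ by a vanishing amount that can be absorbed into $\Delta_{n,\Lambda}$.
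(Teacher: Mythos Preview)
Your proposal is correct and follows essentially the same route as the paper: Popescu--Short--Winter/L\'evy concentration reduces the problem to showing $\delta_{n,\Lambda}\to 0$ (your step~(ii)), which is exactly the paper's equivalence-of-ensembles Theorem~\ref{TheMain2MainText}, proved via the same variational/accumulation-point argument you sketch; the dimension estimate $\log\dim T_n^p = |\Lambda_n|s(\omega_\beta)+o(|\Lambda_n|)$ is the paper's Lemma~\ref{LemReproduced}. One small refinement: for the $n$-dependent $\beta_n$, rather than bounding $\|\Tr_{\Lambda_n\setminus\Lambda}\gamma_{\Lambda_n}^p(\beta_n)-\Tr_{\Lambda_n\setminus\Lambda}\gamma_{\Lambda_n}^p(\beta)\|_1$ directly (which would require some $n$-uniform continuity in $\beta$), the paper more cleanly shows that the sequence $\gamma_{\Lambda_n}^p(\beta_n)$ itself satisfies the free-energy hypothesis of the equivalence-of-ensembles theorem at inverse temperature $\beta$, so its reductions also converge to $(\omega_\beta)_\Lambda$.
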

As illustrated in Fig.~\ref{fig_setup}, in the limit of large $n$, almost all pure states $|\psi\rangle$ in an energy window subspace will be locally
almost indistinguishable from the
Gibbs state at the corresponding temperature, since the one-norm distance $\|\rho-\sigma\|_1=2\max_{P=P^\dagger=P^2}
|\tr(\rho P)-\tr(\sigma P)|$ being small means that $\rho$ and $\sigma$ give similar expectation value for all possible measurements.
The theorem does not say how quickly $\Delta_{n,\Lambda}$ tends to zero {with increasing $n$}; we will come back to the question of finite-size
estimates later. Earlier work~\cite{Goldstein,Riera} attempted to prove that $\Tr_{\Lambda_n\setminus\Lambda}|\psi\rangle\langle\psi|$ is arbitrarily
close to the local Gibbs state $\gamma_\Lambda(\beta)=\exp(-\beta H_{\Lambda})/Z$. However, this can only be true if the interaction across the boundary of $\Lambda$
is very weak~\cite{Riera}; in particular, the given upper bound on the distance grows with the boundary of $\Lambda$ and is thus interesting
only if $\Lambda$ is small or if the lattice is one-dimensional.
Our theorem shows that in general the local Gibbs state has to be replaced by the reduction of the global Gibbs state to obtain
arbitrary closeness in the thermodynamics limit, unless one considers models that are fine-tuned such that the local Gibbs state agrees
with the reduction of the global Gibbs state.

Before we turn to the proof, we note that the unitarily invariant (Haar) measure in Theorem~\ref{TheCanonTypMainText} can be replaced by a more physically
realistic measure, namely an {$\eta$-approximate $t$-design}~\cite{Brandao,AmbainisEmerson}, for $t=8$ and $\eta=\exp(-|\Lambda_n|s+o(|\Lambda_n|))$. 
Here $o(\cdot)$ is (asymptotic) Landau notation, where $f_n= o(g_n)$ means that $f_n/g_n$ converges to zero in the limit $n\to\infty$. Such
$t$-designs are approximations to the Haar measure that can be efficiently generated in a time which is polynomial in the lattice size $|\Lambda_n|$.
It follows from the results of Low~\cite{Low} that Theorem~\ref{TheCanonTypMainText} remains valid, however with a probability value that is only exponentially
{(not doubly-exponentially) small in the lattice site -- see Theorem~\ref{ThePeriodicBCDerandomized}.}

To prove Theorem~\ref{TheCanonTypMainText}, we invoke the results of~\cite{Popescu}, which tell us that $\Tr_{\Lambda_n\setminus\Lambda}|\psi\rangle\langle\psi|$
is with high probability close to $\Tr_{\Lambda_n\setminus\Lambda}\tau_n$, where $\tau_n$ is the uniformly mixed state on $T_n^p$. We obtain Theorem~\ref{TheCanonTypMainText}
directly, with all constants, if we set $\Delta_{n,\Lambda}$ up to corrections of order $\exp\left(-\frac 1 2 |\Lambda_n| s +o(|\Lambda_n|)\right)$ {(cf. eq.~(\ref{eqdeltaDelta}))} equal to
\begin{equation}
   \delta_{n,\Lambda}:=\left\| \Tr_{\Lambda_n\setminus\Lambda} \tau_n - \Tr_{\Lambda_n\setminus\Lambda} \frac{\exp(-\beta H_{\Lambda_n}^p)}{Z}\right\|_1.
   \label{eqDeltaMNMain}
\end{equation}
It remains to prove that $\delta_{n,\Lambda}\to 0$ as $n\to\infty$. However, $\tau_n$ is nothing but the microcanonical ensemble, and the statement left
to prove is that its predictions on small subsystems $\Lambda$ are equivalent to those of the canonical ensemble in the thermodynamic limit.
Thus, we are naturally led to study the problem of equivalence of ensembles in our setting.

\subsection{Equivalence of ensembles}
To state our result,
note that we can regard $\Lambda_n$ as a torus, by identifying $\mu_i+1$ in the interval $[\lambda_i,\mu_i]$ with $\lambda_i$; this way, we can define
periodic translations of $\Lambda_n$ as those of the resulting torus. A state $\tau_n$ on $\Lambda_n$ will be called $\Lambda_n$-translation-invariant if it is invariant with
respect to all periodic translations of $\Lambda_n$. {Using this notion, our main technical result on equivalence of ensembles reads as follows:}
\begin{theorem}[Summary of Theorem~\ref{TheEquivalence}]
\label{TheMain2MainText}
Suppose that $(\tau_n)_{n\in\N}$ is any sequence of $\Lambda_n$-translation-invariant states on $\Lambda_n$, and $\beta>0$ such that there is a unique
equilibrium state around inverse temperature $\beta$. If
\begin{equation}
   \limsup_{n\to\infty}\frac 1 {|\Lambda_n|}\left(\tr(\tau_n H_{\Lambda_n}^{BC})-S(\tau_n)/\beta\right)\leq f_{\rm th}(\beta)
   \label{eqFreeEnergy}
\end{equation}
for some choice of boundary conditions $BC$, then
\begin{equation}
   \lim_{n\to\infty}\left\| {\rm Tr}_{\Lambda_n\setminus\Lambda} \tau_n - {\rm Tr}_{\Lambda_n\setminus\Lambda}
   \frac{\exp(-\beta_n H_{\Lambda_n}^p)} {Z_n}\right\|_1=0,
   \label{eqEquivEnsMainText}
\end{equation}
where we may set $\beta_n$ either equal to the fixed value $\beta$, or equal to the solution of $\tr(H_{\Lambda_n}^p \gamma_{\Lambda_n}^p(\beta_n))/|\Lambda_n|=u(\beta)$.
\end{theorem}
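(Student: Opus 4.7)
The plan is to extend the equivalence-of-ensembles argument of Lima to the present generality, following the classical Robinson--Ruelle--Israel paradigm: lift the finite-volume sequence $\tau_n$ to a translation-invariant state on the infinite lattice that asymptotically saturates the free-energy bound, extract a weak-$*$ limit, and identify it with $\omega_\beta$ via the variational principle plus uniqueness. Once this is done, trace-norm convergence on a fixed finite $\Lambda$ follows because $\Lambda$ is finite-dimensional.

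First I would construct, for each $n$, a translation-invariant extension $\hat\tau_n$ of $\tau_n$ to all of $\Z^\nu$ by tiling: partition $\Z^\nu$ into disjoint translates of $\Lambda_n$ and place an independent copy of $\tau_n$ on each tile. Because $\tau_n$ is $\Lambda_n$-translation invariant by hypothesis, $\hat\tau_n$ is invariant under the full translation group of $\Z^\nu$. Its entropy density equals $S(\tau_n)/|\Lambda_n|$ exactly, and its energy density differs from $\tr(\tau_n H_{\Lambda_n})/|\Lambda_n|$ only by contributions of interaction terms straddling tile boundaries; since $h_X$ has finite range $r$ and uniformly bounded norm, and the boundary region has size $O(|\Lambda_n|^{(\nu-1)/\nu})$, this correction is $o(1)$. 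Combined with the hypothesis $\|H_{\Lambda_n}^{BC}-H_{\Lambda_n}\|_\infty/|\Lambda_n|\to 0$, assumption~(\ref{eqFreeEnergy}) transfers to $\limsup_{n\to\infty} f(\hat\tau_n)\leq f_{\rm th}(\beta)$.

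Next, by weak-$*$ compactness of the state space of the quasi-local algebra, every subsequence of $(\hat\tau_n)$ has a further subsequence converging weak-$*$ to some translation-invariant state $\omega$ on $\Z^\nu$. The energy-density functional is weak-$*$ continuous (it depends only on marginals of diameter $\leq r$), while the entropy density is weak-$*$ upper semi-continuous on translation-invariant states~\cite{Simon}; hence $f$ is lower semi-continuous, giving $f(\omega)\leq \liminf f(\hat\tau_n)\leq f_{\rm th}(\beta)$. By the variational characterization stated in the paper, $\omega$ is therefore a Gibbs state at inverse temperature $\beta$, and the assumed uniqueness around $\beta$ forces $\omega=\omega_\beta$. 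Since every subsequential limit is the same, the whole sequence $\hat\tau_n$ converges weak-$*$ to $\omega_\beta$; restricting to the finite-dimensional algebra on $\Lambda$ and observing that $\hat\tau_{n,\Lambda}=\tau_{n,\Lambda}$ for $n$ large enough that $\Lambda$ fits inside a single tile yields $\|\tau_{n,\Lambda}-(\omega_\beta)_\Lambda\|_1\to 0$.

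Finally, the same argument applies verbatim to the sequence of finite Gibbs states $\gamma^p_{\Lambda_n}(\beta_n)$, which trivially satisfies~(\ref{eqFreeEnergy}), yielding convergence of its $\Lambda$-marginal to $(\omega_\beta)_\Lambda$ as well; a triangle inequality gives~(\ref{eqEquivEnsMainText}). For the $n$-dependent choice $\beta_n$ defined by matching the energy density, one additionally needs $\beta_n\to\beta$, which follows from strict monotonicity and continuity of $\beta\mapsto u(\omega_\beta)$ in the uniqueness regime together with the convergence of finite-volume energy densities to their thermodynamic-limit value. The main obstacle, and the point where $\Lambda_n$-translation-invariance is essential, is the upper semi-continuity of the entropy density: without it the free energy could jump downward in a weak-$*$ limit and the identification $\omega=\omega_\beta$ would fail. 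Everything else reduces to routine boundary estimates exploiting only the finite range of the interaction.
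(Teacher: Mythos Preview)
Your overall strategy---extract weak-$*$ limit points, identify them with $\omega_\beta$ via the variational principle plus uniqueness, then repeat for the finite-volume Gibbs states and conclude by the triangle inequality---is the same as the paper's. However, your construction of the auxiliary state $\hat\tau_n$ has a genuine gap.

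The claim that the tiled product state $\hat\tau_n$ is invariant under the full translation group of $\Z^\nu$ is false. $\Lambda_n$-translation-invariance is invariance under \emph{periodic} (toroidal) shifts of $\Lambda_n$; it says nothing about how the state behaves across tile boundaries, where you have imposed a product structure. Concretely: take $\nu=1$, $\Lambda_n=\{1,2\}$, and $\tau_n=\tfrac12(|00\rangle\langle00|+|11\rangle\langle11|)$, which is invariant under the swap of the two sites. Tiling $\Z$ gives a state whose marginal on $\{1,2\}$ is the correlated $\tau_n$, but whose marginal on $\{2,3\}$ (straddling two tiles) is the uncorrelated product $\tfrac{I}{2}\otimes\tfrac{I}{2}$. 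Hence $\hat\tau_n$ is only invariant under translations by multiples of the box sidelengths, not under all of $\Z^\nu$. Without full translation-invariance of the approximating states you cannot invoke upper semicontinuity of the entropy density (which is stated only on the translation-invariant state space), and the identification of the limit with $\omega_\beta$ breaks down.

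The fix is either to average your tiled state over all shifts $x\in\Lambda_n$---which does yield a genuinely translation-invariant state, but then the statement ``its entropy density equals $S(\tau_n)/|\Lambda_n|$ exactly'' becomes an inequality that needs a separate argument via concavity and subadditivity---or to do what the paper does in Lemma~\ref{LemMainLemma}: bypass any auxiliary infinite-volume construction, take limit points of the marginals $\Tr_{\Lambda_n\setminus\Lambda_m}\tau_n$ directly, and prove translation-invariance of the limit by using $\Lambda_n$-translation-invariance to show that $\omega(\gamma_y(A))=\omega(A)$ for local $A$. The entropy-density bound is then obtained by decomposing $\Lambda_n$ into translates of $\Lambda_m$ inside $\Lambda_n$ and using subadditivity together with the fact that all these translates carry the \emph{same} marginal (this is exactly where $\Lambda_n$-translation-invariance enters). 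The rest of your outline---energy density via boundary estimates, the $\beta_n\to\beta$ argument, and the triangle-inequality finish---is correct and matches the paper.
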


{Theorem~\ref{TheMain2MainText} implies Theorem~\ref{TheCanonTypMainText}:}
If $\tau_n$ is the microcanonical ensemble, i.e.\ maximal mixture on $T_n^p$, then
$\tr(\tau_n H_{\Lambda_n}^p)/|\Lambda_n|\leq u$ by construction, and $S(\tau_n)=\log\dim(T_n^p)=s|\Lambda_n|+o(|\Lambda_n|)$ according to~\cite[Thm.\ IV.2.14]{Simon}
{(as~\cite{Simon} does not provide a proof, we reproduce the proof in Lemma~\ref{LemReproduced} below).}
Since $u-s/\beta=f_{\rm th}(\beta)$, (\ref{eqFreeEnergy}) holds, which shows equivalence to the canonical ensemble, $\lim_{n\to\infty}\delta_{n,\Lambda}=0$,
and establishes Theorem~\ref{TheCanonTypMainText}.

The crucial property of the microcanonical subspace $T_n^p$ used in this proof is its dimensionality (which is close to maximal given its energy density $u$),
namely $\lim_{n\to\infty}(1/|\Lambda_n|)\log\dim T_n^p=s$.
It follows from Lemma~\ref{LemReproduced} that this property is satisfied if the width $\delta>0$ in the definition of the microcanonical subspace~(\ref{eqMicrocanDelta}) is constant in $n$,
which corresponds to an extensive energy uncertainty. In general,
one can also choose an $n$-dependent width $\delta\equiv \delta_n$; as long as $\delta_n$ tends to zero slowly enough, the necessary limit identity will still hold.
Unfortunately, giving a concrete expression for a possible choice of $\delta_n$ amounts to proving a generalization of Lemma~\ref{LemReproduced} for ``small'' microcanonical subspaces,
and we do not currently have such a generalization.

However, in the special case of the non-interacting Ising model described in Subsection~\ref{SubsecFiniteSize} below,
it is easy to see via standard inequalities (like the ones used in the proof of Theorem~\ref{TheMainFiniteSize}) that one can choose $\delta_n\geq c (\log n)/n$,
with $c>0$ some constant depending on $u$. It is therefore plausible to expect that a comparable scaling of $\delta_n$ might be possible
also in the interacting case.

We now sketch the proof of Theorem~\ref{TheMain2MainText}. We first show that $(\tau_n)_{n\in\N}$ has at least one limit point $\omega$ as a state on
the infinite lattice. Since every $\tau_n$ is $\Lambda_n$-translation-invariant, $\omega$ is translation-invariant, and~(\ref{eqFreeEnergy})
implies that $f(\omega)=f_{\rm th}(\beta)$. Thus, $\omega$ is the unique Gibbs state $\omega_\beta$, and so
\begin{equation}
   \lim_{n\to\infty}\Tr_{\Lambda_n\setminus\Lambda} \tau_n = (\omega_\beta)_{\Lambda}.
   \label{eqLimitAsState}
\end{equation}
Consider the special case where $\tau_n$ equals the local Gibbs state $\gamma_n:=\gamma_{\Lambda_n}^p(\beta)$ which appears in~(\ref{eqEquivEnsMainText}).
Every $\gamma_n$ is $\Lambda_n$-translation-invariant and minimizes the free energy locally, hence
\[
   \tr(\gamma_n H_{\Lambda_n}^p)-S(\gamma_n)/\beta \leq \tr((\omega_\beta)_{\Lambda_n} H_{\Lambda_n}^p)-S((\omega_\beta)_{\Lambda_n})/\beta
   \stackrel{n\to\infty}\longrightarrow f_{\rm th}(\beta),
\]
which shows that~(\ref{eqFreeEnergy}) is satisfied for $\tau_n=\gamma_n$. Consequently $\lim_{n\to\infty}\Tr_{\Lambda_n\setminus\Lambda} \gamma_n = (\omega_\beta)_{\Lambda}$,
and combining this with~(\ref{eqLimitAsState}) proves the theorem.

This proof strategy has been pioneered by Lima~\cite{Lima1,Lima2}; however, our result is more general. In particular,
we allow a more general set of possible interactions, and permit $\beta_n\neq \beta$ to be determined from the finite region $\Lambda_n$.

\subsection{Dynamical thermalization}
It has been shown in~\cite{LindenPopescu,Short,ShortFarrelly} that subsystems of closed quantum systems
equilibrate, subject to some conditions on the initial state and spectrum. In general, the equilibrium state depends on the initial state, and is not thermal unless additional
conditions are met~\cite{Riera}. However, for translation-invariant systems, we can say more. Consider any initial state $\rho_0^{(n)}$ on $\Lambda_n$,
pure or mixed. The index $n$ indicates that the state is chosen to be a function of the lattice size $n$.
We can think of a simple dependence such as $\rho_0^{(n)}=\rho_0^{\otimes \Lambda_n}$ for some fixed (single-site)  state $\rho_0$ on $\C^d$;
however, the only technical condition
we need to assume is that the density of the inner energy $U_n:=\tr(\rho_0^{(n)} H_{\Lambda_n}^p)$ converges to some well-defined
thermal energy density $u:=\lim_{n\to\infty}U_n/|\Lambda_n|$.

The state evolves unitarily
under the Hamiltonian $H_{\Lambda_n}^p$, i.e.\ $\rho^{(n)}(t)=\exp(-itH_{\Lambda_n}^p) \rho_0^{(n)}
\exp(itH_{\Lambda_n}^p)$. We can define the \emph{population entropy} $\bar S(\rho_0^{(n)})$ as follows. From the spectral decomposition
$H_{\Lambda_n}^p=\sum_i E_i\pi_i$, compute the weights $\lambda_i:=\tr(\rho_0^{(n)}\pi_i)$, and set $\bar S(\rho_0^{(n)}):=-\sum_i \lambda_i\log\lambda_i$.
Similarly, there is an inverse temperature $\beta_n$ corresponding to $\rho_0^{(n)}$,
defined by $\tr(H_{\Lambda_n}^p \gamma_{\Lambda_n}^p(\beta_n))=U_n$.
Denote the time average by $\langle\cdot\rangle$, i.e.\ $\rho_{\rm avg}^{(n)}:=\langle \rho^{(n)}(t)\rangle:=
\lim_{T\to\infty}(1/T)\int_0^T \rho^{(n)}(t)dt$. Then {the actual state at time $t$ is close to $\rho_{\rm avg}^{(n)}$ for most times $t$,
and this state is close to thermal:}
\begin{theorem}[Summary of Theorem~\ref{TheThermalizationPeriodic}]
\label{TheMain3}
If there is a unique equilibrium state around inverse temperature $\beta:=\lim_{n\to\infty}\beta_n$,
if the (possibly pure) initial state has close to maximal population entropy, in the sense that
\begin{equation}
   \bar S(\rho_0^{(n)})\geq S(\gamma_{\Lambda_n}^{p}(\beta_n))-o(|\Lambda_n|),
   \label{eqMaxEntMain}
\end{equation}
and if each $H_{\Lambda_n}^p$ is non-degenerate {(i.e.\ all eigenspaces are one-dimensional)}, then unitary time evolution thermalizes the subsystem $\Lambda$ for most times $t$:
\begin{eqnarray}
   \left\langle \left\| \Tr_{\Lambda_n\setminus\Lambda}\rho^{(n)}(t)-\Tr_{\Lambda_n\setminus\Lambda}\rho_{\rm avg}^{(n)}\right\|_1\right\rangle
   &\leq& d^{|\Lambda|}
     \,\sqrt{D_G}\, \exp\left(- \frac{s(\omega_\beta)^2}{4\log d} |\Lambda_n| + o(|\Lambda_n|)\right),\quad\mbox{and}\label{eqDistToAve} \\
   \quad\lim_{n\to\infty}\left\| \Tr_{\Lambda_n\setminus\Lambda}\rho_{\rm avg}^{(n)} -  \Tr_{\Lambda_n\setminus\Lambda}
   \frac{\exp(-\beta_n H_{\Lambda_n}^p)}{Z_n}\right\|_1
   &=& 0,\qquad\strut\label{eqDynMain2}
\end{eqnarray}
where $D_G$ is the gap degeneracy~\cite{ShortFarrelly} of $H_{\Lambda_n}^p$, defined by $D_G=\max_E |\{(i,j)\,\,|\,\, i\neq j, E_i-E_j=E\}|$,
where $E_i$ denotes the eigenvalues of $H_{\Lambda_n}^p$.
\end{theorem}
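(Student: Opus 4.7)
The theorem splits into two independent claims: the equilibration estimate (\ref{eqDistToAve}), stating that the subsystem state stays close to its time average for most times, and the identification (\ref{eqDynMain2}) of that time-averaged reduction with the Gibbs reduction. I would handle them in this order, since the second reduces directly to the equivalence of ensembles established in Theorem~\ref{TheMain2MainText}.

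For (\ref{eqDistToAve}) I would invoke the standard Short--Farrelly-type equilibration machinery. Under non-degeneracy of $H_{\Lambda_n}^p$ with gap degeneracy $D_G$ this yields a time-averaged trace-distance bound of the form
\[
  \bigl\langle \|\Tr_{\Lambda_n\setminus\Lambda}\rho^{(n)}(t)-\Tr_{\Lambda_n\setminus\Lambda}\rho_{\rm avg}^{(n)}\|_1\bigr\rangle \;\leq\; d^{|\Lambda|}\sqrt{D_G/d_{\rm eff}(\rho_0^{(n)})},
\]
with $d_{\rm eff}(\rho_0^{(n)}):=1/\sum_i \lambda_i^2$ and $\lambda_i:=\tr(\rho_0^{(n)}\pi_i)$ the energy populations. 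Matching (\ref{eqDistToAve}) then reduces to the inequality $\log d_{\rm eff}(\rho_0^{(n)}) \geq (s(\omega_\beta)^2/(2\log d))|\Lambda_n| - o(|\Lambda_n|)$. I expect this to follow by combining the population-entropy assumption (\ref{eqMaxEntMain}) with the support-size bound $d^{|\Lambda_n|}$: e.g.\ by splitting the populations at a cutoff $\lambda_\star=e^{-c|\Lambda_n|}$, estimating the two pieces separately, or by a direct variational bound on $\sum_i\lambda_i^2$ subject to these two constraints; optimizing over $c$ should produce the coefficient $s^2/(4\log d)$.

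For (\ref{eqDynMain2}) my plan is to apply Theorem~\ref{TheMain2MainText} directly with $\tau_n:=\rho_{\rm avg}^{(n)}$. Non-degeneracy identifies $\rho_{\rm avg}^{(n)}=\sum_i \lambda_i |E_i\rangle\langle E_i|$ as the state dephased in the energy basis, so $S(\rho_{\rm avg}^{(n)})=\bar S(\rho_0^{(n)})$ and $\tr(\rho_{\rm avg}^{(n)} H_{\Lambda_n}^p)=U_n$ by energy conservation. For $\Lambda_n$-translation invariance I would exploit that translations commute with $H_{\Lambda_n}^p$; non-degeneracy then forces each $|E_i\rangle$ to be simultaneously a momentum eigenstate, so each projector $|E_i\rangle\langle E_i|$ is translation invariant and hence so is $\rho_{\rm avg}^{(n)}$. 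Combining (\ref{eqMaxEntMain}) with $S(\gamma_{\Lambda_n}^p(\beta_n))/|\Lambda_n|\to s(\omega_\beta)$ and $\beta_n\to\beta$ gives
\[
  \limsup_{n\to\infty}\frac{1}{|\Lambda_n|}\bigl(\tr(\rho_{\rm avg}^{(n)} H_{\Lambda_n}^p) - S(\rho_{\rm avg}^{(n)})/\beta\bigr) \;\leq\; u - s(\omega_\beta)/\beta \;=\; f_{\rm th}(\beta),
\]
verifying hypothesis (\ref{eqFreeEnergy}) with $BC=p$, and Theorem~\ref{TheMain2MainText} then yields (\ref{eqDynMain2}) immediately.

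The main obstacle I anticipate is the quantitative lower bound on $d_{\rm eff}$: Shannon entropy does not in general control the collision entropy $S_2=-\log\sum_i\lambda_i^2$ (elementary Jensen-type inequalities go the wrong way, and one can construct distributions with $\bar S$ large but $S_2$ small), so the derivation must genuinely combine the entropy bound with the Hilbert-space dimension cap $d^{|\Lambda_n|}$. In contrast, once the dephased state is identified as $\Lambda_n$-translation invariant, the second part is essentially bookkeeping on top of Theorem~\ref{TheMain2MainText}.
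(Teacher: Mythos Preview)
Your proposal is correct and matches the paper's approach: Short--Farrelly for (\ref{eqDistToAve}) reduced to a lower bound on $d_{\rm eff}$, and Theorem~\ref{TheMain2MainText} applied to the dephased state $\rho_{\rm avg}^{(n)}$ (which is $\Lambda_n$-translation-invariant by non-degeneracy, exactly as you argue) for (\ref{eqDynMain2}). For the step you flag as the main obstacle, the paper uses the R\'enyi interpolation inequality $S_2\geq 2\varepsilon(S_1-\varepsilon S_0)$ for $0\leq\varepsilon\leq 1$ (Lemma~\ref{LemRenyi}, derived from convexity of $q\mapsto S_q$), which with $S_0\leq|\Lambda_n|\log d$ and the choice $\varepsilon=s(\omega_\beta)/(2\log d)$ yields precisely your target exponent; your variational or cutoff suggestion would be a valid alternative route to the same bound.
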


In Theorem~\ref{TheThermalizationArbitrary}, we generalize this result to the case of arbitrary boundary conditions and degenerate $H_{\Lambda_n}^{BC}$.
Unlike~(\ref{eqDynMain2}), which expresses equivalence of the time-averaged state $\rho_{\rm avg}^{(n)}$ and
the thermal state $\gamma_{\Lambda_n}^p(\beta_n)$ for local
observables $A$ on $\Lambda$, the generalized version shows equivalence of these global states on a different set of observables~\cite{ShortFarrelly},
arising from averaging observables $A$ over translations of $\Lambda$. We also show numerically {in Subsection~\ref{SecNumerical}} that the conditions of non-degeneracy
of $H_{\Lambda_n}^p$ and $D_G=1$ are generically satisfied for randomly chosen translation-invariant nearest-neighbor interactions in one lattice dimension.

Our proof of Theorem~\ref{TheMain3} follows similarly to the proof of the results of~\cite{ShortFarrelly}. First we have to show that the ``effective dimension'' $d_{\rm eff}=e^{S_2(\lambda)}$ is large,
with $S_\alpha(\rho):=(\log\tr(\rho^\alpha))/(1-\alpha)$ the $\alpha$-R\'enyi entropy. We do this via the inequality $S_2\geq 2\varepsilon(S-\varepsilon/(1+\varepsilon)S_0)$
for $0\leq\varepsilon\leq 1$, which we prove from results of~\cite{Zyczkowski}, establishing~(\ref{eqDistToAve}). From
$S(\rho_{\rm avg}^{(n)})\geq \bar S(\rho_0^{(n)})$, we conclude that $\rho_{\rm avg}^{(n)}=:\tau_n$
satisfies~(\ref{eqFreeEnergy}). We then apply Theorem~\ref{TheMain2MainText} to prove~(\ref{eqDynMain2}).

As an example, if $\rho_0^{(n)}$ is a pure state $|\psi_0^{(n)}\rangle\sim \sum_{u-\delta<E_i/|\Lambda_n|<u}|E_i\rangle$ which is a ``flat'' uniform
superposition of eigenstates $|E_i\rangle$ of $H_{\Lambda_n}^p$, Theorem~\ref{TheMain3} applies.
This recovers results of~\cite{Riera}, albeit in a different context.

\subsection{Finite-size estimates}
\label{SubsecFiniteSize}
Estimates on how large $\Lambda_n$ has to be to have good agreement with our asymptotic results, in particular bounds on
$\delta_{n,\Lambda}$ in~(\ref{eqDeltaMNMain}), are expected to depend strongly on the details of the model, such as distance
to phase transitions, correlation lengths etc.~\cite{Deserno}. To get some intuition, we now give analytic bounds for the non-interacting Ising model, {
which already turns out to be a non-trivial problem. For this model, it was already shown in~\cite{Popescu} that local reduced states
are close to thermal in the sense of Theorem~\ref{TheCanonTypMainText}; however, no explicit analytic bounds on the distance have been given in~\cite{Popescu}.
Here we provide tight analytic finite-size bounds.

We set $\Lambda_n=[1,n]\subset\Z^1$, and $H_\Lambda:=\sum_{i\in\Lambda} Z_i$, where $Z_i$ is the Pauli $Z$-matrix at site $i$.
Then the microcanonical state $\tau_n$ is
permutation-invariant, and the canonical state is a product state, {$\gamma_{\Lambda_n}(\beta)=\gamma_\beta^{\otimes n}$,
with $\gamma_\beta:=\gamma_{\{1\}}(\beta)$ the single-site Gibbs state.
We are interested in estimating the distance $\delta_{n,\Lambda}$ in~(\ref{eqDeltaMNMain}).}
In the case where the energy value of the microcanonical subspace~(\ref{eqMicrocanDelta}) is sharp, i.e.\ $\delta=0$,
the state $\tau_n$ is the uniform mixture over a type class, {that is, over the subspace spanned by eigenvectors with a fixed frequency of ``spin-up''.
In this case, it turns out that we can apply the proof of the classical finite de Finetti theorem~\cite{Diaconis} to obtain}
\begin{equation}
   \left\|\Tr_{\Lambda_n\setminus\Lambda}\tau_n-  \gamma_\beta^{\otimes m}\right\|_1\leq\frac{4 m}n,
   \label{eqClassicalDeFinetti}
\end{equation}
where $m:=|\Lambda|$.
Thus, in order to maintain a fixed $1$-norm distance between the states, the total system size $n$ has to be increased linearly with the size of the subsystem $m$.
As mentioned before eq.~(\ref{eqDeltaMNMain}), this also upper-bounds the distance $\Delta_{n,\Lambda}$ in Theorem~\ref{TheCanonTypMainText}
up to corrections exponentially small in the lattice size.

The case of finite energy uncertainty $\delta>0$ is more difficult to treat. If we assume each of the lattice sites holds a qubit ($d=2$) and take an appropriate rescaling of the energy then
\[
   S\left(\strut \gamma_\beta^{\otimes m}\, \left\|\,{\rm Tr}_{\Lambda_n\setminus\Lambda_m}\tau_n \right.\right)\leq \frac {(1-\delta)u}{u-\delta}\cdot \frac m {n-m}+\frac{m u \delta}{u-\delta}\left(
   1+\frac m {n-m}\right)
\]
whenever $m\leq n(u-\delta)$, with $S(\rho\|\sigma):=\tr(\rho\log\rho-\rho\log\sigma)$ the quantum relative entropy.
This claim is formally stated as Lemma~\ref{LemResultDeltaZero}.
For $\delta=0$ (and $m\ll n$), this inequality is similar to~(\ref{eqClassicalDeFinetti}) above, but now with the relative entropy as distance measure.
We expect it to be tight (i.e.\ not to allow for significant improvements) in the case $\delta=0$, since it is well-known that the bound in the classical finite de Finetti theorem,
and thus~(\ref{eqClassicalDeFinetti}), cannot be significantly improved. However, this inequality on the relative entropy has the drawback that it is only interesting
as long as $\delta\lesssim 1/m$. The question arises how $n$ has to be scaled with growing subsystem size $m$
in order to achieve a fixed distance $\delta>0$ (for $\delta=0$, we have seen that $n$ has to be increased linearly with $m$).
In Theorem~\ref{TheMainFiniteSize}, we settle this question up to a correction
term of the order $\log n$: under some conditions on the variables, we show that
\[
   \left\| {\rm Tr}_{\Lambda_n\setminus\Lambda_m}\tau_n - \gamma_\beta^{\otimes m}\right\|_1 \leq
   \frac{2\delta}{n\sqrt{u}}+\sqrt{\frac m {n-m}\left(1+\frac{4\log n}{\log\frac{1-u}u}\right)}.
\]
This inequality is not tight in general (as one sees by comparing with~(\ref{eqClassicalDeFinetti}) for $\delta=0$), but it shows that $n$ has to be increased only slightly
superlinearly with $m$ in order to achieve a fixed $1$-norm distance also in the case $\delta>0$. We leave it as an open question whether the $\log n$ term can be removed.

In order to get some intuition for what happens in the \emph{interacting} case, we numerically study random nearest-neighbor interactions in one lattice
dimension in Subsection~\ref{SecNumerical}. It turns out that the behavior that we have shown analytically for non-interacting models remains approximately
valid also in the interacting case (as far as one can tell for the small lattice sizes $n\leq 11$ that are numerically tractable), see in particular Fig.~\ref{fig:global}.
However, we leave it open whether a similar behavior remains valid in lattice dimensions $\nu\geq 2$, where finite-temperature phase transitions become relevant.

\subsection{Towards eigenstate thermalization}
The question whether some of the results above can be strengthened to hold for \emph{individual energy eigenstates} is known as the
eigenstate thermalization hypothesis (ETH)~\cite{Deutsch,Srednicki}. For example, consider our result on dynamical thermalization, Theorem~\ref{TheMain3}.
For this result to hold, eq.~(\ref{eqMaxEntMain}) must be satisfied, which says that the initial state populates a large number of energy levels.

The question arises whether this assumption can be dropped. In the most extreme case, we could have an energy eigenstate $|E\rangle$ as the initial
state, i.e.\ $\rho_0=|E\rangle\langle E|$. (This notation does not assume non-degeneracy of the spectrum; $|E\rangle$ is an arbitrary pure state in the
eigenspace corresponding to energy $E$.) Energy eigenstates do not evolve, such that $\rho^{(n)}(t)=\rho_0$ is constant in time. Thus $\rho^{(n)}(t)$ is close
to thermal for most times $t$ if and only if
the reduced state ${\rm Tr}_{\Lambda_n\setminus\Lambda} |E\rangle\langle E|$ is close to thermal.

To formulate eigenstate thermalization in more detail, consider
the setup in Fig.~\ref{fig_regions_main}. We have argued above that one should not expect that the local marginals of random global pure states $|\psi\rangle$
are close to a local Gibbs state, due to boundary effects (which led us to consider the reduction of the global Gibbs state instead). More generally, to take boundary
effects into account, we can enlarge the subregion $\Lambda$ by a shell of width $l$; if $l$ is large enough, one would expect that
\begin{equation}
  \label{eqg}
   \Tr_{\Lambda_n\setminus\Lambda} |E\rangle\langle E|\approx \Tr_{\Lambda_n\setminus\Lambda}\gamma_{\Lambda_n}(\beta)\approx   \Tr_{\Lambda_{\rm shell}}\gamma_{\Lambda'}(\beta).
\end{equation}
It is immediately clear that a statement like this cannot literally be true for all eigenstates $|E\rangle$ of all models that we consider: the non-interacting Ising model,
where some eigenstates are product states (and thus marginals are pure and not thermal), is a counterexample.

However, we can prove a weaker version of this statement which is true for all eigenstates of all translation-invariant models with finite range interaction:
there is a state $\omega_E$ on $\Lambda'$ such that ${\rm Tr}_{\Lambda_n\setminus\Lambda}|E\rangle\langle E|\approx {\rm Tr}_{\Lambda_{\rm shell}} \omega_E$,
where $\omega_E$ partially resembles a thermal state. That is, $\omega_E$ does not necessarily have Boltzmann weights on its diagonal (as one would expect from the thermal
state $\gamma_{\Lambda'}(\beta)$), but its off-diagonal elements are close to zero, as they are for the thermal state.

We formulate and prove this result by applying a version of the Lieb-Robinson bound~\cite{LiebRobinson,Bravyi,Masanes}:
for models with finite-range interaction, it states that there are constants $c,C,v>0$ such that for all operators $X$ and $Y$ supported on finite regions
$\mathcal{X},\mathcal{Y}$ of distance $\Delta$, it holds
$\| [X(t),Y] \|_\infty \ \leq\  C\, \|X\|_\infty \|Y\|_\infty \min\{{\cal |X|,|Y|}\}\, e^{-c[\Delta -v |t|]}$, where $X(t)=e^{i H_{\Lambda_n} t} X e^{-i H_{\Lambda_n} t}$.
The constants also appear in the following theorem, where we assume in particular that the Hamiltonian only has interactions between sites of distance $r$ or less.

\begin{figure}[!hbt]
\begin{center}
\includegraphics[angle=0, width=4cm]{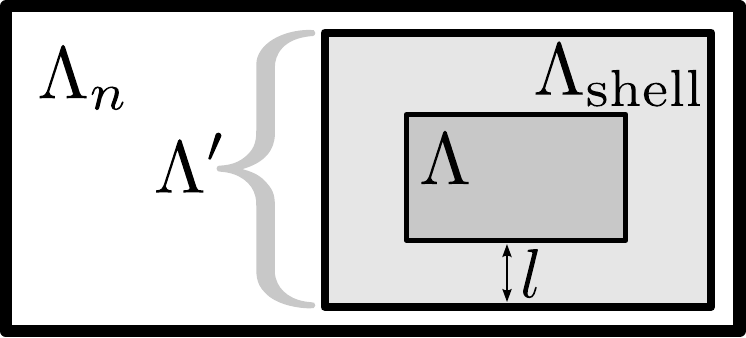}
\caption{Subregions of the whole lattice $\Lambda_n$. We enlarge $\Lambda$ by setting $\Lambda'=\Lambda\cup\Lambda_{\rm shell}$,
where $\Lambda_{\rm shell}$ contains all sites outside of $\Lambda$ which have distance $l$ or less to $\Lambda$. The number of terms of $H_{\Lambda_n}$ that
have support on both $\Lambda'$ and $\Lambda_n\setminus\Lambda'$ is denoted $A$, which quantifies the size of the boundary area of $\Lambda'$.
}
\label{fig_regions_main}
\end{center}
\end{figure}
\begin{theorem}[Summary of Theorem~\ref{TheWeakLocalDiagonality}]
\label{TheETHMain}
There is a state $\omega_E$ on $\Lambda'$ such that
\begin{equation}
   \left\|\Tr_{\Lambda_{\rm shell}}(\omega_E)-\Tr_{\Lambda_n\setminus\Lambda}|E\rangle\langle E|\right\|_1\leq \kappa\cdot e^{-c(l-r)/2},
   \label{eqDistance}
\end{equation}
where $\kappa=2AJ(CA+2)\sqrt{\frac{l-r}{8cv^2}}$ and $J=\max_X\|h_X\|_\infty$, which is close to diagonal in the eigenbasis $\{|e\rangle\}$ of $H_{\Lambda'}$, i.e.
\begin{equation}
   |\langle e_1|\omega_E|e_2\rangle| \leq e^{-(l-r)(e_1-e_2)^2/(8cv^2)}.
   \label{eqCoherences}
\end{equation}
\end{theorem}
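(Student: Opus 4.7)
The plan is to define $\omega_E$ as a Gaussian-time-averaged version of the reduced state, evolved under the restricted Hamiltonian $H_{\Lambda'}$:
\[
\omega_E := \int_{-\infty}^{\infty} g_\sigma(t)\, e^{-iH_{\Lambda'}t}\,\rho\, e^{iH_{\Lambda'}t}\,dt,
\]
where $\rho := \Tr_{\Lambda_n\setminus\Lambda'}|E\rangle\langle E|$ and $g_\sigma(t) = (2\pi\sigma^2)^{-1/2}e^{-t^2/(2\sigma^2)}$, with $\sigma^2 = (l-r)/(4cv^2)$ chosen to balance the two claims. The Fourier transform $\widehat{g_\sigma}(\omega) = e^{-\sigma^2 \omega^2/2}$ will control the dephasing of off-diagonal elements, while the rapid decay of $g_\sigma$ in $t$ will tame the Lieb--Robinson light cone when we compare reductions.

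The coherence bound (\ref{eqCoherences}) is immediate: expanding in the eigenbasis $\{|e\rangle\}$ of $H_{\Lambda'}$ gives $\langle e_1|\omega_E|e_2\rangle = \widehat{g_\sigma}(e_1-e_2)\,\langle e_1|\rho|e_2\rangle$, and Cauchy--Schwarz on the density matrix $\rho$ yields $|\langle e_1|\rho|e_2\rangle|\leq 1$, so $|\langle e_1|\omega_E|e_2\rangle|\leq e^{-\sigma^2(e_1-e_2)^2/2} = e^{-(l-r)(e_1-e_2)^2/(8cv^2)}$, matching (\ref{eqCoherences}).

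For the trace-distance bound (\ref{eqDistance}), the crucial observation is that $|E\rangle$ is stationary under $e^{-iH_{\Lambda_n}t}$-conjugation, so dualizing via $\|\cdot\|_1 = \sup_{\|A\|_\infty\leq 1, A = A^\dagger}|\tr(A\,\cdot\,)|$ against observables $A$ on $\Lambda$ gives
\[
\tr\!\bigl(A[\Tr_{\Lambda_n\setminus\Lambda}|E\rangle\langle E|-\Tr_{\Lambda_{\rm shell}}\omega_E]\bigr) = \int g_\sigma(t)\,\tr\!\bigl([A(t)-\tilde A(t)]|E\rangle\langle E|\bigr)\,dt,
\]
where $A(t) = e^{iH_{\Lambda_n}t}Ae^{-iH_{\Lambda_n}t}$ and $\tilde A(t) = e^{iH_{\Lambda'}t}Ae^{-iH_{\Lambda'}t}$; since $\tilde A(t)$ is strictly supported in $\Lambda'$, replacing $|E\rangle\langle E|$ by $\rho$ against $\tilde A(t)$ is exact. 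The standard interaction-picture identity
\[
A(t)-\tilde A(t) = i\int_0^t e^{iH_{\Lambda_n}s}\,[H_{\Lambda_n}-H_{\Lambda'},\,\tilde A(t-s)]\,e^{-iH_{\Lambda_n}s}\,ds
\]
reduces the task to bounding commutators $[h_X,\tilde A(s')]$ for $h_X$ ranging over the at most $A$ boundary terms of $H_{\Lambda_n}-H_{\Lambda'}$ (the remaining terms commute with $\tilde A(s')$). Each such $h_X$ has support at distance $\geq l-r$ from $\Lambda$, so Lieb--Robinson yields $\|[\tilde A(s'), h_X]\|_\infty \leq CJ\|A\|_\infty\,e^{-c(l-r-v|s'|)}$, whence $\|A(t)-\tilde A(t)\|_\infty \lesssim (ACJ/cv)\,e^{-c(l-r)+cv|t|}$, capped trivially by $2A\|A\|_\infty$ for large $|t|$.

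Finally, I integrate this estimate against $g_\sigma$, splitting the $t$-axis at $T:=(l-r)/(2v)$: for $|t|\leq T$ the Lieb--Robinson factor is at most $e^{-c(l-r)/2}$ and contributes $\sim\sigma\,e^{-c(l-r)/2}$ after Gaussian weighting; for $|t|>T$ the Gaussian tail $(\sigma/T)e^{-T^2/(2\sigma^2)}$ at $\sigma^2=(l-r)/(4cv^2)$ scales the same way and supplies the ``$+2$'' correction inside $\kappa$. The main obstacle is this $\sigma$-trade-off: a larger $\sigma$ sharpens (\ref{eqCoherences}) but allows the Lieb--Robinson cone more time to spill out of $\Lambda'$, weakening (\ref{eqDistance}); a smaller $\sigma$ does the opposite. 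The choice $\sigma^2=(l-r)/(4cv^2)$ is the balanced optimum at which both decay rates collapse to $e^{-c(l-r)/2}$, up to a polynomial prefactor in $\sqrt{l-r}$ that explains the $\sqrt{(l-r)/(8cv^2)}$ factor in $\kappa$. A secondary technical care-point is that evolution by $H_{\Lambda'}$ does not enlarge the support of $A$ beyond $\Lambda'$, so the substitution $|E\rangle\langle E|\mapsto\rho$ in Step~2 introduces no extra ``leakage'' term.
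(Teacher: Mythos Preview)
Your proposal is correct and follows essentially the same route as the paper: the same Gaussian-time-averaged definition of $\omega_E$ with $\sigma^2=(l-r)/(4cv^2)$, the same Fourier computation for the off-diagonal bound, and the same strategy for the trace-distance bound (dualize to observables on $\Lambda$, exploit stationarity of $|E\rangle$ under the full dynamics, reduce the difference of evolutions to an integral of commutators with the boundary Hamiltonian $H_A$, apply Lieb--Robinson, and split the Gaussian integral at $T=(l-r)/(2v)$). The only cosmetic difference is that the paper replaces $H_{\Lambda'}$ by $H-H_A$ via $[H_{\bar\Lambda'},X]=[H_{\bar\Lambda'},H_{\Lambda'}]=0$ before differentiating, whereas you write the equivalent Duhamel identity for $A(t)-\tilde A(t)$ directly and then observe that the $H_{\bar\Lambda'}$ contribution to the commutator vanishes; either way one lands on $\|[H_A,\tilde A(\cdot)]\|_\infty$ and the same Lieb--Robinson estimate.
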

This result does not assume translation-invariance; finite range of interaction is sufficient for its validity.
The ETH corresponds to the claim that the theorem holds for the particular choice $\omega_E=\gamma_{\Lambda'}(\beta)$.
As discussed above, the ETH cannot be true in general for all eigenstates of all
models we consider; intuitively, some additional assumptions, possibly along the lines of nonintegrability, are needed.

Even though the mathematical details of the proof are cumbersome, it has a simple physical interpretation. We define $\omega_E$
by evolving $\Tr_{\Lambda_n\setminus\Lambda'}|E\rangle\langle E|$ according to $H_{\Lambda'}$ and averaging the result over small $t$;
concretely, $\omega_E := \int_{-\infty}^\infty\!\! dt\, g(t)\, e^{-i H_{\Lambda'} t}\, {\rm Tr}_{\Lambda_n\setminus \Lambda'} |E\rangle\! \langle E|\,  e^{i H_{\Lambda'} t}$,
with $g(t)$ some Gaussian. The Lieb-Robinson bound guarantees finite speed of information transmission, such that the result will within $\Lambda$
still look very much as if the initial state $|E\rangle\langle E|$ evolved according to the full Hamiltonian $H_{\Lambda_n}$, if the shell is large enough.
Since $|E\rangle\langle E|$ is stationary, this leads to~(\ref{eqDistance}). On the other hand,
interaction across the boundary of $\Lambda'$ will decohere the state $\Tr_{\Lambda_n\setminus\Lambda'}|E\rangle\langle E|$;
in particular, coherences corresponding to energy levels $e_1,e_2$ with large $|e_1-e_2|$ will be suppressed, which yields~(\ref{eqCoherences}).

In Theorems~\ref{TheCanonTypMainText}, \ref{TheMain2MainText} and~\ref{TheMain3}, we quantify thermalization by the distance to the
marginal of the global Gibbs state, that is, $\Tr_{\Lambda_n\setminus\Lambda} \exp(-\beta_n H_{\Lambda_n}^p) /Z_n$. This contrasts with the
above setup~(\ref{eqg}), where one starts with the thermal state in a subregion $\Lambda'$ containing a shell around $\Lambda$.
In fact, it is easy to see that Theorems~\ref{TheCanonTypMainText}, \ref{TheMain2MainText} and~\ref{TheMain3} also hold if
we replace the thermal state on all of $\Lambda_n$ by the thermal state on $\Lambda':=\Lambda_{n'}\supseteq \Lambda$ for $n'\ll n$, as long
as $n'\to\infty$ with $n\to\infty$; that is, if we consider the distance to $\Tr_{\Lambda_{n'}\setminus\Lambda} \exp(-\beta_{n'} H_{\Lambda_{n'}}^p) /Z_{n'}$ instead.
The drawback, however, is that we do not have any non-trivial bounds for fixed finite $n'$ (in contrast to Theorem~\ref{TheETHMain}), which renders
the formulation of the first three theorems in terms of the more general shell setup mathematically equivalent to their current formulation.
Finding non-trivial finite-size bounds, in particular on the size of the shell, remains as an interesting open problem.

\section{Proofs of the main results}
\label{Sec3}
All results are formulated in two versions, namely for periodic and for arbitrary boundary conditions (BC). The main results can be found in the following places:
\begin{itemize}
\item \textbf{Equivalence of ensembles}: Main technical statement is Theorem~\ref{TheEquivalence} (periodic BC), with Example~\ref{ExFlat}
giving the standard formulation in comparing the microcanonical and the canonical ensemble. The corresponding formulations for arbitrary BC
are given in Theorem~\ref{TheEquivalence2} and Example~\ref{ExFlat2}. Note that the norm $\|\cdot\|_{\{m\}}$ appearing there can be replaced
by $\|\cdot\|_{[m]}$, measuring the difference of expectation values on \emph{$m$-block averaged observables} only, due to Lemma~\ref{LemNormEquivalence}.
\item \textbf{Canonical typicality}: For periodic BC, the main result is Theorem~\ref{TheCanonicalTypicalityPeriodic}, with a derandomized version
in terms of $8$-designs given in Theorem~\ref{ThePeriodicBCDerandomized}. The corresponding formulations for arbitrary BC are given in
Theorem~\ref{TheCanonicalTypicalityArbitrary} and Theorem~\ref{TheCanonTypArbitraryDesigns}.
\item \textbf{Dynamical thermalization}: For periodic BC, the main result is Theorem~\ref{TheThermalizationPeriodic}, and for arbitrary BC it is Theorem~\ref{TheThermalizationArbitrary}.
\item \textbf{Finite-size bounds without interaction:} Lemma~\ref{LemNonIntDiaconis} relates the Ising model and equivalence of ensembles
for sharp energy eigenspaces ($\delta=0$) and local Hilbert space dimension $d=2$ to the finite de Finetti theorem. Lemma~\ref{LemResultDeltaZero} is a new derivation
(compared to Diaconis and Freedman~\cite{Diaconis}) in terms of the relative entropy. The main result is Theorem~\ref{TheMainFiniteSize}, proving that the
scaling (bath size increasing linearly with system size to achieve fixed $1$-norm error) remains basically valid also for $\delta>0$.
\item \textbf{Numerical results}: They are given in Subsection~\ref{SecNumerical}, confirming that some assumptions from the main theorems (on degeneracy
of spectra etc.) are generically satisfied. Moreover, they show that the qualitative finite-size scaling that has been proven analytically for non-interacting
systems seems to remain valid for interacting systems with periodic BC, at least for lattice dimension $\nu=1$.
\item \textbf{Eigenstate thermalization}: The main result is Theorem~\ref{TheWeakLocalDiagonality}, showing that energy eigenstates are locally ``weakly diagonal''.
Note that this result does not assume translation-invariance (only finite range of interaction).
\end{itemize}
The notation is specified in Subsection~\ref{SecEquiv} below.
In comparison to Section~\ref{SecSummary}, statements about the minimization of the Helmholtz free energy density $f(\omega):=u(\omega)-s(\omega)/\beta$
are replaced by statements about the maximization of $-\beta\, f(\omega)=s(\omega)-\beta\, u(\omega)$ (following mathematical physics tradition),
which has to be compared with the ``pressure''
\[
   p(\beta,\Phi)=-\beta\, f_{\rm th}(\beta) \qquad (\beta>0).
\]
This has the advantage that $p(\beta,\Phi)$ is also defined for $\beta=0$, i.e.\ infinite temperature.
Moreover, convexity of $\beta\mapsto p(\beta,\Phi)$ will play a crucial role.
Similarly, to conform with mathematical physics literature, we will write $\Phi(X)$ instead of $h_X$, and the map $\Phi$ will be called an ``interaction''.
Furthermore, we will assume that the small subsystem $\Lambda$ equals $\Lambda_m$ for some fixed $m$, which is no loss of generality.

\subsection{Equivalence of ensembles}
\label{SecEquiv}
We start by fixing some notation. We consider a $\nu$-dimensional quantum lattice system, with local Hilbert space
dimension $d$. To every $x\in \Z^\nu$, we associate a local algebra of observables
$\mathcal{A}_x$, which is a copy of $M_d(\C)$, {the algebra of complex $d\times d$ matrices.}
For every finite region $\Lambda\subset\Z^\nu$, we have the local observable
algebra $\mathcal{A}_\Lambda:=\bigotimes_{x\in\Lambda}\mathcal{A}_x$. For every $y\in\Z^\nu$, there is a translation automorphism $\gamma_y$,
mapping observables $A$ in a region $\Lambda$, i.e.\ $A\in \mathcal{A}_\Lambda$, to the corresponding observable $\gamma_y(A)$ in the
translated region $\Lambda+y$, i.e.\ $\gamma_y(A)\in\mathcal{A}_{\Lambda+y}$.

To every finite region $X\subset\Z^\nu$, we associate an \emph{interaction} $\Phi(X)$, which is a self-adjoint operator in $\mathcal{A}_X$,
describing the interaction of the spins in region $\Lambda$. For finite $\Lambda\subset\Z^\nu$, the \emph{local Hamiltonian} $H_\Lambda$ is
\[
   H_\Lambda:=\sum_{X\subset\Lambda} \Phi(X).
\]
We assume that our interaction has finite range, i.e.\ that  $\Phi(X)=0$ whenever the diameter of $X$ is larger than $r$ for some fixed $r\in\N$.
Furthermore, we assume translation-invariance, which can be expressed as
\[
   \Phi(X+y)=\gamma_y(\Phi(X))\qquad\mbox{for all }X\subset\Z^\nu\mbox{ finite},y\in\Z^\nu.
\]
We can also define an observable algebra $\mathcal{A}_\infty$ for the infinite lattice $\Z^\nu$
by a suitable limit procedure, called the \emph{quasi-local algebra}, see~\cite{Simon} for details. The (operator) norm on $\mathcal{A}_\infty$ will be denoted $\|\cdot\|_\infty$.
A \emph{state} $\omega$ on $\mathcal{A}_\infty$ is a positive linear
functional with $\omega(\mathbf{1})=1$. States are automatically weak$^*$-continuous. A state $\omega$ is \emph{translation-invariant} if $\omega(\gamma_y(A))=\omega(A)$
for all $A\in\mathcal{A}_\infty$
and $y\in\Z^\nu$ (it is sufficient to demand this for all $A\in\mathcal{A}_\Lambda$ for all finite regions $\Lambda$).
If $\Lambda\in\Z^\nu$ is finite, there is a density matrix $\omega_\Lambda\in\mathcal{A}_\Lambda$ such that
\[
   \tr(\omega_\Lambda A)=\omega(A)\qquad\mbox{for all }A\in\mathcal{A}_\Lambda.
\]
This yields the following consistency condition: if $\Lambda\subset\Lambda'$ and $\Lambda'$ is finite, then $\omega_\Lambda={\rm Tr}_{\Lambda'\setminus\Lambda}\omega_{\Lambda'}$.
Conversely, every consistent family of density matrices defines a state on $\mathcal{A}_\infty$.

For translation-invariant states, the following definitions are crucial. To state them, we consider sequences of boxes (that is, hyperrectangles) $(\Lambda_n)_{n\in\N}$
with $\Lambda_n\subset \Lambda_{n+1}$
and with the property that for every $x\in\Z^\nu$ there is some $n\in\N$ with $x\in\Lambda_n$. Unless specified otherwise, all sequences of regions $\Lambda_n$ in the following
will be assumed to have these properties.

All logarithms are in base $e$, i.e.\ $\log(\exp(x))=x$.
\begin{definition}
\label{Def1}
Let $\omega$ be a translation-invariant state on $\mathcal{A}_\infty$. Then the following expressions exist:
\begin{itemize}
\item Energy density: $\displaystyle u(\omega):=\lim_{n\to\infty} \frac 1 {|\Lambda_n|} \tr(\omega_{\Lambda_n} H_{\Lambda_n})$,
\item entropy density: $\displaystyle s(\omega):= -\lim_{n\to\infty} \frac 1 {|\Lambda_n|} \tr(\omega_{\Lambda_n} \log \omega_{\Lambda_n})$.
\end{itemize}
Moreover, there is the state-independent quantity \emph{pressure}
\[
   p(\beta,\Phi):=\lim_{n\to\infty}\frac 1 {|\Lambda_n|} \log \tr \exp(-\beta H_{\Lambda_n})
\]
for all $\beta\geq 0$. It satisfies
\begin{equation}
   p(\beta,\Phi)=\sup\{s(\varphi)-\beta\, u(\varphi)\,\,|\,\, \varphi\mbox{ is any translation-invariant state on }\mathcal{A}_\infty\}.
   \label{eqVariationalPrinciple}
\end{equation}
\end{definition}

See~\cite{Simon} for more details. In the following, we consider Gibbs state on the {infinite} lattice. They are defined by any one of the following equivalent conditions.

\begin{definition}
\label{DefGibbs}
Let $\omega$ be a translation-invariant state on the quasi-local algebra $\mathcal{A}_\infty$ over $\Z^\nu$, with translation-invariant finite-range interaction $\Phi$,
and let $\beta>0$. Then the following conditions are equivalent:
\begin{itemize}
\item \textbf{Variational principle:} it holds $p(\beta,\Phi)=s(\omega)-\beta\, u(\omega)$, which is the maximal possible value according to~(\ref{eqVariationalPrinciple}).
\item \textbf{KMS condition} at inverse temperature $\beta$ (see~\cite{Araki74},
\item \textbf{Gibbs condition} at inverse temperature $\beta$ (see also~\cite{Araki74}).
\end{itemize}
If $\omega$ satisfies one of these equivalent conditions, we will call $\omega$ a \emph{Gibbs state} at inverse temperature $\beta$.
We say that Gibbs states are unique around inverse temperature $\beta$ for a given interaction $\Phi$ if there is an open interval
containing $\beta$ such that for every $\beta'$ in this interval, there is a unique (only one) Gibbs state at inverse temperature $\beta'$.
\end{definition}
Since we do not use the KMS and the Gibbs conditions, we do not explain them in detail here. We refer the reader to~\cite{Araki74} and~\cite{Simon}.

For what follows, we need to extend the notion of translation-invariance to finite regions. This is done in the obvious way.
Let $\mathcal{A}_n:=\mathcal{A}_{\Lambda_n}$, with $\Lambda_n$ a sequence
of boxes tending to infinity as $n\to\infty$ in the sense specified above.
Call an observable $A\in\mathcal{A}_n$ \emph{$\Lambda_n$-translation-invariant} if it
is translation-invariant with respect to periodic translations of $\Lambda_n$; that is, translations in which we regard $\Lambda_n$ as a torus.
In more detail, write $\Lambda_n$ as the product of intervals
\[
   \Lambda_n=[\lambda_1,\mu_1]\times [\lambda_2,\mu_2]\times\ldots\times[\lambda_\nu,\mu_\nu],
\]
where $\lambda_i,\mu_i\in\Z$, $\lambda_i\leq\mu_i$. The statement that $\Lambda_n$ tends to infinity means that all $\lambda_i\to - \infty$
and all $\mu_i\to +\infty$ as $n\to\infty$.
Define $\nu$ independent translations $(T_j)_{j=1,\ldots,\nu}$ for $x\in\Lambda_n$ by
\[
   T_j(x)\equiv T_j(x_1,\ldots,x_\nu)=(x_1,\ldots,x_{j-1},x_j\oplus 1, x_{j+1},\ldots,x_\nu),
\]
where
\[
   x_j\oplus 1 =\left\{
      \begin{array}{cl}
         x_j+1 & \mbox{if }x_j+1\leq\mu_j,\\
         \lambda_j & \mbox{otherwise}.
      \end{array}
   \right.
\]
We can interpret $T_j$ as a unitary operator, translating the computational basis vectors, constructed from the translation automorphisms $\gamma_y$.
An observable $A$ will be called $\Lambda_n$-translation-invariant if $T_j A T_j^\dagger = A$ for all $j=1,\ldots,\nu$.

We can also formalize this definition somewhat differently. Denote by $\mathbf{T}(\Lambda_n)$ the set of all periodic translations of $\Lambda_n$
into itself; in other words, regard $\Lambda_n$ as a torus, and $\mathbf{T}(\Lambda_n)$ as the set of translations on the torus.
These are arbitrary compositions of translations $T_j$. If $\alpha\in \Z^\nu$, then the periodic translation by vector $\alpha$ will be denoted
$T_\alpha\in\mathbf{T}(\Lambda_n)$; it equals $T_\alpha=\bigcirc_{j=1}^\nu T_j^{\alpha_j}$, where the circle denotes composition and the $T_j$ are mutually commuting.
Then an observable $A$ is $\Lambda_n$-translation-invariant if and only if $T A T^\dagger=A$ for all $T\in\mathbf{T}(\Lambda_n)$.

So far, we have defined $H_{\Lambda}$ for finite regions $\Lambda$ by summing up all interaction terms that are fully contained in $\Lambda$.
This is usually called the Hamiltonian with \emph{open boundary conditions}. Alternatively, one can consider periodic or other, more general
boundary conditions. We use the following definition.
\begin{definition}[Periodic and arbitrary boundary conditions]
\label{DefBC}
Let $\Phi$ be any finite-range translation-invariant interaction. A region $\Lambda\subset\Z^\nu$ is called \emph{large enough} if for every region $X$ with
$\Phi(X)\neq \emptyset$, there is $y\in\Z^\nu$ such that the translation $X+y$ is contained in $\Lambda$.

A \emph{choice of boundary conditions} is a map that assigns to every large enough, finite set $\Lambda\subset\Z^\nu$ a Hamiltonian $H_\Lambda^{BC}$ such
that
\begin{equation}
   \lim_{n\to\infty} \frac{\left\| H_{\Lambda_n}^{BC}-H_{\Lambda_n}\right\|_\infty}{|\Lambda_n|} = 0
   \label{eqAsymptVanish}
\end{equation}
for every sequence of boxes $(\Lambda_n)_{n\in\N}$ that tends to infinity in the sense specified above.

A particularly important example of choice of boundary conditions is given by \emph{periodic boundary conditions}, with corresponding Hamiltonians denoted $H_{\Lambda}^p$.
Following~\cite{Simon}, we define it as
\[
   H_\Lambda^p:={\sum_{X\cap\Lambda\neq\emptyset}}' T_{-\alpha}\left[\gamma_\alpha(\Phi(X))\right]T_{-\alpha}^\dagger,
\]
where $\alpha\in\Z^\nu$ denotes any vector that translates $X$ into $\Lambda$, i.e.\ $X+\alpha\in\Lambda$, and the prime on the sum indicates that regions
$X,X'$ with $X'_i=X_i+n_i a_i$ for all $i$, where $a_i$ is the $i$-th sidelength of the boxes $\Lambda$, and $n_i\in\Z$, are not included twice in the sum,
but only once (i.e.\ only $X$ or $X'$ will be included).
\end{definition}

The fact that we are demanding that regions $\Lambda$ are large enough implies that our definition of $H_\Lambda^p$ agrees with both of what
Simon~\cite{Simon} calls $H_\Lambda^{p,1}$ and $H_\Lambda^{p,2}$. To see that $H_\Lambda^p$ satisfies~(\ref{eqAsymptVanish}), denote by $\partial\Lambda$
the discrete boundary of $\Lambda$, that is
\begin{equation}
   \partial\Lambda:=\left\{x\in\Lambda\,\,|\,\, \exists y\in\Z^\nu\setminus\Lambda:\enspace {\rm dist}(x,y)\leq 1\right\},
   \label{eqDefBoundary}
\end{equation}
where ${\rm dist}(x,y):=\max_i |y_i-x_i|$. Suppose $x\in\Z^\nu$ is any point.
Since $\Phi$ has finite range and is translation-invariant, there is some finite integer $\kappa\in\N$ equal to the number of finite regions $X$ that contain $x$ and have
$\Phi(X)\neq 0$.
This number is the same for every $x\in\Z^\nu$. Also, $\|\Phi\|:=\max_X \left\|\Phi(X)\right\|_\infty$ is finite. Thus
\[
   \left\|H_{\Lambda_n}^p - H_{\Lambda_n}\right\|_\infty \leq \sum_{X\cap\Lambda_n\neq\emptyset,\enspace X\not\subset \Lambda_n} \|\Phi(X)\|_\infty
   \leq \sum_{x\in\partial\Lambda_n} \kappa\|\Phi\| = \kappa\|\Phi\|\,|\partial\Lambda_n|.
\]
Since $|\partial\Lambda_n|/|\Lambda_n|$ tends to zero for $n\to\infty$, this proves~(\ref{eqAsymptVanish}). We can write $H_{\Lambda_n}^p$ in an alternative
form. Given $\Lambda_n$, denote by $X_1,\ldots, X_N$ subsets of $\Lambda_n$ with the property that no $X_i$ is a periodic
translation of any other $X_j$, and such that all subsets of $\Lambda_n$ can be generated by periodically translating some $X_i$. For example, if
$\Lambda={0,1,2,3}$ on a one-dimensional lattice, then $X_1=\{0\}$, $X_2=\{0,1\}$, $X_3=\{0,2\}$, $X_4=\{0,1,2\}$ and $X_5=\{0,1,2,3\}$ is a possible
choice of those sets. Then we have
\[
   H_{\Lambda_n}^p =\sum_{i=1}^N \sum_{T\in\mathbf{T}(\Lambda_n)} T \Phi(X_i) T^\dagger,
\]
and from the representation it becomes clear that $H_\Lambda^p$ is $\Lambda_n$-translation-invariant.

Note that we do not consider what Simon calls ``external boundary conditions''.

In the following, we will frequently use that the energy density does not depend on the choice of boundary conditions; that is, if $(\tau_n)_{n\in\N}$ is
an arbitrary sequence of states on $\mathcal{A}_n$, then
\[
   \lim_{n\to\infty}\left( \frac{\tr(\tau_n H_{\Lambda_n})}{|\Lambda_n|} - \frac{\tr(\tau_n H_{\Lambda_n}^{BC})}{|\Lambda_n|}\right)=0.
\]
This is because $|\tr(\tau_n H_{\Lambda_n})-\tr(\tau_n H_{\Lambda_n}^{BC})|\leq \|H_{\Lambda_n}-H_{\Lambda_n}^{BC}\|_\infty$.

\begin{lemma}
\label{LemMainLemma}
Let $(\tau_n)_{n\in\N}$ be a sequence of density matrices in $\mathcal{A}_n$ such that every $\tau_n$ is $\Lambda_n$-translation-invariant.
For every $m\in\N$, consider the sequence of states $(\rho_n^{(m)})_{n\in\N}\in\mathcal{A}_m$, defined for $n\geq m$ by
\[
   \rho_n^{(m)}:=\Tr_{\Lambda_n\setminus\Lambda_m} \tau_n.
\]
Define $L^{(m)}$ as the set of all limit points of the sequence $(\rho_n^{(m)})_{n\in\N}$, and $L$ as the set of all possible
sequences $(\sigma_m)_{m\in\N}$ with $\sigma_m\in L^{(m)}$ and $\sigma_{m-1}=\Tr_{\Lambda_m\setminus\Lambda_{m-1}}\sigma_m$.
Then $L$ is not empty, and every element of $L$ defines a translation-invariant state on the quasi-local algebra. Additionally, if
$\beta\geq 0$ is such that
\begin{equation}
   \liminf_{n\to\infty}\frac 1 {|\Lambda_n|}\left(\strut S(\tau_n)-\beta\, \tr(\tau_n H_{\Lambda_n})\right)\geq p(\beta,\Phi),
   \label{eqIneqEq}
\end{equation}
then every state $\omega\in L$ is a Gibbs state at inverse temperature $\beta$, and we have equality in~(\ref{eqIneqEq}). Furthermore, if $L$ contains only a single element $\omega_\beta$,
then $\displaystyle \lim_{n\to\infty} \frac 1 {|\Lambda_n|} \tr(\tau_n H_{\Lambda_n})=u(\omega_\beta)$ and $\lim_{n\to\infty}\frac 1 {|\Lambda_n|} S(\tau_n)=s(\omega_\beta)$.
\end{lemma}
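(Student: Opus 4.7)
\textbf{Proof proposal for Lemma~\ref{LemMainLemma}.}

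The plan is to build the limit states by a diagonal extraction, then establish two technical convergence facts (asymptotic energy and asymptotic entropy density), and finally combine them with the variational principle~(\ref{eqVariationalPrinciple}) and~(\ref{eqIneqEq}) to force any limit state to be Gibbs. First I would observe that each $\rho_n^{(m)}$ lives in the compact set of density matrices in the finite-dimensional algebra $\mathcal{A}_m$, so along any subsequence one can extract a further convergent subsequence. Enumerate $m=1,2,3,\ldots$ and perform the standard diagonal extraction: pick a subsequence along which $\rho_n^{(1)}$ converges to some $\sigma_1$, then a sub-subsequence along which $\rho_n^{(2)}$ converges to some $\sigma_2$, and so on. Because the partial trace is continuous, the consistency relations $\Tr_{\Lambda_m\setminus\Lambda_{m-1}}\sigma_m=\sigma_{m-1}$ are automatic, so the resulting $(\sigma_m)_{m\in\N}$ lies in $L$. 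Translation-invariance of the induced state $\omega$ comes from the $\Lambda_n$-translation-invariance of each $\tau_n$: for any fixed $A\in\mathcal{A}_\Lambda$ and any $y\in\Z^\nu$, once $n$ is large enough that $\Lambda$ and $\Lambda+y$ both lie inside $\Lambda_n$ without wrapping around the torus, periodic translation in $\Lambda_n$ agrees with the ordinary lattice translation, hence $\tr(\tau_n A)=\tr(\tau_n\gamma_y(A))$; the equality survives the limit.

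Next I would prove the energy convergence $\tr(\tau_n H_{\Lambda_n})/|\Lambda_n|\to u(\omega)$ along the extracting subsequence. Writing $H_{\Lambda_n}=\sum_{X\subset\Lambda_n}\Phi(X)$, group terms into orbits of the periodic translation group $\mathbf{T}(\Lambda_n)$: up to boundary corrections of order $|\partial\Lambda_n|$ (which vanish relative to $|\Lambda_n|$), each representative $\Phi(X_i)$ from Definition~\ref{DefBC}'s enumeration appears $|\Lambda_n|$ times, and by $\Lambda_n$-translation-invariance of $\tau_n$ all translates contribute the same expectation value. Since the representatives $X_i$ have finite diameter $\leq r$, their expectation values on $\tau_n$ equal those of $\omega$ on $(\rho_n^{(m)})$ for $m$ larger than $r$, and those in turn converge to $\omega(\Phi(X_i))$. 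Summing and comparing to the infinite-volume definition of $u(\omega)$ gives the desired limit.

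The main obstacle is the entropy estimate $\limsup_{n\to\infty} S(\tau_n)/|\Lambda_n|\leq s(\omega)$, which is where the $\Lambda_n$-translation-invariance really earns its keep. The approach is the standard subadditivity tiling argument: fix $m$, partition (up to a boundary strip of negligible density) $\Lambda_n$ into disjoint translates of $\Lambda_m$, and apply subadditivity of von Neumann entropy to get $S(\tau_n)\leq (|\Lambda_n|/|\Lambda_m|)\,S(\tau_n|_{\Lambda_m})+o(|\Lambda_n|)$, where translation-invariance ensures that every translate has the same entropy $S(\tau_n|_{\Lambda_m})=S(\rho_n^{(m)})$. Taking $n\to\infty$ along the extracting subsequence yields $\limsup_n S(\tau_n)/|\Lambda_n|\leq S(\omega_{\Lambda_m})/|\Lambda_m|$, and then taking $m\to\infty$ gives the bound $\leq s(\omega)$ by definition of entropy density. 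Combined with the energy convergence, this yields $\liminf_n[S(\tau_n)-\beta\tr(\tau_n H_{\Lambda_n})]/|\Lambda_n|\leq s(\omega)-\beta\, u(\omega)$. Hypothesis~(\ref{eqIneqEq}) and the variational principle~(\ref{eqVariationalPrinciple}) then sandwich this quantity to equal $p(\beta,\Phi)$, so $\omega$ is a Gibbs state and equality holds in~(\ref{eqIneqEq}).

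Finally, for the uniqueness clause, suppose $L=\{\omega_\beta\}$ but assume for contradiction that $\tr(\tau_n H_{\Lambda_n})/|\Lambda_n|$ does not converge to $u(\omega_\beta)$. Then some subsequence stays bounded away from $u(\omega_\beta)$; applying the whole diagonal construction to this subsequence yields an element of $L$, which must equal $\omega_\beta$, but the previous paragraph shows its energy density equals the limit of $\tr(\tau_n H_{\Lambda_n})/|\Lambda_n|$ along that subsequence, contradiction. The same argument, combined with the equality case of the entropy bound forced by~(\ref{eqIneqEq}), gives $S(\tau_n)/|\Lambda_n|\to s(\omega_\beta)$, completing the proof.
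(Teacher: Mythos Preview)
Your proposal is essentially correct and follows the same overall strategy as the paper: compactness to build limit states, $\Lambda_n$-translation-invariance to get translation-invariance of the limit, subadditivity plus tiling for the entropy upper bound, and the variational principle to pin down the Gibbs property. The paper differs in two places worth noting. First, for non-emptiness of $L$ the paper uses a ``tree'' argument (showing $\Tr_{\Lambda_m\setminus\Lambda_{m-1}}L^{(m)}=L^{(m-1)}$ and then extracting an infinite path) rather than your diagonal extraction; both are standard and equivalent here. Second, for the energy density the paper uses the same tiling as for entropy, defining a truncated Hamiltonian $H_{\Lambda_{n_k}}^{(m)}=\sum_i \sum_{X\subset\Lambda_m^{(i)}}\Phi(X)$ and bounding $\|H_{\Lambda_{n_k}}-H_{\Lambda_{n_k}}^{(m)}\|_\infty$ by a surface term, which yields only an \emph{inequality} for $u(\omega)$ (and then combines $\limsup$ of entropy with $\limsup$ of energy via $\liminf(a+b)\leq\liminf a+\limsup b$). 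Your orbit-decomposition approach---reducing to $\tr(\rho_n^{(m)}A_\Phi)$ with $A_\Phi=\sum_{X\ni 0}|X|^{-1}\Phi(X)$ for a single sufficiently large $m$---is cleaner and gives actual convergence along the subsequence, which slightly simplifies the combination step.

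One point you should tighten: the lemma asserts that \emph{every} $\omega\in L$ is Gibbs, but your write-up only explicitly treats the single state produced by diagonal extraction. For an arbitrary $\omega\in L$ there need not be one subsequence along which $\rho_n^{(m)}\to\omega_m$ simultaneously for all $m$. The fix is easy and is what the paper does: fix $m$, choose a subsequence $(n_k)$ (depending on $m$) with $\rho_{n_k}^{(m)}\to\omega_m$, run your energy and entropy estimates along that subsequence, and observe that the resulting inequality $p(\beta,\Phi)\leq\liminf_n[\cdots]\leq\liminf_k[\cdots]\leq S(\omega_m)/|\Lambda_m|-\beta\,u(\omega)$ is independent of which subsequence was used; then send $m\to\infty$. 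Your uniqueness-clause argument by contradiction is fine once you note (via either the tree picture or a direct argument) that $|L|=1$ forces each $L^{(m)}$ to be a singleton, so the full sequence $\rho_n^{(m)}$ actually converges.
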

\proof
First, we observe that every element of $L^{(m)}$ generates an element of $L^{(m-1)}$ by taking the partial trace over $\Lambda_m\setminus\Lambda_{m-1}$; that is,
\begin{equation}
   \label{eqProp1}
   \Tr_{\Lambda_m\setminus\Lambda_{m-1}}L^{(m)}\subseteq L^{(m-1)}.
\end{equation}
Similarly, suppose that $\rho\in L^{(m-1)}$. By definition, this means that there is a strictly increasing sequence of natural numbers $(n_k)_{k\in\N}$ such
that $\rho_{n_k}^{(m-1)}\stackrel{k\to\infty}\longrightarrow \rho$. Now consider the sequence $\rho_{n_k}^{(m)}$; since $m$ is fixed, it is a bounded sequence
on a finite-dimensional vector space. By Bolzano-Weierstra\ss, it must have at least one limit point $\bar \rho$. Since $\rho_{n_k}^{(m-1)}=\Tr_{\Lambda_m\setminus
\Lambda_{m-1}}\rho_{n_k}^{(m)}$, we obtain $\rho=\Tr_{\Lambda_m\setminus\Lambda_{m-1}} \bar \rho$. We have thus proven that
\begin{equation}
   \label{eqProp2}
   \mbox{for every }\rho\in L^{(m-1)},\mbox{ there is }\bar\rho\in L^{(m)}\mbox{ such that }\rho=\Tr_{\Lambda_m\setminus\Lambda_{m-1}} \bar \rho.
\end{equation}
Furthermore, by Bolzano-Weierstra\ss, $L^{(1)}$ is non-empty. Combining the properties~(\ref{eqProp1}) and (\ref{eqProp2}), we obtain
$\Tr_{\Lambda_m\setminus\Lambda_{m-1}}L^{(m)}= L^{(m-1)}$ as an equality between non-empty sets. This is 
sketched in Figure~\ref{fig_tree}, where we plot elements of $L^{(m)}$ as dots, with an edge connecting two dots if the left element (in $L^{(m-1)}$) is the partial
trace of the right one (in $L^{(m)}$). Wandering from left to the right, no path will lead to a dead end; furthermore, every
point can be reached this way by starting with some element in $L^{(1)}$.
\begin{figure}[!hbt]
\begin{center}
\includegraphics[angle=0, width=5cm]{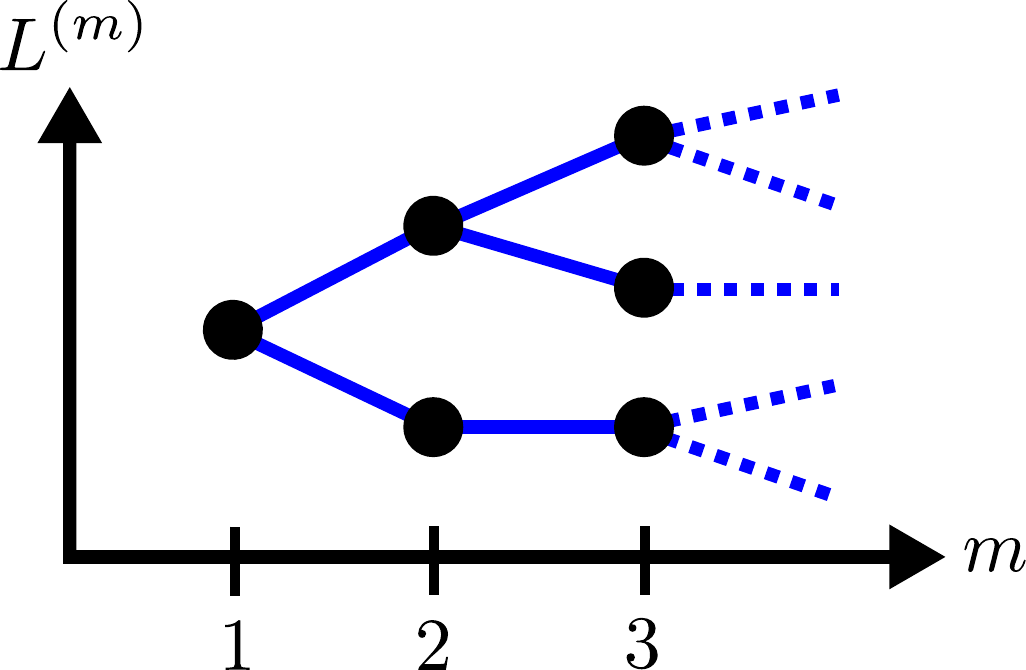}
\caption{Schematics of the sequence $L^{(m)}$ of limit points, as defined in the proof of Lemma~\ref{LemMainLemma}. Note that $|L^{(1)}|>1$ is possible.
}
\label{fig_tree}
\end{center}
\end{figure}
Thus, there is at least one path that starts with some element $\sigma_1\in L^{(1)}$ and extends to infinity -- that is, a sequence $(\sigma_m)_{m\in\N}$
with $\sigma_{m-1}=\Tr_{\Lambda_m\setminus\Lambda_{m-1}}\sigma_m$. $L$ is the set of all these paths and hence not empty.
Every $\omega\in L$ can be interpreted as a state: for any finite region $\Lambda\subset \Z^\nu$, take the smallest $n$ such that $\Lambda\subset \Lambda_n$,
and set $\omega_\Lambda:=\Tr_{\Lambda_n\setminus\Lambda}\omega_m$. This defines a consistent family of density matrices, hence a state on the quasi-local algebra.

Now let $\omega\in L$ be any state. We claim that $\omega$ is translation-invariant. It will be sufficient to show the invariance equation
$\omega(\gamma_y(A))=\omega(A)$ for observables $0\leq A \leq\mathbf{1}$ and translations $\gamma_{\delta_j}$,
where $\delta_j=(0,\ldots,0,\underbrace{1}_j,0,\ldots,0)$. So let $\Lambda\subset\Z^\nu$ be finite, $A\in\mathcal{A}_\Lambda$ an observable with $0\leq A \leq\mathbf{1}$, and
$j\in\{1,\ldots,d\}$; set $\gamma:=\gamma_{\delta_j}$. Choose $m$ large enough such that $\Lambda\subseteq \Lambda_m$ and $\Lambda+\delta_j\subseteq\Lambda_m$.
Let $\varepsilon>0$ be arbitrary. Since $\omega_{\Lambda_m}\in L^{(m)}$, there is some $n\geq m$ such that
\[
   \left\| \Tr_{\Lambda_n\setminus\Lambda_m} \tau_n - \omega_{\Lambda_m}\right\|_1 < \varepsilon.
\]
The effect of the translation of the observable $A$ in the region $\Lambda_n$ can be written
\[
   \gamma(A)\otimes \mathbf{1}_{\Lambda_n\setminus(\Lambda+\delta_j)} = T_j(A\otimes\mathbf{1}_{\Lambda_n\setminus\Lambda})T_j^\dagger,
\]
where $T_j\in\mathcal{A}_{\Lambda_n}$ is the unitary translation operator in $\Lambda_n$ as defined shortly before Definition~\ref{DefBC}.
Using this, we see that there are two real numbers $\Delta,\Delta'\in\R$ with $|\Delta|<2\varepsilon$, $|\Delta'|<2\varepsilon$ such that
\begin{eqnarray*}
   \omega(\gamma(A))&=&\tr\left[\omega_{\Lambda_m}\left(\gamma(A)\otimes \mathbf{1}_{\Lambda_m\setminus (\Lambda+\delta_j)}\right)\right]
   =\tr\left[ \left(\Tr_{\Lambda_n\setminus\Lambda_m}\tau_n\right)\left(\gamma(A)\otimes \mathbf{1}_{\Lambda_m\setminus (\Lambda+\delta_j)}\right)\right]+\Delta \\
   &=&\tr\left[ \tau_n\left(\gamma(A)\otimes\mathbf{1}_{\Lambda_n\setminus(\Lambda+\delta_j)}\right)\right]+\Delta
   =\tr\left[\tau_n T_j(A\otimes\mathbf{1}_{\Lambda_n\setminus\Lambda})T_j^\dagger\right]+\Delta\\
   &=& \tr\left[ T_j^\dagger \tau_n T_j (A\otimes\mathbf{1}_{\Lambda_n\setminus\Lambda})\right]+\Delta
   = \tr\left[ \tau_n (A\otimes\mathbf{1}_{\Lambda_n\setminus\Lambda})\right]+\Delta  \\
   &=& \tr\left[\left(\Tr_{\Lambda_n\setminus\Lambda_m}\tau_n\right)(A\otimes \mathbf{1}_{\Lambda_m\setminus\Lambda})\right]+\Delta
   =\tr\left[\omega_{\Lambda_m} (A\otimes\mathbf{1}_{\Lambda_m\setminus\Lambda})\right]+\Delta+\Delta' \\
   &=& \omega(A)+\Delta+\Delta'.
\end{eqnarray*}
Since $\varepsilon>0$ was arbitrary, this proves translation-invariance of $\omega$.

In particular, every $\omega\in L$ has
a well-defined entropy rate $s(\omega)$; what can we say about it? Fix $m\in\N$, and let
$\omega_m:=\omega_{\Lambda_m}$. Remember that $\rho_n^{(m)}=\Tr_{\Lambda_n\setminus\Lambda_m} \tau_n$.
Since $\omega_m\in L^{(m)}$, there exists a sequence $(n_k)_{k\in\N}$ such that
\[
   \rho_{n_k}^{(m)}\stackrel{k\to\infty}\longrightarrow \omega_m.
\]
Fix $k\in\N$. We now
decompose $\Lambda_{n_k}$ into a disjoint union of boxes,
where each box is a translate of $\Lambda_m$ (where we consider translations as in the notion of $\Lambda_{n_k}$-translation invariance -- that is, we regard
$\Lambda_{n_k}$ as a torus). In general, this cannot be done perfectly, but there will be some remaining part of $\Lambda_{n_k}$ not covered by a translate
of $\Lambda_m$. To spell out the details, let $a_m^{(1)},\ldots,a_m^{(\nu)}$ denote the sidelengths of the box $\Lambda_m$, and $a_{n_k}^{(1)},\ldots,a_{n_k}^{(\nu)}$
the sidelengths of $\Lambda_{n_k}$. Write
\[
   a_{n_k}^{(i)}=\ell_i\cdot a_m^{(i)}+j_i,\qquad \mbox{where }0\leq j_i< a_m^{(i)}.
\]
Clearly, all $\ell_i$ tend to infinity for $k\to\infty$ on fixed $m$.
Let $N_k:=\ell_1\cdot \ell_2\cdot\ldots \cdot\ell_\nu$, then there are $N_k$ translates $\Lambda_m^{(1)},\Lambda_m^{(2)},\ldots,\Lambda_m^{(N_k)}$ of $\Lambda_m$
and the remainder $\Lambda_{\rm rem}\subset\Lambda_{N_k}$, all of them pairwise disjoint, such that
\[
   \Lambda_{n_k} = \bigcup_{i=1}^{N_k} \Lambda_m^{(i)} \cup \Lambda_{\rm rem}.
\]
We have $|\Lambda_{\rm rem}|=|\Lambda_{n_k}|-N_k\cdot a_m^{(1)}\cdot\ldots \cdot a_m^{(\nu)}$, hence
\begin{eqnarray*}
   \frac{|\Lambda_{\rm rem}|}{N_k} &=& \frac{|\Lambda_{n_k}|}{N_k} - a_m^{(1)}\cdot\ldots\cdot a_m^{(\nu)}
   =\frac{a_{n_k}^{(1)}\cdot a_{n_k}^{(2)}\cdot\ldots\cdot a_{n_k}^{(\nu)}}{\ell_1 \ell_2\ldots \ell_\nu} - a_m^{(1)} a_m^{(2)}\ldots a_m^{(\nu)}\\
   &<& \frac{(\ell_1+1)a_m^{(1)} \cdot\ldots\cdot (\ell_\nu+1)a_m^{(\nu)}}{\ell_1\ell_2\ldots\ell_\nu}-a_m^{(1)} a_m^{(2)}\ldots a_m^{(\nu)}
   =|\Lambda_m|\cdot\left[
      \left(1+\frac 1 {\ell_1}\right)\ldots \left(1+\frac 1 {\ell_\nu}\right)-1
   \right]\\
   &\stackrel{k\to\infty}\longrightarrow& 0.
\end{eqnarray*}
As a consequence, we also obtain
\[
   \lim_{k\to\infty}\frac{|\Lambda_{n_k}|}{N_k} = \lim_{k\to\infty}\left( |\Lambda_m| + \frac{|\Lambda_{\rm rem}|}{N_k}\right)=|\Lambda_m|.
\]
Since $\tau_{n_k}$ is $\Lambda_{n_k}$-translation-invariant, its marginals on all the boxes $\Lambda_m^{(i)}$ are equal,
that is, equal to $\rho_{n_k}^{(m)}$. Due to subadditivity of von Neumann entropy $S$, we have
\[
   S(\tau_{n_k})\leq N_k S(\rho_{n_k}^{(m)})+S(\tau_{\Lambda_{\rm rem}})\leq N_k S(\rho_{n_k}^{(m)}) + |\Lambda_{\rm rem}|\cdot\log d,
\]
where $d$ is the single-site Hilbert space dimension. Thus, we obtain
\begin{equation}
   S(\omega_m)= \lim_{k\to\infty} S(\rho_{n_k}^{(m)}) \geq \limsup_{k\to\infty} \frac 1 {N_k} \left[ S(\tau_{n_k})-|\Lambda_{\rm rem}|\cdot\log d\right]
   =\limsup_{k\to\infty} \frac 1 {N_k} S(\tau_{n_k})
   =|\Lambda_m| \limsup_{k\to\infty} \frac 1 {|\Lambda_{n_k}|} S(\tau_{n_k}).
   \label{eqLimSupS}
\end{equation}
Furthermore, we can estimate the energy expectation value of $\omega_m$ as follows. Define a Hamiltonian $H_{\Lambda_{n_k}}^{(m)}$ on $\Lambda_{n_k}$
by ``switching off'' all interaction terms that are not fully contained in one of the $\Lambda_m^{(i)}$, that is,
\[
   H_{\Lambda_{n_k}}^{(m)}:=\sum_{i=1}^{N_k} \sum_{X\subset \Lambda_m^{(i)}} \Phi(X).
\]
We can estimate the norm difference of $H_{\Lambda_{n_k}}^{(m)}$ and $H_{\Lambda_{n_k}}$ as follows. All missing terms are either fully contained in
$\Lambda_{\rm rem}$, or act across the boundary of some $\Lambda_m^{(i)}$. With the boundary $\partial \Lambda_m^{(i)}$ as defined in~(\ref{eqDefBoundary}),
we obtain $\lim_{m\to\infty}|\partial \Lambda_m|/|\Lambda_m|=0$, and due to finite-range of the interaction $\Phi$, there are constants $c_1,c_2>0$ such that
\[
   \left\| H_{\Lambda_{n_k}} - H_{\Lambda_{n_k}}^{(m)}\right\|_\infty \leq c_1 |\Lambda_{\rm rem}| + c_2\sum_{i=1}^{N_k} |\partial \Lambda_m^{(i)}|
   =c_1 |\Lambda_{\rm rem}| + c_2 N_k |\partial \Lambda_m|.
\]
By construction and translation-invariance of $\Phi$, we have
\[
   \tr\left(\tau_{n_k} H_{\Lambda_{n_k}}^{(m)}\right)=N_k\, \tr\left(\rho_{n_k}^{(m)} H_{\Lambda_m}\right).
\]
Combining these identities, we get
\begin{eqnarray}
   \frac 1 {|\Lambda_m|}\tr(\omega_m H_{\Lambda_m}) &=& \frac 1 {|\Lambda_m|}\lim_{k\to\infty} \tr\left(\rho_{n_k}^{(m)} H_{\Lambda_m}\right)
   =\frac 1 {|\Lambda_m|} \lim_{k\to\infty} \frac 1 {N_k} \tr\left(\tau_{n_k} H_{\Lambda_{n_k}}^{(m)}\right)\nonumber \\
   &\leq& \frac 1 {|\Lambda_m|} \limsup_{k\to\infty}\frac 1 {N_k}\left(\strut \tr(\tau_{n_k} H_{\Lambda_{n_k}})+c_1|\Lambda_{\rm rem}|+c_2 N_k |\partial \Lambda_m|\right)\nonumber \\
   &=& \frac{c_2 |\partial \Lambda_m|}{|\Lambda_m|} + \limsup_{k\to\infty}\frac 1 {|\Lambda_{n_k}|} \tr\left(\tau_{n_k} H_{\Lambda_{n_k}}\right).
   \label{eqFirstEnergyDensity}
\end{eqnarray}
Since $\liminf(a_n+b_n)\leq\liminf a_n+\limsup b_n$, we obtain
\begin{eqnarray*}
   \frac 1 {|\Lambda_m|} \left( S(\omega_m)-\beta\, \tr(\omega_m H_{\Lambda_m})\right) &\geq& \limsup_{k\to\infty} \frac 1 {|\Lambda_{n_k}|} S(\tau_{n_K})
   -\beta\limsup_{k\to\infty}\frac 1 {|\Lambda_{n_k}|} \tr\left(\tau_{n_k} H_{\Lambda_{n_k}}\right)-\beta c_2 \frac{|\partial \Lambda_m|}{|\Lambda_m|}\\
   &\geq& \liminf_{k\to\infty} \frac 1 {|\Lambda_{n_k}|} \left( S(\tau_{n_k})-\beta\, \tr(\tau_{n_k} H_{\Lambda_{n_k}})\right)-\beta c_2 \frac{|\partial \Lambda_m|}{|\Lambda_m|}\\
   &\geq& \liminf_{n\to\infty} \frac 1 {|\Lambda_n|} \left( S(\tau_n)-\beta\, \tr(\tau_n H_{\Lambda_n})\right)-\beta c_2 \frac{|\partial \Lambda_m|}{|\Lambda_m|}.
\end{eqnarray*}
Taking the limit $m\to\infty$ finally shows that $s(\omega)-\beta\, u(\omega)\geq p(\beta,\Phi)$.
Since $\omega$ is translation-invariant, we must have equality, and $\omega$ must be a Gibbs state.

Now suppose that $L$ contains only a single element, then so does $L^{(m)}$; hence $\rho_n^{(m)}$ converges for $n\to\infty$, and we can choose
the convergent subsequence to be $n_k=k$. Repeating the calculation of~(\ref{eqFirstEnergyDensity}) with inequality in both directions yields
\begin{eqnarray*}
   \liminf_{k\to\infty} \frac 1 {|\Lambda_k|} \tr(\tau_k H_{\Lambda_k}) &\geq& \frac 1 {|\Lambda_m|} \tr(\omega_m H_{\Lambda_m})-\frac{c_2 |\partial \Lambda_m|}{|\Lambda_m|},\\
   \limsup_{k\to\infty} \frac 1 {|\Lambda_k|} \tr(\tau_k H_{\Lambda_k}) &\leq& \frac 1 {|\Lambda_m|} \tr(\omega_m H_{\Lambda_m})+\frac{c_2 |\partial \Lambda_m|}{|\Lambda_m|},
\end{eqnarray*}
By taking the limit $m\to\infty$ of the right-hand side, we obtain $\lim_{n\to\infty} \frac 1 {|\Lambda_n|} \tr(\tau_n H_{\Lambda_n})=u(\omega_\beta)$.
Then it follows directly from~(\ref{eqIneqEq}) that $\liminf_{n\to\infty} \frac 1 {|\Lambda_n|} S(\tau_n)=s(\omega_\beta)$. Furthermore, (\ref{eqLimSupS}) shows
that $\limsup_{n\to\infty} \frac 1 {|\Lambda_n|} S(\tau_n)\leq s(\omega_\beta)$, hence $\lim_{n\to\infty} \frac 1 {|\Lambda_n|} S(\tau_n)= s(\omega_\beta)$.
\qed

We can always define a \emph{maximally mixed state} $\omega$ on the quasi-local algebra $\mathcal{A}_\infty$, by defining its local density matrix for finite
$\Lambda\subset\Z^\nu$ as $\omega_\Lambda:=\mathbf{1}_\Lambda/ d^{|\Lambda|}$. It is easy to check that this is a consistent family of density matrices,
defining a translation-invariant state on $\mathcal{A}_\infty$. According to Definition~\ref{Def1}, its energy density exists; it is $u(\omega)=\lim_{n\to\infty} \tr(H_{\Lambda_n})/d^{|\Lambda_n|}$.
This fact will be used in the following lemma. To state that lemma, we have to assume that the interaction $\Phi$ does not vanish -- and, in addition, that it
is not \emph{physically equivalent} to zero. An example would be an interaction in one dimension (i.e.\ $\nu=1$) with $\Phi(\{1,2\})=-\Phi(\{1\})\otimes\mathbf{1}_2$,
such that the resulting Hamiltonian is zero up to boundary terms. For a formal definition of physical equivalence and a further example see~\cite{Simon}.
Note also that $\Phi$ is physically equivalent to zero if and only if $p(\beta,\Phi)=\log d$ for all $\beta\geq 0$, which is the same value as for $\Phi=0$.

\begin{lemma}
\label{LemVariational}
Let $\Phi$ be an interaction which is not physically equivalent to zero, with ground state energy density
$u_{\min}(\Phi)=\lim_{n\to\infty} \lambda_{\min}(H_{\Lambda_n})/|\Lambda_n| =-\lim_{\beta\to\infty}\beta^{-1} p(\beta,\Phi)$
and infinite temperature energy density $u_{\max}(\Phi):=\lim_{n\to\infty} \tr(H_{\Lambda_n})/(|\Lambda_n| d^{|\Lambda_n|})$. Then, for every $u\in (u_{\min}(\Phi),u_{\max}(\Phi)]$, there
exists a unique $\beta\equiv \beta(u)\geq 0$ such that there is at least one Gibbs state $\omega$ at inverse temperature $\beta$ with energy density $u(\omega)=u$. Its
entropy density is $s(\omega)=s(u):=p(\beta(u),\Phi)+u\,\beta(u)$, and this is the maximal possible entropy density of any translation-invariant state with energy density $u$.
\end{lemma}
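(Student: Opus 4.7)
The strategy is to exploit the variational principle~(\ref{eqVariationalPrinciple}) together with Lemma~\ref{LemMainLemma} to study the pressure $\beta\mapsto p(\beta,\Phi)$ as a convex function on $[0,\infty)$, and to read off the correspondence $u\leftrightarrow\beta$ from its (sub)derivatives. As a pointwise supremum of affine functions $\beta\mapsto s(\varphi)-\beta u(\varphi)$, $p(\cdot,\Phi)$ is convex and hence continuous. At $\beta=0$ the supremum reduces to $\sup_\varphi s(\varphi)=\log d$, uniquely attained by the maximally mixed state, whose energy density is $u_{\max}(\Phi)$ by definition; and from ground-state dominance of the partition sum together with the bound $s(\omega)\leq\log d$ one obtains $\lim_{\beta\to\infty}p(\beta,\Phi)/\beta=-u_{\min}(\Phi)$. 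Existence of a Gibbs state at every $\beta\geq 0$ follows by applying Lemma~\ref{LemMainLemma} to $\tau_n:=\gamma_{\Lambda_n}^p(\beta)$: these are $\Lambda_n$-translation-invariant because $H_{\Lambda_n}^p$ is, and a direct computation gives $|\Lambda_n|^{-1}[S(\tau_n)-\beta\,\tr(\tau_n H_{\Lambda_n})]\to p(\beta,\Phi)$, so condition~(\ref{eqIneqEq}) is satisfied with equality and every limit point of the partial traces defines a Gibbs state at $\beta$.

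The heart of the proof is the identification $U(\beta):=\{u(\omega)\,|\,\omega\text{ Gibbs at }\beta\}=[-p'_+(\beta),-p'_-(\beta)]$, where $p'_\pm$ denote the one-sided derivatives of $p$. The inclusion $U(\beta)\subseteq[-p'_+(\beta),-p'_-(\beta)]$ is immediate from the variational principle: any Gibbs state $\omega$ at $\beta$ satisfies $p(\beta')\geq p(\beta)-u(\omega)(\beta'-\beta)$ for all $\beta'$, so $-u(\omega)\in\partial p(\beta)$. The reverse inclusion combines convexity of the set of Gibbs states (because $\omega\mapsto s(\omega)-\beta u(\omega)$ is concave) with a standard Bolzano--Weierstrass argument on Gibbs states at $\beta\pm\varepsilon$ to realize the two endpoints in the limit $\varepsilon\to 0$. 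Convexity of $p$ then makes $-p'_\pm$ non-increasing, with $-p'_+(0)=u_{\max}$ and $-p'_-(\beta)\searrow u_{\min}$, so defining $\beta(u):=\inf\{\beta\geq 0:-p'_-(\beta)\leq u\}$ and invoking left-continuity of $-p'_-$ together with right-continuity of $-p'_+$ forces $u\in U(\beta(u))$, producing the desired Gibbs state with energy density $u$. The entropy formula $s(\omega)=p(\beta(u),\Phi)+u\,\beta(u)$ then follows directly from the Gibbs condition, and maximality among translation-invariant $\varphi$ with $u(\varphi)=u$ is the one-line variational estimate $s(\varphi)\leq p(\beta(u),\Phi)+\beta(u)u$.

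The hard part will be \emph{uniqueness} of $\beta(u)$. If $\beta_1<\beta_2$ both admitted Gibbs states of energy density $u$, the monotonicity of $-p'_\pm$ combined with the convexity bound $p'_+(\beta_1)\leq p'_-(\beta_2)$ would force equality throughout $[\beta_1,\beta_2]$, meaning $p(\cdot,\Phi)$ is affine on this interval with slope $-u$. Uniqueness therefore reduces to ruling out affine pieces of $p$ on $(0,\infty)$, which is where the hypothesis that $\Phi$ is not physically equivalent to zero enters essentially. I would argue this by noting that the finite-volume pressures $p_{\Lambda_n}$ are strictly convex—their second derivative equals the energy variance per site, which is strictly positive whenever $H_{\Lambda_n}$ is not a multiple of the identity—and then showing (or invoking the corresponding result of~\cite{Simon}) that an affine piece of the limit $p$ would force a single translation-invariant state to maximize $s-\beta u$ simultaneously over an entire $\beta$-interval with identical $u$ and $s$, which is incompatible with non-triviality of $\Phi$. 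Everything else in the proof uses only convexity of $p$ and the Gibbs-existence statement from Lemma~\ref{LemMainLemma}.
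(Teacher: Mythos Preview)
Your approach is essentially the paper's: both arguments hinge on the convex-analytic picture that Gibbs states at $\beta$ are exactly the translation-invariant states whose affine line $\ell_\omega(\beta')=s(\omega)-\beta' u(\omega)$ touches the graph of $p(\cdot,\Phi)$ at $\beta$, so that uniqueness of $\beta(u)$ is equivalent to strict convexity of $p$. The paper simply cites~\cite{Simon} for strict convexity and for the remaining existence claims, whereas you supply more detail (existence of Gibbs states via Lemma~\ref{LemMainLemma}, the subdifferential identification $U(\beta)=[-p'_+(\beta),-p'_-(\beta)]$).

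One caution: your proposed self-contained argument for strict convexity via the finite-volume pressures does not work as stated. Each $p_{\Lambda_n}$ is indeed strictly convex (its second derivative is the energy variance per site), but pointwise limits of strictly convex functions need not be strictly convex, so this alone does not exclude an affine piece of $p$. Your alternative route---that an affine segment would force a single state to be Gibbs on a whole $\beta$-interval, and that this contradicts non-triviality of $\Phi$---is the right idea, but the final ``incompatible with non-triviality of $\Phi$'' step is itself the nontrivial content (it is essentially the statement that $\Phi\mapsto p(1,\Phi)$ is strictly convex modulo physical equivalence, and $p(\beta,\Phi)=p(1,\beta\Phi)$). In the end you, like the paper, must invoke~\cite[p.~349 and Thm.~II.1.5]{Simon} for this; the finite-volume variance observation is not doing any work.
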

\proof
These statements are proven in~\cite{Simon}; uniqueness of $\beta(u)$ can be seen as follows.
If $\Phi$ is not physically equivalent to zero, then the function $\beta\mapsto p(\beta,\Phi)$ is strictly convex, see~\cite[p.\ 349 and Thm.\ II.1.5]{Simon}.
Consider any translation-invariant state $\omega$; it defines an affine-linear map $\beta\mapsto s(\omega)-\beta\, u(\omega)=:\ell_\omega(\beta)$. According to~(\ref{eqVariationalPrinciple}),
the line $\ell_\omega$ lies completely on or below of the graph of $p$; that is, $\ell_\omega(\beta)\leq p(\beta,\Phi)$ for all $\beta$. According to Definition~\ref{DefGibbs},
it is a Gibbs state if and only if $\ell_\omega$ touches the graph of $p$; that is, if there is some $\beta$ such that $\ell_\omega(\beta)=p(\beta)$. If we are given some value of $u$,
then every translation-invariant state with this energy density has a corresponding line $\ell_\omega$ with slope $(-u)$. Consider all those lines. Then only one of them can
touch the graph of $p$, and it can do so in only one point, due to the strict convexity of $p$. The $\beta$-value of the unique touching point is then $\beta(u)$.
\qed

Now we have all ingredients to prove our main theorem on the equivalence of ensembles.
\begin{theorem}[Equivalence of ensembles]
\label{TheEquivalence}
Let $(\tau_n)_{n\in\N}$ be a sequence of $\Lambda_n$-translation-invariant states on $\mathcal{A}_n$, let $\beta\geq 0$, and let $\Phi$ be a translation-invariant finite-range interaction
which is not physically equivalent to zero, and for which there is a unique Gibbs state $\omega_\beta$ at inverse temperature $\beta$. Suppose that
\[
   \liminf_{n\to\infty}\frac 1 {|\Lambda_n|}\left(\strut S(\tau_n)-\beta\, \tr(\tau_n H_{\Lambda_n})\right)\geq p(\beta,\Phi),
\]
then we have equality in this expression, and
\begin{equation}
   \lim_{n\to\infty}{\rm Tr}_{\Lambda_n\setminus\Lambda_m} \tau_n = (\omega_\beta)_{\Lambda_m}
   \label{eqThermalLimit}
\end{equation}
for every $m\in\N$. Furthermore, we have
\[
   \lim_{n\to\infty}\frac 1 {|\Lambda_n|} S(\tau_n)=s(\omega_\beta),\qquad \lim_{n\to\infty}\frac 1 {|\Lambda_n|} \tr(\tau_n H_{\Lambda_n})=u(\omega_\beta),
\]
and
\begin{equation}
   \lim_{n\to\infty}\left\| {\rm Tr}_{\Lambda_n\setminus\Lambda_m} \tau_n - {\rm Tr}_{\Lambda_n\setminus\Lambda_m}
   \frac{\exp(-\beta H_{\Lambda_n}^p)} {Z_n}\right\|_1=0,
   \label{eqEquiv2}
\end{equation}
where $Z_n=\tr(\exp(-\beta H_{\Lambda_n}^p))$, and $H_{\Lambda_n}^p$ is the Hamiltonian on $\Lambda_n$ with periodic boundary conditions.
If the lattice dimension is $\nu=1$, then $H_{\Lambda_n}^p$ in~(\ref{eqEquiv2}) can be replaced by $H_{\Lambda_n}$, the Hamiltonian with open boundary conditions.
Furthermore, if Gibbs states are unique around inverse temperature $\beta$, define $\beta_n^{BC}$ as the solution of the equation
$\displaystyle \frac 1 {|\Lambda_n|}\tr\left( H_{\Lambda_n}^{BC}\,
\frac{\exp(-\beta_n^{BC} H_{\Lambda_n}^p)}{Z_n}\right)=u_n$, where $BC$ denotes an arbitrary fixed choice of boundary conditions, and $(u_n)_{n\in\N}$ is
an arbitrary sequence with $\lim_{n\to\infty}u_n=u(\omega_\beta)$. Then $\lim_{n\to\infty}\beta_n^{BC}=\beta$, and
\begin{equation}
   \lim_{n\to\infty}\left\| {\rm Tr}_{\Lambda_n\setminus\Lambda_m} \tau_n - {\rm Tr}_{\Lambda_n\setminus\Lambda_m}
   \frac{\exp(-\beta_n^{BC} H_{\Lambda_n}^p)} {Z_n}\right\|_1=0.
   \label{eqEquivFinite}
\end{equation}
\end{theorem}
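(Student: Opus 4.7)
The plan is to reduce every claim of Theorem~\ref{TheEquivalence} to Lemma~\ref{LemMainLemma}, applied to two (effectively three) different sequences of states.

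First I would apply Lemma~\ref{LemMainLemma} directly to $(\tau_n)_{n\in\N}$. The free-energy hypothesis of the lemma is exactly~(\ref{eqIneqEq}), so every limit point $\omega\in L$ is a Gibbs state at inverse temperature $\beta$; the assumed uniqueness forces $L=\{\omega_\beta\}$. The construction of $L$ from the $L^{(m)}$ in the lemma's proof then forces each $L^{(m)}$ to be the singleton $\{(\omega_\beta)_{\Lambda_m}\}$, and since the marginal sequence $\rho_n^{(m)}:=\Tr_{\Lambda_n\setminus\Lambda_m}\tau_n$ lives in a compact set, a unique limit point is equivalent to convergence. This delivers~(\ref{eqThermalLimit}), and the last clause of Lemma~\ref{LemMainLemma} gives equality in~(\ref{eqIneqEq}) together with the asserted convergences of energy and entropy densities.

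To obtain~(\ref{eqEquiv2}) I would use the triangle inequality and apply Lemma~\ref{LemMainLemma} a second time, now to the sequence $\gamma_n:=\exp(-\beta H_{\Lambda_n}^p)/Z_n$ of periodic-BC Gibbs states. Each $\gamma_n$ is $\Lambda_n$-translation-invariant because $H_{\Lambda_n}^p$ is. To verify~(\ref{eqIneqEq}) for this sequence at the same $\beta$, I would use that $\gamma_n$ is the global minimizer of the finite-volume free-energy functional $-S(\rho)+\beta\,\tr(\rho H_{\Lambda_n}^p)$, with optimal value $-\log Z_n$; that $(1/|\Lambda_n|)\log Z_n\to p(\beta,\Phi)$ by the definition of the pressure combined with~(\ref{eqAsymptVanish}); and that swapping $H_{\Lambda_n}^p$ for $H_{\Lambda_n}$ inside~(\ref{eqIneqEq}) costs at most $\beta\|H_{\Lambda_n}^p-H_{\Lambda_n}\|_\infty/|\Lambda_n|\to 0$. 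Uniqueness then forces $\Tr_{\Lambda_n\setminus\Lambda_m}\gamma_n\to(\omega_\beta)_{\Lambda_m}$, and combining with the first step yields~(\ref{eqEquiv2}). For the parametric version~(\ref{eqEquivFinite}) I would first prove $\beta_n^{BC}\to\beta$: inside the uniqueness interval, $p(\cdot,\Phi)$ is strictly convex and differentiable with $u(\omega_{\beta'})=-p'(\beta',\Phi)$, so convex convergence of $(1/|\Lambda_n|)\log Z_n(\beta')$ gives $\tr(H_{\Lambda_n}^{BC}\gamma_n^p(\beta'))/|\Lambda_n|\to u(\omega_{\beta'})$ uniformly on compacts, and continuity plus strict monotonicity of $u(\omega_{\cdot})$ then pin $\beta_n^{BC}\to\beta$; a third application of Lemma~\ref{LemMainLemma} to the sequence $\gamma_n^p(\beta_n^{BC})$, with the free-energy hypothesis checked at the fixed value $\beta$ using $\beta_n^{BC}\to\beta$ and continuity of the energy density, then finishes~(\ref{eqEquivFinite}).

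The $\nu=1$ variant of~(\ref{eqEquiv2}) with open boundary conditions is the place I expect to meet a real obstacle. The open-BC Gibbs state $\exp(-\beta H_{\Lambda_n})/\tr\exp(-\beta H_{\Lambda_n})$ is not $\Lambda_n$-translation-invariant, so Lemma~\ref{LemMainLemma} cannot be applied to it directly; and although $\|H_{\Lambda_n}^p-H_{\Lambda_n}\|_\infty$ is bounded by a constant independent of $n$ in one dimension, exponentiating this $O(1)$ perturbation can tilt the two partition functions by an exponential factor, so a naive trace-norm comparison of the two Gibbs states is not $o(1)$. I would therefore pass through an intermediary such as the translation-average $|\mathbf{T}(\Lambda_n)|^{-1}\sum_{T\in\mathbf{T}(\Lambda_n)}T\exp(-\beta H_{\Lambda_n})T^\dagger/\tr\exp(-\beta H_{\Lambda_n})$, which does satisfy the lemma's hypothesis, and use the special features of one-dimensional systems -- Araki's uniqueness of Gibbs states at every $\beta>0$ and exponential decay of correlations -- to argue that on any fixed interior subsystem $\Lambda_m$ the open-BC marginal converges to the same limit as the averaged state.
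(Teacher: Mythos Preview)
Your proposal is correct and follows essentially the same architecture as the paper's proof: apply Lemma~\ref{LemMainLemma} first to $(\tau_n)$ and then to the periodic finite-volume Gibbs states $\gamma_n^p(\beta)$ (and $\gamma_n^p(\beta_n^{BC})$), using uniqueness of $\omega_\beta$ to collapse $L$ and hence each $L^{(m)}$ to a singleton.

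Two minor points where the paper's route is simpler than yours. For the $\nu=1$ clause, the paper does not go through a translation average at all: it directly invokes Araki's result that in one dimension the open-boundary Gibbs state $\exp(-\beta H_{\Lambda_n})/Z_n$ converges weakly to the unique KMS state, which immediately gives $\Tr_{\Lambda_n\setminus\Lambda_m}\exp(-\beta H_{\Lambda_n})/Z_n\to(\omega_\beta)_{\Lambda_m}$; your detour works but is unnecessary once you are willing to cite Araki anyway. For $\beta_n^{BC}\to\beta$, the paper avoids Griffiths-type convex-convergence arguments and instead uses the already-established fact (from the second application of Lemma~\ref{LemMainLemma}) that $\tr(H_{\Lambda_n}^{BC}\gamma_n^p(\beta'))/|\Lambda_n|\to u(\omega_{\beta'})$ at any fixed $\beta'$ in the uniqueness interval, picks $\beta_0<\beta<\beta_1$ there, and traps $\beta_n^{BC}\in(\beta_0,\beta_1)$ by the intermediate value theorem; the free-energy hypothesis for $\gamma_n^p(\beta_n^{BC})$ is then verified via the elementary Lipschitz bound $|\log Z_n(\beta_n^{BC})-\log Z_n(\beta)|\leq|\beta_n^{BC}-\beta|\,\|H_{\Lambda_n}^p\|_\infty$.
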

\proof
Set $\rho_n^{(m)}:={\rm Tr}_{\Lambda_n\setminus\Lambda_m} \tau_n$, and define $L^{(m)}$ and $L$ exactly as in the statement of Lemma~\ref{LemMainLemma}.
Since there is only one Gibbs state $\omega_\beta$ at inverse temperature $\beta$, Lemma~\ref{LemMainLemma} implies that $L=\{\omega_\beta\}$, and
so $L^{(m)}=(\omega_\beta)_{\Lambda_m}$ for all $m\in\N$. In other words, for every $m$, the state $(\omega_\beta)_{\Lambda_m}$ is the unique
limit point of the sequence $(\rho_n^{(m)})_{n\in\N}$, and thus the limit of this sequence. This proves the first identity.
To infer the second identity, eq.~(\ref{eqEquiv2}), either apply~\cite[Thm.\ IV.2.12]{Simon}, or note that $\tau'_n:=\exp(-\beta H_{\Lambda_n}^p)/Z_n$
maximizes the functional $\rho\mapsto S(\rho)-\beta\, \tr(H_{\Lambda_n}^p\rho)$, thus
\begin{eqnarray*}
   \liminf_{n\to\infty}\frac 1 {|\Lambda_n|}\left(S(\tau'_n) - \beta\, \tr(\tau'_n H_{\Lambda_n})\right)&=&
   \liminf_{n\to\infty}\frac 1 {|\Lambda_n|}\left(S(\tau'_n) - \beta\, \tr(\tau'_n H_{\Lambda_n}^p)\right) \\
   &\geq&  \liminf_{n\to\infty}\frac 1 {|\Lambda_n|}\left(S((\omega_\beta)_{\Lambda_n}) - \beta\, \tr((\omega_\beta)_{\Lambda_n} H_{\Lambda_n}^p)\right) \\
   &=& s(\omega_\beta)-\beta\, u(\omega_\beta)=p(\beta,\Phi).
\end{eqnarray*}
Thus $\lim_{n\to\infty}(\Tr_{\Lambda_n\setminus\Lambda_m} \tau_n - \Tr_{\Lambda_n\setminus\Lambda_m} \tau'_n) = (\omega_\beta)_{\Lambda_m}
-(\omega_\beta)_{\Lambda_m}=0$. Note that this also shows that $\lim_{n\to\infty}\frac 1 {|\Lambda_n|} \tr(\tau'_n H_{\Lambda_n})=u(\omega_\beta)$.
In the case of lattice dimension $\nu=1$, apply the fact that
in this case, the local Gibbs state $\exp(-\beta H_{\Lambda_n})/Z_n$ weakly converges to the unique global Gibbs state in the limit $n\to\infty$, as
shown in~\cite{Araki69}.

It remains to prove~(\ref{eqEquivFinite}). To this end, use the notation $Z_n(\beta):=\tr (\exp(-\beta H_{\Lambda_n}^p))$, and $\tau'_n:=\exp(-\beta_n^{BC} H_{\Lambda_n}^p)/
Z_n(\beta_n^{BC})$. First we have to show that $\beta_n^{BC}$ is well-defined for $n$ large enough and that it is a bounded sequence.
Set $\rho(\beta'):=\exp(-\beta' H_{\Lambda_n}^p)/Z_n(\beta')$ for $\beta'\geq 0$. Choose $\beta_0,\beta_1\in\R$ such
that $0<\beta_0<\beta<\beta_1$, and such that the Gibbs states at inverse temperatures $\beta_0$ and $\beta_1$ are unique. Then the previous results show that
$\lim_{n\to\infty} \frac 1 {|\Lambda_n|} \tr\left(H_{\Lambda_n}^{BC}\rho(\beta_i)\right)=u_i$ for $i=0,1$, where $u_i:=u(\omega_{\beta_i})$.
It follows $u_0>u>u_1$, and thus for $n$ large enough, we have $\frac 1 {|\Lambda_n|} \tr(H_{\Lambda_n}^{BC} \rho(\beta_0))>u_n
>\frac 1 {|\Lambda_n|} \tr(H_{\Lambda_n}^{BC} \rho(\beta_1))$, so $\beta_0<\beta_n^{BC}<\beta_1$ for $n$ large enough; in particular,
a solution $\beta_n^{BC}$ can be found in the interval $(\beta_0,\beta_1)$. Moreover, since $\beta_0$ and $\beta_1$ can be chosen
arbitrarily close to $\beta$, this proves that $\lim_{n\to\infty}\beta_n^{BC}=\beta$.
Direct calculation shows that $S(\tau'_n)=\log Z_n(\beta_n^{BC})+\beta_n^{BC} u_n |\Lambda_n|$, thus
\begin{eqnarray*}
   \liminf_{n\to\infty}\frac 1 {|\Lambda_n|} \left(S(\tau'_n)-\beta\, \tr(\tau'_n H_{\Lambda_n})\right)
   &=& \liminf_{n\to\infty} \frac 1 {|\Lambda_n|} \left(\log Z_n(\beta_n^{BC})+\beta_n^{BC}\, u_n |\Lambda_n|-\beta\,
   \tr(\tau'_n H_{\Lambda_n}^{BC})\right)\\
   &=& \liminf_{n\to\infty}\left( (\beta_n^{BC}-\beta)u_n + \frac 1 {|\Lambda_n|} \log Z_n(\beta_n^{BC})\right)\\
   &\geq& \liminf_{n\to\infty}\frac 1 {|\Lambda_n|} \left(\log Z_n(\beta)-|\log Z_n(\beta_n^{BC})-\log Z_n(\beta)|\right)\\
   &\geq& p(\beta,\Phi)-\limsup_{k\to\infty}\frac{|\beta_n^{BC}-\beta|\, \|H_{\Lambda_n}^p\|_{\infty}}{|\Lambda_n|} = p(\beta,\Phi),
\end{eqnarray*}
where we have used that $|\log Z_n(\beta_n^{BC})-\log Z_n(\beta)|\leq |\beta_n^{BC}-\beta|\cdot\|H_{\Lambda_n}^p\|_\infty$, see~\cite[Lemma II.2.2Q]{Simon}.
This shows that $\lim_{n\to\infty}\Tr_{\Lambda_n\setminus\Lambda_m} \tau'_n = (\omega_\beta)_{\Lambda_m}$.
Combining this with~(\ref{eqThermalLimit}) proves~(\ref{eqEquivFinite}).
\qed

In order to obtain some concrete instances of this equivalence of ensembles result, we need a series of lemmas. The first one
is given in~\cite[Thm.\ IV.2.14]{Simon}, though with typos; see also~\cite{Lima1,Lima2}, and for newer results on equivalence of ensembles,
see~\cite{deRoeck}. Since the lemma is crucial for our paper, we give the proof for completeness, {translating the proof of~\cite[Thm.\ III.4.15]{Simon} to the quantum case.}
\begin{lemma}
\label{LemReproduced}
Suppose that $\Phi$ is any finite-range translation-invariant interaction, not physically equivalent to zero. Then, for all $u\in(u_{\min}(\Phi),u_{\max}(\Phi)]$, we have
\[
   \lim_{n\to\infty} \frac 1 {|\Lambda_n|} \log \left| \left\{\text{eigenvalues of }H_{\Lambda_n}\leq u\cdot |\Lambda_n|\right\}\right| = s(u),
\]
where $s(u)$ is defined in Lemma~\ref{LemVariational}.
\end{lemma}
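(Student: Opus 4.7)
The lemma is the Legendre-transform dual of the pressure convergence $p_n(\beta) := |\Lambda_n|^{-1} \log \tr e^{-\beta H_{\Lambda_n}} \to p(\beta,\Phi)$, where by Lemma~\ref{LemVariational} the target can be written $s(u) = \inf_{\beta \geq 0}(p(\beta,\Phi) + \beta u)$, with the infimum attained at $\beta = \beta(u)$. My plan is to prove matching upper and lower bounds on $N_n(u) := \bigl|\{i : E_i \leq u|\Lambda_n|\}\bigr|$ in the style of G\"artner--Ellis: an upper bound via exponential Markov, and a lower bound by counting eigenvalues in a narrow energy shell carrying non-vanishing Gibbs mass.

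The upper bound is immediate: for every $\beta \geq 0$,
\[
N_n(u) \;\leq\; \sum_i e^{\beta(u|\Lambda_n| - E_i)} \;=\; e^{\beta u|\Lambda_n|}\,\tr e^{-\beta H_{\Lambda_n}},
\]
because each summand is at least $1$ when $E_i \leq u|\Lambda_n|$. Dividing by $|\Lambda_n|$, taking logs, passing to $\limsup_n$, and minimising over $\beta \geq 0$ gives $\limsup_n |\Lambda_n|^{-1} \log N_n(u) \leq s(u)$.

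For the lower bound, fix $\varepsilon > 0$. For any $\beta \geq 0$,
\[
N_n(u) \;\geq\; \bigl|\{i : (u-\varepsilon)|\Lambda_n| < E_i \leq u|\Lambda_n|\}\bigr| \;\geq\; e^{\beta(u-\varepsilon)|\Lambda_n|}\, \tr e^{-\beta H_{\Lambda_n}}\, \mu_{n,\beta}(I_n),
\]
where $I_n := ((u-\varepsilon)|\Lambda_n|, u|\Lambda_n|]$ and $\mu_{n,\beta}$ is the canonical distribution on the spectrum with weights $e^{-\beta E_i}/\tr e^{-\beta H_{\Lambda_n}}$. Taking $\beta = \beta(u)$, the mean energy under $\mu_{n,\beta(u)}$ converges to $u|\Lambda_n|$ (since $-\partial_\beta p_n(\beta) \to u$ at smooth points of $p$), while its variance equals $|\Lambda_n|\,p_n''(\beta(u)) = O(|\Lambda_n|)$. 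Chebyshev then forces $\mu_{n,\beta(u)}(I_n) \geq c > 0$ eventually (approaching $1/2$ by the central limit theorem for sums of short-range-correlated local terms). Dividing by $|\Lambda_n|$, taking logs, letting $n\to\infty$, and finally $\varepsilon\to 0$ gives $\liminf_n |\Lambda_n|^{-1}\log N_n(u) \geq \beta(u) u + p(\beta(u),\Phi) = s(u)$.

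The main obstacle is controlling concentration at first-order phase transitions, where $p$ fails to be twice differentiable and the canonical energy variance may be superextensive (so a single choice $\beta = \beta(u)$ does not give enough mass in a small window around $u|\Lambda_n|$). The escape is that strict convexity of $p$ confines the nondifferentiable points of $p'$ to a countable set, so $\beta$ can always be perturbed slightly off this set to restore $O(|\Lambda_n|)$ energy variance, and the resulting approximation error is then closed using continuity and concavity of $s$ on $(u_{\min}(\Phi), u_{\max}(\Phi)]$. The boundary case $u = u_{\max}(\Phi)$, where $\beta(u) = 0$, is handled by the monotonicity $N_n(u_{\max}|\Lambda_n|) \geq N_n(u'|\Lambda_n|)$ for $u' < u_{\max}$ followed by $u' \nearrow u_{\max}$ and the continuity of $s$ at $u_{\max}$.
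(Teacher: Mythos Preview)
Your upper bound is correct and identical to the paper's. The lower-bound \emph{strategy} --- show that the Gibbs measure $\mu_{n,\beta(u)}$ puts non-vanishing mass in a narrow energy shell, then count --- is also the paper's. The gap is in your concentration step: you assert that the energy variance is $|\Lambda_n|\,p_n''(\beta(u)) = O(|\Lambda_n|)$, i.e.\ that $p_n''(\beta(u))$ is bounded uniformly in $n$. Nothing in the setup gives you this. Pointwise convergence of the convex functions $p_n\to p$ yields convergence of first derivatives at differentiability points of $p$ (Griffiths), but not uniform control of second derivatives; at a critical point $p_n''$ typically diverges (e.g.\ logarithmically for the 2D Ising model), and near a first-order transition $p_n''$ can be of order $|\Lambda_n|$, making the variance $\Theta(|\Lambda_n|^2)$ and Chebyshev useless. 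Your escape (perturb $\beta$ off the countable set where $p'$ jumps) handles the first-order case but does nothing for the second: even at points where $p$ is smooth, you have no argument that $p_n''$ stays bounded. The appeal to a CLT for ``sums of short-range-correlated local terms'' is even more delicate --- such CLTs require correlation decay in the Gibbs state, which is not assumed here.

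The fix is to replace Chebyshev by an exponential-tilting (Cram\'er-type) bound, which is exactly what the paper does: using only the already-proved $\bar s\leq s$ and strict convexity of $p$, one shows that for $\beta=\beta(u)$ the Gibbs mass outside any window $[u-\delta,u+\delta]$ is exponentially small in $|\Lambda_n|$, so $|\Lambda_n|^{-1}\log\mu_{n,\beta}(|E/|\Lambda_n|-u|\leq\delta)\to 0$. This gives $\underline s(u+\delta)\geq s(u)$ for every $\delta>0$ at differentiability points. Because the paper's window is two-sided, a further concavity inequality for $\underline s$ (obtained by splitting $\Lambda_n$ into translates of $\Lambda_m$ and tensoring eigenvectors) is needed to close the remaining $+\delta$ and to handle the flat pieces of $s$ coming from first-order transitions. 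If your one-sided window $(u-\varepsilon,u]$ had worked, you would have bypassed that last step; but for that you would additionally need the finite-volume mean to sit strictly below $u|\Lambda_n|$, which again is not automatic when $\langle E\rangle_n/|\Lambda_n|\to u$ from above.
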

\begin{proof}
We transfer the classical proof of~\cite[Thm.\ III.4.15]{Simon} to the quantum case (with slight modifications and simplifications, using notation established earlier).
Define
\[
   N_{\Lambda_n}(u):=\left| \left\{\text{eigenvalues of }H_{\Lambda_n}\leq u\cdot |\Lambda_n|\right\}\right|\qquad
   \bar s(u):=\limsup_{n\to\infty}\frac 1 {|\Lambda_n|} \log N_{\Lambda_n},\qquad \underline{s}(u):=\liminf_{n\to\infty}\frac 1 {|\Lambda_n|} \log N_{\Lambda_n}.
\]
Denote the eigenvalues of $H_{\Lambda_n}$ by $E_i$, and $Z:=\tr(\exp(-\beta(u) H_{\Lambda_n}))$, then
\[
   1\geq \frac 1 Z \sum_{E_i:\, E_i/|\Lambda_n|\leq u} e^{-\beta(u) E_i}\geq \frac 1 Z N_{\Lambda_n}e^{-\beta(u) u|\Lambda_n|}.
\]
Taking logarithms, we obtain $\frac 1 {|\Lambda_n|} \log N_{\Lambda_n} \leq \frac 1 {|\Lambda_n|} \log Z +\beta(u) u \stackrel{n\to\infty}\longrightarrow s(u)$, hence
\begin{equation}
   \bar s(u)\leq s(u).
   \label{eqB12}
\end{equation}
The converse inequality is more involved. Fix $u_1\leq u_2$, $\delta>0$, and $0<\lambda<1$.
Use the notation of the proof of Lemma~\ref{LemMainLemma}, where we have split $\Lambda_{n_k}$ into disjoint regions
$\Lambda_m^{(i)}$, $i=1,\ldots,N_k$, and $\Lambda_{\rm rem}$. Set $n_k=k$. Denote by $|E_1\rangle,\ldots,|E_M\rangle$ mutually orthonormal eigenvectors of
$H_{\Lambda_m}$ with energy density less than or equal to $u_1$, and $|E'_1\rangle,\ldots,|E'_N\rangle$ mutually orthonormal eigenvectors of $H_{\Lambda_m}$ with
energy density less than or equal to $u_2-\delta$, where $M:=N_{\Lambda_m}(u_1)$ and $N:=N_{\Lambda_m}(u_2-\delta)$. Set $i:=(i_1,\ldots,i_{N_k})$, where
$i_1,\ldots,i_{\lfloor \lambda N_k\rfloor}\in\{1,\ldots,M\}$, and $i_{\lfloor \lambda N_k\rfloor +1},\ldots, i_{N_k}\in\{1,\ldots,N\}$. For every possible choice of $i$, define
\[
   |\psi_i\rangle:=\bigotimes_{l=1}^{\lfloor \lambda N_k\rfloor} |E_{i_l}\rangle_{\Lambda_m^{(l)}}\otimes \bigotimes_{l=\lfloor \lambda N_k\rfloor +1}^{N_k}
   |E'_{i_l}\rangle_{\Lambda_m^{(l)}}\otimes |0\rangle_{\Lambda_{\rm rem}},
\]
where $|0\rangle_{\Lambda_{\rm rem}}$ is an arbitrary pure state on $\Lambda_{\rm rem}$. Then we have
\begin{eqnarray*}
   \frac{\langle \psi_i|H_{\Lambda_k}|\psi_i\rangle}{|\Lambda_k|} &\leq& \frac{\langle \psi_i|H_{\Lambda_k}^{(m)}|\psi_i\rangle + \|H_{\Lambda_k}-H_{\Lambda_k}^{(m)}\|_\infty}
   {|\Lambda_k|}\\
   &\leq& \frac{\lfloor \lambda N_k\rfloor}{|\Lambda_k|} |\Lambda_m| u_1 + \frac{N_k-\lfloor \lambda N_k \rfloor}{|\Lambda_k|}|\Lambda_m|(u_2-\delta)+
   c_1\frac{|\Lambda_{\rm rem}|}{|\Lambda_k|} + c_2 \frac{N_k}{|\Lambda_k|} |\partial \Lambda_m|.
\end{eqnarray*}
If $k$ and $m$ are large enough (while $k\gg m$), the right-hand side is less than $u':=\lambda u_1+(1-\lambda)u_2$. Furthermore, if $i\neq i'$ then $|\psi_i\rangle\perp |\psi_{i'}\rangle$,
thus $N_{\Lambda_k}(u')\geq |\{|\psi_i\rangle\}|=M^{\lfloor \lambda N_k\rfloor} N^{N_k-\lfloor \lambda N_k\rfloor}$. Taking logarithms, we obtain
\[
   \frac 1 {|\Lambda_k|} \log N_{\Lambda_k}\left(u'\right)\geq \frac 1 {|\Lambda_k|}\left(
      \lfloor \lambda N_k \rfloor \log N_{\Lambda_m}(u_1)+(N_k-\lfloor \lambda N_k\rfloor)\log N_{\Lambda_m}(u_2-\delta)
   \right).
\]
Since $\lim_{k\to\infty}N_k/|\Lambda_k|=1/|\Lambda_m|$, this yields
\[
   \underline{s}\left(u'\right)\geq \frac {\lambda} {|\Lambda_m|}\log N_{\Lambda_m}(u_1)+\frac {1-\lambda} {|\Lambda_m|} \log N_{\Lambda_m}(u_2-\delta),
\]
and thus
\begin{equation}
   \underline{s}\left(\lambda u_1+(1-\lambda) u_2\right)\geq \lambda  \underline{s}(u_1)+(1-\lambda) \underline{s}(u_2-\delta).
   \label{eqB13}
\end{equation}
Now consider a fixed value of $u$, and set $\beta:=\beta(u)$.
We use the elementary inequalities for $a\leq b$:
\begin{eqnarray}
    \limsup_{k\to\infty} \frac 1 {|\Lambda_k|} \log \sum_{i:\, E_i/|\Lambda_k|\in [a,b]} e^{-\beta E_i} &\leq& -\beta a +\bar s(b),\label{eqB14} \\
     \liminf_{k\to\infty} \frac 1 {|\Lambda_k|} \log \sum_{i:\, E_i/|\Lambda_k|\in [a,b]} e^{-\beta E_i} &\leq& -\beta a +\underline{s}(b), \label{eqB15}
\end{eqnarray}
where the $E_i$ are now the eigenvalues of $H_{\Lambda_k}$. Suppose that $\beta$ is a point of differentiability of $p(\cdot,\Phi)$ such that (due to strict
convexity) $s(u')-\beta u'<p(\beta,\Phi)$ for all $u'\neq u$.
Let $\alpha>0$ such that $s(u')-\beta u' \leq p(\beta,\Phi)-\alpha$ for all
$u'$ with $|u'-u|>\delta$. Choose $\varepsilon>0$ such that $\varepsilon\beta\leq \alpha/2$. Now we decompose the energy density interval into a disjoint union
\[
   \left( u_{\min}(\Phi),u_{\max}(\Phi)\right] \setminus (u-\delta,u+\delta)=\bigcup_{j=1}^{n-1} I_j,\qquad\mbox{where }I_j=(a_j,b_j]\mbox{ with }
   |b_j-a_j|\leq\varepsilon;\enspace I_n:=(u_{\max}(\Phi),\infty).
\]
Due to~(\ref{eqB14}), we have $\displaystyle \limsup_{k\to\infty}\frac 1 {|\Lambda_k|} \log \sum_{i:\, E_i/|\Lambda_k|\in I_j} e^{-\beta E_i}
\leq\-\beta a_j+\bar s(b_j)$, and due to~(\ref{eqB12}), we obtain for $j\leq n-1$
\[
   \bar s(b_j)-\beta a_j \leq s(b_j)-\beta b_j+\beta(b_j-a_j) \leq s(b_j)-\beta b_j+\frac\alpha 2 \leq p(\beta,\Phi)-\frac\alpha 2,
\]
and for $j=n$, we get
\[
   \bar s(b_j)-\beta a_j = \log d -\beta u_{\max}(\Phi) = s(a_n)-\beta a_n \leq p(\beta,\Phi)-\alpha.
\]
Thus
\begin{eqnarray*}
   \limsup_{k\to\infty} \frac 1 {|\Lambda_k|} \log \sum_{i:\, \left| E_i/|\Lambda_k|-u\right|\geq\delta} e^{-\beta E_i}&=&
   \limsup_{k\to\infty}\frac 1 {|\Lambda_k|} \log \sum_{j=1}^n \sum_{i:\, E_i/|\Lambda_k|\in I_j} e^{-\beta E_i}\\
   &\leq& \limsup_{k\to\infty}\frac 1 {|\Lambda_k|} \left( \log n +\max_j \sum_{i:\, E_i/|\Lambda_k|\in I_j} e^{-\beta E_i}\right)
   \leq p(\beta,\Phi)-\frac 1 2 \alpha.
\end{eqnarray*}
But since $\lim_{k\to\infty}\frac 1 {|\Lambda_k|} \log \sum_i e^{-\beta E_i}=p(\beta,\Phi)$ by definition of the pressure, we obtain
\[
   \liminf_{k\to\infty}\frac 1 {|\Lambda_k|} \log\sum_{i:\, \left| E_i/|\Lambda_k|-u\right|\leq\delta} e^{-\beta E_i}\geq p(\beta,\Phi).
\]
Comparing this to~(\ref{eqB15}) yields $p(\beta,\Phi)\leq -\beta(u-\delta)+\underline{s}(u+\delta)$, hence
\[
   \underline{s}(u+\delta)\geq \lim_{\delta\to 0} \underline{s}(u+\delta)\geq \lim_{\delta\to 0} p(\beta,\Phi)+\beta(u-\delta)=p(\beta,\Phi)+\beta u =s(u).
\]
This finally shows that
\[
   \underline{s}(u+\delta)\geq s(u)\qquad\mbox{for all }\delta>0,\mbox{ if }\beta(u)\mbox{ is a point of differentiability of }\beta\mapsto p(\beta,\Phi).
\]
Since $\beta\mapsto p(\beta,\Phi)$ is strictly convex, the right and left derivatives $D^+ p$ and $D^- p$ exist everywhere,
and the set $B:=\{\beta>0\,\,|\,\, (D^+ p)(\beta)\neq (D^-)(p)(\beta)$ is countable. Furthermore, the set $A:=\{u\in (u_{\min}(\Phi),u_{\max}(\Phi))\,\,|\,\, \beta(u)\in B\}$
is a countable union of closed intervals. If $u$ is any value such that there is a sequence $(u_n)_{n\in\N}$, $u_n\leq u$, with $\lim_{n\to\infty} u_n=u$ and $u_n\not\in A$,
then $\underline{s}(u)\geq s(u_n)$, and due to continuity of $s$, we get $\underline{s}(u)\geq s(u)$. We get this inequality for all $u\not\in A$ and
the left-hand endpoints of intervals in $A$.

Finally, let $[u_0,u_1]\subset A$ be an isolated closed interval and $u\in (u_0,u_1]$. Then for every $\varepsilon\geq 0$ there is $\lambda_{\varepsilon}\in (0,1)$
with $u=\lambda_{\varepsilon}(u_1+\varepsilon)+(1-\lambda_{\varepsilon})u_0$. Then, for every $\varepsilon>0$ small enough such that $u_1+\varepsilon\not\in A$,
and $\delta>0$ small enough such that $u_0-\delta\not\in A$, we get
due to~(\ref{eqB13})
\[
   \underline{s}(u)=\underline{s}(\lambda_{\varepsilon}(u_1+\varepsilon)+(1-\lambda_{\varepsilon})u_0) \geq \lambda_{\varepsilon}\underline{s}(u_1+\varepsilon)
   +(1-\lambda_{\varepsilon})\underline{s}(u_0-\delta) \geq \lambda_{\varepsilon}s(u_1+\varepsilon)+(1-\lambda_{\varepsilon})s(u_0-\delta).
\]
Since $s$ is continuous, we can first take the limit $\delta\to 0$ and then the limit $\varepsilon\to 0$ to obtain
\[
   \underline{s}(u)\geq \lambda_0 s(u_1) + (1-\lambda_0)s(u_0)=s(u),
\]
where we have used the fact that $s$ is linear  on $[u_0,u_1]$. Together with~(\ref{eqB12}), this completes the proof.
\end{proof}

This lemma only refers to the Hamiltonian $H_{\Lambda_n}$ corresponding to open boundary conditions. However, we need this in more generality,
in particular for the case of periodic boundary conditions.

\begin{lemma}
\label{LemHBC}
Let $H_{\Lambda_n}^{BC}$ be the Hamiltonians corresponding to an arbitrary choice of boundary conditions in the sense of Definition~\ref{DefBC}.
Then
\[
   \lim_{n\to\infty} \frac 1 {|\Lambda_n|} \log \left| \left\{\text{eigenvalues of }H_{\Lambda_n}^{BC}\leq u\cdot |\Lambda_n|\right\}\right| = s(u).
\]
\end{lemma}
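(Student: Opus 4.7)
The plan is to deduce the claim for arbitrary boundary conditions from the open boundary case (Lemma~\ref{LemReproduced}) via Weyl's perturbation inequality, exploiting the fact that the difference $H_{\Lambda_n}^{BC}-H_{\Lambda_n}$ is negligible on the density scale.

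First I would set $\varepsilon_n:=\|H_{\Lambda_n}^{BC}-H_{\Lambda_n}\|_\infty$, which satisfies $\varepsilon_n/|\Lambda_n|\to 0$ by Definition~\ref{DefBC}. Since both operators are self-adjoint, Weyl's inequality implies that when the eigenvalues of $H_{\Lambda_n}^{BC}$ and $H_{\Lambda_n}$ are listed in increasing order, the $i$-th ones differ by at most $\varepsilon_n$. Writing $N_{\Lambda_n}^{BC}(u):=|\{\text{eigenvalues of }H_{\Lambda_n}^{BC}\leq u|\Lambda_n|\}|$ and $N_{\Lambda_n}(u)$ for the analogous count with open boundaries, this yields the sandwich
\[
   N_{\Lambda_n}\!\left(u-\tfrac{\varepsilon_n}{|\Lambda_n|}\right)\ \leq\ N_{\Lambda_n}^{BC}(u)\ \leq\ N_{\Lambda_n}\!\left(u+\tfrac{\varepsilon_n}{|\Lambda_n|}\right).
\]
For any fixed $\delta>0$ one has $\varepsilon_n/|\Lambda_n|<\delta$ for all sufficiently large $n$, so monotonicity of $N_{\Lambda_n}$ in its argument gives $N_{\Lambda_n}(u-\delta)\leq N_{\Lambda_n}^{BC}(u)\leq N_{\Lambda_n}(u+\delta)$ eventually. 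Taking $\tfrac{1}{|\Lambda_n|}\log$ and invoking Lemma~\ref{LemReproduced} at the two shifted energy densities yields
\[
   s(u-\delta)\ \leq\ \liminf_{n\to\infty}\tfrac{1}{|\Lambda_n|}\log N_{\Lambda_n}^{BC}(u)\ \leq\ \limsup_{n\to\infty}\tfrac{1}{|\Lambda_n|}\log N_{\Lambda_n}^{BC}(u)\ \leq\ s(u+\delta),
\]
valid for every $\delta>0$ small enough that $u\pm\delta\in(u_{\min}(\Phi),u_{\max}(\Phi)]$.

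To close the argument I would let $\delta\to 0$ and appeal to continuity of $u\mapsto s(u)$ on $(u_{\min}(\Phi),u_{\max}(\Phi)]$. This continuity follows from Lemma~\ref{LemVariational} together with standard convex analysis: the pressure $\beta\mapsto p(\beta,\Phi)$ is a finite, strictly convex, hence continuous function on $[0,\infty)$; $\beta(u)$ is defined as the unique $\beta$ where $p(\beta,\Phi)+\beta u$ attains its minimum over $\beta$, and by strict convexity $u\mapsto\beta(u)$ is continuous; therefore $s(u)=p(\beta(u),\Phi)+u\,\beta(u)$ is continuous. (Alternatively, $s(u)$ is the Legendre transform of the convex function $\beta\mapsto p(\beta,\Phi)$ and thus concave and continuous on the interior of its domain.) Consequently both bounds converge to $s(u)$, and the squeeze gives the claimed limit.

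The one place requiring care is the subtle difference between ``fixed $u$'' and ``$u_n\to u$'' when applying Lemma~\ref{LemReproduced}; the main obstacle is really just to justify that shrinking $\delta$ to zero commutes with the limit in $n$. The monotonicity of $N_{\Lambda_n}(\cdot)$ in its argument allows one to replace the vanishing-but-$n$-dependent shift $\varepsilon_n/|\Lambda_n|$ by a fixed $\delta$ before taking limits in $n$, which is precisely why only continuity of $s$, and not any uniformity in Lemma~\ref{LemReproduced}, is needed.
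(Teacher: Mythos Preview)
Your proposal is correct and follows essentially the same approach as the paper's proof: both use Weyl's perturbation theorem to sandwich $N_{\Lambda_n}^{BC}(u)$ between $N_{\Lambda_n}(u\pm\delta)$ for the open-boundary Hamiltonian, invoke Lemma~\ref{LemReproduced} at the shifted energy densities, and then close the sandwich via continuity of $s$. Your write-up is in fact a bit more explicit about the continuity argument and about replacing the $n$-dependent shift by a fixed $\delta$ before taking limits.
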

\proof
Define $H_{\partial \Lambda_n}^{BC}:=H_{\Lambda_n}^{BC}-H_{\Lambda_n}$.
Fix $u$, and let $\tilde u< u$ be arbitrary. If $n$ is large enough, then
\[
   \tilde u |\Lambda_n|+\|H_{\partial\Lambda_n}^{BC}\|_\infty\leq u|\Lambda_n|.
\]
Thus, due to Weyl's Perturbation Theorem~\cite{BhatiaPerturb}, if $\lambda_1,\ldots, \lambda_k$ are the $k$ smallest eigenvalues of $H_{\Lambda_n}$, then
$H_{\Lambda_n}^{BC}$ has eigenvalues $\lambda'_i\leq \lambda_i+\|H_{\partial\Lambda_n}^{BC}\|$. Therefore
\begin{eqnarray*}
   \liminf_{n\to\infty}\frac 1 {|\Lambda_n|} \log \left| \left\{ \text{eigenvalues of }H_{\Lambda_n}^{BC}\leq u |\Lambda_n|\right\}\right| &\geq&
   \liminf_{n\to\infty}\frac 1 {|\Lambda_n|} \log \left| \left\{ \text{eigenvalues of }H_{\Lambda_n}^{BC}\leq \tilde u |\Lambda_n|+\|H_{\partial\Lambda_n}^{BC}\|\right\}\right|  \\
   &\geq&  \liminf_{n\to\infty}\frac 1 {|\Lambda_n|} \log \left| \left\{ \text{eigenvalues of }H_{\Lambda_n}\leq \tilde u |\Lambda_n|\|\right\}\right|=s(\tilde u).
\end{eqnarray*}
By continuity of $s$, since this is true for all $\tilde u<u$, the previous inequality is also true if $s(\tilde u)$ is replaced by $s(u)$. Similarly, if $\tilde u>u$ is
arbitrary, then
\begin{eqnarray*}
   \limsup_{n\to\infty}\frac 1 {|\Lambda_n|} \log \left| \left\{ \text{eigenvalues of }H_{\Lambda_n}^{BC}\leq u |\Lambda_n|\right\}\right| &\leq&
   \limsup_{n\to\infty}\frac 1 {|\Lambda_n|} \log \left| \left\{ \text{eigenvalues of }H_{\Lambda_n}^{BC}\leq \tilde u |\Lambda_n|-\|H_{\partial\Lambda_n}^{BC}\|\right\}\right|  \\
   &\leq&  \limsup_{n\to\infty}\frac 1 {|\Lambda_n|} \log \left| \left\{ \text{eigenvalues of }H_{\Lambda_n}\leq \tilde u |\Lambda_n|\|\right\}\right|=s(\tilde u).
\end{eqnarray*}
This proves the claim.
\qed

As an immediate consequence we obtain the following result.
\begin{example}[Microcanonical versus canonical ensemble]
\label{ExFlat}
The sequence of states $(\tau_n)_{n\in\N}$ which are defined as the maximal mixtures on the microcanonical subspaces
\[
   T_n^p:={\rm span}\left\{ |E\rangle\,\,\left|\,\, H_{\Lambda_n}^p|E\rangle=E|E\rangle,\enspace \frac E {|\Lambda_n|}\in(u-\delta,u)\right.\right\},
\]
where $H_{\Lambda_n}^p$ is the Hamiltonian on $\Lambda_n$ with periodic boundary conditions satisfies the premises
of Theorem~\ref{TheEquivalence}. That is, we obtain equivalence of ensembles in the standard sense:
\[
   \lim_{n\to\infty}\left\| {\rm Tr}_{\Lambda_n\setminus\Lambda_m} \tau_n - {\rm Tr}_{\Lambda_n\setminus\Lambda_m}
   \frac{\exp(-\beta H_{\Lambda_n}^p)} {Z_n}\right\|_1=0,
\]
where one may either set $\beta$ equal to $\beta(u)$, the inverse temperature corresponding to energy density $u$ in the thermodynamic limit,
or equal to the ($n$-dependent) solution of $\displaystyle \frac 1 {|\Lambda_n|}\tr\left( H_{\Lambda_n}^{BC}\, \frac{\exp(-\beta H_{\Lambda_n}^p)}{Z_n}\right)=u$,
where $BC$ denotes an arbitrary fixed choice of boundary conditions.
\end{example}

In this example, as well as in Theorem~\ref{TheEquivalence}, the partial traces cannot be removed: globally, the microcanonical and the canonical ensemble will in general have
large one-norm distance. In the example of a non-interacting system of binary spins, the well-known tightness
of the classical finite de Finetti theorem provides a proof of this, see Lemma~\ref{LemNonIntDiaconis}.

Furthermore, it is crucial to use the reduction of the global Gibbs state,
${\rm Tr}_{\Lambda_n\setminus\Lambda_m}\exp(-\beta H_{\Lambda_n}^p)/Z_n$,
instead of the local Gibbs state, $\exp(-\beta H_{\Lambda_m}^p)$. Replacing the former by the latter renders the statement of the theorem false in general.
This is rather obvious: the local Gibbs state will in general be different from the reduction of the global one, due to interaction terms across the boundary
of $\Lambda_m$.
This phenomenon will also occur in Subsection~\ref{SecLocalDiag}, where we prove a special case of the ``eigenstate thermalization hypothesis'' only
by taking the boundary terms into account.
A concrete counterexample to the naive version of equivalence of ensembles is already given by the classical Ising model, interpreted as a quantum model.

\begin{example}[The Ising model]
\label{ExNoLocal}
Consider the one-dimensional model on $\Lambda_n:=\{-n,\ldots,n\}$
\begin{equation}
   H_{\Lambda_n}^p:=-J\sum_{i=-n}^n Z_i Z_{i+1}-h\sum_{i=-n}^n Z_i,
   \label{eqIsing}
\end{equation}
identifying $n+1\equiv -n$. Here, $Z_i$ denotes the Pauli $Z$-matrix $Z=\left(\begin{array}{cc} 1 & 0 \\ 0 & -1 \end{array}\right)$
on lattice site $i$. This model has a unique Gibbs state $\omega_\beta$ (in the thermodynamic
limit $n\to\infty$) for all $\beta\geq 0$, see~\cite{Huang}.
Fix $m=0$, and consider the reduction of the global microcanonical state
$\tau_n$ to $\Lambda_0=\{0\}$, a single lattice site. Due to Example~\ref{ExFlat} and Theorem~\ref{TheEquivalence}, we have
\[
   \lim_{n\to\infty}{\rm Tr}_{\Lambda_n\setminus\Lambda_0} \tau_n = (\omega_\beta)_{\Lambda_0}.
\]
On the other hand, using the known formula for the magnetization of the Ising model~\cite{Huang}, we have
\begin{equation}
   \tr\left(\strut (\omega_\beta)_{\Lambda_0} Z\right) = \frac 1 {2n+1} \sum_{i=-n}^n \tr\left(\strut(\omega_\beta)_{\Lambda_n} Z_i\right)
   =\frac{\sinh(\beta h)}{\sqrt{\sinh^2(\beta h)+\exp(-4\beta J)}},
   \label{eqNotHolomorphic}
\end{equation}
where the first equality is due to translation-invariance, and the second equality follows from taking the limit $n\to\infty$ and using the well-known result
for the magnetization of this model. We can compare this with the local Gibbs state $\omega_\beta^{\rm loc}$,
which is defined as the normalization of $\exp(-\beta H_{\Lambda_0})$. We run into an immediate conceptual problem: how do we define $H_{\Lambda_0}$?
The most obvious choice is $H_{\Lambda_0}=Z_0$, but we have the freedom to interpret~(\ref{eqIsing}) in different ways, by subtracting local terms from $Z_i Z_{i+1}$
and adding them to the $Z_i$-term. This is exactly the freedom that we encountered before, in the definition of \emph{physical equivalence}
that we discussed before Lemma~\ref{LemVariational}. Whatever we define to be $H_{\Lambda_0}$, it should be \emph{some} fixed Hamiltonian
which can be written in the form $H_{\Lambda_0}=U\left(\begin{array}{cc} E_1 & 0 \\ 0 & E_2\end{array}\right)U^\dagger$, with $U$ unitary
and $E_1,E_2\in\R$ its energy eigenvalues. Our crucial assumption will be that whatever $H_{\Lambda_0}$ is, it should be independent of $\beta$. But then
\[
   \tr(\omega_\beta^{\rm loc} Z)=\frac{\tr\left[ U \left(\begin{array}{cc} \exp(-\beta E_1) & 0 \\ 0 & \exp(-\beta E_2)\end{array}\right)U^\dagger Z\right]}
   {\exp(-\beta E_1)+\exp(-\beta E_2)}.
\]
Regarding this as a function $f(\beta)$ for complex $\beta\in\C$, we obtain a function that is holomorphic except for possibly countably many isolated singularities
on the imaginary axis (if $E_1\neq E_2$). This is not true for~(\ref{eqNotHolomorphic}) which is a function with branch cut singularities due to the presence of the square root.
This shows that $\omega_\beta^{\rm loc}\neq (\omega_\beta)_{\Lambda_0}$ at least for some values of $\beta>0$,
no matter how we define $H_{\Lambda_0}$. Thus ${\rm Tr}_{\Lambda_n\setminus\Lambda_m}\tau_n$
cannot converge to $\omega_\beta^{\rm loc}$ in the thermodynamics limit where $n\to\infty$.
\end{example}

The standard microcanonical ensemble (mentioned in Example~\ref{ExFlat} above) is defined as a flat distribution on the energy windows subspace corresponding
to the interval $(u-\delta,u)$. However, we can apply Theorem~\ref{TheEquivalence} more generally. In order to slightly generalize Example~\ref{ExFlat}, we need
another simple lemma:
\begin{lemma}
\label{LemEntropy}
Let $(p_1,\ldots,p_n)$ be discrete probability distribution, and suppose that there exists $M\geq 1$ such that $\displaystyle \frac {p_i}{p_j}\leq M$ for all $i\neq j$.
Then its Shannon entropy satisfies $H(p)\geq \log n -\log M$.
\end{lemma}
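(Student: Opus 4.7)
The plan is to rewrite the desired entropy bound as a bound on the relative entropy with respect to the uniform distribution on $\{1,\ldots,n\}$. Writing $u_i = 1/n$ for the uniform distribution, one has the identity
\[
   \log n - H(p) = \sum_{i=1}^n p_i \log(n p_i) = D(p\|u),
\]
so the claim is equivalent to $D(p\|u) \leq \log M$. Thus the proof reduces to controlling $n p_i$ from above uniformly in $i$.

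Next I would exploit the hypothesis $p_i/p_j\leq M$ as follows. First observe that the bound $p_i/p_j \leq M$ for $i\neq j$ forces every $p_i$ to be strictly positive, for if some $p_j=0$ while another $p_i>0$, the ratio $p_i/p_j$ would be infinite. Let $p_{\max}=\max_i p_i$ and $p_{\min}=\min_i p_i$; then $p_{\max}\leq M p_{\min}$. Since $\sum_i p_i =1$ and there are $n$ summands, $p_{\min}\leq 1/n$, and combining the two estimates gives $p_{\max}\leq M/n$, i.e.\ $n p_i \leq M$ for every $i$.

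Taking logarithms yields $\log(np_i)\leq \log M$ for all $i$, so
\[
   D(p\|u) = \sum_{i=1}^n p_i \log(n p_i) \leq (\log M)\sum_{i=1}^n p_i = \log M,
\]
which together with the first displayed identity gives $H(p)\geq \log n - \log M$, as required.

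There is no real obstacle here; the only minor subtlety is handling the degenerate case where some $p_i$ vanish, which is ruled out immediately by the hypothesis $p_i/p_j\leq M<\infty$. The argument is short and uses nothing beyond the definition of Shannon entropy and elementary manipulations, in contrast to the substantially more involved thermodynamic-limit lemmas in the preceding material.
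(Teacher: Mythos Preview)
Your proof is correct and follows essentially the same route as the paper's: both arguments hinge on the observation that $p_{\min}\leq 1/n$ together with $p_{\max}\leq M p_{\min}$ forces $np_i\leq M$ for every $i$, and then take the $p$-weighted average of $\log(np_i)\leq \log M$. The only cosmetic difference is that you package $\log n - H(p)$ as the relative entropy $D(p\|u)$, whereas the paper works directly with $\ell_i=\log(1/p_i)$ and in fact obtains the two-sided estimate $|H(p)-\log n|\leq\log M$; since $H(p)\leq\log n$ always, the extra direction is vacuous and your one-sided argument suffices.
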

\proof
Let $\ell_i:=\log(1/p_i)$, then $\log(p_i/p_j)=\ell_j-\ell_i$, and the condition above implies $|\ell_i-\ell_j|\leq \log M$ for all $i,j$. Then all $\ell_i$ lie in the
interval $[\ell_{min},\ell_{max}]$, where $\ell_{min}:=\min_i \ell_i$ and $\ell_{max}:=\max_i \ell_i$. This interval has size at most $\ell_{max}-\ell_{min}\leq \log M$.
Since $\min_i p_i\leq 1/n \leq \max_i p_i$, the quantity $\log n$ must be contained in this interval. Thus $|\ell_i-\log n|\leq \log M$ for all $i$. It follows that
\[
   |H(p)-\log n|=\left| \sum_i p_i\log\frac 1 {p_i}-\sum_i p_i\log n\right| =\left| \sum_i p_i\ell_i-\sum_i p_i\log n\right|
   \leq \sum_i p_i |\ell_i -\log n|\leq \log M.
\]
\qed

Now we apply this to prove a generalization of Example~\ref{ExFlat}.

\begin{figure}[!hbt]
\begin{center}
\includegraphics[angle=0, width=5cm]{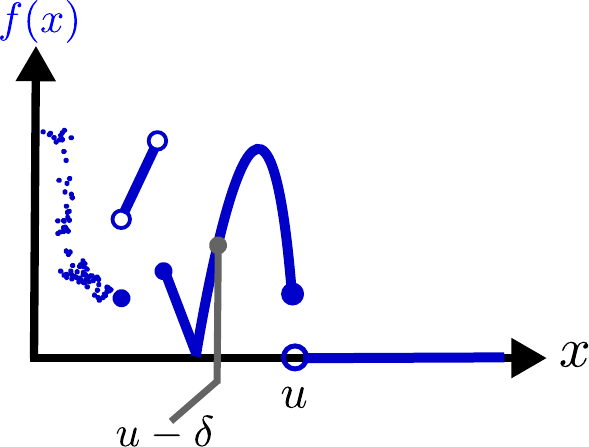}
\caption{Spectral density functions that satisfy the premises of Example~\ref{ExMain}, and yield equivalence of ensembles in the sense
that the corresponding microcanonical state locally resembles the canonical state. The non-negative bounded function $f$ must satisfy $f(x)=0$ for
all $x\geq u$, and there must be $\delta>0$ such that $f$ is continuous and strictly positive on the interval $[u-\delta,u]$. For $x<u-\delta$, $f$ can
have all kinds of discontinuities.
}
\label{fig_density}
\end{center}
\end{figure}

\begin{example}[Microcanonical ensemble with given distribution function]
\label{ExMain}
Let $\Phi$ be an interaction which is not physically equivalent to zero, and let 
$u_{\min}(\Phi)<u\leq u_{\max}(\Phi)$. Let $f:[u_{\min}(\Phi),u_{\max}(\Phi)]\to\R$ be a bounded nonnegative function such that $f(x)= 0$ for all $x>u$ and such that
there exists $\delta>0$ such that $f$ is continuous and strictly positive on $[u-\delta,u]$, cf.\ Figure~\ref{fig_density}. For every $n\in\N$, let $\{|E_i\rangle\}_i$ be an arbitrary energy
eigenbasis of $H_{\Lambda_n}^p$, the Hamiltonian on $\Lambda_n$ with periodic boundary conditions. Then the set of states defined by
\[
   \tau_n:=\frac 1 {\mathcal{N}} \sum_i f\left(\frac{E_i}{|\Lambda_n|}\right) |E_i\rangle\langle E_i|,
\]
where $\mathcal{N}:=\sum_i f(E_i/|\Lambda_n|)$, satisfies the premises of Theorem~\ref{TheEquivalence}. That is, this modified microcanonical ensemble resembles locally
the canonical ensemble.
\end{example}
\proof
Since $\tr(\tau_n H_{\Lambda_n}^p)\leq u|\Lambda_n|$, we have
$\bar u:=\limsup_{n\to\infty}\frac 1 {|\Lambda_n|} \tr(\tau_n H_{\Lambda_n})\leq u$.
Define $I_n:=\{i\,\,|\,\, E_i/|\Lambda_n|\in[u-\delta,u]\}$, then $\tau_n=(1-\lambda_n)\sigma_n+\lambda_n \sigma'_n$, where
\[
   \sigma_n=\sum_{i\in I_n} \frac{f(E_i/|\Lambda_n|)}{\sum_{j\in I_n} f(E_j/|\Lambda_n|)} |E_i\rangle\langle E_i|,\qquad
   \sigma'_n=\sum_{i\not\in I_n}\frac{f(E_i/|\Lambda_n|)}{\sum_{j\not\in I_n} f(E_j/|\Lambda_n|)} |E_i\rangle\langle E_i|,\qquad
   \lambda_n=\frac{\sum_{j\not\in I_n}f(E_j/|\Lambda_n|)}{\sum_j f(E_j/|\Lambda_n|)}.
\]
According to Lemma~\ref{LemHBC}, we have
\[
   \sum_{j\not\in I_n} f(E_j/|\Lambda_n|)\leq \#\{i\,\,|\,\, E_i/|\Lambda_n|<u-\delta\}\cdot \|f\|_\infty = \exp[|\Lambda_n|\, s(u-\delta)+o(|\Lambda_n|)].
\]
On the other hand,
\[
   \sum_j f(E_j/|\Lambda_n|) \geq \sum_{j\in I_n} f(E_j/|\Lambda_n|) \geq \# I_n\cdot \min_{x\in[u-\delta,u]} f(x) = \exp[|\Lambda_n|\, s(u) - o(|\Lambda_n|)].
\]
This shows that $\lim_{n\to\infty}\lambda_n=0$, and concavity of the entropy, i.e.\ $S(\tau_n)\geq (1-\lambda_n)S(\sigma_n)+\lambda_n S(\sigma'_n)$, yields
\[
   \liminf_{n\to\infty}\frac 1 {|\Lambda_n|} S(\tau_n) \geq \liminf_{n\to\infty} \frac 1 {|\Lambda_n|} S(\sigma_n).
\]
But the eigenvalues of $\sigma_n$ are $\displaystyle p_i:=\frac{f(E_i/|\Lambda_n|)}{\sum_{j\in I_n} f(E_j/|\Lambda_n|)}$, such that
$\displaystyle \frac{p_i}{p_j} = \frac{f(E_i/|\Lambda_n|)}{f(E_j/|\Lambda_n|)}\leq \frac b a$, where $a:=\min_{x\in[u-\delta,u]} f(x)$ and
$b:=\max_{x\in[u-\delta,u]} f(x)$. Thus, Lemma~\ref{LemEntropy} shows that
\[
   S(\sigma_n)\geq \log\# I_n-\log\frac b a = |\Lambda_n|\cdot s(u)-o(|\Lambda_n|),
\]
and so $\underline{s}:=\liminf_{n\to\infty}\frac 1 {|\Lambda_n|} S(\tau_n)\geq s(u)$. In summary, we obtain
\[
   \liminf_{n\to\infty} \frac 1 {|\Lambda_n|} \left( S(\tau_n)-\beta\,\tr(\tau_n H_{\Lambda_n})\right)\geq \underline{s} -\beta\, \bar u \geq s(u)-\beta\, u =p(\beta,\Phi).
\]
This proves all the premises of Theorem~\ref{TheEquivalence}.
\qed

\textbf{Remark.}
The condition that $f$ has a discontinuity at $u$ (i.e. $f(u)>0$, but $f(x)=0$ for all $x>u$) can be relaxed: the statement above will remain valid
if $f(u)=0$ as long as $f(x)$ does not tend to zero too quickly as $x\to u$. However, the question what ``too quickly'' means
mathematically seems to depend on the choice of the model, because it depends on subtle properties of the spectrum of $H_\Lambda$, in particular on
the number of eigenvalues in certain intervals with diameters of order $o(|\Lambda|)$. In this paper, we only analyze
what can be said in full generality from translation-invariance alone, without reference to any details of the model.

The main proof idea used in this subsection -- to apply the variational principle~(\ref{eqVariationalPrinciple}) -- has been pioneered by Lima~\cite{Lima1,Lima2}.
Our result however is more general:
\begin{itemize}
\item It involves more general spectral density functions (the function $f$ in Example~\ref{ExMain}) instead of only the flat distribution,
\item {it allows to determine the inverse temperature from the energy density on the finite region $\Lambda_n$,}
\item it allows local lattice site dimensions larger than two, and, most significantly,
\item Lima considers only a restricted set of interactions that commute with a particle number operator, see~\cite[p.\ 183]{Lima2}, and~\cite[p.\ 63]{Lima1}.
There is no such restriction in this work.
\end{itemize}

In the remainder of this subsection, we will consider the case of sequences of states $(\tau_n)_{n\in\N}$ that are not necessarily $\Lambda_n$-translation-invariant.
The simplest example is given by the microcanonical ensembles (in the sense of Example~\ref{ExFlat}) if boundary conditions are not periodic.
The proof of Theorem~\ref{TheEquivalence} does not work any more, because we cannot guarantee that limit points of this sequence,
as states on the quasi-local algebra, are translation-invariant.

However, we can still prove a version of equivalence of ensembles in this case, even though it will be a weaker version. This was already seen by Lima~\cite{Lima2}.
In a nutshell, we will prove an equivalence of ensemble result for a restricted set of observables. The following definition specifies the class of observables
that we will consider.

\begin{definition}[$m$-block periodically averaged observable]
For $m\leq n$, an operator $A\in\mathcal{A}_n$ will be called an \emph{$m$-block periodically averaged observable}
if there exists $A'\in\mathcal{A}_m$ with $A'=(A')^\dagger$ such that
\begin{equation}
   A=\frac 1 {|\mathbf{T}(\Lambda_n)|} \sum_{T\in\mathbf{T}(\Lambda_n)} T (A'\otimes\mathbf{1}) T^\dagger,
   \label{eqPeriodicallyAveraged}
\end{equation}
where $\mathbf{T}(\Lambda_n)$ denotes all periodic translations of the finite region $\Lambda_n$ into itself, and the unit observable is
supported on $\Lambda_n\setminus\Lambda_m$. Moreover, $A$ will be called an \emph{$m$-block periodically averaged effect}
if there exists $A'\in\mathcal{A}_m$ which satisfies the equation above, and additionally satisfies $0\leq A'\leq\mathbf{1}$.
\end{definition}

Note that $m$-block periodically averaged observables $A$ on $\Lambda_n$ are automatically $\Lambda_n$-translation-invariant. The notion ``effect'' refers to the property
that they satisfy $0\leq A \leq\mathbf{1}$ (as inherited from $A'$), and can thus be interpreted as defining a binary measurement with POVM elements $(A,\mathbf{1}-A)$.

The usual $\|\cdot\|_1$-distance on density matrices (which is twice the trace distance) can be interpreted (up to a factor of two)
as the maximal possible difference of probabilities in any binary measurement that is applied to the states:
\[
   \|\rho-\sigma\|_1=2\max_{0\leq P \leq \mathbf{1}} \left| \tr(P\rho)-\tr(P\sigma)\right|.
\]
Similarly, we can define a pseudonorm that quantifies the extent to which two states differ in the expectation value of $m$-block periodically averaged
effects: for $m\in\N$ and $M=M^\dagger\in\mathcal{A}_n$ with $n\geq m$, set
\[
   \|M\|_{\{m\}}:=2\max\left\{ \left|\tr(PM)\right|\,\,|\,\, P\mbox{ is an $m$-block periodically averaged effect on $\Lambda_n$}\right\}.
\]
As a consequence, $\|\rho-\sigma\|_{\{m\}}$ denotes the maximal difference in probabilities of any measurements described by $m$-block averaged
effects that are performed on $\rho$ resp.\ $\sigma$. It is clear that $0\leq\|A\|_{\{m\}}\leq \|A\|_1$, and the norm properties
$\|\lambda A\|_{\{m\}}=|\lambda|\, \|A\|_{\{m\}}$ for $\lambda\in\R$ as well as $\|A+B\|_{\{m\}}\leq \|A\|_{\{m\}}+\|B\|_{\{m\}}$ are satisfied.
However, $\|A\|_{\{m\}}$ can be zero without $A$ being zero, which shows that $\|\cdot\|_{\{m\}}$ is not a norm.

In the case where we have an $m$-block periodically averaged observable which does not come from an effect, we have the following inequality:
\begin{lemma}
\label{LemEffectToObs}
Let $A$ be an $m$-block periodically averaged observable on $\Lambda_n$, coming from an observable $A'\in\mathcal{A}_m$ according to~(\ref{eqPeriodicallyAveraged}).
Then for all quantum states $\rho,\sigma$ on $\Lambda_n$, we have
\[
   \left| \tr(\rho A)-\tr(\sigma A)\right| \leq  \|A'\|_\infty \|\rho-\sigma\|_{\{m\}}.
\]
\end{lemma}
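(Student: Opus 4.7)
The plan is to reduce the statement about a general self-adjoint $A'$ to the statement about effects (positive operators bounded above by $\mathbf{1}$), which is exactly what the pseudonorm $\|\cdot\|_{\{m\}}$ is designed to measure. The key observation is that any self-adjoint $A'$ with $\|A'\|_\infty\le C$ fits into an affine box between $-C\mathbf{1}$ and $C\mathbf{1}$, so it can be written as an affine combination of $\mathbf{1}$ and an effect.

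Concretely, set $C:=\|A'\|_\infty$ (assume $C>0$, otherwise the inequality is trivial) and define
\[
   E':=\tfrac{1}{2}\!\left(\mathbf{1}-A'/C\right)\in\mathcal{A}_m.
\]
Because $-C\mathbf{1}\le A'\le C\mathbf{1}$, we have $0\le E'\le\mathbf{1}$, so $E'$ is a valid effect on $\Lambda_m$. Rearranging gives the decomposition $A'=C(\mathbf{1}-2E')$.

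Now apply the $m$-block periodic averaging map $B'\mapsto \frac{1}{|\mathbf{T}(\Lambda_n)|}\sum_T T(B'\otimes\mathbf{1})T^\dagger$ to both sides. This map is linear, and it sends $\mathbf{1}$ on $\Lambda_m$ to $\mathbf{1}$ on $\Lambda_n$ (since each summand equals $\mathbf{1}$ on $\Lambda_n$ and the convex combination is $\mathbf{1}$). Hence
\[
   A=C(\mathbf{1}-2E),
\]
where $E$ is the $m$-block periodically averaged effect constructed from $E'$ (and is indeed an effect, since the averaging of an effect by unitary conjugation and convex combination yields an effect).

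Therefore, for any two states $\rho,\sigma$ on $\Lambda_n$,
\[
   \tr(\rho A)-\tr(\sigma A)
   = C\bigl[\tr(\rho\mathbf{1})-\tr(\sigma\mathbf{1})\bigr]-2C\bigl[\tr(\rho E)-\tr(\sigma E)\bigr]
   = -2C\bigl[\tr(\rho E)-\tr(\sigma E)\bigr],
\]
using $\tr\rho=\tr\sigma=1$. Taking absolute values and invoking the definition of the pseudonorm,
\[
   \bigl|\tr(\rho A)-\tr(\sigma A)\bigr|
   = 2C\bigl|\tr\bigl(E(\rho-\sigma)\bigr)\bigr|
   \le C\cdot\|\rho-\sigma\|_{\{m\}} = \|A'\|_\infty\,\|\rho-\sigma\|_{\{m\}},
\]
which is the desired inequality.

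The proof is entirely routine; there is no real obstacle. The only point one has to be careful about is checking that $E$ really qualifies as an $m$-block periodically averaged effect in the sense of the preceding definition, i.e.\ that $E'$ satisfies $0\le E'\le\mathbf{1}$ (which follows directly from $\|A'\|_\infty=C$) and that the averaging step preserves the effect property and commutes with the affine decomposition of $A'$.
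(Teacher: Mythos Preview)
Your proof is correct and follows essentially the same approach as the paper: affinely rescale $A'$ to an effect, average, and invoke the definition of $\|\cdot\|_{\{m\}}$. The only cosmetic difference is that the paper shifts by $\lambda_{\min}\mathbf{1}$ and scales by $(\lambda_{\max}-\lambda_{\min})^{-1}$ (then bounds $\lambda_{\max}-\lambda_{\min}\le 2\|A'\|_\infty$ at the end), whereas you shift by $\|A'\|_\infty\mathbf{1}$ and scale by $(2\|A'\|_\infty)^{-1}$ directly.
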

\begin{proof}
Denote by $\lambda_{\min}$ resp.\ $\lambda_{\max}$ the smallest resp.\ largest eigenvalue of $A'$.
If $\lambda_{\max}=\lambda_{\min}$ then there is nothing to prove. Otherwise, set $B':=(\lambda_{\max}-\lambda_{\min})^{-1}(A'-\lambda_{\min}\mathbf{1})$,
then $0\leq B'\leq  \mathbf{1}$. Define $B:=\frac 1 {|\mathbf{T}(\Lambda_n)|} \sum_{T\in\mathbf{T}(\Lambda_n)} T (B'\otimes \mathbf{1})T^\dagger$,
then $B$ is an $m$-block periodically averaged effect, and hence
\[
   \left| \tr(\rho B)-\tr(\sigma B)\right|\leq \frac 1 2 \|\rho-\sigma\|_{\{m\}}.
\]
On the other hand, we have $B=(\lambda_{\max}-\lambda_{\min})^{-1}(A-\lambda_{\min}\mathbf{1})$. Substituting this into the previous inequality,
and using that $\lambda_{\max}-\lambda_{\min}\leq 2\|A'\|_\infty$, we obtain the claimed inequality.
\end{proof}

As a preparation, we need a lemma which says that \emph{periodically averaged} local Gibbs states for arbitrary boundary conditions converge to the global Gibbs
state if it is unique.
\begin{lemma}
\label{LemLocalGibbs}
Fix any $\beta\geq 0$, and let $H_{\Lambda_n}^{BC}$ be a sequence of Hamiltonians with arbitrary boundary conditions,
corresponding to an interaction $\Phi$ which is not physically equivalent to zero and which has a unique Gibbs state $\omega_\beta$ at inverse temperature $\beta$.
Then, for every $m\in\N$,
\begin{equation}
   \lim_{n\to\infty} \Tr_{\Lambda_n\setminus\Lambda_m} \left(
      \frac 1 {|\mathbf{T}(\Lambda_n)|} \sum_{T\in\mathbf{T}(\Lambda_n)} T\, \frac{\exp(-\beta H_{\Lambda_n}^{BC})}{Z_n} \, T^\dagger
   \right) = (\omega_\beta)_{\Lambda_m},
   \label{eqLemma16Part1}
\end{equation}
where $Z_n=\tr\left(\exp(-\beta H_{\Lambda_n}^{BC})\right)$. Furthermore, if Gibbs states are unique around inverse temperature $\beta>0$, and if we define
$\beta_n^{BC}$ as the solution of the equation
$\frac 1 {|\Lambda_n|} \tr\left(H_{\Lambda_n}^{BC}\,\frac{\exp(-\beta_n^{BC} H_{\Lambda_n}^{BC})}{Z_n}\right)=u_n$, with $(u_n)_{n\in\N}$
an arbitrary sequence with $\lim_{n\to\infty} u_n=u(\omega_\beta)$, then $\lim_{n\to\infty}\beta_n^{BC}=\beta$, and
\[
   \lim_{n\to\infty} \Tr_{\Lambda_n\setminus\Lambda_m} \left(
      \frac 1 {|\mathbf{T}(\Lambda_n)|} \sum_{T\in\mathbf{T}(\Lambda_n)} T\, \frac{\exp(-\beta_n^{BC} H_{\Lambda_n}^{BC})}{Z'_n}\, T^\dagger
   \right) = (\omega_\beta)_{\Lambda_m},
\]
where $Z'_n=\tr\left(\exp(-\beta_n^{BC} H_{\Lambda_n}^{BC})\right)$.
\end{lemma}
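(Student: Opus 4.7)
\bigskip
\noindent\textit{Proof proposal.}
The plan is to apply Theorem~\ref{TheEquivalence} to the periodically-averaged sequence $\tau_n:=\frac{1}{|\mathbf{T}(\Lambda_n)|}\sum_{T\in\mathbf{T}(\Lambda_n)} T\gamma_n T^\dagger$, where $\gamma_n:=\exp(-\beta H_{\Lambda_n}^{BC})/Z_n$ is the local Gibbs state. By construction $\tau_n$ is $\Lambda_n$-translation-invariant, so the only thing to check is the free-energy bound $\liminf_n|\Lambda_n|^{-1}\bigl(S(\tau_n)-\beta\,\tr(\tau_n H_{\Lambda_n})\bigr)\geq p(\beta,\Phi)$.

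For the entropy, concavity of $S$ under unitary mixing (each $T\gamma_n T^\dagger$ has the same entropy $S(\gamma_n)$) gives $S(\tau_n)\geq S(\gamma_n)$. For the energy the key trick is to route through $H_{\Lambda_n}^p$, which, unlike $H_{\Lambda_n}$ and $H_{\Lambda_n}^{BC}$, is genuinely $\Lambda_n$-translation-invariant. Since $\|H_{\Lambda_n}^p-H_{\Lambda_n}^{BC}\|_\infty$ and $\|H_{\Lambda_n}-H_{\Lambda_n}^{BC}\|_\infty$ are both $o(|\Lambda_n|)$ (by Definition~\ref{DefBC} and the bound $\|H_{\Lambda_n}^p-H_{\Lambda_n}\|_\infty\leq\kappa\|\Phi\||\partial\Lambda_n|$ established before it), the averaging identity $\tr(\tau_n H_{\Lambda_n}^p)=\tr(\gamma_n H_{\Lambda_n}^p)$ propagates to
\[
   \tr(\tau_n H_{\Lambda_n})=\tr(\gamma_n H_{\Lambda_n}^{BC})+o(|\Lambda_n|).
\]
Combining these two ingredients with the variational characterization of $\gamma_n$ as the free-energy minimizer, $S(\gamma_n)-\beta\tr(\gamma_n H_{\Lambda_n}^{BC})=\log Z_n$, yields
\[
   \frac{S(\tau_n)-\beta\tr(\tau_n H_{\Lambda_n})}{|\Lambda_n|}\geq \frac{\log Z_n}{|\Lambda_n|}+o(1)\longrightarrow p(\beta,\Phi),
\]
since $|\log Z_n-\log\tr e^{-\beta H_{\Lambda_n}}|\leq\beta\|H_{\Lambda_n}^{BC}-H_{\Lambda_n}\|_\infty=o(|\Lambda_n|)$. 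Theorem~\ref{TheEquivalence} now delivers~(\ref{eqLemma16Part1}) verbatim.

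For the finite-size inverse temperature $\beta_n^{BC}$, I would mimic the closing paragraph of the proof of Theorem~\ref{TheEquivalence}. Pick $\beta_0<\beta<\beta_1$ inside the uniqueness interval; applying the just-proven first part at $\beta_0$ and $\beta_1$ and combining with $\|H_{\Lambda_n}^p-H_{\Lambda_n}^{BC}\|_\infty=o(|\Lambda_n|)$ gives $|\Lambda_n|^{-1}\tr(H_{\Lambda_n}^{BC}\gamma_n(\beta_i))\to u(\omega_{\beta_i})$, with $u(\omega_{\beta_0})>u(\omega_\beta)>u(\omega_{\beta_1})$ by strict convexity of $p(\cdot,\Phi)$ (Lemma~\ref{LemVariational}). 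By continuity of $\beta'\mapsto\tr(H_{\Lambda_n}^{BC}\gamma_n(\beta'))$, a solution $\beta_n^{BC}\in(\beta_0,\beta_1)$ therefore exists for all large $n$, and arbitrariness of $\beta_0,\beta_1$ forces $\beta_n^{BC}\to\beta$. Repeating the concavity/variational argument above with $\tau_n':=\frac{1}{|\mathbf{T}(\Lambda_n)|}\sum_T T\gamma_n(\beta_n^{BC})T^\dagger$, using the identity $S(\gamma_n(\beta_n^{BC}))=\log Z_n'+\beta_n^{BC} u_n|\Lambda_n|$ together with $(\beta_n^{BC}-\beta)u_n\to 0$ and continuity of the pressure, again verifies the premise of Theorem~\ref{TheEquivalence} at the fixed inverse temperature $\beta$, which concludes the argument.

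The only genuinely delicate step is the energy bookkeeping $\tr(\tau_n H_{\Lambda_n})=\tr(\gamma_n H_{\Lambda_n}^{BC})+o(|\Lambda_n|)$: it would fail if one tried to average $H_{\Lambda_n}^{BC}$ directly, because $H_{\Lambda_n}^{BC}$ is generically not $\Lambda_n$-translation-invariant. The detour through $H_{\Lambda_n}^p$ is what makes the averaging harmless, and that is where all the work of the proof is concentrated; everything else is a direct reduction to Theorem~\ref{TheEquivalence}.
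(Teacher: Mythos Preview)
Your proposal is correct and follows essentially the same route as the paper: average the local Gibbs state over $\mathbf{T}(\Lambda_n)$, use concavity of entropy together with the detour through the $\Lambda_n$-translation-invariant $H_{\Lambda_n}^p$ to control the energy term, verify the free-energy premise of Theorem~\ref{TheEquivalence}, and then repeat the argument for $\beta_n^{BC}$ after a squeezing argument with $\beta_0<\beta<\beta_1$. The only cosmetic difference is that you lower-bound $S(\gamma_n)-\beta\,\tr(\gamma_n H_{\Lambda_n}^{BC})$ by computing it exactly as $\log Z_n$ and invoking boundary-condition independence of the pressure, whereas the paper bounds it from below by comparing $\gamma_n$ to the restriction $(\omega_\beta)_{\Lambda_n}$; both routes are standard and lead to the same conclusion.
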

\proof
Set $\rho_n(\beta'):=\exp(-\beta' H_{\Lambda_n}^{BC})/Z_n(\beta')$, and $\rho_n:=\rho_n(\beta)$.
By construction, $\rho_n$ maximizes the functional $\rho\mapsto S(\rho)-\beta\, \tr(H_{\Lambda_n}^{BC} \rho)$. Thus
\[
   S(\rho_n)-\beta\, \tr(H_{\Lambda_n}^{BC} \rho_n)\geq S\left( (\omega_{\beta})_{\Lambda_n}\right)-\beta\, \tr\left( H_{\Lambda_n}^{BC} (\omega_{\beta})_{\Lambda_n}\right).
\]
Set $\rho'_n:=1/(|\mathbf{T}(\Lambda_n)|) \sum_{T\in\mathbf{T}(\Lambda_n)} T \rho_n T^\dagger$, then concavity of the entropy implies
$S(\rho'_n)\geq S(\rho_n)$. Since $T^\dagger H_{\Lambda_n}^p T = H_{\Lambda_n}^p$ for all $T\in\mathbf{T}(\Lambda_n)$, we have
$\tr(\rho'_n H_{\Lambda_n}^p)=\tr(\rho_n H_{\Lambda_n}^p)$. We obtain
\begin{eqnarray}
   \liminf_{n\to\infty} \frac 1 {|\Lambda_n|} \left( S(\rho'_n)-\beta\, \tr(\rho'_n H_{\Lambda_n})\right) &\geq&
   \liminf_{n\to\infty} \frac 1 {|\Lambda_n|} \left( S(\rho_n)-\beta\, \tr(\rho'_n H_{\Lambda_n}^p)\right)
   =\liminf_{n\to\infty} \frac 1 {|\Lambda_n|} \left( S(\rho_n)-\beta\, \tr(\rho_n H_{\Lambda_n}^p)\right)\nonumber \\
    &=&\liminf_{n\to\infty} \frac 1 {|\Lambda_n|} \left( S(\rho_n)-\beta\, \tr(\rho_n H_{\Lambda_n}^{BC})\right)\nonumber \\
   &\geq& \liminf_{n\to\infty}\frac 1 {|\Lambda_n|}\left( S\left( (\omega_{\beta})_{\Lambda_n}\right)-\beta\, \tr\left( H_{\Lambda_n}^{BC} (\omega_{\beta})_{\Lambda_n}\right) \right)\nonumber \\
   &=& s(\omega_{\beta})-\beta\, u(\omega_{\beta})=p(\beta,\Phi). \label{eqLemma16Calculation}
\end{eqnarray}
Since every $\rho'_n$ is $\Lambda_n$-translation-invariant, Theorem~\ref{TheEquivalence} proves~(\ref{eqLemma16Part1}) and also
$\lim_{n\to\infty}\frac 1 {|\Lambda_n|} \tr(\rho'_n H_{\Lambda_n})=u(\omega_\beta)$. Thus
\[
   u(\omega_\beta)=\lim_{n\to\infty}\frac 1 {|\Lambda_n|} \tr(\rho'_n H_{\Lambda_n}^p)= \lim_{n\to\infty}\frac 1 {|\Lambda_n|} \tr(\rho_n H_{\Lambda_n}^p)
   =\lim_{n\to\infty}\frac 1 {|\Lambda_n|} \tr(\rho_n H_{\Lambda_n}^{BC}).
\]
Choose $\beta_0,\beta_1\in\R$ such
that $0<\beta_0<\beta<\beta_1$, and such that the Gibbs states at inverse temperatures $\beta_0$ and $\beta_1$ are unique. Then the previous results show that
$\lim_{n\to\infty} \frac 1 {|\Lambda_n|} \tr\left(H_{\Lambda_n}^{BC}\rho_n(\beta_i)\right)=u_i$ for $i=0,1$, where $u_i:=u(\omega_{\beta_i})$.
It follows $u_0>u>u_1$, and thus for $n$ large enough, we have $\frac 1 {|\Lambda_n|} \tr(H_{\Lambda_n}^{BC} \rho_n(\beta_0))>u_n
>\frac 1 {|\Lambda_n|} \tr(H_{\Lambda_n}^{BC} \rho_n(\beta_1))$, so $\beta_0<\beta_n^{BC}<\beta_1$ for $n$ large enough; in particular,
a solution $\beta_n^{BC}$ can be found in the interval $(\beta_0,\beta_1)$. Moreover, since $\beta_0$ and $\beta_1$ can be chosen
arbitrarily close to $\beta$, this proves that $\lim_{n\to\infty}\beta_n^{BC}=\beta$. We can then repeat the calculation~(\ref{eqLemma16Calculation}),
with $\beta$ after the minus sign replaced by $\beta_n^{BC}$ where necessary, $\omega_\beta$ left unchanged, $\rho_n$ replaced by $\rho_n(\beta_n^{BC})$,
and $\rho'_n$ replaced by $\rho'_n(\beta_n^{BC}):=
1/(|\mathbf{T}(\Lambda_n)|) \sum_{T\in\mathbf{T}(\Lambda_n)} T \rho_n(\beta_n^{BC}) T^\dagger$, proving the final claim of the lemma.
\qed

Now we have all the ingredients to prove our main theorem on equivalence of ensembles.

\begin{theorem}[Equivalence of ensembles, non-translation-invariant states]
\label{TheEquivalence2}
Let $(\tau_n)_{n\in\N}$ be a sequence of states on $\mathcal{A}_n$, let $\beta\geq 0$, and let $\Phi$ be a translation-invariant finite-range interaction
which is not physically equivalent to zero, and for which there is a unique Gibbs state $\omega_\beta$ at inverse temperature $\beta$. Suppose that
\[
   \liminf_{n\to\infty}\frac 1 {|\Lambda_n|}\left(\strut S(\tau_n)-\beta\, \tr(\tau_n H_{\Lambda_n})\right)\geq p(\beta,\Phi),
\]
then we have equality in this expression, and
\[
   \lim_{n\to\infty} \left\| \tau_n - \frac{\exp(-\beta H_{\Lambda_n}^{BC})}{Z_n}\right\|_{\{m\}}=0,\qquad\mbox{as well as}\qquad
   \lim_{n\to\infty}\frac 1 {|\Lambda_n|} \tr(\tau_n H_{\Lambda_n})=u(\omega_\beta),
\]
where $Z_n=\tr(\exp(-\beta H_{\Lambda_n}^{BC}))$, and $H_{\Lambda_n}^{BC}$ is the Hamiltonian on $\Lambda_n$ corresponding to $\Phi$ with arbitrary boundary conditions.
Furthermore, if Gibbs states are unique around inverse temperature $\beta>0$, we have
\[
   \lim_{n\to\infty} \left\| \tau_n - \frac{\exp(-\beta_n^{BC} H_{\Lambda_n}^{BC})}{Z_n}\right\|_{\{m\}}=0,
\]
where $\beta_n^{BC}$ is defined as the solution of the equation
$\frac 1 {|\Lambda_n|} \tr\left(H_{\Lambda_n}^{BC}\,\frac{\exp(-\beta_n^{BC} H_{\Lambda_n}^{BC})}{Z_n}\right)=u_n$, where $(u_n)_{n\in\N}$ is any
sequence with $\lim_{n\to\infty}u_n=u(\omega_\beta)$.
\end{theorem}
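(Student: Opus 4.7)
The plan is to reduce the non-translation-invariant case to Theorem~\ref{TheEquivalence} via symmetrization. Define the periodic average
\[
   \tilde\tau_n := \frac{1}{|\mathbf{T}(\Lambda_n)|} \sum_{T\in\mathbf{T}(\Lambda_n)} T\tau_n T^\dagger,
\]
which is manifestly $\Lambda_n$-translation-invariant. Since $H_{\Lambda_n}^p$ is $\Lambda_n$-translation-invariant, $T^\dagger H_{\Lambda_n}^p T=H_{\Lambda_n}^p$, so by cyclicity $\tr(\tilde\tau_n H_{\Lambda_n}^p)=\tr(\tau_n H_{\Lambda_n}^p)$. Combined with $\|H_{\Lambda_n}^p-H_{\Lambda_n}\|_\infty=o(|\Lambda_n|)$ and $\|H_{\Lambda_n}^{BC}-H_{\Lambda_n}\|_\infty=o(|\Lambda_n|)$, this yields $\tr(\tilde\tau_n H_{\Lambda_n})=\tr(\tau_n H_{\Lambda_n})+o(|\Lambda_n|)$, and similarly with any $H_{\Lambda_n}^{BC}$. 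Concavity of the von Neumann entropy gives $S(\tilde\tau_n)\geq S(\tau_n)$. Therefore the hypothesis of the theorem implies
\[
   \liminf_{n\to\infty}\frac{1}{|\Lambda_n|}\bigl(S(\tilde\tau_n)-\beta\,\tr(\tilde\tau_n H_{\Lambda_n})\bigr)\geq p(\beta,\Phi),
\]
so Theorem~\ref{TheEquivalence} applies to $(\tilde\tau_n)_{n\in\N}$, yielding equality, $\tfrac{1}{|\Lambda_n|}S(\tilde\tau_n)\to s(\omega_\beta)$, $\tfrac{1}{|\Lambda_n|}\tr(\tilde\tau_n H_{\Lambda_n})\to u(\omega_\beta)$, and $\Tr_{\Lambda_n\setminus\Lambda_m}\tilde\tau_n\to(\omega_\beta)_{\Lambda_m}$ for every $m$.

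Next I transfer these limits back to $\tau_n$. The energy density limit for $\tau_n$ follows immediately from the estimate above. For the equality in the free energy bound, note that $S(\tau_n)\leq S(\tilde\tau_n)$ and the energy densities coincide up to $o(1)$, so $\limsup\tfrac{1}{|\Lambda_n|}(S(\tau_n)-\beta\tr(\tau_n H_{\Lambda_n}))\leq s(\omega_\beta)-\beta u(\omega_\beta)=p(\beta,\Phi)$, matching the lower bound assumed in the hypothesis.

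The crucial observation for the pseudonorm convergence is that any $m$-block periodically averaged effect $P$ is $\Lambda_n$-translation-invariant, i.e.\ $TPT^\dagger=P$. Hence for any state $\sigma$, cyclicity of the trace gives $\tr(P\sigma)=\tr(P T\sigma T^\dagger)$, and averaging over $T$ yields $\tr(P\sigma)=\tr(P\tilde\sigma)$. Applying this to both $\tau_n$ and $\gamma_n^{BC}:=\exp(-\beta H_{\Lambda_n}^{BC})/Z_n$, I obtain
\[
   \bigl\|\tau_n-\gamma_n^{BC}\bigr\|_{\{m\}}
   = \bigl\|\tilde\tau_n-\tilde\gamma_n^{BC}\bigr\|_{\{m\}},
\]
where $\tilde\gamma_n^{BC}$ is the periodic average of $\gamma_n^{BC}$. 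For two $\Lambda_n$-translation-invariant states, $\tr(P\,\cdot\,)$ reduces to $\tr(P'\,\cdot\,)$ against the $\Lambda_m$-marginal (using $\tr(T(P'\otimes\mathbf{1})T^\dagger \tilde\sigma)=\tr((P'\otimes\mathbf{1})\tilde\sigma)$), so
\[
   \bigl\|\tilde\tau_n-\tilde\gamma_n^{BC}\bigr\|_{\{m\}}
   \leq \bigl\|(\tilde\tau_n)_{\Lambda_m}-(\tilde\gamma_n^{BC})_{\Lambda_m}\bigr\|_1
   \stackrel{n\to\infty}\longrightarrow 0,
\]
where the convergence follows by combining the Theorem~\ref{TheEquivalence} conclusion for $\tilde\tau_n$ with Lemma~\ref{LemLocalGibbs} applied to $\tilde\gamma_n^{BC}$, both of which converge to $(\omega_\beta)_{\Lambda_m}$. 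The $\beta_n^{BC}$ version is obtained by repeating the last step using the second assertion of Lemma~\ref{LemLocalGibbs}, noting that $\beta_n^{BC}\to\beta$ and that convergence of $(\tilde\gamma_n^{BC,\beta_n^{BC}})_{\Lambda_m}$ to $(\omega_\beta)_{\Lambda_m}$ is exactly what that lemma provides.

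The main obstacle is conceptually minor but notationally delicate: one must verify that all energy and partition-function differences incurred by translating $H_{\Lambda_n}$, by comparing $H_{\Lambda_n}$ to $H_{\Lambda_n}^p$ or to $H_{\Lambda_n}^{BC}$, and by passing between $\tau_n$ and $\tilde\tau_n$, vanish at rate $o(|\Lambda_n|)$ uniformly in $T\in\mathbf{T}(\Lambda_n)$, so that they do not spoil the liminf hypothesis. Every such difference is controlled by $\|H_{\Lambda_n}^{BC}-H_{\Lambda_n}\|_\infty$ or $\|H_{\Lambda_n}^p-H_{\Lambda_n}\|_\infty$, both of which are $o(|\Lambda_n|)$ by Definition~\ref{DefBC} and the estimate following it.
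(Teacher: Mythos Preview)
Your proof is correct and follows essentially the same approach as the paper: symmetrize $\tau_n$ by periodic averaging, invoke concavity of entropy and $\Lambda_n$-translation-invariance of $H_{\Lambda_n}^p$ to push the liminf hypothesis to the symmetrized state, apply Theorem~\ref{TheEquivalence}, and then use that $m$-block periodically averaged effects see only the symmetrized state (equivalently, that the averaging map $\Omega$ is Hilbert--Schmidt self-adjoint) together with Lemma~\ref{LemLocalGibbs} to handle the arbitrary-BC Gibbs state. The only cosmetic difference is that the paper routes through the periodic-BC Gibbs state first and then compares it to the arbitrary-BC one, whereas you symmetrize both $\tau_n$ and $\gamma_n^{BC}$ directly and compare their $\Lambda_m$-marginals; the content is the same.
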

\begin{proof}
We prove both claims at once, by defining two sequences $(\beta_n)_{n\in\N}$ and $(\beta'_n)_{n\in\N}$, either setting $\beta_n:=\beta$ and $\beta'_n:=\beta$,
\emph{or} setting $\beta_n:=\beta_n^{BC}$ and $\beta'_n:=\beta_n^p$.
Define $\displaystyle\Omega(\sigma):=\frac 1 {|\mathbf{T}(\Lambda_n)|} \sum_{T\in\mathbf{T}(\Lambda_n)} T\sigma T^\dagger$,
then it is easy to check that $\Omega$ is Hilbert-Schmidt self-adjoint, i.e.\ $\tr(A \Omega(B))=\tr(\Omega(A) B)$ for $A=A^\dagger$, $B=B^\dagger$.
Furthermore, define
$\tau'_n:=\Omega(\tau_n)$, then concavity of the entropy
implies that $S(\tau'_n)\geq S(\tau_n)$.
Since the Hamiltonian with periodic boundary conditions satisfies $T H_{\Lambda_n}^p T^\dagger = H_{\Lambda_n}^p$, we obtain
$\tr(\tau'_n H_{\Lambda_n}^p)=\tr(\tau_n H_{\Lambda_n}^p)$, and thus
\begin{eqnarray*}
   \liminf_{n\to\infty}\frac 1 {|\Lambda_n|} \left( S(\tau'_n)-\beta\, \tr(\tau'_n H_{\Lambda_n})\right) &\geq&
   \liminf_{n\to\infty}\frac 1 {|\Lambda_n|} \left( S(\tau_n)-\beta\, \tr(\tau'_n H_{\Lambda_n}^p)\right)
   =\liminf_{n\to\infty}\frac 1 {|\Lambda_n|} \left( S(\tau_n)-\beta\, \tr(\tau_n H_{\Lambda_n}^p)\right)\\
   &=& \liminf_{n\to\infty}\frac 1 {|\Lambda_n|} \left( S(\tau_n)-\beta\, \tr(\tau_n H_{\Lambda_n})\right) \geq p(\beta,\Phi).
\end{eqnarray*}
Thus, $\tau'_n$ satisfies the premises of Theorem~\ref{TheEquivalence}, and~(\ref{eqEquiv2}) and~(\ref{eqEquivFinite}) tell us that
\[
   \lim_{n\to\infty}\left\| {\rm Tr}_{\Lambda_n\setminus\Lambda_m} \tau'_n - {\rm Tr}_{\Lambda_n\setminus\Lambda_m}
   \frac{\exp(-\beta'_n H_{\Lambda_n}^p)} {Z'_n}\right\|_1=0,
\]
where $Z'_n=\tr\left(\exp(-\beta'_n H_{\Lambda_n}^p)\right)$.
Now let $A$ be any $m$-block periodically averaged effect on $\Lambda_n$, then it is of the form~(\ref{eqPeriodicallyAveraged})
with $A'\in\mathcal{A}_m$, $0\leq A'\leq \mathbf{1}$.
A simple calculation shows that $\tr(\tau_n A)=\tr(\tau'_n(A'\otimes\mathbf{1}))$, and $\Lambda_n$-translation-invariance of
$H_{\Lambda_n}^p$ implies that $\displaystyle \tr\left(\frac{\exp(-\beta'_n H_{\Lambda_n}^p)}{Z'_n} A\right)= \tr\left(\frac{\exp(-\beta'_n H_{\Lambda_n}^p)}{Z'_n} (A'\otimes\mathbf{1})\right)$.
Thus
\begin{eqnarray}
   \nonumber
   \left\| \tau_n - \frac{\exp(-\beta'_n H_{\Lambda_n}^p)}{Z'_n}\right\|_{\{m\}}&=&
   2\max_A \left| \tr(\tau_n A)- \tr\left(\frac{\exp(-\beta'_n H_{\Lambda_n}^p)}{Z'_n} A\right)\right| \\ \nonumber
   &=&2\max_{A'}\left| \tr\left( A'\, {\rm Tr}_{\Lambda_n\setminus\Lambda_m} \tau'_n\right) - \tr\left( A'\, {\rm Tr}_{\Lambda_n\setminus\Lambda_m}
   \frac{\exp(-\beta'_n H_{\Lambda_n}^p)}{Z'_n}\right)\right| \\
   &\leq&
   \left\| {\rm Tr}_{\Lambda_n\setminus\Lambda_m} \tau'_n - {\rm Tr}_{\Lambda_n\setminus\Lambda_m}
   \frac{\exp(-\beta'_n H_{\Lambda_n}^p)} {Z'_n}\right\|_1 \stackrel{n\to\infty}\longrightarrow 0.
   \label{eqSpecialCase}
\end{eqnarray}
Now we extend this to arbitrary boundary conditions. Let $A$ be any $m$-block periodically averaged effect, then there exists $A'\in\mathcal{A}_m$ such that
$A=\Omega(A'\otimes\mathbf{1})$. Setting $Z_n=\tr(\exp(-\beta_n H_{\Lambda_n}^{BC}))$, we obtain
\begin{eqnarray*}
   \left| \tr\left(A \frac{\exp(-\beta_n H_{\Lambda_n}^{BC})}{Z_n}\right)\right. &-& \left. \tr\left(A \frac{\exp(-\beta'_n H_{\Lambda_n}^p)}{Z'_n}\right)\right|
   = \left| \tr\left(\Omega(A'\otimes\mathbf{1}) \frac{\exp(-\beta_n H_{\Lambda_n}^{BC})}{Z_n}\right)- \tr\left(\Omega(A'\otimes\mathbf{1})
   \frac{\exp(-\beta'_n H_{\Lambda_n}^p)}{Z'_n}\right)\right| \\
   &=& \left| \tr\left((A'\otimes\mathbf{1}) \Omega\left(\frac{\exp(-\beta_n H_{\Lambda_n}^{BC})}{Z_n}\right)\right)- \tr\left((A'\otimes\mathbf{1})
   \Omega\left(\frac{\exp(-\beta'_n H_{\Lambda_n}^p)}{Z'_n}\right)\right)\right| \\
   &=& \left|\tr\left(A' \Tr_{\Lambda_n\setminus\Lambda_m}  \Omega\left(\frac{\exp(-\beta_n H_{\Lambda_n}^{BC})}{Z_n}\right)\right)
   -\tr\left(A' \Tr_{\Lambda_n\setminus\Lambda_m}  \Omega\left(\frac{\exp(-\beta'_n H_{\Lambda_n}^p)}{Z'_n}\right)\right)\right| \\
   &\leq& \frac 1 2 \left\| \Tr_{\Lambda_n\setminus\Lambda_m}  \Omega\left(\frac{\exp(-\beta_n H_{\Lambda_n}^{BC})}{Z_n}\right)
   -\Tr_{\Lambda_n\setminus\Lambda_m}  \Omega\left(\frac{\exp(-\beta'_n H_{\Lambda_n}^p)}{Z'_n}\right) \right\|_1 \stackrel{n\to\infty}\longrightarrow 0
\end{eqnarray*}
for all $m\in\N$ according to Lemma~\ref{LemLocalGibbs}. Taking the supremum over all $A$ shows that
\[
   \lim_{n\to\infty}\left\|\frac{\exp(-\beta_n H_{\Lambda_n}^{BC})}{Z_n} - \frac{\exp(-\beta'_n H_{\Lambda_n}^p)}{Z'_n}\right\|_{\{m\}}=0\qquad\mbox{for all }m\in\N.
\]
Combining this with~(\ref{eqSpecialCase}) proves the second claim. Furthermore, Theorem~\ref{TheEquivalence} implies that
\[
   u(\omega_\beta)=\lim_{n\to\infty}\frac 1 {|\Lambda_n|} \tr(\tau'_n H_{\Lambda_n}^p) = \lim_{n\to\infty}\frac 1 {|\Lambda_n|} \tr(\tau_n H_{\Lambda_n}^p)
   =\lim_{n\to\infty}\frac 1 {|\Lambda_n|} \tr(\tau_n H_{\Lambda_n}).
\]
This completes the proof of the theorem.
\end{proof}
The simplest example application is as follows.
\begin{example}[Microcanonical versus canonical ensemble, arbitrary boundary conditions]
\label{ExFlat2}
The sequence of states $(\tau_n)_{n\in\N}$ which are defined as the maximal mixtures on the microcanonical subspaces
\[
   T_n^{BC}:={\rm span}\left\{ |E\rangle\,\,\left|\,\, H_{\Lambda_n}^{BC}|E\rangle=E|E\rangle,\enspace \frac E {|\Lambda_n|}\in(u-\delta,u)\right.\right\},
\]
where $H_{\Lambda_n}^{BC}$ is a Hamiltonian on $\Lambda_n$ with arbitrary boundary conditions, satisfies the premises
of Theorem~\ref{TheEquivalence2}. That is, if Gibbs states are unique around inverse temperature $\beta:=\beta(u)$,
we obtain equivalence of ensembles on $m$-block periodically averaged observables:
\[
   \lim_{n\to\infty}\left\| \tau_n -    \frac{\exp(-\beta H_{\Lambda_n}^{BC})} {Z_n}\right\|_{\{m\}}=0\qquad\mbox{for all }m\in\N.
\]
Furthermore, the same result is true if $\beta$ is defined as the ($n$-dependent) solution of $\frac 1 {|\Lambda_n|} \tr\left(H_{\Lambda_n}^{BC}\,
\frac{\exp(-\beta H_{\Lambda_n}^{BC})}{Z_n}\right)=u$.
\end{example}
\textbf{Remark.} The choice of boundary conditions in the definition of $T_n^{BC}$ and in the statement of the example need not be identical.
\begin{proof}
Apply Lemma~\ref{LemHBC} and $\frac 1 {|\Lambda_n|}\tr(\tau_n H_{\Lambda_n}^{BC})\leq u$ to show that $\liminf_{n\to\infty}\frac 1 {|\Lambda_n|}\left(\strut S(\tau_n)-\beta\, \tr(\tau_n H_{\Lambda_n})\right)\geq p(\beta,\Phi)$.
\end{proof}

For non-periodic boundary conditions, it is somewhat unnatural to consider periodically averaged observables. Instead, we may consider
$m$-block averaged observables, where the region $\Lambda_m$ is translated only inside the boundaries of $\Lambda_n$, without considering the periodic
extension of the latter.

\begin{definition}[$m$-block averaged observable]
For $m\leq n$, define $\mathbf{T}(\Lambda_m,\Lambda_n):=\{y\in\Z^\nu\,\,|\,\, \Lambda_m+y\subset \Lambda_n\}$.
An operator $A\in\mathcal{A}_n$ will be called an \emph{$m$-block averaged observable}
if there exists $A'\in\mathcal{A}_m$ with $A'=(A')^\dagger$ (resp.\ \emph{$m$-block averaged effect} if $0\leq A'\leq \mathbf{1}$) such that
\begin{equation}
   A=\frac 1 {|\mathbf{T}(\Lambda_m,\Lambda_n)|} \sum_{y\in\mathbf{T}(\Lambda_m,\Lambda_n)} \gamma_y(A')\otimes\mathbf{1},
   \label{eqAveraged}
\end{equation}
where the unit observable is supported on $\Lambda_n\setminus(\Lambda_m+y)$.
Moreover, we define the pseudonorm $\|\cdot\|_{[m]}$ on self-adjoint
operators $M\in\mathcal{A}_n$ by
\[
   \|M\|_{[m]}:=2\max\left\{ \left|\tr(PM)\right|\,\,|\,\, P\mbox{ is an $m$-block averaged effect on $\Lambda_n$}\right\}.
\]
\end{definition}

The following lemma translates Lemma~\ref{LemEffectToObs} to the pseudonorm $\|\cdot\|_{[m]}$ and also generalizes it.
\begin{lemma}
Let $A$ be an $m$-block averaged observable on $\Lambda_n$, coming from an observable $A'\in\mathcal{A}_m$ according to~(\ref{eqAveraged}).
Then for all quantum states $\rho,\sigma$ on $\Lambda_n$, we have
\[
   \left| \tr(\rho A)-\tr(\sigma A)\right| \leq  \|A'\|_\infty \|\rho-\sigma\|_{[m]}.
\]
Furthermore, we have $\|A\|_\infty\leq\|A'\|_\infty$; if in addition $A'\geq 0$, then we also have $\|A\|_\infty\geq \frac 1 {|\Lambda_m|}\|A'\|_\infty$. In the special case
where $|\Lambda_m|=1$, we have $\|A\|_\infty=\|A'\|_\infty$ whether or not $A'$ is positive.
\end{lemma}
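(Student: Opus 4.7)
The plan is to handle the four assertions in turn. For the first, I would reuse the affine rescaling from Lemma~\ref{LemEffectToObs}: letting $\lambda_{\min},\lambda_{\max}$ be the extremal eigenvalues of $A'$ and assuming they differ, set $B':=(\lambda_{\max}-\lambda_{\min})^{-1}(A'-\lambda_{\min}\mathbf{1})$, which is an effect; the corresponding $m$-block averaged operator obtained from~(\ref{eqAveraged}) is then $B=(\lambda_{\max}-\lambda_{\min})^{-1}(A-\lambda_{\min}\mathbf{1})$, itself an $m$-block averaged effect. The definition of $\|\cdot\|_{[m]}$ yields $|\tr(\rho B)-\tr(\sigma B)|\leq\tfrac{1}{2}\|\rho-\sigma\|_{[m]}$, and multiplying through by $\lambda_{\max}-\lambda_{\min}\leq 2\|A'\|_\infty$ gives the claim. (If $\lambda_{\max}=\lambda_{\min}$ then $A$ is a scalar multiple of the identity and both sides vanish.)

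Next I would prove the two bounds on $\|A\|_\infty$. The upper bound $\|A\|_\infty\leq\|A'\|_\infty$ is immediate from applying the triangle inequality to the convex combination~(\ref{eqAveraged}), since each summand $\gamma_y(A')\otimes\mathbf{1}$ has operator norm $\|A'\|_\infty$. For the lower bound under $A'\geq 0$, the idea is to construct a product state that simultaneously saturates many translates. Writing the side lengths of the two boxes as $a_n^{(i)},a_m^{(i)}$, I would partition $\Lambda_n$ (up to a remainder) into $|T'|:=\prod_i\lfloor a_n^{(i)}/a_m^{(i)}\rfloor$ pairwise disjoint translates $\Lambda_m+y_j\subset\Lambda_n$, place on each block a unit vector $|\psi'\rangle$ with $\langle\psi'|A'|\psi'\rangle=\|A'\|_\infty$ (which exists since $A'\geq 0$), and put any state on the remainder; call this product state $|\Psi\rangle$. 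Positivity of $A'$ makes every summand $\gamma_y(A')\otimes\mathbf{1}$ of~(\ref{eqAveraged}) positive, while on the aligned indices $y_j\in T'$ the expectation is exactly $\|A'\|_\infty$, so
\[
   \langle\Psi|A|\Psi\rangle\ \geq\ \frac{|T'|}{|\mathbf{T}(\Lambda_m,\Lambda_n)|}\,\|A'\|_\infty.
\]
The key combinatorial step is the elementary per-dimension bound $\lfloor a_n^{(i)}/a_m^{(i)}\rfloor\cdot a_m^{(i)}\geq a_n^{(i)}-a_m^{(i)}+1$, which holds because the left-hand side is the largest multiple of $a_m^{(i)}$ not exceeding $a_n^{(i)}$ and therefore strictly exceeds $a_n^{(i)}-a_m^{(i)}$ (integrality forcing the $+1$). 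Taking the product over $i$ yields $|T'|\cdot|\Lambda_m|\geq|\mathbf{T}(\Lambda_m,\Lambda_n)|$, i.e.\ the ratio above is at least $1/|\Lambda_m|$.

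Finally, for the $|\Lambda_m|=1$ equality the positivity hypothesis can be dropped because the distinct $\gamma_y(A')$ then act on disjoint single sites and therefore commute. Choosing $|\psi'\rangle$ to be an eigenvector of $A'$ with eigenvalue of absolute value $\|A'\|_\infty$ and setting $|\Psi\rangle:=|\psi'\rangle^{\otimes|\Lambda_n|}$ makes $|\Psi\rangle$ a simultaneous eigenvector of every $\gamma_y(A')\otimes\mathbf{1}$ with that same eigenvalue, hence an eigenvector of $A$; this supplies the matching lower bound and equality follows. I expect the main substantive step to be the combinatorial inequality in the second paragraph --- everything else reduces to direct manipulation of the defining formula~(\ref{eqAveraged}) and the pseudonorm $\|\cdot\|_{[m]}$.
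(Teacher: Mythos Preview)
Your proposal is correct. Parts one, two, and four are essentially identical to the paper's proof: the same affine rescaling to an effect, the same triangle inequality for the upper bound, and the same simultaneous-eigenvector product state for the $|\Lambda_m|=1$ case.

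For the lower bound under $A'\geq 0$ you take a slightly different route. The paper partitions $\mathbf{T}(\Lambda_m,\Lambda_n)$ into equivalence classes of mutually disjoint translates (classes indexed by a point of $\Lambda_m$, hence at most $|\Lambda_m|$ of them) and then applies pigeonhole to find one class of size at least $|\mathbf{T}(\Lambda_m,\Lambda_n)|/|\Lambda_m|$. You instead pick a single explicit tiling from a corner, with $|T'|=\prod_i\lfloor a_n^{(i)}/a_m^{(i)}\rfloor$ tiles, and prove directly via the per-dimension inequality $\lfloor a_n^{(i)}/a_m^{(i)}\rfloor\cdot a_m^{(i)}\geq a_n^{(i)}-a_m^{(i)}+1$ that $|T'|\cdot|\Lambda_m|\geq|\mathbf{T}(\Lambda_m,\Lambda_n)|$. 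Your argument is more concrete and constructive (and in fact your $T'$ coincides with one of the paper's equivalence classes, so you are effectively exhibiting the class whose existence the paper infers abstractly); the paper's pigeonhole phrasing is marginally slicker but less explicit. Both yield the same bound with the same product-state test vector.
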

\begin{proof}
The proof of the first statement is identical to that of Lemma~\ref{LemEffectToObs} and thus omitted. Clearly, $\|A\|_\infty\leq \|A'\|_\infty$ follows
directly from the definition~(\ref{eqAveraged}) and $\|\gamma_y(A')\otimes \mathbf{1}\|_\infty = \|\gamma_y(A')\|_\infty = \|A'\|_\infty$.
Since $\Lambda_m$ is a box, it can be written $\Lambda_m=[\lambda_1,\mu_1]\times\ldots\times[\lambda_\nu,\mu_\nu]$. Consider two boxes $\Lambda\subset\Lambda_n$
and $\Lambda'\subset\Lambda_n$ which are congruent to $\Lambda_m$, i.e.\ are translations of $\Lambda_m$. We call $\Lambda$ and $\Lambda'$ \emph{equivalent}
if there is a translation $y$ such that $\Lambda'=\Lambda+y$, which has components $y_i=k_i(\mu_i-\lambda_i)$ with $k_i\in\Z$. In other words, equivalent boxes
(which are shaped like $\Lambda_m$) do not overlap, and they can tesselate $\Lambda_n$ (up to sites close to the boundary).

Every equivalence class is uniquely determined by an element $x\in\Lambda_m$, which specifies a box $\Lambda$ in that equivalence class which is $\Lambda_m+y$,
where $x=(\lambda_1,\ldots,\lambda_\nu)+y$. Thus, the number of equivalence classes is upper-bounded by $|\Lambda_m|$.
Now call two translations $y,z\in\mathbf{T}(\Lambda_m,\Lambda_n)$ equivalent if $\Lambda_m+y$ is equivalent to $\Lambda_m+z$ in the sense just specified.
There will be $N$ equivalence classes $\mathbf{T}_1,\ldots,\mathbf{T}_N$, where $N\leq |\Lambda_m|$, and $\mathbf{T}(\Lambda_m,\Lambda_n)=\bigcup_{i=1}^N\mathbf{T}_i$,
which is a disjoint union. Consequently, at least one of them -- say, $\mathbf{T}_j$ -- must have $|\mathbf{T}_j|\geq |\mathbf{T}(\Lambda_m,\Lambda_n)|/N$.
For the moment, suppose that $A'$ is a positive-semidefinite matrix. Then there is a state $|\psi\rangle$ on $\Lambda_m$ such that
$\|A'\|_\infty=\langle\psi|A'|\psi\rangle$. We can write $\Lambda_n=\bigcup_{y\in \mathbf{T}_j} (\Lambda_m+y)\cup \Lambda_{rest}$, where unions are
disjoint. Now we define a state $|\Psi\rangle$ on $\Lambda_n$, by taking the tensor product of copies of $|\psi\rangle$ in the regions $\Lambda_m+y$, and
an arbitrary pure reference state $|0\rangle$ on $\Lambda_{rest}$. We get
\begin{eqnarray*}
   \|A||_\infty&\geq& \langle\Psi|A|\Psi\rangle = \frac 1 {|\mathbf{T}(\Lambda_m,\Lambda_n)|} \sum_{i=1}^N \sum_{y\in\mathbf{T}_i} \langle\Psi|\gamma_y(A')\otimes\mathbf{1}|\Psi\rangle
   \geq \frac 1 {|\mathbf{T}(\Lambda_m,\Lambda_n)|} \sum_{y\in\mathbf{T}_j} \langle\Psi|\gamma_y(A')\otimes\mathbf{1}|\Psi\rangle\\
   &=& \frac {|\mathbf{T}_j|}{|\mathbf{T}(\Lambda_m,\Lambda_n)|} \langle \psi|A'|\psi\rangle \geq \frac 1 N \|A'\|_\infty\geq \frac 1 {|\Lambda_m|}\|A'\|_\infty.
\end{eqnarray*}
If $|\Lambda_m|=1$, choose the single-site state $|\psi\rangle$ such that $\|A'\|_\infty=|\langle\psi|A'|\psi\rangle|$. Let $|\psi^{\otimes\Lambda_n}\rangle$ be the state $|\psi\rangle$,
copied onto every lattice site of $\Lambda_n$. Then
\[
   \|A\|_\infty\geq \left|\langle \psi^{\otimes\Lambda_n}|A|\psi^{\otimes\Lambda_n}\rangle\right|=\left|\frac 1 {|\mathbf{T}(\Lambda_m,\Lambda_n)|}
   \sum_{y\in\mathbf{T}(\Lambda_m,\Lambda_n)} \langle \psi^{\otimes\Lambda_n} | \gamma_y(A')\otimes\mathbf{1}|\psi^{\otimes\Lambda_n}\rangle\right|
   =|\langle\psi|A'|\psi\rangle| = \|A'\|_\infty.
\]
The claim follows.
\end{proof}

Asymptotically, that is for large $n$, the pseudonorms $\|\cdot\|_{\{m\}}$ and $\|\cdot\|_{[m]}$ are equivalent. This is the statement of the following lemma.
Thus, our equivalence of ensemble results in Theorem~\ref{TheEquivalence2} and Example~\ref{ExFlat2} remain valid of the former pseudonorm is replaced
by the latter. This yields a more natural physical interpretation of our results.

\begin{lemma}[Equivalence of both averaging methods]
\label{LemNormEquivalence}
For every $m\leq n$ and all states $\rho,\sigma$ on $\Lambda_n$, we have
\[
   \left|\, \|\rho-\sigma\|_{\{m\}} - \|\rho-\sigma\|_{[m]}\right| \leq 8 |\Lambda_m|\cdot\frac{|\partial \Lambda_n|}{|\Lambda_n|}
\]
which tends to zero for fixed $m$ as $n\to\infty$.
\end{lemma}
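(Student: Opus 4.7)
The plan is to compare the two pseudonorms by directly comparing the operators appearing in their definitions. Both $\|\cdot\|_{\{m\}}$ and $\|\cdot\|_{[m]}$ are suprema over effects built from the same underlying operator $A' \in \mathcal{A}_m$ with $0 \leq A' \leq \mathbf{1}$; the only difference is whether we average over all periodic translations of $\Lambda_n$ or only those that keep $\Lambda_m$ fully inside $\Lambda_n$. I would first show that for a fixed $A'$, the corresponding effects $P$ (periodic average) and $P'$ (ordinary average) satisfy $\|P-P'\|_\infty \leq 2|\mathbf{T}_{\mathrm{wrap}}|/|\Lambda_n|$, where $\mathbf{T}_{\mathrm{wrap}} := \mathbf{T}(\Lambda_n) \setminus \mathbf{T}(\Lambda_m,\Lambda_n)$ collects the translations that wrap across the torus boundary.

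For this norm bound, the key observation is that for $y \in \mathbf{T}(\Lambda_m,\Lambda_n)$, the periodic translation of $A' \otimes \mathbf{1}$ by $y$ coincides with $\gamma_y(A') \otimes \mathbf{1}$. Splitting the periodic sum accordingly gives
\[
   P = \frac{|\mathbf{T}(\Lambda_m,\Lambda_n)|}{|\Lambda_n|}P' + \frac{1}{|\Lambda_n|}\sum_{T \in \mathbf{T}_{\mathrm{wrap}}} T(A' \otimes \mathbf{1})T^\dagger,
\]
so $P - P'$ decomposes into two terms each of operator norm at most $|\mathbf{T}_{\mathrm{wrap}}|/|\Lambda_n|$ (using $\|A'\|_\infty \leq 1$ and $\|P'\|_\infty \leq 1$). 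Combining with $|\tr[(P-P')(\rho-\sigma)]| \leq \|P-P'\|_\infty \|\rho-\sigma\|_1 \leq 2\|P-P'\|_\infty$, and taking the supremum in both directions, yields
\[
   \big|\,\|\rho-\sigma\|_{\{m\}} - \|\rho-\sigma\|_{[m]}\big| \leq 8\,\frac{|\mathbf{T}_{\mathrm{wrap}}|}{|\Lambda_n|}.
\]

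It then remains to prove the combinatorial inequality $|\mathbf{T}_{\mathrm{wrap}}| \leq |\Lambda_m|\cdot |\partial \Lambda_n|$. Writing the sidelengths of $\Lambda_n$ and $\Lambda_m$ as $a_i$ and $b_i$ respectively, I have $|\mathbf{T}_{\mathrm{wrap}}| = \prod_i a_i - \prod_i (a_i - b_i + 1)$. The telescoping identity $\prod x_i - \prod y_i = \sum_i (x_i - y_i)\prod_{j<i}y_j\prod_{j>i}x_j$ applied with $x_i = a_i$, $y_i = a_i - b_i + 1$ gives $|\mathbf{T}_{\mathrm{wrap}}| \leq \sum_i (b_i - 1)\prod_{j\neq i}a_j$. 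Each summand with $b_i \geq 2$ forces $a_i \geq 2$, so $|\partial \Lambda_n| \geq 2\prod_{j\neq i}a_j$ by counting two opposite faces; hence $|\mathbf{T}_{\mathrm{wrap}}| \leq \tfrac{1}{2}|\partial\Lambda_n|\sum_i(b_i-1)$. Finally, Bernoulli's inequality $\prod_i b_i \geq 1 + \sum_i (b_i-1)$ gives $\sum_i (b_i - 1) \leq |\Lambda_m| - 1 \leq |\Lambda_m|$, and the claim follows (with a factor of $2$ to spare).

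The main obstacle is the combinatorial bound in the last step: a naive union bound over wrapping directions yields a factor of $\nu$ which does not appear in the statement. The resolution is to use Bernoulli in the form $\sum(b_i-1) \leq \prod b_i - 1$, which turns the apparent dimension-dependence into the intrinsic quantity $|\Lambda_m|$; combined with the face estimate $|\partial \Lambda_n| \geq 2\prod_{j\neq i}a_j$, this absorbs the dimension factor cleanly. Everything else is straightforward bookkeeping.
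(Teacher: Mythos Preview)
Your proposal is correct and follows essentially the same route as the paper: bound $\|P-P'\|_\infty$ by $2|\mathbf{T}_{\mathrm{wrap}}|/|\Lambda_n|$ via the same decomposition, combine with $\|\rho-\sigma\|_1\leq 2$, and then bound $|\mathbf{T}_{\mathrm{wrap}}|\leq |\Lambda_m|\,|\partial\Lambda_n|$. The only point of departure is the combinatorial step: the paper argues geometrically that every wrapping translation sends $\Lambda_m$ to meet $\partial\Lambda_n$, and each boundary point lies in at most $|\Lambda_m|$ translates, whereas you compute $|\mathbf{T}_{\mathrm{wrap}}|$ explicitly via telescoping and Bernoulli (gaining a spare factor of~$2$); both are valid and equally short.
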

\begin{proof}
Define the completely positive map $\Phi:\mathcal{A}_m\to\mathcal{A}_n$ by setting $\Phi(A')$ as the right-hand side of~(\ref{eqPeriodicallyAveraged}).
Similarly, define the completely positive map $\Phi':\mathcal{A}_m\to\mathcal{A}_n$ by setting $\Phi'(A')$ as the right-hand side of~(\ref{eqAveraged}).
Note that $\Phi(\mathbf{1})=\mathbf{1}=\Phi'(\mathbf{1})$.
Then $\|M\|_{\{m\}} = 2\max_{0\leq A'\leq\mathbf{1}} |\tr(M\Phi(A'))|$ and $\|M\|_{[m]} = 2\max_{0\leq A'\leq\mathbf{1}} |\tr(M\Phi'(A'))|$. If $M$ is
traceless (as is the case for $M=\rho-\sigma$), then $\tr(M\Phi(\mathbf{1}-A'))=-\tr(M\Phi(A'))$ and similarly for $\Phi'$, and the absolute values under the maxima
can be removed. Thus
\begin{eqnarray*}
   \left| \, \|M\|_{\{m\}} - \|M\|_{[m]}\right| &=& 2 \left| \max_{0\leq A'\leq \mathbf{1}} \tr(M\Phi(A')) - \max_{0\leq A'\leq \mathbf{1}} \tr(M\Phi'(A'))\right|
   \leq 2  \max_{0\leq A'\leq \mathbf{1}} \left| \tr(M\Phi(A')) - \tr(M\Phi'(A'))\right| \\
   &\leq& 2 \|M\|_1 \max_{0\leq A'\leq \mathbf{1}} \left\| \Phi(A')-\Phi'(A')\right\|_\infty.
\end{eqnarray*}
To compare $\Phi$ and $\Phi'$, we note that we can interpret every translation $y\in\mathbf{T}(\Lambda_m,\Lambda_n)$ as a periodic translation
$T\in\mathbf{T}(\Lambda_n)$ such that $\gamma_y(A')\otimes \mathbf{1}=T(A'\otimes\mathbf{1})T^\dagger$ for every $A'\in\mathcal{A}_m$; this is
an equality of observables on $\Lambda_n$. In this sense, we can write $\mathbf{T}(\Lambda_m,\Lambda_n)\subset \mathbf{T}(\Lambda_n)$.
A simple application of the triangle inequality and $\|A'\|_\infty\leq 1$ gives
\begin{eqnarray*}
   \left\|\Phi(A')-\Phi'(A')\right\|_\infty &\leq& \left\| \frac 1 {|\mathbf{T}(\Lambda_n)|} \sum_{T\in\mathbf{T}(\Lambda_n)} T(A'\otimes\mathbf{1})T^\dagger
   -\frac 1 {|\mathbf{T}(\Lambda_n)|} \sum_{y\in \mathbf{T}(\Lambda_m,\Lambda_n)} \gamma_y(A')\otimes \mathbf{1}\right\|_\infty \\
   && + \left\| \frac 1 {|\mathbf{T}(\Lambda_n)|} \sum_{y\in \mathbf{T}(\Lambda_m,\Lambda_n)} \gamma_y(A')\otimes \mathbf{1}
   -\frac 1 {|\mathbf{T}(\Lambda_m,\Lambda_n)|} \sum_{y\in \mathbf{T}(\Lambda_m,\Lambda_n)} \gamma_y(A')\otimes \mathbf{1}
   \right\|_\infty \\
   &=& \frac 1 {|\mathbf{T}(\Lambda_n)|} \left\|\sum_{T\in\mathbf{T}(\Lambda_n)\setminus \mathbf{T}(\Lambda_m,\Lambda_n)} T(A'\otimes\mathbf{1})T^\dagger\right\|_\infty
   +\left(\frac 1 {|\mathbf{T}(\Lambda_m,\Lambda_n)|} - \frac 1 {|\mathbf{T}(\Lambda_n)|}\right)\left\|\sum_{y\in\mathbf{T}(\Lambda_m,\Lambda_n)}
   \gamma_y(A')\otimes\mathbf{1}\right\|_\infty \\
   &\leq& 2 \frac{|\mathbf{T}(\Lambda_n)|-|\mathbf{T}(\Lambda_m,\Lambda_n)|}{|\mathbf{T}(\Lambda_n)|}.
\end{eqnarray*}
Estimating this expression is a matter of simple lattice geometry. First, it is easy to see that $|\mathbf{T}(\Lambda_n)|=|\Lambda_n|$, the number of sites in the region.
Consider any translation $T\in\mathbf{T}(\Lambda_n)\setminus \mathbf{T}(\Lambda_m,\Lambda_n)$. It translates $\Lambda_m$ periodically inside $\Lambda_n$,
but \emph{not} in a way such that the same is achieved by a non-periodic translation $\gamma_y$ with $y\in\Z^\nu$. Instead, the corresponding $y$-translation will
map $\Lambda_m$ partially inside and partially outside of $\Lambda_n$. That is, there must be some intersection of $y+\Lambda_m$ with
the boundary of $\Lambda_n$ defined in~(\ref{eqDefBoundary}). However, for every given boundary point $x\in\partial \Lambda_n$, there are only $|\Lambda_m|$ many
translations $y$ such that $x\in \Lambda_m+y$. Hence
\[
   |\mathbf{T}(\Lambda_n)|-|\mathbf{T}(\Lambda_m,\Lambda_n)| = |\mathbf{T}(\Lambda_n)\setminus \mathbf{T}(\Lambda_m,\Lambda_n)|
   \leq |\partial \Lambda_n|\cdot |\Lambda_m|.
\]
Combining the previous inequalities, and using that $\|\rho-\sigma\|_1\leq 2$, completes the proof.
\end{proof}

\subsection{Canonical typicality}
With the results of the previous subsection, in particular Examples~\ref{ExFlat} and~\ref{ExFlat2}, it is easy to prove a general result on
canonical typicality for translation-invariant quantum systems.
\begin{theorem}[Canonical typicality, periodic boundary conditions]
\label{TheCanonicalTypicalityPeriodic}
Let $\Phi$ be any translation-invariant finite-range interaction, not physically equivalent to zero, with corresponding periodic boundary condition Hamiltonians
$H_{\Lambda_n}^p$, let $u_{\min}(\Phi)<u\leq u_{\max}(\Phi)$
and $\delta>0$.
Suppose that there is a unique infinite-volume Gibbs state $\omega_\beta$ at inverse temperature $\beta\equiv\beta(u)$. Consider the microcanonical subspace
\[
   T_n^p:={\rm span}\left\{ |E\rangle\,\,\left|\,\, H_{\Lambda_n}^p|E\rangle=E|E\rangle,\enspace \frac E {|\Lambda_n|}\in(u-\delta,u)\right.\right\}.
\]
If $|\psi\rangle\in T_n^p$ is a random pure state, then for every $m\in\N$ there is a sequence of positive real numbers $(\Delta_{m,n})_{n\in\N}$ with $\lim_{n\to\infty}
\Delta_{m,n}=0$, such that
\[
   {\rm Prob}\left\{\left\| {\rm Tr}_{\Lambda_n\setminus\Lambda_m} |\psi\rangle\langle\psi| - {\rm Tr}_{\Lambda_n\setminus\Lambda_m}
   \frac{\exp(-\beta H_{\Lambda_n}^p)} {Z_n}\right\|_1 \geq \Delta_{m,n} + \varepsilon
   \right\}\leq \exp\left(\strut -\varepsilon^2 \exp(|\Lambda_n|\, s(u)+o(|\Lambda_n|))\right)
\]
for every $\varepsilon\geq 0$. Furthermore, if Gibbs states are unique around inverse temperature $\beta>0$, then the same result is true if $\beta$ is chosen
as the ($n$-dependent) solution of $\frac 1 {|\Lambda_n|} \tr\left(H_{\Lambda_n}^{BC}\,\frac{\exp(-\beta H_{\Lambda_n}^p)}{Z_n}\right)=u$, where BC denotes
an arbitrary fixed choice of boundary conditions.
\end{theorem}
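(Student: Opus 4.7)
The plan is to assemble this from two essentially independent ingredients that have been set up earlier: the kinematic concentration result of Popescu, Short and Winter~\cite{Popescu} for Haar-random states in subspaces, and the equivalence of ensembles in the form of Example~\ref{ExFlat}. Concretely, let $\tau_n$ denote the maximally mixed state on $T_n^p$ (i.e.\ the microcanonical state), let $D_n := \dim T_n^p$, and write $\gamma_n := \exp(-\beta H_{\Lambda_n}^p)/Z_n$. The result of~\cite{Popescu} (a Levy's lemma argument on the unit sphere of $T_n^p$) yields, for Haar-random $|\psi\rangle\in T_n^p$ and every $\eta\geq 0$, a bound of the form
\[
   \mathrm{Prob}\left\{\left\|\Tr_{\Lambda_n\setminus\Lambda_m}|\psi\rangle\langle\psi|-\Tr_{\Lambda_n\setminus\Lambda_m}\tau_n\right\|_1 \geq \eta + \frac{d^{|\Lambda_m|}}{\sqrt{D_n}}\right\} \leq 2\exp\!\left(-C\,D_n\,\eta^2/d^{2|\Lambda_m|}\right),
\]
with an explicit numerical constant $C>0$. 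I would quote this verbatim rather than re-derive it.

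Next I would control $D_n$ and the distance from $\tau_n$ to the global Gibbs state. By Lemma~\ref{LemReproduced} applied to the energy window $(u-\delta,u)$ (and using continuity of $s$ together with Lemma~\ref{LemHBC} to pass from open to periodic boundary conditions), one has
\[
   \log D_n = |\Lambda_n|\, s(u) + o(|\Lambda_n|).
\]
Separately, Example~\ref{ExFlat} establishes the equivalence of ensembles for precisely this microcanonical state, so
\[
   \delta_{m,n} := \left\|\Tr_{\Lambda_n\setminus\Lambda_m}\tau_n - \Tr_{\Lambda_n\setminus\Lambda_m}\gamma_n\right\|_1 \xrightarrow{n\to\infty} 0
\]
for every fixed $m\in\N$. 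I then set
\[
   \Delta_{m,n} := \delta_{m,n} + d^{|\Lambda_m|}/\sqrt{D_n},
\]
which tends to zero as $n\to\infty$ at fixed $m$, the second summand even exponentially fast in $|\Lambda_n|$. A single application of the triangle inequality (adding the typical and the ensemble-equivalence deviations) together with the above concentration bound taken at $\eta=\varepsilon$ yields the claimed estimate, since the exponent becomes $C\,\varepsilon^2\, D_n/d^{2|\Lambda_m|} = \varepsilon^2\,\exp(|\Lambda_n|s(u)+o(|\Lambda_n|))$ once $m$ is held fixed.

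For the final assertion, where $\beta$ is replaced by the $n$-dependent solution $\beta_n^{BC}$ of $\tr(H_{\Lambda_n}^{BC}\gamma_n(\beta))/|\Lambda_n|=u$, I would just invoke the corresponding part of Theorem~\ref{TheEquivalence} (which Example~\ref{ExFlat} already inherits): the vanishing of $\delta_{m,n}$ remains valid with this choice, because Theorem~\ref{TheEquivalence} explicitly allows either $\beta$ or $\beta_n^{BC}$ in~(\ref{eqEquivFinite}). The bookkeeping part (sharpness of constants, choice of $\eta$, checking that $o(|\Lambda_n|)$ absorbs the $|\Lambda_m|\log d$ terms for any fixed $m$) is routine; the only real subtlety I anticipate is being careful that the ``typical deviation'' term $d^{|\Lambda_m|}/\sqrt{D_n}$ is the correct normalization emerging from~\cite{Popescu}, since a looser form would still give $\Delta_{m,n}\to 0$ but might spoil the clean doubly-exponential tail. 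No new idea beyond these two already-established tools seems to be required.
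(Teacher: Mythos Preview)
Your proposal is correct and follows essentially the same route as the paper: invoke the Popescu--Short--Winter concentration bound to compare $\Tr_{\Lambda_n\setminus\Lambda_m}|\psi\rangle\langle\psi|$ with $\Tr_{\Lambda_n\setminus\Lambda_m}\tau_n$, use Example~\ref{ExFlat} (via Theorem~\ref{TheEquivalence}) to show $\delta_{m,n}\to 0$, set $\Delta_{m,n}=\delta_{m,n}+d^{|\Lambda_m|}/\sqrt{D_n}$, and use Lemma~\ref{LemHBC} for the dimension estimate. The only cosmetic difference is that the paper quotes the concentration bound from~\cite{Popescu} in the form $2\exp(-|T_n^p|\varepsilon^2/(18\pi^3))$ without the $d^{2|\Lambda_m|}$ factor you include in the exponent; either way the factor is absorbed into $o(|\Lambda_n|)$ for fixed $m$, as you note.
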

\begin{proof}
It follows from~\cite[Theorem 1]{Popescu} that
\[
   {\rm Prob}\left\{\left\| {\rm Tr}_{\Lambda_n\setminus \Lambda_m}|\psi\rangle\langle\psi| - \Omega_{m,n}\right\|_1 \geq \varepsilon+\frac{d^{|\Lambda_m|}}{\sqrt{|T_n^p|}}\right\}
   \leq 2 \exp\left( - \frac{|T_n^p|\varepsilon^2}{18\pi^3}\right)
\]
for all $\varepsilon\geq 0$, where $|T_n^p|$ denotes the dimension of the subspace $T_n^p$, and
$\Omega_{m,n}:={\rm Tr}_{\Lambda_n\setminus\Lambda_m}\tau_n$, with $\tau_n$ the maximally mixed state on $T_n^p$. Set
\begin{equation}
   \delta_{m,n}:=\left\| \Omega_{m,n} - {\rm Tr}_{\Lambda_n\setminus\Lambda_m}
   \frac{\exp(-\beta H_{\Lambda_n}^p)} {Z_n}\right\|_1,
   \label{eqdeltamn}
\end{equation}
then Example~\ref{ExFlat} resp.\ Theorem~\ref{TheEquivalence} imply that $\lim_{n\to\infty}\delta_{m,n}=0$. Thus, the previous statements imply
\[
   {\rm Prob}\left\{\left| {\rm Tr}_{\Lambda_n\setminus\Lambda_m} |\psi\rangle\langle\psi| - {\rm Tr}_{\Lambda_n\setminus\Lambda_m} \frac{\exp(-\beta H_{\Lambda_n}^p)}{Z_n}\right\|_1
   \geq \varepsilon+\frac{d^{|\Lambda_m|}}{\sqrt{|T_n^p|}}+\delta_{m,n}\right\} \leq 2 \exp\left( - \frac{|T_n^p|\varepsilon^2}{18\pi^3}\right).
\]
Furthermore, according to Lemma~\ref{LemHBC}, we have $|T_n^p|=\exp[|\Lambda_n|\, s(u)+o(|\Lambda_n|)]$.
Setting
\begin{equation}
   \Delta_{m,n}:=\delta_{m,n}+d^{|\Lambda_m|}/\sqrt{|T_n^p|}
   \label{eqdeltaDelta}
\end{equation}
completes the proof of the theorem.
\end{proof}
Example~\ref{ExNoLocal} shows again that we cannot in general replace the restriction of the global Gibbs state,
${\rm Tr}_{\Lambda_n\setminus\Lambda_m}\exp(-\beta H_{\Lambda_n}^p)/Z_n$, with the local Gibbs state, $\exp(-\beta H_{\Lambda_m}^{BC})/Z_m$,
no matter what boundary conditions we choose for $H_{\Lambda_m}^{BC}$.

Similarly as for our equivalence of ensembles result, we can prove an analogue of this theorem in the case of arbitrary boundary conditions
by replacing $\|\cdot\|_1$ by $\|\cdot\|_{\{m\}}$.

\begin{theorem}[Canonical typicality, arbitrary boundary conditions]
\label{TheCanonicalTypicalityArbitrary}
Let $\Phi$ be any translation-invariant finite-range interaction, not physically equivalent to zero,
with corresponding arbitrary boundary condition Hamiltonians $H_{\Lambda_n}^{BC}$, let $u_{\min}(\Phi)<u\leq u_{\max}(\Phi)$
and $\delta>0$. Suppose that there is a unique infinite-volume Gibbs state $\omega_\beta$ at inverse temperature $\beta\equiv\beta(u)$.
Consider the microcanonical subspace
\[
   T_n^{BC}:={\rm span}\left\{ |E\rangle\,\,\left|\,\, H_{\Lambda_n}^{BC}|E\rangle=E|E\rangle,\enspace \frac E {|\Lambda_n|}\in(u-\delta,u)\right.\right\}.
\]
If $|\psi\rangle\in T_n^{BC}$ is a random pure state, then for every $m\in\N$ there is a sequence of positive real numbers $(\Delta_{m,n})_{n\in\N}$ with $\lim_{n\to\infty}
\Delta_{m,n}=0$, such that
\[
   {\rm Prob}\left\{\left\| |\psi\rangle\langle\psi| - 
   \frac{\exp(-\beta H_{\Lambda_n}^{BC})} {Z_n}\right\|_{[m]} \geq \Delta_{m,n} + \varepsilon
   \right\}\leq \exp\left(\strut -\varepsilon^2 \exp(|\Lambda_n|\, s(u)+o(|\Lambda_n|))\right)
\]
for every $\varepsilon\geq 0$. Furthermore, if Gibbs states are unique around inverse temperature $\beta>0$, then the same result is true if $\beta$ is chosen
as the ($n$-dependent) solution of $\frac 1 {|\Lambda_n|} \tr\left(H_{\Lambda_n}^{BC}\,\frac{\exp(-\beta H_{\Lambda_n}^{BC})}{Z_n}\right)=u$.
\end{theorem}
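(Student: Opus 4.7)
The proof strategy closely parallels that of Theorem~\ref{TheCanonicalTypicalityPeriodic}, with the main modification being that the $\|\cdot\|_1$ norm on a single fixed subsystem $\Lambda_m$ is replaced by the translation-averaged pseudonorm $\|\cdot\|_{[m]}$ on the full region $\Lambda_n$. I would combine three ingredients: (i) the concentration bound of Popescu et al.~\cite{Popescu} for reduced states of random pure states in a subspace, applied to each translate $\Lambda_m+y$ separately; (ii) Example~\ref{ExFlat2}, which gives equivalence of the microcanonical and canonical ensembles in the $\|\cdot\|_{\{m\}}$ pseudonorm; and (iii) Lemma~\ref{LemNormEquivalence}, which shows that $\|\cdot\|_{\{m\}}$ and $\|\cdot\|_{[m]}$ differ only by a vanishing boundary term.

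Let $\tau_n$ be the maximally mixed state on $T_n^{BC}$ and $\gamma_n:=\exp(-\beta H_{\Lambda_n}^{BC})/Z_n$. For each fixed $y\in\mathbf{T}(\Lambda_m,\Lambda_n)$, the Popescu bound applied to the bipartition $\Lambda_n=(\Lambda_m+y)\cup(\Lambda_n\setminus(\Lambda_m+y))$ (which has the same local dimension $d^{|\Lambda_m|}$ as for $y=0$) gives
\[
\Pr\!\left\{\left\|\Tr_{\Lambda_n\setminus(\Lambda_m+y)}|\psi\rangle\langle\psi| - \Tr_{\Lambda_n\setminus(\Lambda_m+y)}\tau_n\right\|_1 \geq \varepsilon + \frac{d^{|\Lambda_m|}}{\sqrt{|T_n^{BC}|}}\right\} \leq 2\exp\!\left(-\frac{|T_n^{BC}|\,\varepsilon^2}{18\pi^3}\right).
\]
A union bound over the at most $|\Lambda_n|$ translations yields the same estimate simultaneously for all $y$ with probability $\geq 1-2|\Lambda_n|\exp(-|T_n^{BC}|\varepsilon^2/(18\pi^3))$. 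For any $m$-block averaged effect $P=|\mathbf{T}(\Lambda_m,\Lambda_n)|^{-1}\sum_y \gamma_y(A')\otimes\mathbf{1}$ with $0\leq A'\leq\mathbf{1}$, a triangle inequality on the $y$-sum combined with trace duality (writing $\delta_y:=\Tr_{\Lambda_n\setminus(\Lambda_m+y)}(|\psi\rangle\langle\psi|-\tau_n)$ and noting that $|\tr(\gamma_y(A')\delta_y)|\leq\tfrac{1}{2}\|\delta_y\|_1$ since $\delta_y$ is traceless) shows that under this event
\[
\left\||\psi\rangle\langle\psi|-\tau_n\right\|_{[m]} \leq \varepsilon + \frac{d^{|\Lambda_m|}}{\sqrt{|T_n^{BC}|}}.
\]
By Lemma~\ref{LemHBC}, $|T_n^{BC}|=\exp(|\Lambda_n|s(u)+o(|\Lambda_n|))$, so the polynomial factor $|\Lambda_n|$ in the failure probability is absorbed into the $o(|\Lambda_n|)$ correction in the exponent, giving the probability estimate in the claimed form.

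To finish, Example~\ref{ExFlat2} (applied to the translation-invariant sequence obtained by periodically averaging $\tau_n$, or directly since $\tau_n$ already satisfies the hypothesis by Lemma~\ref{LemHBC}) gives $\lim_{n\to\infty}\|\tau_n-\gamma_n\|_{\{m\}}=0$ for both choices of $\beta$ in the statement. Lemma~\ref{LemNormEquivalence} upgrades this to $\lim_{n\to\infty}\|\tau_n-\gamma_n\|_{[m]}=0$, since $|\partial\Lambda_n|/|\Lambda_n|\to 0$. Defining
\[
\Delta_{m,n}:=\|\tau_n-\gamma_n\|_{[m]}+\frac{d^{|\Lambda_m|}}{\sqrt{|T_n^{BC}|}},
\]
the triangle inequality for $\|\cdot\|_{[m]}$ finishes the proof.

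The main obstacle I anticipate is the bookkeeping around the union bound: one must check that the polynomial prefactor $|\Lambda_n|$ is indeed negligible against the doubly-exponentially small Popescu failure probability, which amounts to the elementary observation that $\log|\Lambda_n|=o(|\Lambda_n|s(u))$. A minor subtlety is that $\tau_n$ need not be $\Lambda_n$-translation-invariant under arbitrary boundary conditions, so strictly speaking Example~\ref{ExFlat2} rather than Example~\ref{ExFlat} must be invoked; this is precisely why the proof is forced to use the weaker pseudonorm $\|\cdot\|_{[m]}$ rather than the trace norm on a fixed subregion.
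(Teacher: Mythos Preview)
Your proof is correct and follows essentially the same approach as the paper. The paper phrases the concentration step contrapositively---if $\||\psi\rangle\langle\psi|-\tau_n\|_{[m]}\geq\eta$ then the average over $y$ of the signed deviations is $\geq\eta$, hence some single translate $\Lambda_m+y$ has $1$-norm deviation $\geq\eta$, and the Popescu bound is applied to that translate---but this still implicitly requires the union bound over $y$ that you make explicit (and whose polynomial prefactor is, as you note, absorbed into the $o(|\Lambda_n|)$); the finishing steps via Example~\ref{ExFlat2} and Lemma~\ref{LemNormEquivalence} are identical.
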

\begin{proof}
Denote by $\tau_n^{BC}$ the maximally mixed state on $T_n^{BC}$.
Suppose that $\eta\geq 0$ is any real number such that
\begin{equation}
   \left\| \,|\psi\rangle\langle\psi|-\tau_n^{BC}\right\|_{[m]} \geq \eta.
   \label{eqEvent}
\end{equation}
By definition, this means that there exists some observable $A'\in\mathcal{A}_m$ such that
\[
   2\left| \frac 1 {|\mathbf{T}(\Lambda_m,\Lambda_n)|} \sum_{y\in\mathbf{T}(\Lambda_m,\Lambda_n)} \left(
      \langle \psi|\gamma_y(A')\otimes\mathbf{1}|\psi\rangle - \tr(\tau_n^{BC} \gamma_y(A')\otimes\mathbf{1}
   \right)\right|\geq\eta,
\]
and thus, there must be some $y\in\mathbf{T}(\Lambda_m,\Lambda_n)$ such that
\[
   2\left| 
      \langle \psi|\gamma_y(A')\otimes\mathbf{1}|\psi\rangle - \tr(\tau_n^{BC} \gamma_y(A')\otimes\mathbf{1})
   \right|\geq \eta.
\]
Let $\Lambda:=\Lambda_m+y$, then $|\Lambda|=|\Lambda_m|$, $\Lambda\subset\Lambda_n$, and
\[
   \left\| \Tr_{\Lambda_n\setminus\Lambda} |\psi\rangle\langle\psi| - \Tr_{\Lambda_n\setminus\Lambda} \tau_n^{BC}\right\|_1 \geq \eta.
\]
Now consider the case $\eta=\varepsilon+d^{|\Lambda_m|}/\sqrt{|T_n^{BC}|}$. According to~\cite[Theorem 1]{Popescu}, the probability that the previous
inequality holds on Haar-random choice of $|\psi\rangle$ is upper-bounded by $2 \exp\left( - |T_n^{BC}|\varepsilon^2/(18\pi^3)\right)$. Thus
\[
   {\rm Prob}\left\{\left\|\, |\psi\rangle\langle\psi| - \tau_n^{BC}\right\|_{[m]} \geq \varepsilon+\frac{d^{|\Lambda_m|}}{\sqrt{|T_n^p|}}\right\}
   \leq 2 \exp\left( - \frac{|T_n^{BC}|\varepsilon^2}{18\pi^3}\right).
\]
Now set
\[
   \delta_{m,n}:=\left\|\tau_n^{BC}-\frac{\exp(-\beta H_{\Lambda_n}^{BC})}{Z_n}\right\|_{[m]},
\]
and set $\Delta_{m,n}:=\delta_{m,n}+d^{|\Lambda_m|}/\sqrt{|T_n^{BC}|}$. Example~\ref{ExFlat2} and Lemma~\ref{LemNormEquivalence} show
that $\lim_{n\to\infty}\delta_{m,n}=0=\lim_{n\to\infty}\Delta_{m,n}$,
and arguing as in the proof of Theorem~\ref{TheCanonicalTypicalityPeriodic} completes the proof.
\end{proof}

Drawing a pure state $|\psi\rangle$ according to the Haar measure is a process that cannot be achieved efficiently in practice, as parameter
counting shows. Thus, it is also to be expected that no process in nature really produces a Haar-random state. However, what \emph{can} be achieved efficiently -- for
example, by application of random local unitaries~\cite{Brandao} -- are approximations to the Haar measure known as (approximate) \emph{unitary $t$-designs}.
As shown in~\cite{Low}, they give a way to ``derandomize'' results like the canonical typicality theorems above.

There are different definitions of what is called an $\varepsilon$-approximate $k$-design $\nu$; they all have in common that
the computational effort of sampling from them scales polynomially in
$\log\varepsilon$ and $\log d$, where $d$ is the underlying Hilbert space dimension.

Here, we use the definition from~\cite{Low}. It utilizes the notion of a \emph{balanced monomial of degree $k$} of a matrix $U$, which
is a monomial in the components of $U$ and $U^\dagger$ which contains the same number ($k$) of conjugated as unconjugated elements.
For example, $U_{ij} U^\ast_{pq}$ is a balanced monomial of degree $1$.

\begin{definition}[Approximate design]
A measure $\nu$ on the unitary group $U(d)$ is called an $\varepsilon$-approximate (unitary) $k$-design, if for all balanced monomials $M$ of degree
less than or equal to $k$, we have
\[
   |\mathbb{E}_{U\sim \nu} M(U) - \mathbb{E}_{U\sim \mu_H} M(U)| \leq\frac\varepsilon {d^k},
\]
where $\mathbb{E}_{U\sim\mu}$ denotes the expectation with respect to a measure $\mu$, and $\mu_H$ is the Haar measure.
\end{definition}

We now use Theorem 1.4 in~\cite{Low} to prove a derandomized version of canonical typicality. Note that the theorem in~\cite{Low} uses as an implicit
additional assumption that $k$ is an integer-multiple of $8$.

\begin{theorem}[Canonical typicality, periodic boundary conditions, derandomized version]
\label{ThePeriodicBCDerandomized}
Let $\Phi$ be any translation-invariant finite-range interaction, not physically equivalent to zero,
with corresponding periodic boundary condition Hamiltonians $H_{\Lambda_n}^p$, let $u_{\min}(\Phi)<u\leq u_{\max}(\Phi)$
and $\Delta>0$. Suppose that there is a unique infinite-volume Gibbs state $\omega_\beta$ at inverse temperature $\beta\equiv\beta(u)$.
Consider the microcanonical subspace
\[
   T_n^p:={\rm span}\left\{ |E\rangle\,\,\left|\,\, H_{\Lambda_n}^p|E\rangle=E|E\rangle,\enspace \frac E {|\Lambda_n|}\in(u-\delta,u)\right.\right\}.
\]
Choose a state $|\psi\rangle$ at random from $T_n^p$ by choosing a unitary from an $\varepsilon$-approximate $8$-design and applying it to
a fixed initial pure state, where $\varepsilon=\exp(-|\Lambda_n| s(u)+o(|\Lambda_n|))$.
Then for every $m\in\N$ large enough such that $d^{|\Lambda_m|}\geq 14$,
there is a sequence of positive real numbers $(\delta_{m,n})_{n\in\N}$ with $\lim_{n\to\infty}\delta_{m,n}=0$, such that
\[
   {\rm Prob}_\nu\left\{\left\| {\rm Tr}_{\Lambda_n\setminus\Lambda_m} |\psi\rangle\langle\psi| - {\rm Tr}_{\Lambda_n\setminus\Lambda_m}
   \frac{\exp(-\beta H_{\Lambda_n}^p)} {Z_n}\right\|_1 \geq \delta_{m,n} + \kappa  \right\}\leq \frac{ d^{3|\Lambda_m|}}{\kappa^2}
   \exp\left(\strut -|\Lambda_n| s(u)+o(|\Lambda_n|)\right)
\]
for all $\kappa>0$. Furthermore, if Gibbs states are unique around inverse temperature $\beta>0$, then the same result is true if $\beta$ is chosen
as the ($n$-dependent) solution of $\frac 1 {|\Lambda_n|} \tr\left(H_{\Lambda_n}^{BC}\,\frac{\exp(-\beta H_{\Lambda_n}^p)}{Z_n}\right)=u$, where BC denotes
an arbitrary fixed choice of boundary conditions.
\end{theorem}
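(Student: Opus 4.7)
The plan is to combine the equivalence-of-ensembles result already in hand with the derandomization of Levy's lemma due to Low~\cite{Low}. First I would mirror the opening of the proof of Theorem~\ref{TheCanonicalTypicalityPeriodic}: let $\tau_n$ be the maximally mixed state on $T_n^p$, set $\Omega_{m,n}:=\Tr_{\Lambda_n\setminus\Lambda_m}\tau_n$, and define $\delta_{m,n}$ as in~(\ref{eqdeltamn}). By Example~\ref{ExFlat} (or Theorem~\ref{TheEquivalence}) we have $\delta_{m,n}\to 0$, so by the triangle inequality it is enough to bound the probability that
\[
   g(U):=\left\|\Tr_{\Lambda_n\setminus\Lambda_m}(U|\psi_0\rangle\langle\psi_0|U^\dagger)-\Omega_{m,n}\right\|_1
\]
exceeds $\kappa$, where $|\psi_0\rangle\in T_n^p$ is any fixed unit vector and $U$ is drawn from the approximate $8$-design $\nu$ on the unitary group of $T_n^p$.

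Next I would exploit the polynomial structure of $g^2$. The reduced density matrix $\Tr_{\Lambda_n\setminus\Lambda_m}(U|\psi_0\rangle\langle\psi_0|U^\dagger)$ is bilinear in $(U,U^\dagger)$, so after bounding the $1$-norm by $\sqrt{d^{|\Lambda_m|}}$ times the Hilbert--Schmidt norm, $g(U)^2$ becomes a sum of balanced monomials of degree at most $4$ in the entries of $U,U^\dagger$, with coefficients depending on $\Omega_{m,n}$. Hence $\mathbb{E}_\nu[g^2]$ and $\mathbb{E}_{\mathrm{Haar}}[g^2]$ can be compared term by term, and bounding $\Pr_\nu\{g\geq\kappa\}$ via Markov on $g^2$ requires controlling expectations of balanced monomials of degree up to $8$ (the square of $g^2$ enters through the variance-style estimate on which Low's theorem rests). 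This is precisely why the $8$-design hypothesis appears.

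Applying Low's Theorem~1.4 of~\cite{Low} to the polynomial $g^2$ then yields
\[
   \Pr_\nu\!\left\{g(U)\geq\kappa\right\}\leq \frac{\mathbb{E}_\nu[g(U)^2]}{\kappa^2}\leq\frac{1}{\kappa^2}\Bigl(\mathbb{E}_{\mathrm{Haar}}[g(U)^2]+\mathrm{err}(\varepsilon)\Bigr),
\]
where $\mathrm{err}(\varepsilon)$ is controlled by $\varepsilon$ times a polynomial in $d^{|\Lambda_m|}$. The Haar expectation is of order $d^{2|\Lambda_m|}/|T_n^p|$ by the Popescu--Short--Winter computation of~\cite{Popescu}. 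With $|T_n^p|=\exp(|\Lambda_n|s(u)+o(|\Lambda_n|))$ from Lemma~\ref{LemHBC}, and the choice $\varepsilon=\exp(-|\Lambda_n|s(u)+o(|\Lambda_n|))$ prescribed by the theorem, both contributions combine to $(d^{3|\Lambda_m|}/\kappa^2)\exp(-|\Lambda_n|s(u)+o(|\Lambda_n|))$, which is the claimed bound. The alternative, $n$-dependent inverse temperature is handled exactly as in the last paragraph of Theorem~\ref{TheCanonicalTypicalityPeriodic}, using the corresponding case of Theorem~\ref{TheEquivalence}.

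The main obstacle is bookkeeping rather than conceptual: one must carefully verify that the moment-comparison argument of Low's theorem yields the correct dependence on $d^{|\Lambda_m|}$ and $|T_n^p|$ in both the Haar and the error terms, and explain why the concentration degrades from the doubly-exponential bound of Theorem~\ref{TheCanonicalTypicalityPeriodic} (Levy's lemma) to a merely polynomial-in-$1/\kappa$ Markov bound here. The mild technical condition $d^{|\Lambda_m|}\geq 14$ is inherited directly from Low's Theorem~1.4 and requires no separate argument.
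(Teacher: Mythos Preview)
Your proposal is correct and follows essentially the same route as the paper: invoke Low's Theorem~1.4 to bound the probability that $\|\Tr_{\Lambda_n\setminus\Lambda_m}|\psi\rangle\langle\psi|-\Tr_{\Lambda_n\setminus\Lambda_m}\tau_n\|_1\geq\kappa$, then apply the equivalence-of-ensembles result (Example~\ref{ExFlat}) via the triangle inequality and absorb constants into the $o(|\Lambda_n|)$ term. The paper treats Low's theorem purely as a black box and does not attempt the moment-counting sketch you give; your degree bookkeeping (Markov on $g^2$ would need only a $4$-design, so the parenthetical about ``the square of $g^2$'' is a bit muddled) is imprecise, but since you ultimately cite Low's Theorem~1.4 directly this does not affect the validity of the argument.
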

\begin{proof}
Let $\tau_n$ be the maximally mixed state on $T_n^p$, and set $\varepsilon:=6 d^{3|\Lambda_m|}/|T_n^p|$. Due to~\cite[Theorem 1.4]{Low}, we have
\begin{equation}
   {\rm Prob}_\nu\left\{\left\|\Tr_{\Lambda_n\setminus\Lambda_m} |\psi\rangle\langle\psi| - \Tr_{\Lambda_n\setminus\Lambda_m} \tau_n\right\|_1\geq\kappa\right\}
   \leq\frac{24 d^{3|\Lambda_m|}}{|T_n^p|\kappa^2}
   \label{eqDerandom}
\end{equation}
for all $\kappa>0$. Define $\delta_{m,n}$ as in~(\ref{eqdeltamn}), use Example~\ref{ExFlat} and absorb the factor $24$ into the $\exp(o|\Lambda_n|)$-term.
\end{proof}

One still has concentration on the thermal state; however, in contrast to the Haar measure result in Theorem~\ref{TheCanonicalTypicalityPeriodic},
the concentration is now exponential in the number of sites $|\Lambda_n|$, not doubly-exponential. This behavior is more in line with standard
expectations on physical systems in statistical mechanics.

It is now clear how Theorem~\ref{TheCanonicalTypicalityArbitrary} can be derandomized, by imitating the proof of Theorem~\ref{TheCanonicalTypicalityArbitrary}
in conjunction with the $T_n^{BC}$-analogue of~(\ref{eqDerandom}) and the inequality $\|\cdot\|_{[m]}\leq \|\cdot\|_1$. We omit the details.

\begin{theorem}[Canonical typicality, arbitrary boundary conditions, derandomized version]
\label{TheCanonTypArbitraryDesigns}
Let $\Phi$ be any translation-invariant finite-range interaction with corresponding arbitrary boundary condition Hamiltonians $H_{\Lambda_n}^{BC}$, let $u_{\min}(\Phi)<u\leq u_{\max}(\Phi)$
and $\delta>0$. Suppose that there is a unique infinite-volume Gibbs state $\omega_\beta$ at inverse temperature $\beta\equiv\beta(u)$.
Consider the microcanonical subspace
\[
   T_n^{BC}:={\rm span}\left\{ |E\rangle\,\,\left|\,\, H_{\Lambda_n}^{BC}|E\rangle=E|E\rangle,\enspace \frac E {|\Lambda_n|}\in(u-\delta,u)\right.\right\}.
\]
Choose a state $|\psi\rangle$ at random from $T_n^{BC}$ by choosing a unitary from an $\varepsilon$-approximate $8$-design and applying it to
a fixed initial pure state, where $\varepsilon=\exp(-|\Lambda_n| s(u)+o(|\Lambda_n|))$.
Then for every $m\in\N$ large enough such that $d^{|\Lambda_m|}\geq 14$,
there is a sequence of positive real numbers $(\delta_{m,n})_{n\in\N}$ with $\lim_{n\to\infty}\delta_{m,n}=0$, such that
\[
   {\rm Prob}_\nu\left\{\left\|  |\psi\rangle\langle\psi| - 
   \frac{\exp(-\beta H_{\Lambda_n}^{BC})} {Z_n}\right\|_{[m]} \geq \delta_{m,n} + \kappa  \right\}\leq \frac{ d^{3|\Lambda_m|}}{\kappa^2}
   \exp\left(\strut -|\Lambda_n| s(u)+o(|\Lambda_n|)\right)
\]
for all $\kappa>0$. Furthermore, if Gibbs states are unique around inverse temperature $\beta>0$, then the same result is true if $\beta$ is chosen
as the ($n$-dependent) solution of $\frac 1 {|\Lambda_n|} \tr\left(H_{\Lambda_n}^{BC}\,\frac{\exp(-\beta H_{\Lambda_n}^{BC})}{Z_n}\right)=u$.
\end{theorem}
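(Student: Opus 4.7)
The plan is to mirror the proof of Theorem~\ref{TheCanonicalTypicalityArbitrary} step by step, but to replace the Haar-measure concentration bound of~\cite{Popescu} with Low's design-based bound, exactly in the manner in which Theorem~\ref{ThePeriodicBCDerandomized} derandomized Theorem~\ref{TheCanonicalTypicalityPeriodic}. Let $\tau_n^{BC}$ denote the maximally mixed state on $T_n^{BC}$, and set $\delta_{m,n}:=\|\tau_n^{BC}-\exp(-\beta H_{\Lambda_n}^{BC})/Z_n\|_{[m]}$ (or the analogue with $n$-dependent $\beta$). By Example~\ref{ExFlat2} combined with Lemma~\ref{LemNormEquivalence}, $\delta_{m,n}\to 0$ as $n\to\infty$. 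By the triangle inequality it therefore suffices to control $\||\psi\rangle\langle\psi|-\tau_n^{BC}\|_{[m]}$, which will absorb the $\delta_{m,n}$ shift into the left-hand side of the desired probability bound.

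For the tail bound on $\||\psi\rangle\langle\psi|-\tau_n^{BC}\|_{[m]}$ I would reuse verbatim the reduction from the proof of Theorem~\ref{TheCanonicalTypicalityArbitrary}: if this pseudonorm exceeds $\kappa$, there must exist a translation $y\in\mathbf{T}(\Lambda_m,\Lambda_n)$ with
\[
   \left\|\Tr_{\Lambda_n\setminus(\Lambda_m+y)}|\psi\rangle\langle\psi|-\Tr_{\Lambda_n\setminus(\Lambda_m+y)}\tau_n^{BC}\right\|_1\geq\kappa.
\]
For each such fixed $y$, the reduction lives on a $d^{|\Lambda_m|}$-dimensional subsystem (independent of $y$), so one may apply the $T_n^{BC}$-analogue of~(\ref{eqDerandom}), i.e.\ \cite[Theorem~1.4]{Low}, with $\varepsilon$-approximate $8$-design parameter $\varepsilon=6d^{3|\Lambda_m|}/|T_n^{BC}|$; this yields the bound $24\,d^{3|\Lambda_m|}/(|T_n^{BC}|\kappa^2)$ for a single fixed $y$.

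A union bound over $y\in\mathbf{T}(\Lambda_m,\Lambda_n)$ contributes a factor of at most $|\mathbf{T}(\Lambda_m,\Lambda_n)|\leq|\Lambda_n|$, which is polynomial in $n$. By Lemma~\ref{LemHBC} we have $|T_n^{BC}|=\exp(|\Lambda_n|\,s(u)+o(|\Lambda_n|))$, so this polynomial prefactor is swept into the subleading $o(|\Lambda_n|)$ term in the exponent, giving the stated tail bound $d^{3|\Lambda_m|}\kappa^{-2}\exp(-|\Lambda_n|s(u)+o(|\Lambda_n|))$. The $n$-dependent $\beta$ variant is handled identically, using the second half of Example~\ref{ExFlat2} (which in turn rests on Theorem~\ref{TheEquivalence2}) to conclude $\delta_{m,n}\to 0$ in that setting as well.

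I do not anticipate a real obstacle here; each ingredient is already assembled in the paper. The only thing requiring minor care is verifying that the union bound factor $|\mathbf{T}(\Lambda_m,\Lambda_n)|$ is genuinely absorbable into the $o(|\Lambda_n|)$ correction---it is, since $|\mathbf{T}(\Lambda_m,\Lambda_n)|\leq|\Lambda_n|$ is polynomial while $|T_n^{BC}|$ is exponentially large by Lemma~\ref{LemHBC}---and that applying Low's theorem on the reduction onto $\Lambda_m+y$ requires no adjustment, because $\dim\mathcal{A}_{\Lambda_m+y}=d^{|\Lambda_m|}$ is independent of $y$.
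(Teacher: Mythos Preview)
Your proposal is correct and follows essentially the approach the paper indicates: the paper itself omits the proof and merely writes that one should imitate the proof of Theorem~\ref{TheCanonicalTypicalityArbitrary} in conjunction with the $T_n^{BC}$-analogue of~(\ref{eqDerandom}) and the inequality $\|\cdot\|_{[m]}\leq\|\cdot\|_1$, which is precisely what you do. You are in fact more careful than the paper in making the union bound over $y\in\mathbf{T}(\Lambda_m,\Lambda_n)$ explicit and verifying that the resulting polynomial factor is absorbed into the $o(|\Lambda_n|)$ term.
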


Since the effort of sampling from an $\varepsilon$-approximate $8$-design $\nu$ scales polynomially in $\log\varepsilon$ and the logarithm of the Hilbert space
dimension, we obtain that sampling from $\nu$ in the theorems above amounts to an effort that grows only polynomially in $|\Lambda_n|$, i.e.\ the particle
number.

\subsection{Dynamical thermalization}
We can apply the previous results to obtain statements about dynamical thermalization, using the results of~\cite{ShortFarrelly} which are elaborations
of earlier results in~\cite{LindenPopescu} and~\cite{Short}. However, for the technicalities,
we need to relate the von Neumann entropy with the R\'enyi entropy of order two.
For $\alpha>0$ with $\alpha\neq 1$ and density matrices $\rho$, we define~\cite{Petz}
\[
   S_\alpha(\rho):=\frac 1 {1-\alpha}\log \tr(\rho^\alpha),
\]
and the limit $\alpha\to 1$ recovers von Neumann entropy, $S_1(\rho):=S(\rho)=-\tr(\rho\log\rho)$, and the limit $\alpha\to 0$ yields
$S_0(\rho):=\log{\rm rank}(\rho)$.
If $\alpha\leq\alpha'$ then $S_\alpha\geq S_{\alpha'}$. In fact, we will use R\'enyi entropy only for classical probability vectors $\lambda=(\lambda_1,\ldots,\lambda_N)$,
and write sloppily $S_\alpha(\lambda)$ for classical R\'enyi entropy, which is the same as the quantum R\'enyi entropy of the diagonal matrix with entries $\lambda_i$.
We use some inequalities and insights from~\cite{Zyczkowski} to show the following:
\begin{lemma}
\label{LemRenyi}
For every $0\leq\varepsilon\leq 1$, we have $\displaystyle S_2(\rho)\geq 2\varepsilon\left(S(\rho)-\frac{\varepsilon}{1+\varepsilon} S_0(\rho)\right)\geq
2\varepsilon(S(\rho)-\varepsilon S_0(\rho))$.
\end{lemma}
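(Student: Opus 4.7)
The second inequality is immediate: for $\varepsilon\in[0,1]$ one has $\varepsilon/(1+\varepsilon)\le\varepsilon$, hence $S(\rho)-\tfrac{\varepsilon}{1+\varepsilon}S_0(\rho)\ge S(\rho)-\varepsilon S_0(\rho)$, and multiplying by $2\varepsilon\ge 0$ preserves the direction.

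For the first inequality, note that $S_2(\rho)$, $S(\rho)$, $S_0(\rho)$ depend only on the spectrum of $\rho$, so I would first reduce to a classical probability vector: let $p_1,\ldots,p_N$ be the nonzero eigenvalues of $\rho$, set $N=\mathrm{rank}\,\rho$, and write $P_\alpha:=\sum_i p_i^\alpha$, so $S_\alpha=(1-\alpha)^{-1}\log P_\alpha$ and $S_0=\log N$. My plan is then to chain two Rényi-entropy bounds, both extractable from the extrapolation results of~\cite{Zyczkowski}.

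The first ingredient is the monotonicity of $L^p$-norms on counting measure: since $p\mapsto\|\lambda\|_p$ is non-increasing, we have $\|\lambda\|_2\le\|\lambda\|_{1+\varepsilon}$ for $\varepsilon\in[0,1]$, i.e.\ $P_2^{(1+\varepsilon)/2}\le P_{1+\varepsilon}$. Taking logarithms and rearranging yields
\[
(1+\varepsilon)\,S_2(\rho)\ \ge\ 2\varepsilon\,S_{1+\varepsilon}(\rho).
\]
This step is essentially one line of Hölder/norm-monotonicity.

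The second ingredient is the Rényi-extrapolation bound
\[
S_{1+\varepsilon}(\rho)\ \ge\ (1+\varepsilon)\,S(\rho)\ -\ \varepsilon\,S_0(\rho),\qquad\varepsilon\in[0,1].
\]
I would derive this by exploiting the concavity of $g(\alpha):=-\log P_\alpha$ in $\alpha$ (which in turn follows from the convexity of the cumulant-generating function $\alpha\mapsto\log\sum_i e^{\alpha\log p_i}$), together with the three anchor values $g(0)=-S_0$, $g(1)=0$, and $g'(1)=S$. The bound would then drop out of a chord/tangent comparison on $g$ evaluated at $\alpha=1+\varepsilon$, which is precisely the shape of argument used in Zyczkowski's Rényi-interpolation estimates.

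Chaining the two bounds gives
\[
S_2\ \ge\ \tfrac{2\varepsilon}{1+\varepsilon}\,S_{1+\varepsilon}\ \ge\ \tfrac{2\varepsilon}{1+\varepsilon}\bigl((1+\varepsilon)S-\varepsilon S_0\bigr)\ =\ 2\varepsilon\Bigl(S-\tfrac{\varepsilon}{1+\varepsilon}S_0\Bigr),
\]
which is the desired inequality.

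The main obstacle will be the extrapolation step. Standard concavity of $g$ alone (via tangent lines at $\alpha=1$) yields only the \emph{upper} bound $S_{1+\varepsilon}\le S$; extracting the matching \emph{lower} bound with the correct coefficients requires combining the values of $g$ at several anchors through a carefully chosen Jensen-type argument, and this is where nearly all the technical content of the lemma is concentrated.
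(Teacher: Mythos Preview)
Your proposal is correct and matches the paper's proof: both chain $(1+\varepsilon)S_2\ge 2\varepsilon S_{1+\varepsilon}$ (your $L^p$-norm monotonicity is equivalent to the paper's citation of the monotonicity of $\tfrac{q-1}{q}S_q$ from~\cite{Zyczkowski}) with the extrapolation bound $S_{1+\varepsilon}\ge(1+\varepsilon)S_1-\varepsilon S_0$. The obstacle you flag dissolves once one invokes convexity of $q\mapsto S_q$ (as the paper does, also from~\cite{Zyczkowski}): for $q=1+\varepsilon>1$ the graph lies above the secant through $(0,S_0)$ and $(1,S_1)$, which is precisely your second ingredient.
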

\begin{proof}
As shown in~\cite{Zyczkowski}, we have $\frac{\partial}{\partial q} \frac{q-1} q S_q \geq 0$,
hence $\frac{q-1} q S_q \leq \frac 1 2 S_2$ for all $q\in[1,2]$. Since the function $q\mapsto S_q$ is convex, the value of $S_q$ lies on or above the line
$g(x):=S_0-(S_0-S_1)x$ that connects $S_0$ and $S_1$, i.e.\ $S_q\geq g(q)=S_0-(S_0-S_1)q$. We get
\[
   S_2\geq \frac{2(q-1)} q S_q \geq \frac{2(q-1)} q \left(\strut S_0-(S_0-S_1)q\right) = 2(q-1) S_1 - \frac{2(q-1)^2} q S_0.
\]
Setting $q=:1+\varepsilon$ proves the claim.
\end{proof}

Following~\cite{ShortFarrelly}, for any Hamiltonian $H$, we define its \emph{gap degeneracy} by
\[
   D_G(H):=\max_E \left|\strut \{ (i,j)\,\,|\,\, i\neq j,\enspace E_i - E_j=E\}\right|,
\]
where the $E_i$ denotes the (energy) eigenvalues of $H$. Using Theorem 3 of~\cite{ShortFarrelly}, we can easily show the following.

\begin{theorem}[Thermalization, periodic boundary conditions]
\label{TheThermalizationPeriodic}
Let $\Phi$ be a translation-invariant finite-range interaction
which is not physically equivalent to zero, and $(\rho_0^{(n)})_{n\in\N}$ any sequence of initial states on $\Lambda_n$ which have energy expectation value
of $U_n:=\tr(\rho_0^{(n)}H_{\Lambda_n}^p)$ with density $U_n/|\Lambda_n|$ converging to some value $u\in(u_{\min}(\Phi),u_{\max}(\Phi))$ as $n\to\infty$.

Suppose that the initial states have close to maximal ``population entropy'' in the following sense. Define
$\bar S(\rho_0^{(n)}):=S(\lambda_1,\ldots,\lambda_N)$, where $S$ is Shannon entropy, and $\lambda_i:=\tr(\rho_0^{(n)} \pi_i)$ is the probability that the
$i$-th energy level is populated, where $H_{\Lambda_n}^p=:\sum_{i=1}^N E_i \pi_i$
is the spectral decomposition. Furthermore, suppose that either $H_{\Lambda_n}^p$ is non-degenerate, or that every $\pi_i \rho_0^{(n)} \pi_i$ is $\Lambda_n$-translation-invariant.
Then, determine the corresponding inverse temperature $\beta_n$ for which
\[
   \tr(H_{\Lambda_n}^p \gamma_{\Lambda_n}^p(\beta_n))=U_n,\qquad\mbox{where }\gamma_{\Lambda_n}^p(\beta_n):=\frac{\exp(-\beta_n H_{\Lambda_n}^p)}{Z_n}.
\]
If the initial states have close to maximal population entropy in the sense that
\begin{equation}
   \bar S(\rho_0^{(n)})\geq S(\gamma_{\Lambda_n}^p(\beta_n))-o(|\Lambda_n|),
   \label{eqMaxEnt}
\end{equation}
then unitary time evolution $\rho^{(n)}(t):=\exp(-itH_{\Lambda_n}^p)\rho_0^{(n)}\exp(itH_{\Lambda_n}^p)$ thermalizes the subsystem $\Lambda_m$ for most times $t$:
\begin{eqnarray*}
   \left\langle \left\| \Tr_{\Lambda_n\setminus\Lambda_m}\rho^{(n)}(t)-\left\langle\Tr_{\Lambda_n\setminus\Lambda_m}  \rho^{(n)}(t)\right\rangle\right\|_1\right\rangle
   &\leq& d^{|\Lambda_m|} \sqrt{D_G(H_{\Lambda_n}^p)} \exp\left( - \frac{s(\omega_\beta)^2}{4\log d} |\Lambda_n| + o(|\Lambda_n|)\right),\qquad\mbox{ and}\\
   \lim_{n\to\infty}\left\|\left\langle\Tr_{\Lambda_n\setminus\Lambda_m} \rho^{(n)}(t)\right\rangle - \Tr_{\Lambda_n\setminus\Lambda_m} \frac{\exp(-\beta_n H_{\Lambda_n}^p)}{Z_n}\right\|_1
   &=& 0,
\end{eqnarray*}
where $Z_n=\tr(\exp(-\beta_n H_{\Lambda_n}^p))$, and $\langle\cdot\rangle$ denotes the average over all times $t\geq 0$. Furthermore, in this statement, $\beta_n$ can
be replaced by $\beta:=\beta(u)$.
\end{theorem}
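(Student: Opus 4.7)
The plan is to take $\tau_n := \rho_{\rm avg}^{(n)}=\sum_i \pi_i \rho_0^{(n)}\pi_i$ (which exists as a dephasing regardless of non-degeneracy) and handle the two displayed bounds in parallel: the first, equilibration towards $\tau_n$, by a Short--Farrelly type bound whose input is a lower bound on the effective dimension $d_{\rm eff}(\tau_n)=1/\tr(\tau_n^2)$; and the second, identification of $\Tr_{\Lambda_n\setminus\Lambda_m}\tau_n$ with the thermal reduction, by checking that $\tau_n$ satisfies the premises of Theorem~\ref{TheEquivalence}. The first step is to verify $\Lambda_n$-translation-invariance of $\tau_n$. Under non-degeneracy, the spectral projectors $\pi_i$ are rank-one; since $H_{\Lambda_n}^p$ commutes with every $T\in \mathbf{T}(\Lambda_n)$, the joint eigenbasis assertion gives $T\pi_i T^\dagger=\pi_i$, so $\tau_n=\sum_i\lambda_i \pi_i$ is translation-invariant. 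Under the alternative hypothesis, each $\pi_i\rho_0^{(n)}\pi_i$ is translation-invariant by assumption, and $\tau_n$ is their sum.

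For~(\ref{eqDistToAve}), the orthogonality of the supports yields $\tr(\tau_n^2)=\sum_i\tr((\pi_i\rho_0^{(n)}\pi_i)^2)\leq \sum_i\lambda_i^2=e^{-S_2(\lambda)}$, so $d_{\rm eff}(\tau_n)\geq e^{S_2(\lambda)}$. I apply Lemma~\ref{LemRenyi} to $\lambda$ with the optimal choice $\varepsilon=\bar S(\rho_0^{(n)})/(2\,|\Lambda_n|\log d)$, using $S_0(\lambda)\leq|\Lambda_n|\log d$ and the asymptotic $\bar S(\rho_0^{(n)})=s(\omega_\beta)|\Lambda_n|+o(|\Lambda_n|)$, which follows from~(\ref{eqMaxEnt}) together with the entropy-density statement $S(\gamma_{\Lambda_n}^p(\beta_n))/|\Lambda_n|\to s(\omega_\beta)$ obtained by applying Theorem~\ref{TheEquivalence} to the Gibbs state itself. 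This gives
\[
   S_2(\lambda)\geq \frac{s(\omega_\beta)^2}{2\log d}|\Lambda_n|+o(|\Lambda_n|),
\]
and thus $d_{\rm eff}(\tau_n)^{-1/2}\leq \exp(-s(\omega_\beta)^2|\Lambda_n|/(4\log d)+o(|\Lambda_n|))$. Feeding this into Theorem~3 of~\cite{ShortFarrelly}, which bounds $\langle\|\rho_S(t)-\omega_S\|_1\rangle$ by $\tfrac{1}{2}d_S\sqrt{D_G/d_{\rm eff}}$ with $d_S=d^{|\Lambda_m|}$, delivers~(\ref{eqDistToAve}) with the stated constants. (In the degenerate case the extra translation-invariance hypothesis on $\pi_i\rho_0^{(n)}\pi_i$ is what allows one to keep the sharp Short--Farrelly bound on local reductions rather than retreat to the weaker averaged-observable version of Theorem~\ref{TheThermalizationArbitrary}.)

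For~(\ref{eqDynMain2}) the plan is to check hypothesis~(\ref{eqFreeEnergy}) of Theorem~\ref{TheEquivalence} for $\tau_n$. Dephasing into orthogonal blocks does not decrease entropy, so $S(\tau_n)\geq \bar S(\rho_0^{(n)})\geq S(\gamma_{\Lambda_n}^p(\beta_n))-o(|\Lambda_n|)=\log Z_n(\beta_n)+\beta_n U_n-o(|\Lambda_n|)$. Time-averaging preserves $\tr(\tau_n H_{\Lambda_n}^p)=U_n$, and the boundary correction $\|H_{\Lambda_n}^p-H_{\Lambda_n}\|_\infty/|\Lambda_n|\to 0$ reduces this to the open-BC energy. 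In the $\beta_n$-version the free energy identity $S-\beta\,\tr(\cdot H)$ telescopes into $\log Z_n(\beta_n)/|\Lambda_n|+(\beta_n-\beta)U_n/|\Lambda_n|+o(1)$, whose $\liminf$ equals $p(\beta,\Phi)$ once $\beta_n\to\beta$; this convergence is established by the sandwich argument in the final part of Lemma~\ref{LemLocalGibbs} (which uses uniqueness of Gibbs states around $\beta$), combined with Lemma~II.2.2 of~\cite{Simon} for continuity of $\log Z_n$ in $\beta$. Theorem~\ref{TheEquivalence} then gives~(\ref{eqDynMain2}) for both choices of $\beta$.

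The main obstacle is neither step in isolation but the calibration in the middle: optimizing $\varepsilon$ in Lemma~\ref{LemRenyi} so that the constant $s(\omega_\beta)^2/(4\log d)$ comes out exactly, and threading the $o(|\Lambda_n|)$-slack from the population-entropy hypothesis~(\ref{eqMaxEnt}) faithfully through both the entropy--Rényi inequality and the square root in the Short--Farrelly bound. Conceptually, the key observation making the proof work is that $\tau_n$ is simultaneously (i) the dynamical equilibrium state, for which the Short--Farrelly machinery applies, and (ii) a $\Lambda_n$-translation-invariant state of close-to-maximal entropy at the prescribed energy density, for which the kinematic equivalence-of-ensembles Theorem~\ref{TheEquivalence} applies.
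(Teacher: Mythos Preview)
Your proposal is correct and follows essentially the same route as the paper: identify $\tau_n=\bar\rho_0^{(n)}=\sum_i\pi_i\rho_0^{(n)}\pi_i$ as simultaneously the dynamical equilibrium state (for the Short--Farrelly bound with $d_{\rm eff}=e^{S_2(\lambda)}$) and a $\Lambda_n$-translation-invariant state satisfying the free-energy hypothesis of Theorem~\ref{TheEquivalence}, then use Lemma~\ref{LemRenyi} with $\varepsilon\sim s(\omega_\beta)/(2\log d)$ to extract the exponential rate. One small remark: you state $\bar S(\rho_0^{(n)})=s(\omega_\beta)|\Lambda_n|+o(|\Lambda_n|)$ as an equality, but only the lower bound (which is what~(\ref{eqMaxEnt}) and the Gibbs entropy density give directly) is needed for both the R\'enyi estimate and the free-energy check; the paper obtains the matching upper bound a posteriori from Theorem~\ref{TheEquivalence} applied to $\tau_n$, via $\bar S\leq S(\tau_n)\to s(\omega_\beta)|\Lambda_n|$.
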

\textbf{Remark.} If $H_{\Lambda_n}^p$ is non-degenerate, we have $\bar S(\rho_0^{(n)})=S(\bar \rho_0^{(n)})$, where $\bar\rho_0^{(n)}:=\sum_i \pi_i\rho_0^{(n)}\pi_i$ is
the dephased initial state. Furthermore, we can summarize the result by saying that
\[
   \left\langle \left\|\Tr_{\Lambda_n\setminus\Lambda_m}\rho^{(n)}(t) - \Tr_{\Lambda_n\setminus\Lambda_m}\frac{\exp(-\beta_n H_{\Lambda_n}^p)}{Z_n}\right\|_1
   \right\rangle\stackrel{n\to\infty}\longrightarrow 0
\]
as long as the gap degeneracy $D_G$ grows at most subexponentially with $|\Lambda_n|$. However, the more detailed formulation above contains more information:
while the difference to the Gibbs state may tend to zero polynomially in $|\Lambda_n|$, the result shows strong equilibration of time evolution indicated by a trace distance
which goes to zero exponentially in $|\Lambda_n|$.
\begin{proof}
According to~\cite[Theorem 3 resp.~(25)]{ShortFarrelly}, we have
\begin{equation}
   \left\langle \left\|\Tr_{\Lambda_n\setminus\Lambda_m}\rho^{(n)}(t)-(\bar\rho_0^{(n)})_{\Lambda_m}\right\|_1\right\rangle \leq d^{|\Lambda_m|}
   \sqrt{\frac{D_G(H_{\Lambda_n}^p)}{d_{\rm eff}}},
   \label{eqTonyResult}
\end{equation}
where $d_{\rm eff}^{-1}=\sum_i \lambda_i^2$, thus $d_{\rm eff}=\exp(S_2(\lambda_1,\ldots,\lambda_N))$, and $\bar\rho_0^{(n)}=\langle\rho^{(n)}(t)\rangle=\sum_{i=1}^N\pi_i\rho_0^{(n)}\pi_i$.
If $H_{\Lambda_n}^p$ is non-degenerate, then every $\pi_i\rho_0^{(n)}\pi_i$ is a real multiple of $\pi_i$ and thus $\Lambda_n$-translation-invariant. Thus, the conditions of
the lemma ensure that $\bar\rho_0^{(n)}$ is $\Lambda_n$-translation-invariant. Since the $\pi_i\rho_0^{(n)}\pi_i/\lambda_i$ for $\lambda_i\neq 0$ are density matrices with mutually
orthogonal supports, we have
\[
   S(\bar\rho_0^{(n)})=S\left(\sum_{i:\lambda_i\neq 0}\lambda_i\,\frac{\pi_i\rho_0^{(n)}\pi_i}{\lambda_i}\right) = S(\lambda_1,\ldots,\lambda_N)+\sum_{i:\lambda_i\neq 0}
   \lambda_i S\left(\frac{\pi_i \rho_0^{(n)} \pi_i}{\lambda_i}\right)\geq S(\lambda_1,\ldots,\lambda_N)=\bar S(\rho_0^{(n)}).
\]
Note that $\tr(\rho_0^{(n)}H_{\Lambda_n}^p)=\tr(\bar \rho_0^{(n)}H_{\Lambda_n}^p)$. Furthermore, Theorem~\ref{TheEquivalence} shows that $\lim_{n\to\infty}\beta_n=\beta
:=\beta(u)$, and $\rho(\beta_n)$ maximizes the functional $\rho\mapsto S(\rho)-\beta_n \tr(\rho H_{\Lambda_n}^p)$. Thus
\begin{eqnarray}
   \liminf_{n\to\infty}\frac 1 {|\Lambda_n|} \left( S(\bar\rho_0^{(n)})-\beta\, \tr(\bar\rho_0^{(n)}H_{\Lambda_n})\right)&\geq&
   \liminf_{n\to\infty}\frac 1 {|\Lambda_n|} \left( \bar S(\rho_0^{(n)})-\beta\, \tr(\bar\rho_0^{(n)}H_{\Lambda_n}^p)\right)\nonumber \\
   &\geq&  \liminf_{n\to\infty}\frac 1 {|\Lambda_n|} \left( S(\gamma_{\Lambda_n}^p(\beta_n))-\beta\, \tr(\rho_0^{(n)}H_{\Lambda_n}^p)\right)\nonumber \\
   &=& \liminf_{n\to\infty}\frac 1 {|\Lambda_n|} \left( S(\gamma_{\Lambda_n}^p(\beta_n))-\beta\, \tr(\gamma_{\Lambda_n}^p(\beta_n) H_{\Lambda_n}^p)\right) \nonumber \\
   &=& \liminf_{n\to\infty}\frac 1 {|\Lambda_n|} \left( S(\gamma_{\Lambda_n}^p(\beta_n))-\beta_n \tr(\gamma_{\Lambda_n}^p(\beta_n) H_{\Lambda_n}^p)\right)\nonumber \\
   &\geq& \liminf_{n\to\infty}\frac 1 {|\Lambda_n|} \left( S((\omega_\beta)_{\Lambda_n})-\beta_n \tr((\omega_\beta)_{\Lambda_n} H_{\Lambda_n}^p)\right)\nonumber \\
   &=& s(\omega_\beta) -\beta\, u(\omega_\beta) = p(\beta,\Phi), \label{eqLongCalculation}
\end{eqnarray}
and Theorem~\ref{TheEquivalence} proves that
\[
   \lim_{n\to\infty}\left\| {\rm Tr}_{\Lambda_n\setminus\Lambda_m} \bar\rho_0^{(n)} - {\rm Tr}_{\Lambda_n\setminus\Lambda_m}
   \frac{\exp(-\beta_n H_{\Lambda_n}^p)} {Z_n}\right\|_1=0
\]
and $\lim_{n\to\infty} \frac 1 {|\Lambda_n|} S(\bar\rho_0^{(n)})=s(\omega_\beta)$ as well as $\lim_{n\to\infty}\frac 1 {|\Lambda_n|}\tr(\bar\rho_0^{(n)}H_{\Lambda_n})
=\lim_{n\to\infty}\frac 1 {|\Lambda_n|}\tr(\rho_0^{(n)} H_{\Lambda_n}^p)=u(\omega_\beta)$. Together with~(\ref{eqMaxEnt}), this implies that
$\bar S(\rho_0^{(n)})=s(\omega_\beta) |\Lambda_n|+o(|\Lambda_n|)$.
It remains to estimate $d_{\rm eff}$. This will be done via Lemma~\ref{LemRenyi}. Writing $\lambda=(\lambda_1,\ldots,\lambda_N)$ and using that
$S_0(\lambda)\leq \log N \leq |\Lambda_n|\log d$ and $S(\lambda)=\bar S(\rho_0^{(n)})$, we obtain $S_2(\lambda)\geq 2 \varepsilon(S(\lambda)-\varepsilon |\Lambda_n|\log d)
=2\varepsilon\left(\strut (s(\omega_\beta)-\varepsilon\log d)|\Lambda_n| + o(|\Lambda_n|)\right)$
for all $0\leq \varepsilon\leq 1$. The special case $\varepsilon=s(\omega_\beta)/(2\log d)$ yields
$\displaystyle d_{\rm eff}=\exp(S_2(\lambda)) \geq \exp\left(\frac{s(\omega_\beta)^2}{2\log d} |\Lambda_n| + o(|\Lambda_n|)\right)$.
\end{proof}

Here is an example of a suitable sequence of initial states that appeared in work by Riera et al.~\cite{Riera}:
\begin{example}[``Flat'' pure initial state]
Consider pure initial states $\rho_0^{(n)}=|\psi_0^{(n)}\rangle\langle\psi_0^{(n)}|$ which have a flat energy distribution in an energy window, as
discussed in~\cite{Riera}. Concretely, denote the energy eigenstates of $H_{\Lambda_n}^p$ by $|E_i\rangle$, fix $\delta>0$, and set (up to normalization)
\[
   |\psi_0^{(n)}\rangle \sim \sum_{u-\delta<E_i/|\Lambda_n|<u} |E_i\rangle.
\]
If $H_{\Lambda_n}^p$ is non-degenerate, then $\bar S(\rho_0^{(n)})$ is the logarithm of the number of energy levels between densities $u-\delta$ and $u$, which is
$s(u)|\Lambda_n|+o(|\Lambda_n|)=S(\gamma^p_{\Lambda_n}(\beta))+o(|\Lambda_n|)$ according to Lemma~\ref{LemHBC}. Thus, Theorem~\ref{TheThermalizationPeriodic} proves
thermalization of small subsystems. The same conclusion holds if $|\psi_0^{(n)}\rangle$ is not exactly flat, but populates the energy levels as given
in Example~\ref{ExMain} and Figure~\ref{fig_density}.
\end{example}

This example and Theorem~\ref{TheThermalizationPeriodic} (in one formulation) assume that $H_{\Lambda_n}^p$ is non-degenerate. In fact, we show numerically
in Subsection~\ref{SecNumerical} that generic models of the kind we consider are non-degenerate, despite translation-invariance.
Alternatively, we can lift the condition of non-degeneracy or periodic boundary conditions
by proving a weaker statement about $m$-block-averaged observables.

\begin{theorem}[Thermalization, arbitrary boundary conditions]
\label{TheThermalizationArbitrary}
Let $\Phi$ be a translation-invariant finite-range interaction
which is not physically equivalent to zero, and $(\rho_0^{(n)})_{n\in\N}$ any sequence of initial states on $\Lambda_n$ which have energy expectation value
of $U_n:=\tr(\rho_0^{(n)}H_{\Lambda_n}^{BC})$ with density $U_n/|\Lambda_n|$ converging to some value $u\in(u_{\min}(\Phi),u_{\max}(\Phi))$ as $n\to\infty$,
where $BC$ denotes an arbitrary fixed choice of boundary conditions.

Suppose that the initial states have close to maximal ``population entropy'' in the following sense. Define
$\bar S(\rho_0^{(n)}):=S(\lambda_1,\ldots,\lambda_N)$, where $S$ is Shannon entropy, and $\lambda_i:=\tr(\rho_0^{(n)} \pi_i)$, where $H_{\Lambda_n}^{BC}=:\sum_{i=1}^N E_i \pi_i$
is the spectral decomposition. Then, determine the corresponding inverse temperature $\beta_n$ for which
\[
   \tr(H_{\Lambda_n}^{BC} \gamma_{\Lambda_n}^{BC}(\beta_n))=U_n,\qquad\mbox{where }\gamma_{\Lambda_n}^{BC}(\beta_n):=\frac{\exp(-\beta_n H_{\Lambda_n}^{BC})}{Z_n}.
\]
If the initial states have close to maximal population entropy in the sense that
\[
   \bar S(\rho_0^{(n)})\geq S(\gamma_{\Lambda_n}^{BC}(\beta_n))-o(|\Lambda_n|),
\]
then unitary time evolution $\rho^{(n)}(t):=\exp(-itH_{\Lambda_n}^{BC})\rho_0^{(n)}\exp(itH_{\Lambda_n}^{BC})$ thermalizes 
all $m$-block averaged observables for most times $t$:
\begin{eqnarray*}
   \left\langle \left\| \rho^{(n)}(t)-\left\langle  \rho^{(n)}(t)\right\rangle\right\|_{[m]}\right\rangle
   &\leq& d^{|\Lambda_m|} \sqrt{D_G(H_{\Lambda_n}^{BC})} \exp\left( - \frac{s(\omega_\beta)^2}{4\log d} |\Lambda_n| + o(|\Lambda_n|)\right),\qquad\mbox{ and}\\
   \lim_{n\to\infty}\left\|\left\langle \rho^{(n)}(t)\right\rangle -  \frac{\exp(-\beta_n H_{\Lambda_n}^{BC})}{Z_n}\right\|_{[m]}
   &=& 0,
\end{eqnarray*}
where $Z_n=\tr(\exp(-\beta_n H_{\Lambda_n}^{BC}))$, and $\langle\cdot\rangle$ denotes the average over all times $t\geq 0$. Furthermore, in this statement, $\beta_n$ can
be replaced by $\beta:=\beta(u)$.
\end{theorem}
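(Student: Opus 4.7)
The plan is to mirror the proof of Theorem~\ref{TheThermalizationPeriodic}, but to replace the trace distance with the pseudonorm $\|\cdot\|_{[m]}$ in order to avoid having to assume $\Lambda_n$-translation-invariance of the dephased state. Throughout, denote $\bar\rho_0^{(n)} := \langle \rho^{(n)}(t)\rangle = \sum_i \pi_i\rho_0^{(n)}\pi_i$, which is block-diagonal on the (possibly multi-dimensional) eigenspaces of $H_{\Lambda_n}^{BC}$.

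For the equilibration inequality, the key elementary observation is that for any two states $\rho,\sigma$ on $\Lambda_n$ one has
\[
   \|\rho-\sigma\|_{[m]}\leq \frac{1}{|\mathbf{T}(\Lambda_m,\Lambda_n)|}\sum_{y\in\mathbf{T}(\Lambda_m,\Lambda_n)}\left\|\mathrm{Tr}_{\Lambda_n\setminus(\Lambda_m+y)}(\rho-\sigma)\right\|_1,
\]
which one proves by pulling the sum over translations out of the supremum defining $\|\cdot\|_{[m]}$ and then using that $\sup_{0\leq B\leq \mathbf{1}}|\tr(BM)|=\frac{1}{2}\|M\|_1$ for any traceless Hermitian $M$. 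Taking the time average of both sides and applying eq.~(25) of~\cite{ShortFarrelly} to each translated subregion $\Lambda_m+y$ separately — which is permissible because that bound depends only on $|\Lambda_m|$ and on the global quantities $D_G$ and $d_{\rm eff}$, not on the location of the subregion within $\Lambda_n$ — yields
\[
   \left\langle\|\rho^{(n)}(t)-\bar\rho_0^{(n)}\|_{[m]}\right\rangle\leq d^{|\Lambda_m|}\sqrt{D_G(H_{\Lambda_n}^{BC})/d_{\rm eff}}.
\]
To lower-bound $d_{\rm eff}=e^{S_2(\lambda)}$, observe that for any Gibbs state one has $S(\gamma_{\Lambda_n}^{BC}(\beta_n))=\log Z_n+\beta_n U_n$, so together with $\beta_n\to\beta$ (by Lemma~\ref{LemLocalGibbs}) and $|\Lambda_n|^{-1}\log Z_n\to p(\beta,\Phi)$ we obtain $|\Lambda_n|^{-1}S(\gamma_{\Lambda_n}^{BC}(\beta_n))\to s(\omega_\beta)$. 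Combined with the maximal-population-entropy hypothesis this gives $\bar S(\rho_0^{(n)})\geq s(\omega_\beta)|\Lambda_n|+o(|\Lambda_n|)$, and Lemma~\ref{LemRenyi} with $\varepsilon=s(\omega_\beta)/(2\log d)$ produces the claimed exponential decay.

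For the thermalization part, I verify that $\bar\rho_0^{(n)}$ satisfies the hypotheses of Theorem~\ref{TheEquivalence2}. The block-diagonal structure on the energy eigenspaces gives
\[
   S(\bar\rho_0^{(n)})=\bar S(\rho_0^{(n)})+\sum_{i:\lambda_i>0}\lambda_i S(\pi_i\rho_0^{(n)}\pi_i/\lambda_i)\geq \bar S(\rho_0^{(n)}),
\]
and dephasing in the energy basis preserves the energy expectation, $\tr(\bar\rho_0^{(n)}H_{\Lambda_n}^{BC})=U_n$. A calculation analogous to~(\ref{eqLongCalculation}) — using $\|H_{\Lambda_n}-H_{\Lambda_n}^{BC}\|_\infty=o(|\Lambda_n|)$ to interchange the two Hamiltonians whenever convenient and $\beta_n\to\beta$ to absorb the temperature mismatch into $o(1)$ corrections — yields $\liminf_n|\Lambda_n|^{-1}(S(\bar\rho_0^{(n)})-\beta\,\tr(\bar\rho_0^{(n)}H_{\Lambda_n}))\geq p(\beta,\Phi)$. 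Theorem~\ref{TheEquivalence2} then gives $\lim_n\|\bar\rho_0^{(n)}-\exp(-\beta_n H_{\Lambda_n}^{BC})/Z_n\|_{\{m\}}=0$ (and the same with $\beta_n$ replaced by $\beta$), and Lemma~\ref{LemNormEquivalence} transfers this conclusion to the pseudonorm $\|\cdot\|_{[m]}$.

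The main conceptual obstacle is precisely the loss of $\Lambda_n$-translation-invariance of $\bar\rho_0^{(n)}$ once one drops non-degeneracy of $H_{\Lambda_n}^{BC}$: this prevents direct invocation of the strong equivalence of ensembles (Theorem~\ref{TheEquivalence}) and forces one into the weaker framework of $m$-block averaged observables. Once that decision is made, every step above is an adaptation of a result already proved in the earlier subsections, and the only genuinely new piece is the translation-averaging inequality for $\|\cdot\|_{[m]}$ above, which converts the pointwise-in-$y$ Short--Farrelly bound into a bound on the pseudonorm.
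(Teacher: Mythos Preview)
Your proof is correct and follows essentially the same route as the paper's: the translation-averaging inequality for $\|\cdot\|_{[m]}$, the subregion-wise application of the Short--Farrelly bound, the entropy inequality $S(\bar\rho_0^{(n)})\geq\bar S(\rho_0^{(n)})$, the verification of the hypotheses of Theorem~\ref{TheEquivalence2} via the calculation~(\ref{eqLongCalculation}), and the transfer from $\|\cdot\|_{\{m\}}$ to $\|\cdot\|_{[m]}$ via Lemma~\ref{LemNormEquivalence} all match. Your derivation of $|\Lambda_n|^{-1}S(\gamma_{\Lambda_n}^{BC}(\beta_n))\to s(\omega_\beta)$ directly from the Gibbs formula is slightly more explicit than the paper's ``as in the proof of Theorem~\ref{TheThermalizationPeriodic}'', but the content is the same.
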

\textbf{Remark.} As in the previous theorem, we can summarize the result (at the expense of losing some information) as
\[
   \lim_{n\to\infty}\left\langle \left\|\rho^{(n)}(t)-\frac{\exp(-\beta_n H_{\Lambda_n}^{BC})}{Z_n}\right\|_{[m]}\right\rangle = 0
\]
whenever the gap degeneracy $D_G$ does not grow too quickly with $|\Lambda_n|$. In fact, we can always force $D_G$ to be equal to one -- that is,
remove degeneracies -- by adding appropriate boundary conditions in the sense of Definition~\ref{DefBC}.
\begin{proof}
For any $X=X^\dagger\in\mathcal{A}_n$, we can estimate the $\|\cdot\|_{[m]}$-norm via
\begin{eqnarray*}
   \|X\|_{[m]}&=& 2 \max\left\{ \left.\left| \frac 1 {|\mathbf{T}(\Lambda_m,\Lambda_n)|}\sum_{y\in\mathbf{T}(\Lambda_m,\Lambda_n)} \tr\left[
   X(\gamma_y(A')\otimes\mathbf{1})\right] \right|\,\,\right|\,\, A'\in\mathcal{A}_m,\, 0\leq A'\leq\mathbf{1}\right\}\\
   &\leq& 2\max\left\{ \left. \frac 1 {|\mathbf{T}(\Lambda_m,\Lambda_n)|}\sum_{y\in\mathbf{T}(\Lambda_m,\Lambda_n)} \left|\tr\left[
   X(\gamma_y(A')\otimes\mathbf{1})\right]\right| \,\,\right|\,\, A'\in\mathcal{A}_m,\, 0\leq A'\leq\mathbf{1}\right\} \\
   &\leq& \frac 2 {|\mathbf{T}(\Lambda_m,\Lambda_n)|} \sum_{y\in\mathbf{T}(\Lambda_m,\Lambda_n)} \max\left\{\left.
      \left| \tr\left[ X(\gamma_y(A')\otimes\mathbf{1})\right]\right|\,\,\right|\,\, A'\in\mathcal{A}_m,\, 0\leq A'\leq\mathbf{1}
   \right\}\\
   &=& \frac 1 {|\mathbf{T}(\Lambda_m,\Lambda_n)|} \sum_{y\in\mathbf{T}(\Lambda_m,\Lambda_n)} \left\|\Tr_{\Lambda_n\setminus(\Lambda_m+y)} X\right\|_1.
\end{eqnarray*}
Using again the results of~\cite{ShortFarrelly} in the form~(\ref{eqTonyResult}), setting again $\bar\rho_0^{(n)}:=\langle \rho^{(n)}(t)\rangle=\sum_i \pi_i \rho_0^{(n)} \pi_i$, we obtain
\[
   \left\langle \left\|\rho^{(n)}(t)-\bar\rho_0^{(n)}\right\|_{[m]}\right\rangle \leq 
   \frac 1 {|\mathbf{T}(\Lambda_m,\Lambda_n)|} \sum_{y\in\mathbf{T}(\Lambda_m,\Lambda_n)} \left\langle \left\|\Tr_{\Lambda_n\setminus(\Lambda_m+y)} \rho^{(n)}(t)
   -\Tr_{\Lambda_n\setminus(\Lambda_m+y)} \bar\rho_0^{(n)}\right\|_1\right\rangle
   \leq d^{|\Lambda_m|}\sqrt{\frac{D_G(H_{\Lambda_n}^{BC})}{d_{\rm eff}}},
\]
where $d_{\rm eff}=\exp(S_2(\lambda))$. As in the proof of Theorem~\ref{TheThermalizationPeriodic}, we have $S(\bar\rho_0^{(n)})\geq\bar S(\rho_0^{(n)})=S(\lambda)$,
and also $\tr(\rho_0^{(n)}H_{\Lambda_n}^{BC})=\tr(\bar\rho_0^{(n)}H_{\Lambda_n}^{BC})$. Furthermore, Lemma~\ref{LemLocalGibbs} implies that
$\lim_{n\to\infty}\beta_n=\beta:=\beta(u)$. Thus, we can repeat the calculation~(\ref{eqLongCalculation}) in the proof of Theorem~\ref{TheThermalizationPeriodic},
and obtain that $\liminf_{n\to\infty}\frac 1 {|\Lambda_n|} \left(S(\bar\rho_0^{(n)})-\beta\,\tr(\bar\rho_0^{(n)}H_{\Lambda_n})\right)\geq p(\beta,\Phi)$.
Consequently, Theorem~\ref{TheEquivalence2} and Lemma~\ref{LemNormEquivalence} imply that
\[
   \lim_{n\to\infty}\left\|\bar\rho_0^{(n)} - \frac{\exp(-\beta_n H_{\Lambda_n}^{BC})}{Z_n}\right\|_{[m]} = 0 \qquad\mbox{and}\qquad
   \lim_{n\to\infty}\frac 1 {|\Lambda_n|} \tr(\bar\rho_0^{(n)} H_{\Lambda_n}^{BC})=u(\omega_\beta).
\]
As in the proof of Theorem~\ref{TheThermalizationPeriodic}, it also follows that $\bar S(\rho_0^{(n)})\geq s(\omega_\beta) |\Lambda_n| + o(|\Lambda_n|)=S(\lambda)$.
Repeating the final steps of the proof of Theorem~\ref{TheThermalizationPeriodic}
yields the claimed estimate for $d_{\rm eff}$.
\end{proof}

\subsection{Finite-size estimates for systems without interaction}
\label{SecIsing}
As the most simple special case, consider the non-interacting Hamiltonian
\[
   H_\Lambda:=\sum_{x\in \Lambda} h_x,
\]
where $h_x=\gamma_x(h)$ denotes a fixed self-adjoint matrix $h$ sitting on site $x\in\Lambda$. This corresponds to an interaction $\Phi$ of the form
\[
   \Phi(X)=\left\{
      \begin{array}{cl}
         \gamma_X(h) & \mbox{if }\# X=1 \\
         0 & \mbox{otherwise}.
      \end{array}
   \right.
\]
Since there is no interaction, the dimension $\nu$ of the lattice $\mathbb{Z}^\nu$ does not play any role; without loss of generality, we may assume that $\nu=1$.
Similarly, we set $\Lambda_m=[1,m]\subset \Z$. Without interaction, the (restriction of the global) Gibbs state becomes the product state
\[
   {\rm Tr}_{\Lambda_n\setminus\Lambda_m}\frac{\exp(-\beta H_{\Lambda_n})} {Z_n} = \gamma_\beta^{\otimes m},
\]
where $\gamma_\beta = \exp(-\beta h)/Z_1$ is the single-site Gibbs state with single-site partition function $Z_1$. We will now look at equivalence of ensembles -- and its
finite-size behavior -- in this special case.
That is, we consider the maximally mixed state $\tau_n$ on
\[
   T_n:={\rm span}\left\{ |E\rangle\,\, \left| \,\ \frac E n \in [u-\delta,u]\right. \right\},
\]
where $\delta>0$ and $u$ will be considered fixed in what follows. On every site, we can choose the local basis such that $h$ is diagonal, denoting the corresponding
single-site eigenstates of $h$ by $\{|0\rangle,\ldots, |d-1\rangle\}$. (Recall that $d$ denotes the single-site Hilbert space dimension.)
For $0\leq j \leq d-1$, the eigenvalue corresponding to $|j\rangle$ will be denoted $E_j$; that is,
\[
   h |j\rangle = E_j |j\rangle.
\]
We may always choose a basis and shift the energy such that $0=E_0\leq E_1\leq\ldots\leq E_{d-1}$, i.e.
\[
   h=\left(
      \begin{array}{cccc}
         0 & & & \\
         & E_1 & & \\
         & & \ddots & \\
         & & & E_{d-1}
      \end{array}
   \right).
\]
Every string $s=s_1 s_2\ldots s_n$ of length $n$ over the alphabet $\{0,\ldots,d-1\}$ describes an eigenvector $|s\rangle:=|s_1\rangle\otimes \ldots \otimes |s_n\rangle$
of $H$ on $n$ sites, where $H|s\rangle=\sum_i h|s_i\rangle = \sum_i E_{s_i}$. Thus, the microcanonical subspace can also be written
\[
   T_n={\rm span}\left\{ |s\rangle\,\,\left|\,\, s\in \{0,1,\ldots,d-1\}^n,\enspace \frac 1 n \sum_{i=1}^n E_{s_i} \in [u-\delta,u]\right. \right\}.
\]
Our goal is to estimate the difference
\begin{equation}
   \left\| {\rm Tr}_{\Lambda_n\setminus\Lambda_m} \tau_n - \gamma_\beta^{\otimes m}\right\|_1.
   \label{eqDifference}
\end{equation}
Since all relevant operators compute, we can restrict to the probability distributions on the diagonal; we have a purely classical problem.
Our first observation is that a tight upper bound on this expression is known in the special case $\delta=0$ and $d=2$; it has been obtained in proofs
of the finite classical de Finetti Theorem~\cite{Diaconis}.

\begin{lemma}
\label{LemNonIntDiaconis}
In the case of a perfectly sharp microcanonical subspace, i.e.\ $\delta=0$, and of qubit systems, i.e.\ $d=2$, we have
\[
   \left\| {\rm Tr}_{\Lambda_n\setminus\Lambda_m} \tau_n - \gamma_\beta^{\otimes m}\right\|_1\leq \frac{4m} n,
\]
assuming that the energy density $u$ is chosen such that the corresponding microcanonical subspace $T_n$ is not empty.
\end{lemma}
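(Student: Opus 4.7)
The plan is to reduce the claim to a purely classical statement, at which point the bound becomes a finite de Finetti--type estimate for sampling with versus without replacement. First I would exploit that the single-site Hamiltonian $h=\mathrm{diag}(0,E_1)$ makes $H_{\Lambda_n}=\sum_{x\in\Lambda_n}h_x$ diagonal in the product computational basis $\{|s\rangle\}_{s\in\{0,1\}^n}$, with eigenvalue $E_1|s|$ where $|s|$ denotes the Hamming weight of $s$. For $T_n$ to be nonempty at the sharp energy density $u$, one needs $k:=un/E_1\in\{0,1,\dots,n\}$; in that case $T_n=\mathrm{span}\{|s\rangle:|s|=k\}$ and $\tau_n$ is the uniform mixture over its weight-$k$ basis. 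Since $\gamma_\beta^{\otimes m}$ is also diagonal in the computational basis and both operators commute, the $1$-norm in the lemma equals twice the total variation distance between two classical distributions on $\{0,1\}^m$: the marginal $P_m$ of the uniform distribution on weight-$k$ strings in $\{0,1\}^n$, and $Q_m=\mathrm{Bernoulli}(p)^{\otimes m}$ with $p=e^{-\beta E_1}/(1+e^{-\beta E_1})$.

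The next step is to match parameters. The defining property of $\beta=\beta(u)$ collapses, for the single-site Hamiltonian, to $pE_1=u$, hence $p=k/n$. A short combinatorial identity (comparing $\binom{n-m}{k-j}/\binom{n}{k}$ with $p^{j}(1-p)^{m-j}$) then identifies $P_m$ with the law of a length-$m$ color sequence drawn \emph{without} replacement from an urn of $n$ balls of which $k$ are white, and $Q_m$ with the corresponding sequence drawn \emph{with} replacement.

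The heart of the argument is then the classical Diaconis--Freedman coupling from~\cite{Diaconis}. A useful first observation is that, conditional on the number $j$ of white balls drawn, both $P_m$ and $Q_m$ assign the uniform distribution over the $\binom{m}{j}$ orderings, so $\|P_m-Q_m\|_{\mathrm{TV}}$ equals the total variation distance between the hypergeometric count $\mathrm{Hyp}(n,k,m)$ and the binomial count $\mathrm{Bin}(m,k/n)$. One then couples the counts by generating a with-replacement sample and resampling only the duplicated draws against the remaining urn; bookkeeping essentially as in~\cite{Diaconis} controls the resulting disagreement probability by a quantity of order $m/n$. Doubling to convert from total variation to the $1$-norm and tracking the explicit constant produces the claimed bound $4m/n$.

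The main technical obstacle is obtaining the \emph{linear} $m/n$ dependence rather than the quadratic $m^2/n$ that a naive coupling yields. If one simply declares the coupling to fail whenever the with-replacement draws contain a collision, a union bound on pairs gives only $\binom{m}{2}/n$, which is vacuous once $m\gtrsim\sqrt{n}$. Recovering the linear rate requires the reduction to count distributions together with a careful comparison between hypergeometric and binomial --- essentially the content of the Diaconis--Freedman argument --- and importing and constant-tracking that argument in the present setting is the substantive part of the proof.
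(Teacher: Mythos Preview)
Your approach is correct and matches the paper's proof essentially step for step: reduce to commuting diagonal operators, identify $\beta$ via $\tr(\gamma_\beta h)=u$ so that $p=k/n$, recognize the two classical distributions as sampling with versus without replacement from an urn with $k$ white balls out of $n$, and then invoke the Diaconis--Freedman bound. The paper simply cites Theorem~(4) of~\cite{Diaconis} directly rather than unpacking the hypergeometric--binomial reduction and the coupling argument as you do, but the substance is identical.
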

\proof
We have $\displaystyle h=\left(\begin{array}{cc} 0 & \\ & E_1 \end{array}\right)$, and so $\displaystyle \rho_\beta =
\left(\begin{array}{cc} 1 & \\ & e^{-\beta E_1} \end{array}\right)\cdot \frac 1 {1+e^{-\beta E_1}}$.
The inverse temperature $\beta$ is determined by $\tr(\gamma_\beta h)=u$. In this case, $u=E_1\cdot p_1$, where $p_1$ is the
relative frequencies of $1$'s in the strings $s$ with $|s\rangle\in T_n$. This equation implies
$\displaystyle \gamma_\beta =\left(\begin{array}{cc} 1-p_1 & \\ & p_1\end{array}\right)$, with classical probability distribution $P_\beta:=(1-p_1,p_1)$
on the diagonal. If we denote by $Q$ the classical probability distribution on $\{0,1\}^m$ determined by the diagonal elements of ${\rm Tr}_{\Lambda_n\setminus\Lambda_m}\tau_n$,
we have
\[
   \left\| {\rm Tr}_{\Lambda_n\setminus\Lambda_m} \tau_n - \gamma_\beta^{\otimes m}\right\|_1=\left\| Q-P_\beta^{\otimes m}\right\|_1,
\]
where $\|\cdot\|_1$ on the right-hand side denotes the variation distance of two probability distributions:
\[
   \|P-Q\|_1=\sum_{i=1}^{2^m} |P_i-Q_i|=2\max_{A\subseteq \{1,\ldots,2^m\}} |P(A)-Q(A)|.
\]
Consider an urn $U$ with $n$ balls, where $p_1\cdot n$ of them are marked by a ``$1$'' and all others marked by a ``$0$''. Then $P_\beta^{\otimes m}$
describes the distribution obtained by $m$ draws from $U$ with replacement, whereas $Q$ described the distribution obtained by $m$ draws from $U$
\emph{without} replacement, where in both cases the order of the results is taken into account. These distributions are considered in~\cite{Diaconis} in
the proof a finite version of the classical de Finetti theorem. The main result then follows from Theorem~(4) in~\cite{Diaconis}.
\qed

For $d\geq 3$, even if $\delta=0$, the results of~\cite{Diaconis} do not directly yield an upper bound on expression~(\ref{eqDifference}). This is for two reasons.
First, the typical subspace $T_n$ will in general not be spanned by a single type class, but by several ones. For example, consider the case $d=3$ with
energies $E_0=0$, $E_1=1$, and $E_2=2$. Fixing the energy density to $u=2/3$ yields the microcanonical subspace
\[
   T_3= {\rm span}\left\{ |011\rangle, |101\rangle,|110\rangle,|002\rangle,|020\rangle,|200\rangle\right\}.
\]
This is a disjoint union of two type classes. While $T_4=\{0\}$ and $T_5=\{0\}$, we have
\[
   T_6={\rm span}\left\{ |000022\rangle,\ldots, |000112\rangle,\ldots,|001111\rangle,\ldots\right\},
\]
where the dots denotes all permutations. This is a union of three type classes. Then the results in~\cite{Diaconis} do not prove directly that
${\rm Tr}_{\Lambda_n\setminus\Lambda_m}\tau_n$ is close to a product state, but that it is close to a \emph{convex combination} of product states, resembling the de Finetti
theorem.

In this particular example, it can be checked numerically that the qualitative behavior of Lemma~\ref{LemNonIntDiaconis} remains true: $n$ needs
to be increased linearly with $m$ in order to achieve a fixed one-norm distance error. The inverse temperature turns out to be $\beta=\log[(1+\sqrt{33})/4]$,
and $\gamma_\beta={\rm diag}\left( (15-\sqrt{33})/18, (\sqrt{33}-3)/9,(9-\sqrt{33})/18\right)$. The subspace $T_n$ is non-trivial whenever $n$ is a multiple of $3$.
Define the function $f:\N\to\N$ by
\[
   f(m):=\mbox{ smallest possible }n\in\N\mbox{ such that }\left\| {\rm Tr}_{\Lambda_n\setminus\Lambda_m} \tau_n - \gamma_\beta^{\otimes m}\right\|_1\leq \frac 1 {100}.
\]
This function is evaluated numerically in Figure~\ref{fig_min_n}. It can be seen that $n=f(m)$ increases linearly with $m$.
\begin{figure}[!hbt]
\begin{center}
\includegraphics[angle=0, width=7cm]{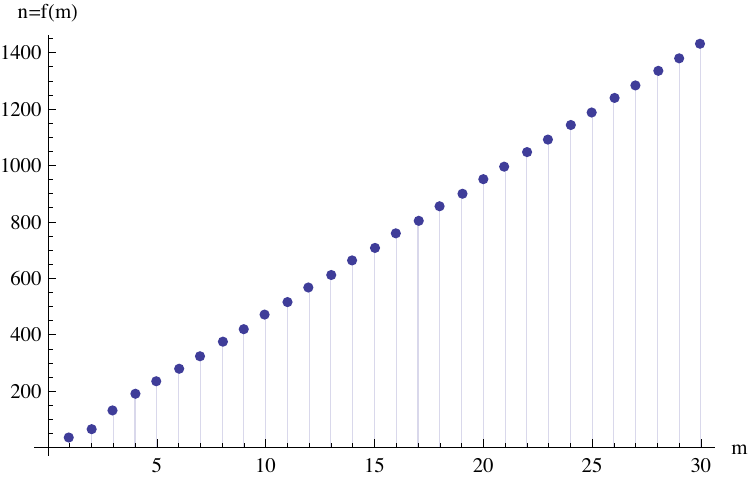}
\caption{Minimal number of sites $n$ to guarantee that a subsystem of given size $m$ is $\varepsilon$-close to the local Gibbs state, where $\varepsilon=1/100$,
energy density $u=2/3$, local Hilbert space dimension $d=3$, and energy levels $E_0=0$, $E_1=1$ and $E_2=2$. In this case, the microcanonical subspace
of width $\delta=0$ is spanned by more than one type class. It can be seen that the size of the ``bath'' has to be increased linearly with the size of the subsystem.
}
\label{fig_min_n}
\end{center}
\end{figure}

It turns out that for $\delta=0$, the previous example is atypical in the sense that \emph{generic} energy windows usually lead to microcanonical
subspaces $T_n$ that contain only a single type class. This can be characterized as in the following lemma. We use the standard terminology to
call a set of real numbers $E_1,\ldots,E_{d-1}$ \emph{rationally dependent} if there are rational numbers $\lambda_1,\ldots,\lambda_{d-1}\in\mathbb{Q}$,
not all of them zero, such that $\sum_{i=1}^{d-1}\lambda_i E_i=0$, and otherwise \emph{rationally independent}.
\begin{lemma}
\label{LemRationalIndependent}
Suppose that $\delta=0$. Then, all non-trivial microcanonical subspaces $T_n\neq\{0\}$, for all $n$ and $u$, are spanned by a single type class if and only if
the energies $E_1,\ldots, E_{d-1}$ are rationally independent.
\end{lemma}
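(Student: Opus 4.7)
\textbf{Proof proposal for Lemma~\ref{LemRationalIndependent}.}

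The plan is to reduce the statement to a purely combinatorial question about integer vectors. For a string $s\in\{0,\ldots,d-1\}^n$, let $n_j(s)$ be the number of occurrences of letter $j$, so that the type of $s$ is the vector $\vec n(s)=(n_0(s),\ldots,n_{d-1}(s))$ subject to $\sum_j n_j(s)=n$. Because $E_0=0$, the energy of $|s\rangle$ depends only on the type and equals $\sum_{j=1}^{d-1} n_j(s)\, E_j$. Two strings $s,s'$ of length $n$ lie in the same $T_n$ (with $\delta=0$) but in distinct type classes if and only if $\vec n(s)\neq \vec n(s')$ yet $\sum_{j=1}^{d-1}\bigl(n_j(s)-n_j(s')\bigr)E_j=0$. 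Thus the question becomes: for which choices of $E_1,\ldots,E_{d-1}$ does there exist a nonzero integer vector $(\lambda_1,\ldots,\lambda_{d-1})$ with $\sum_j \lambda_j E_j=0$ that is realizable as a difference of two types of some common size $n$?

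For the ``if'' direction I would argue by contrapositive. Assume $T_n\neq\{0\}$ contains two distinct type classes $\vec n,\vec n'$ (for some $n,u$). Then the integer vector $\lambda_j:=n_j-n_j'$ ($j=1,\ldots,d-1$) is nonzero (otherwise $\lambda_0=-\sum_{j\ge 1}\lambda_j=0$ forces $\vec n=\vec n'$) and satisfies $\sum_{j=1}^{d-1}\lambda_j E_j=0$, giving a rational dependence among $E_1,\ldots,E_{d-1}$. Contrapositively, rational independence forces every nontrivial $T_n$ to contain at most one type class.

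For the ``only if'' direction, suppose $E_1,\ldots,E_{d-1}$ are rationally dependent, so there exist integers $\lambda_1,\ldots,\lambda_{d-1}$, not all zero, with $\sum_{j=1}^{d-1}\lambda_j E_j=0$. Split each $\lambda_j$ into positive and negative parts $\lambda_j=\lambda_j^+-\lambda_j^-$ with $\lambda_j^\pm\geq 0$, and define $\lambda_0:=-\sum_{j=1}^{d-1}\lambda_j$, likewise split into $\lambda_0^\pm$. Set
\[
   n_j:=\lambda_j^+,\qquad n_j':=\lambda_j^- \qquad (j=0,\ldots,d-1).
\]
Then $\vec n\neq \vec n'$ by construction, and the identity $\sum_j(n_j-n_j')=\sum_j\lambda_j=0$ ensures that both types have the same total $n:=\sum_j n_j=\sum_j n_j'$. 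By the rational dependence, they have the same energy $E=\sum_{j=1}^{d-1}n_j E_j=\sum_{j=1}^{d-1}n_j' E_j$, so setting $u=E/n$ exhibits a $T_n$ spanned by two distinct type classes.

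I do not foresee a real obstacle; the argument is elementary once one recognizes that a difference of two types within $\{0,\ldots,d-1\}^n$ is exactly a zero-sum integer vector in $\Z^d$. The only small caution is the degenerate case where the constructed $n$ could be zero or where $u=E/n$ falls outside $(u_{\min},u_{\max}]$; this is handled by padding both types with an equal number of extra $0$-letters (which preserves energy and distinctness while enlarging $n$ and bringing the energy density arbitrarily close to $0=E_0$ if needed, or by instead using any integer multiple of the dependence vector).
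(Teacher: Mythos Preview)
Your proposal is correct and follows the same reduction as the paper: both directions hinge on the observation that two type classes of the same size $n$ share the same energy iff the difference vector $(n_j-n_j')_{j\geq 1}$ furnishes an integer relation among $E_1,\ldots,E_{d-1}$. Your ``if'' direction is identical to the paper's. For the ``only if'' direction, your construction via positive/negative parts $\vec n=(\lambda_j^+)$, $\vec n'=(\lambda_j^-)$ is a slightly cleaner alternative to the paper's, which instead shifts the dependence vector by its minimum entry $k_j=\min_i k_i<0$ to produce the two types $(|k_j|,\ldots,|k_j|)$ and $(k_1-k_j,\ldots,k_{d-1}-k_j)$, padding with zeros to reach a common $n$. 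Both are elementary and equivalent in spirit; yours yields the minimal $n$ directly, while the paper's makes nonnegativity of all entries immediate without a separate check. Your closing caveat is unnecessary: $n=0$ cannot occur (some $\lambda_j$ with $j\geq 1$ is nonzero, so $\sum_j\lambda_j^+>0$), and the lemma ranges over all $n$ and $u$, so there is no constraint $u\in(u_{\min},u_{\max}]$ to worry about.
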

\proof
We denote type classes as follows:
\[
   T(k_0,\ldots,k_{d-1}):=\left\{s\in\{0,\ldots,d-1\}^n\,\,|\,\, \#\{i:s_i=j\}=k_j\mbox{ for all }j\right\},
\]
that is, the set of all strings that have $k_0$ zeroes, $k_1$ ones, and so on. All strings $s$ in the same type class have the same energy
$\langle s|H|s\rangle=:E_s=\sum_i E_{s_i}=\sum_{j=0}^{d-1} k_j E_j$. Thus, the microcanonical subspace $T_n$ must be a disjoint union of (spans of) type classes.

Suppose the energies are rationally independent,
and suppose that ${\rm span}\, T(k_0,\ldots,k_{d-1})\subset T_n$ and at the same time ${\rm span}\,T(k'_0,\ldots,k'_{d-1})\subset T_n$. Then
\[
   u\cdot n = k_1 E_1+\ldots + k_{d-1} E_{d-1} = k'_1 E_1+\ldots + k'_{d-1} E_{d-1}.
\]
Thus
\[
   \underbrace{(k_1-k'_1)}_{\in\mathbb{Z}}E_1+\ldots + \underbrace{(k_{d-1}-k'_{d-1})}_{\in \mathbb{Z}}E_{d-1}=0,
\]
and rational independence implies that $k_j=k'_j$ for all $j$, so $T_n$ is the span of a single type class.

Conversely, suppose that $E_1,\ldots,E_{d-1}$ are rationally dependent. That is, there are $k_1,\ldots,k_{d-1}\in\Z$, not all $k_i=0$, such that
\[
   k_1E_1+\ldots+k_{d-1}E_{d-1}=0.
\]
There is at least one energy $E_i$ with $E_i>0$, so this equation can only be satisfied if $\min_i k_i=:k_j<0$ and $\max_i k_i>0$. Set
\[
   E:=-k_j(E_1+\ldots+E_{d-1})=(k_1-k_j)E_1+\ldots+(k_{d-1}-k_j)E_{d-1}
\]
Then all $k'_i:=k_i-k_j\geq 0$ are integers, and they cannot all be zero. Choose any $n\in\N$ with
\[
   n\geq\max\{(d-1)|k_j|,(k_1-k_j)+(k_2-k_j)+\ldots+(k_{d-1}-k_j)\}.
\]
Set $k_0:=n-(d-1)|k_j|\geq 0$ and $k'_0:=n-[(k_1-k_j)+\ldots+(k_{d-1}-k_j)]\geq 0$, and set the energy density to $u:=E/n$.
Then we have
\begin{eqnarray*}
   {\rm span}\, T(k_0,|k_j|,|k_j|,\ldots,|k_j|)&\subseteq& T_n,\\
   {\rm span}\, T(k'_0,k_1-k_j,k_2-k_j,\ldots,k_{d-1}-k_j)&\subseteq& T_n.
\end{eqnarray*}
Thus, $T_n$ is spanned by at least two different type classes.
\qed

There is a second reason why the results in~\cite{Diaconis} cannot directly be used if $d\geq 3$, even in the case where $\delta=0$ \emph{and}
assuming the rational independence of the energies. It follows from~\cite{Diaconis} that in this case
\begin{equation}
   \left\|{\rm Tr}_{\Lambda_n\setminus\Lambda_m} \tau_n -\gamma^{\otimes m}\right\|_1\leq\frac{2dm} n;
   \label{eqDiaconis}
\end{equation}
however, the state $\gamma$ is in general \emph{not} equal to $\gamma_\beta$ for any $\beta$. Instead, $\gamma$ is the single-site density matrix with
the symbols' relative frequencies in the type class as eigenvalues, and this is in general not a thermal state.

As a simple example, consider the case $d=3$, with single-site Hamiltonian $h={\rm diag}\left(0,1,\sqrt{2}\right)$ and energy density
$u=(2+\sqrt{2})/6$. If $n$ is a multiple of $6$, then $T_n$ contains all basis vectors $|s\rangle$ with strings $s\in\{0,1\}^n$ that have
$n/2$ zeroes, $n/3$ ones and $n/6$ twos. Then the $\gamma$ appearing in~(\ref{eqDiaconis}) is $\gamma={\rm diag}(1/2,1/3,1/6)$, and there
does not exist any $\beta$ such that $\gamma=\gamma_\beta$.

In the following, we will generalize the result of Lemma~\ref{LemNonIntDiaconis}, by showing that also in the case of a microcanonical subspace of width $\delta>0$,
the qualitative behavior of Figure~\ref{fig_min_n} remains true, at least in the case $d=2$, i.e. in the qubit case. First, we prove a lemma which shows this for
$\delta=0$ or $\delta$ depending on $n$ and approaching zero fast enough. Later, we will extend the result to arbitrary fixed $\delta>0$ by some large deviations argument.

\begin{lemma}
\label{LemResultDeltaZero}
Consider the case of qubits, i.e.\ $d=2$, and shift the energies such that $E_0=0$ and $E_1=1$. If $\tau_n$ is the maximal mixture on the non-trivial microcanonical
subspace corresponding to the energy interval $n\cdot [u-\delta,u]$,
with $0\leq\delta<u\leq \frac 1 2$, and $\gamma_\beta$ is the single-site Gibbs state with corresponding inverse temperature $\beta$, then we have for subsystems
of size $m\leq n(u-\delta)$,
\[
   S\left(\strut \gamma_\beta^{\otimes m}\, \left\|\,{\rm Tr}_{\Lambda_n\setminus\Lambda_m}\tau_n \right.\right)\leq \frac {(1-\delta)u}{u-\delta}\cdot \frac m {n-m}+\frac{m u \delta}{u-\delta}\left(
   1+\frac m {n-m}\right),
\]
where $S$ denotes the quantum relative entropy (with logarithm in base $e$). In particular, if $\delta=0$, the relative entropy is upper-bounded by $m/(n-m)$, and the Pinsker inequality
yields
\[
   \left\| {\rm Tr}_{\Lambda_n\setminus\Lambda_m}\tau_n - \gamma_\beta^{\otimes m}\right\|_1 \leq \sqrt{\frac 1 2 \cdot \frac m {n-m}}  \qquad\qquad (\mbox{special case }\delta=0).
\]
\end{lemma}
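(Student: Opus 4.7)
The plan is to reduce the problem to a classical KL divergence, decompose the microcanonical state as a convex combination of uniform distributions on single type classes, and then bound each per-type-class KL via the chain rule. Since the Hamiltonian $H_{\Lambda_n}=\sum_{x\in\Lambda_n}\gamma_x(h)$ has no interaction, all of $H_{\Lambda_n}$, $\gamma_\beta^{\otimes m}$, and the projector onto $T_n$ are simultaneously diagonal in the product computational basis. Hence the quantum relative entropy on the left-hand side equals the classical Kullback--Leibler divergence $S(P_m\|Q_m)$, where $P_m$ is the product Bernoulli$(u)$ distribution on $\{0,1\}^m$ (corresponding to $\gamma_\beta^{\otimes m}$, with $u=\tr(\gamma_\beta h)$ the probability of a $1$), and $Q_m$ is the marginal on the first $m$ sites of the uniform distribution on binary strings of length $n$ whose Hamming weight lies in $[n(u-\delta),\,nu]$.

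Write $\tau_n=\sum_k p_k\,\tau_n^{(k)}$ with $p_k=\binom{n}{k}/|T_n|$ and $\tau_n^{(k)}$ uniform on strings of length $n$ with exactly $k$ ones; correspondingly $Q_m=\sum_k p_k Q_m^{(k)}$. Convexity of $-\log$ (Jensen) then gives $S(P_m\|Q_m)\le\sum_k p_k\,S(P_m\|Q_m^{(k)})$. For each type class, the chain rule for relative entropy yields
\[
   S(P_m\|Q_m^{(k)})=\sum_{i=1}^m\mathbb{E}_{r_{i-1}\sim\mathrm{Bin}(i-1,u)}\!\left[S\!\left(u\;\Big\|\;\tfrac{k-r_{i-1}}{n-i+1}\right)\right],
\]
since under $Q_m^{(k)}$ the conditional probability that $s_i=1$ given $s_{1:i-1}$ with $r_{i-1}$ ones is exactly $(k-r_{i-1})/(n-i+1)$. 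Applying the Pearson bound $S(p\|q)\le(p-q)^2/[q(1-q)]$ together with the identity
\[
   \mathbb{E}\!\left[\left(u-\tfrac{k-r_{i-1}}{n-i+1}\right)^{\!2}\right]=\frac{(i-1)u(1-u)+(nu-k)^2}{(n-i+1)^2}
\]
splits each summand into a \emph{variance} contribution (nonzero even at $k=nu$) and a \emph{bias} contribution capturing the mismatch $|u-k/n|\le\delta$.

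Summing over $i$, the elementary bounds $\sum_{i=1}^m(i-1)/(n-i+1)^2\le m/(n-m)$ and $\sum_{i=1}^m 1/(n-i+1)^2\le m/[n(n-m)]$ (both valid because $m\le n(u-\delta)\le n/2$, using $u\le\tfrac12$) produce a variance part of order $u(1-u)\,m/(n-m)$ and a bias part of order $n\delta^2\,m/(n-m)$. Dividing by a uniform lower bound $q(1-q)\ge(u-\delta)(1-u+\delta)\ge(u-\delta)/2$ (obtained at the worst endpoint $k=n(u-\delta)$ using $u\le\tfrac12$), and then reassembling the mixture over $k$ with weights $p_k$, yields the stated bound: the variance part produces the coefficient $(1-\delta)u/(u-\delta)$ in front of $m/(n-m)$, while the bias part produces the additive $mu\delta/(u-\delta)\,(1+m/(n-m))$ term; Pinsker's inequality then gives the $\ell_1$ bound in the $\delta=0$ case. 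The main technical obstacle is the uniform control of the denominator $q(1-q)$ simultaneously over all $k$ in the window $[n(u-\delta),nu]$ and all positions $i\le m$: the constraint $m\le n(u-\delta)$ is used precisely to ensure that $(k-r_{i-1})/(n-i+1)$ stays strictly below $1$ throughout the chain, so that the Pearson bound remains applicable and the clean prefactor $u/(u-\delta)$ is recovered rather than a looser constant involving both endpoints of the window.
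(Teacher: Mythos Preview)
Your approach is genuinely different from the paper's (the paper never decomposes into type classes; it lower-bounds the full marginal $Q_k$ directly by a product estimate in which $\dim T_n$ cancels exactly, then expands $\log P_k-\log Q_k$ via the elementary inequalities $x/(1+x)\le\log(1+x)\le x$ and closes with the binomial moment identities for $\varepsilon_k=k-mu$). The type-class decomposition plus chain rule plus chi-squared bound is a natural alternative, but there is a real gap at the step ``dividing by a uniform lower bound $q(1-q)\ge(u-\delta)(1-u+\delta)$''. The quantity $q=(k-r_{i-1})/(n-i+1)$ depends on the \emph{random} count $r_{i-1}\in\{0,\dots,i-1\}$, not only on the endpoint $k$. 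Taking $r_{i-1}=i-1$, $k=\lceil n(u-\delta)\rceil$ and $i$ close to the allowed maximum $m\approx n(u-\delta)$ gives $q$ of order $1/(n-m)$, so no $n$-independent lower bound on $q$ (hence on $q(1-q)$) exists. The constraint $m\le n(u-\delta)$ that you invoke only guarantees $k-r_{i-1}\ge 1$, i.e.\ $q>0$; it does not keep $q$ bounded away from zero.

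This is not cosmetic: because the chi-squared bound $S(u\|q)\le(u-q)^2/[q(1-q)]$ diverges as $q\to 0$, you cannot pull $1/[q(1-q)]$ out of $\mathbb{E}_{r_{i-1}}[\cdot]$ as a constant. The bad realisations of $r_{i-1}$ do have small probability under $\mathrm{Bin}(i-1,u)$, so a tail argument could in principle control their contribution, but that is a nontrivial extra step---and after such patching it is far from clear that the \emph{exact} coefficients $(1-\delta)u/(u-\delta)$ and $u\delta/(u-\delta)$ of the lemma will emerge (your sketch does not indicate how the factor $u(1-u)$ from the variance turns into $(1-\delta)u/2$, nor how $n\delta^2$ from $(nu-k)^2$ turns into $nu\delta$). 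If you want to salvage this route, either keep the random denominator inside the expectation and bound $\mathbb{E}_{r_{i-1}}\big[(u-q)^2/(q(1-q))\big]$ directly, or replace the chi-squared bound by a pointwise inequality for $S(u\|q)$ that remains finite near $q=0$.
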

\proof
As explained above, the calculation is classical: we can regard $\gamma_\beta^{\otimes m}$ as a classical probability distribution on the binary strings
of length $m$, given by
\[
   P_\beta^{\otimes m}(x)=u^k(1-u)^{m-k},\qquad\mbox{where }k\mbox{ is the number of ones in }x.
\]
From elementary combinatorics, the marginal distribution $Q$ is given by
\[
   Q(x)=\left(\dim T_n\right)^{-1}\cdot \sum_{\ell\in[\lceil n(u-\delta)\rceil, \lfloor n u\rfloor ],\enspace \ell\geq k} {{n-m} \choose {\ell-k}},
\]
and the numerator counts all possible ways to complete $x$ to a string of length $n$ which has $\ell$ ones such that the energy is in the suitable interval.
Since $k\leq m\leq n(u-\delta)\leq\lceil n(u-\delta)\rceil$, the condition $\ell\geq k$ is automatically satisfied for all $\ell$ in the summation interval; hence this
condition can be removed from the specification of the sum. The dimension of the microcanonical subspace is given by
\begin{equation}
   \dim T_n = \sum_{\ell=\lceil n(u-\delta)\rceil}^{\lfloor n u \rfloor} {n\choose\ell}.
   \label{eqDim}
\end{equation}
Thus, the quantum relative entropy $S$ can be written in terms of the classical relative entropy $H$,
\begin{eqnarray}
   S\left(\strut \gamma_\beta^{\otimes m}\, \left\|\,{\rm Tr}_{\Lambda_n\setminus\Lambda_m}\tau_n \right.\right) &=& H\left(\left.P_\beta^{\otimes m}\,\right\|\, Q\right)
   =\sum_{x\in\{0,1\}^m} P_\beta^{\otimes m}(x)\left( \log P_\beta^{\otimes m}(x)-\log Q(x)\right)\nonumber\\
   &=&\sum_{k=0}^m {m\choose k} u^k (1-u)^{m-k}\left(\log P_k-\log Q_k\right), \label{eqRelEntropy}
\end{eqnarray}
where
\begin{eqnarray}
   P_k&=&u^k(1-u)^{m-k},\label{eqPk}\\
   Q_k&=& \left(\dim T_n\right)^{-1}\cdot \sum_{\ell=\lceil n(u-\delta)\rceil}^{\lfloor n u \rfloor} {{n-m} \choose {\ell-k}}.\nonumber
\end{eqnarray}
Using that $(n-m)!=n!/[(n-m+1)(n-m+2)\ldots n]$ and similar identities for $(\ell -k)!$ and $[n-\ell-(m-k)]!$, we obtain
\[
   Q_k= \left(\dim T_n\right)^{-1}\cdot \sum_{\ell=\lceil n(u-\delta)\rceil}^{\lfloor n u \rfloor}\frac{n!\prod_{j=0}^{k-1}(\ell-j)
   \prod_{j=0}^{m-k-1}(n-\ell-j)}{\prod_{j=0}^{m-1}(n-j)\, \ell! (n-\ell)!}
\]
In order to eliminate all $\ell$-variables from all products, we substitute the inequalities
\begin{eqnarray*}
   \ell-j &\geq& \lceil n(u-\delta)\rceil - j,\\
   n-\ell-j &\geq& n - \lfloor n u \rfloor - j
\end{eqnarray*}
and obtain
\[
   Q_k\geq \left(\dim T_n\right)^{-1} \frac{\prod_{j=0}^{k-1}\left(\strut \lceil n(u-\delta)\rceil-j\right)\prod_{j=0}^{m-k-1}\left(\strut n-\lfloor nu\rfloor - j\right)}
   {\prod_{j=0}^{m-1}(n-j)}
   \sum_{\ell=\lceil n(u-\delta)\rceil}^{\lfloor n u \rfloor} {n\choose \ell}.
\]
Thus, the sum on the right-hand side exactly cancels the factor $\left(\dim T_n\right)^{-1}$ according to~(\ref{eqDim}), and we obtain
\begin{eqnarray*}
   \log Q_k&\geq& \sum_{j=0}^{k-1}\log\left(\strut \lceil n(u-\delta)\rceil-j\right)+\sum_{j=0}^{m-k-1}\log\left(\strut n - \lfloor nu\rfloor - j\right)
   -\sum_{j=0}^{m-1}\log(n-j) \\
   &=& \sum_{j=0}^{k-1} \log\frac{\lceil n(u-\delta)\rceil - j}{n-j-m+k} + \sum_{j=0}^{m-k-1}\log\frac{n-\lfloor n u \rfloor - j}{n-j}.
\end{eqnarray*}
It is easy to check that the addends in both sums are (negative and) decreasing functions in $j$; thus, we can lower-bound the sums by integrals:
\begin{eqnarray*}
   \log Q_k&\geq& \int_0^k \log\frac{\lceil n(u-\delta)\rceil - j}{n-j-m+k}\, dj + \int_0^{m-k}\log\frac{n-\lfloor n u \rfloor - j}{n-j}\, dj \\
   &=& \lceil n(u-\delta)\rceil \log\lceil n(u-\delta)\rceil - \lceil n(u-\delta)\rceil \log\left(\lceil n(u-\delta)\rceil - k\right) + k\log\left(\lceil n(u-\delta)\rceil - k\right)\\
   &&+(n-m)\log(n-m)+\left(n-\lfloor nu \rfloor\right)\log\left( n - \lfloor nu \rfloor\right)-\left(n-\lfloor n u \rfloor\right)\log\left(n-\lfloor n u \rfloor - m + k\right)\\
   &&+(m-k)\log\left(n-\lfloor n u \rfloor - m + k\right)-n\log n.
\end{eqnarray*}
The right-hand side contains the expressions $f\left(\strut \lceil n(u-\delta)\rceil\right)$ and $g\left(\strut n - \lfloor n u \rfloor\right)$, where
$f(x):=x\log x-x\log(x-k)+k\log(x-k)$ and $g(x):=x\log x - x \log(x-m+k)+(m-k)\log(x-m+k)$. It is easy to check that $f$ and $g$ are both increasing in the relevant
intervals, thus we have $f\left(\strut \lceil n(u-\delta)\rceil\right)\geq f\left(\strut n(u-\delta)\right)$ and $g\left(\strut n - \lfloor n u \rfloor\right)
\geq g(n-nu)$, and all the floors and ceilings in the inequality above can be dropped.

Due to~(\ref{eqPk}), we have $\log P_k=k\log u +(m-k)\log(1-u)$, thus
\begin{eqnarray*}
   \log P_k - \log Q_k&\leq& k\log u + (m-k)\log(1-u)-n(u-\delta)\log[n(u-\delta)]+n(u-\delta)\log[n(u-\delta)-k]\\
   && -k\log[n(u-\delta)-k]-(n-m)\log(n-m)-n(1-u)\log[n(1-u)]\\
   && + n(1-u)\log[n(1-u)-m+k]-(m-k)\log[n(1-u)-m+k]+n\log n.
\end{eqnarray*}
The largest contribution to the sum in~(\ref{eqRelEntropy}) will be those $k$ where $k\approx mu$. This motivates the definition
$\varepsilon_k:=k-mu$ (despite the name, this can be a negative number). Replacing all $k$ by $mu+\varepsilon_k$ yields
\begin{eqnarray*}
   \log P_k - \log Q_k&\leq& [n-m-nu+mu+\varepsilon_k]\log\left( 1+\frac{\varepsilon_k}{(1-u)(n-m)}\right)-(mu+\varepsilon_k)\log\left(1-\frac\delta u\right)\\
   &&+(nu-mu-\varepsilon_k-n\delta)\log\left( 1 - \frac{\varepsilon_k+m\delta}{(u-\delta)(n-m)}\right) - n \delta\log \left(1-\frac m n\right).
\end{eqnarray*}
All real numbers $x>-1$ satisfy $x/(1+x)\leq \log(1+x)\leq x$. Thus
\begin{eqnarray}
   \log P_k - \log Q_k&\leq& [(n-m)(1-u)+\varepsilon_k]\cdot \frac{\varepsilon_k}{(1-u)(n-m)} + (mu+\varepsilon_k)\frac{\delta/u}{1-\delta/u}\nonumber\\
   && + [n(u-\delta)-m u-\varepsilon_k]\left( - \frac{\varepsilon_k+m\delta}{(u-\delta)(n-m)}\right)+n\delta \frac{m/n}{1-m/n}.\label{eqSubstThisS}
\end{eqnarray}
We have the following three equations for the Binomial distribution:
\begin{eqnarray}
   \sum_{k=0}^m {m \choose k}u^k (1-u)^{m-k} &=& 1,\label{eqBinomial1}\\
   \sum_{k=0}^m {m \choose k}u^k (1-u)^{m-k}\varepsilon_k &=& 0, \label{eqBinomial2}\\
   \sum_{k=0}^m {m \choose k}u^k (1-u)^{m-k}\varepsilon_k^2 &=& mu(1-u),\label{eqBinomial3}
\end{eqnarray}
where~(\ref{eqBinomial1}) is simply the normalization of the Binomial distribution, (\ref{eqBinomial3}) is its variance, and~(\ref{eqBinomial2})
follows from its expectation value. Thus, when substituting~(\ref{eqSubstThisS}) into the expression~(\ref{eqRelEntropy}) for the relative entropy,
we can drop all terms linear in $\varepsilon_k$. We obtain
\begin{eqnarray*}
    S\left(\strut \gamma_\beta^{\otimes m}\, \left\|\,{\rm Tr}_{\Lambda_n\setminus\Lambda_m}\tau_n \right.\right) &\leq& \sum_{k=0}^m {m\choose k} u^k (1-u)^{m-k}\left[
       \frac{\varepsilon_k^2}{(1-u)(n-m)} + \frac{m\delta}{1-\delta/u}+n\delta \frac m {n-m}\right. \\
       \ &&\left. \qquad\qquad\qquad\qquad\qquad\qquad +\frac{\varepsilon_k^2}{(u-\delta)(n-m)}-\frac{m\delta}{(u-\delta)(n-m)}
       \left(n(u-\delta)-m u\right)
    \right]\\
    &=& \frac {(1-\delta)u}{u-\delta}\cdot \frac m {n-m}+\frac{m u \delta}{u-\delta}\left(1+\frac m {n-m}\right).
\end{eqnarray*}
This proves the claim.
\qed

\begin{theorem}
\label{TheMainFiniteSize}
Consider the case of qubits, i.e.\ $d=2$, and shift the energies such that $E_0=0$ and $E_1=1$. Suppose that $\tau_n$ is the maximal mixture on the non-trivial microcanonical
subspace corresponding to the energy interval $n\cdot [u-\delta,u]$,
with $0\leq\delta<u< \frac 1 2$, and $\gamma_\beta$ is the single-site Gibbs state with corresponding inverse temperature $\beta$.
If the size of the subsystem $m$ is large enough such that $\frac{20}m\log\frac m u \leq\log\frac{1-u}u$, and at the same time
$5\leq m\leq n(u-\delta)$, then we have
\[
   \left\| {\rm Tr}_{\Lambda_n\setminus\Lambda_m}\tau_n - \gamma_\beta^{\otimes m}\right\|_1 \leq
   \frac{2\delta}{n\sqrt{u}}+\sqrt{\frac m {n-m}\left(1+\frac{4\log n}{\log\frac{1-u}u}\right)}.
\]
\end{theorem}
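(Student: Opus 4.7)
The plan is to reduce the theorem to the sharp-energy bound of Lemma~\ref{LemResultDeltaZero} by truncating $\tau_n$ to a thin slice at the top of the energy window, where the Hamming-weight distribution of the flat microcanonical measure is exponentially concentrated.

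First I decompose $\tau_n = \sum_\ell p_\ell \tau_n^{(\ell)}$, where $\tau_n^{(\ell)}$ is the flat mixture on the binary strings of Hamming weight $\ell$ and $p_\ell = \binom{n}{\ell}/\sum_{\ell'}\binom{n}{\ell'}$, with $\ell$ running over $[\lceil n(u-\delta)\rceil,\ell^*]$ and $\ell^* := \lfloor nu\rfloor$. Since $u<\tfrac12$, the binomial coefficients are monotone increasing on this range, as $\binom{n}{\ell-1}/\binom{n}{\ell} = \ell/(n-\ell+1) \leq u/(1-u) =: q < 1$. In particular, the weights decay geometrically below $\ell^*$: $p_{\ell^*-k} \leq q^k p_{\ell^*}$.

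Next I truncate: choose $K := \lceil 2\log n/\log(1/q)\rceil$ and let $\tilde\tau_n$ be the flat state on weights in $[\max\{\ell^*-K,\lceil n(u-\delta)\rceil\},\,\ell^*]$. The geometric decay yields
\[
\|\tau_n-\tilde\tau_n\|_1 \;\leq\; 2\!\!\sum_{\ell<\ell^*-K}\! p_\ell \;\leq\; \frac{2q^{K+1}}{1-q} \;=\; O(n^{-2}),
\]
which will be absorbed into the additive term $2\delta/(n\sqrt u)$. Crucially, $\tilde\tau_n$ is itself a microcanonical ensemble for parameters $u'' := \ell^*/n$ (so $|u''-u|\leq 1/n$) and $\delta'' := K/n = O(\log n/n)$. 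The hypothesis $\tfrac{20}{m}\log\tfrac{m}{u}\leq \log\tfrac{1-u}{u}$ guarantees that $K$ is small enough that $\delta''<u''$ and $m\leq n(u''-\delta'')$, so Lemma~\ref{LemResultDeltaZero} applies to $\tilde\tau_n$ and gives
\[
S\!\left(\gamma_{\beta''}^{\otimes m} \,\big\|\, \Tr_{\Lambda_n\setminus\Lambda_m}\tilde\tau_n\right) \;\leq\; \frac{(1-\delta'')u''}{u''-\delta''}\cdot\frac{m}{n-m} + \frac{m\,u''\delta''}{u''-\delta''}\!\left(1+\frac{m}{n-m}\right),
\]
where $\gamma_{\beta''}=\mathrm{diag}(1-u'',u'')$. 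With $\delta''=K/n$ and $K \leq 2\log n/\log((1-u)/u)$, the right-hand side collapses, after absorbing the $\delta''$-dependent correction into a multiplicative $\log n$ factor, into $\frac{m}{n-m}\bigl(1+4\log n/\log\tfrac{1-u}{u}\bigr)$, matching the expression inside the square root of the claim.

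Finally I apply Pinsker's inequality $\|\rho-\sigma\|_1\leq \sqrt{2\,S(\sigma\|\rho)}$ to produce the square-root term, and then replace $\gamma_{\beta''}^{\otimes m}$ by $\gamma_\beta^{\otimes m}$ via sub-additivity of trace distance on tensor products: $\|\gamma_\beta^{\otimes m}-\gamma_{\beta''}^{\otimes m}\|_1 \leq m\,\|\gamma_\beta-\gamma_{\beta''}\|_1 = 2m|u-u''| \leq 2m/n$. Combining this shift, the truncation tail from Step~2, and the Pinsker bound through the triangle inequality yields the stated inequality. The main obstacle will be matching the constants exactly---especially the unusual form $2\delta/(n\sqrt u)$ of the additive term, which demands a careful balance between the truncation cutoff $K$, the reference-state mismatch, and the Pinsker constant; the side conditions $m\geq 5$ and $\tfrac{20}{m}\log\tfrac{m}{u}\leq\log\tfrac{1-u}{u}$ in the hypothesis are precisely what makes this balance non-vacuous and ensures the factor of~$4$ (rather than a larger multiple) in front of $\log n$.
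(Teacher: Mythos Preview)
Your overall strategy is exactly the paper's: split $\tau_n$ into a thin slice near the top of the energy window and a tail, bound the tail by exponential decay of the binomial weights, and apply Lemma~\ref{LemResultDeltaZero} plus Pinsker to the thin slice. The paper does this with the decomposition $\tau_n=\mu_n^{[u-\delta,u-\alpha_n)}\tau_n^{[u-\delta,u-\alpha_n)}+\mu_n^{[u-\alpha_n,u]}\tau_n^{[u-\alpha_n,u]}$ for a cutoff $\alpha_n=\frac1n+\frac{2\log n}{n\log c_n}$, which is your $K/n$ up to the $1/n$ shift.

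There is one genuine gap in your execution. You parametrize the thin slice by its \emph{actual} top weight $u''=\lfloor nu\rfloor/n$ and then invoke Lemma~\ref{LemResultDeltaZero} with the shifted Gibbs state $\gamma_{\beta''}=\mathrm{diag}(1-u'',u'')$; this forces your Step~6, which costs you an additive $\|\gamma_\beta^{\otimes m}-\gamma_{\beta''}^{\otimes m}\|_1\leq 2m/n$. That term is \emph{not} absorbable into the stated bound: for $m$ of order $n(u-\delta)$ it is $\Theta(1)$ and comparable to (or larger than) the square-root term, so the claimed constants fail. The fix is free: the thin slice is literally the microcanonical state for the interval $[u-\alpha,u]$ with $\alpha=u-(\ell^*-K)/n$, so you can apply Lemma~\ref{LemResultDeltaZero} with upper endpoint $u$ and the \emph{original} $\gamma_\beta$---this is what the paper does, and it makes Step~6 disappear.

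A secondary point: your geometric tail bound $\sum_{\ell<\ell^*-K}p_\ell\leq q^{K+1}/(1-q)=O(n^{-2})$ loses the $\delta$-dependence, so it does not obviously sit below $2\delta/(n\sqrt u)$ for all admissible $\delta$. The paper instead bounds this tail by (number of terms)$\times$(largest term), i.e.\ $n\delta\cdot\binom{n}{\lfloor n(u-\alpha_n)\rfloor}/\binom{n}{\lfloor nu\rfloor}$, and then uses the Stirling-type estimate $\binom{n}{np}e^{-nH(p)}\in[1/\sqrt{8np(1-p)},\,1/\sqrt{\pi np(1-p)}]$ to extract the factor $\delta/(n\sqrt u)$ directly. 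Your argument can be patched the same way (count terms rather than summing the full geometric series), but as written it does not deliver the additive term in the form stated.
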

\proof
We start by introducing some notation. For arbitrary subsets $S\subseteq [0,u]$ define $\tau_n^S$ to be the maximally mixed state on the subspace
\[
   T_n^S:={\rm span}\left\{|s\rangle\,\,\left|\,\ s\in\{0,1\}^n,\enspace \frac 1 n \sum_{i=1}^n E_{s_i}\in S\right.\right\}.
\]
As before, we set $T_n:=T_n^{[u-\delta,u]}$ and $\tau_n:=\tau_n^{[u-\delta,u]}$. Moreover, define
\[
   \mu_n^{S}:=\frac{\dim T_n^S}{\dim T_n};
\]
then, if we write $[u-\delta,u]$ as any disjoint union of two sets $S$ and $T$, the microcanonical state can be written as a convex combination,
$\tau_n=\mu_n^S\tau_n^S+\mu_n^T \tau_n^T$.
In the following, $(\alpha_n)_{n\in\N}$ will be any sequence of positive real numbers tending to zero, satisfying $1/n<\alpha_n<u-1/n$, to be specified later. We start with the identity
\[
   \tau_n=\mu_n^{\left[ u-\delta,u-\alpha_n\right)} \tau_n^{\left[ u-\delta,u-\alpha_n\right)}
   +\mu_n^{\left[ u-\alpha_n,u\right]} \tau_n^{\left[ u-\alpha_n ,u\right]}.
\]
Due to convexity, the Pinsker inequality, and Lemma~\ref{LemResultDeltaZero}, we have
\begin{eqnarray}
   \left\| {\rm Tr}_{\Lambda_n\setminus\Lambda_m}\tau_n - \gamma_\beta^{\otimes m}\right\|_1 &\leq&  \mu_n^{[u-\delta,u-\alpha_n)} \left\| {\rm Tr}_{\Lambda_n\setminus\Lambda_m}\tau_n^{[u-\delta,u-\alpha_n)} - \gamma_\beta^{\otimes m}\right\|_1
   +\mu_n^{[u-\alpha_n,u]} \left\| {\rm Tr}_{\Lambda_n\setminus\Lambda_m}\tau_n^{[u-\alpha_n,u]} - \gamma_\beta^{\otimes m}\right\|_1 \nonumber\\
   &\leq& 2 \mu_n^{[u-\delta,u-\alpha_n)}+ \sqrt{ \frac 1 2 S\left(\strut \gamma_\beta^{\otimes m}\, \left\|\,{\rm Tr}_{\Lambda_n\setminus\Lambda_m}\tau_n^{[u-\alpha_n,u]} \right.\right)} \nonumber\\
   &\leq& 2 \mu_n^{[u-\delta,u-\alpha_n)}+ \sqrt{\frac 1 2} \cdot\sqrt{\frac{(1-\alpha_n)u}{u-\alpha_n}\cdot\frac m {n-m} + \frac{m u \alpha_n}{u-\alpha_n}
   \left(1+\frac m {n-m}\right)}.\label{eqMainIneq}
\end{eqnarray}
Let $\tilde u$ be the largest $p\in[u-\delta,u]$ with the property that $T_n^{\{p\}}\neq\{0\}$; it is given by the equation $\lfloor u\cdot n\rfloor = \tilde u\cdot n$.
Then we can upper-bound the measure $\mu_n^{[u-\delta,u-\alpha_n)}$ in the following way:
\begin{eqnarray}
   \mu_n^{[u-\delta,u-\alpha_n)}&=& \frac{\sum_{p\in[u-\delta,u-\alpha_n)} \dim T_n^{\{p\}}}{\sum_{p\in[u-\delta,u]} \dim T_n^{\{p\}}}
   \leq \frac{\sum_{j\in\N: j/n \in [u-\delta,u-\alpha_n)} {n\choose j}}{\dim T_n^{\{\tilde u\}}}
   \leq \frac{\# \{j\in\N:j/n \in [u-\delta,u-\alpha_n)\} \cdot {n\choose{\lfloor n(u-\alpha_n)\rfloor}}}{{n\choose{n\tilde u}}}\nonumber\\
   &\leq& \frac{n\delta{n\choose{\lfloor n(u-\alpha_n)\rfloor}}}{{n\choose{\lfloor nu\rfloor}}}.\label{eqUpperBoundMu}
\end{eqnarray}
The Binomial coefficients can be estimated by using Lemma 17.5.1 in~\cite{CoverThomas}:
For $0<p<1$ such that $np$ is an integer, we have
\[
   \frac 1 {\sqrt{8np(1-p)}} \leq {n\choose{np}} e^{-nH(p)} \leq \frac 1 {\sqrt{\pi n p (1-p)}},
\]
where $H(p)=-p\log p -(1-p)\log (1-p)$ is the binary entropy function. Substituting this into~(\ref{eqUpperBoundMu}), defining $p$ by $np=\lfloor n(u-\alpha_n)\rfloor$,
and using that $p\leq u-\alpha_n$ as
well as $u\geq \tilde u\geq u-1/n$, we obtain
\begin{eqnarray}
   \mu_n^{[u-\delta,u-\alpha_n)}&\leq&n\delta{n\choose{np}} {n\choose {n\tilde u}}^{-1}\leq n\delta\sqrt{\frac{\tilde u(1-\tilde u)}{p(1-p)}} e^{n[H(u-\alpha_n)-H(u-1/n)]}
   \label{eqMuAgain}
\end{eqnarray}
(note that $u>1/n$ due to $n\geq m/(u-\delta)>m/u\geq 1/u$). Since the binary entropy function $H$ is concave in the interval $[0,1/2]$, we have
\[
   H(u-\alpha_n)\leq H\left(u-\frac 1 n\right)-H'\left(u-\frac 1 n\right)\cdot \left(\alpha_n-\frac 1 n\right)\quad\Rightarrow
   \quad H(u-\alpha_n)-H\left(u-\frac 1 n\right)\leq -\left(\alpha_n-\frac 1 n\right)\log\frac{1-(u-1/n)}{u-1/n}.
\]
Substituting this and $\tilde u(1-\tilde u)\leq 1/4$ as well as $1/\sqrt{p(1-p)}\leq\sqrt{2/p}$ and $p\geq u-\alpha_n-1/n$ into~(\ref{eqMuAgain}), we get
\[
   \mu_n^{[u-\delta,u-\alpha_n)} \leq \frac 1 2 n\delta \sqrt{\frac 2 {u-\alpha_n-1/n}}\,
   (c_n)^{-n\left(\alpha_n-\frac 1 n\right)},\qquad\mbox{where }
   c_n=\left(
      \frac{1-\left(u-\frac 1 n\right)}{u-\frac 1 n}
   \right).
\]
Now set
\begin{equation}
   \alpha_n:=\frac 1 n +\frac{2\log n}{n \log c_n}=\mathcal{O}\left(\frac{\log n} n\right).
   \label{DefAlphan}
\end{equation}
Since $n>m/u$, this is less than $u-1/n$ as necessary if $m$ is large enough; it turns out that $m\geq 5$ and $(20/m) \log(m/u)\leq\log((1-u)/u)$
gives in fact $\alpha_n<u/2-1/n$. This yields $(c_n)^{-n\left(\alpha_n-\frac 1 n\right)}=n^{-2}$, and so
\[
   \mu_n^{[u-\delta,u-\alpha_n)} \leq \frac\delta{n\sqrt{u}}.
\]
Substituting this and $u-\alpha_n>u/2$ as well as $m/(n-m)\leq 1$ into~(\ref{eqMainIneq}) yields
\begin{eqnarray*}
\left\| {\rm Tr}_{\Lambda_n\setminus\Lambda_m}\tau_n - \gamma_\beta^{\otimes m}\right\|_1 &\leq& \frac{2\delta}{n\sqrt{u}}+\frac 1 {\sqrt{2}}\sqrt{2\frac m {n-m}+4m\alpha_n}.
\end{eqnarray*}
Then the claim follows by substituting~(\ref{DefAlphan}) and  $\log c_n=\log(1-u+1/n)-\log(u-1/n)>\log(1-u)-\log u$.
\qed

This mainly recovers the result depicted in Figure~\ref{fig_min_n}, where the size of the ``bath'', $n-m$, has to be increased linearly
with the size of the subsystem, $m$, to achieve a fixed error. In this theorem, for $\delta>0$, the $(\log n)$-term contributes a small correction to this
behavior, and $n$ has to be increased slightly super-linearly with $m$.

\subsection{Numerical results on finite-size behavior in one dimension}
\label{SecNumerical}
Here we provide numerical examples that not only show that random local Hamiltonians satisfy our requirements for canonical typicality and dynamical thermalization, but also that the replacing the global Gibbs state with the local Gibbs state does not give the correct statistics.  This emphasizes that entanglement is key to understanding why closed quantum systems can conform to thermodynamic predictions.
The class of Hamiltonians that we consider are random $2$--local Hamiltonians acting on $n$ qubits on a line with periodic boundary conditions:
\begin{equation}
H_{\Lambda_n}^p= \sum_{i=1}^n \left(\strut H_0^{(i)} + H_{\rm{int}}^{(i,i+1{~\rm mod~} n)}\right),
\end{equation}
where the onsite term is of the form for constants $a_1,a_2$ and $a_3$,
\begin{align}
H_0^{(i)} = a_{1} \sigma_x^{(i)} + a_{2}\sigma_y^{(i)}+a_{3}\sigma_z^{(i)},
\end{align}
and the interaction term takes the form, for constants $b_{1,1}, b_{1,2},\ldots,b_{3,3}$,
\begin{align}
H_{\rm int}^{(i,j)}= b_{1,1}\sigma_x^{(i)}\sigma_x^{(j)}+b_{1,2}\sigma_x^{(i)}\sigma_y^{(j)}+\cdots+b_{3,3}\sigma_z^{(i)}\sigma_z^{(j)}.
\end{align}
The constants $a_{i}$ and $b_{i,j}$ are chosen randomly according to a Gaussian distribution with zero mean and unit variance.  For ease of comparison, each random translationally invariant Hamiltonian is re-normalized to have unit norm. Note that one-dimensional translation-invariant systems with finite-range interaction do not exhibit finite temperature phase transitions.

The numerical experiments begin by drawing a random Hamiltonian $H_{\Lambda_n}^p$ for a fixed value of $\beta$ and energy window $\delta$.  The first step is to compute the energy density $u$ using $u=\frac{1}{|\Lambda_n|}{\rm Tr}\left(\gamma_{\Lambda_n}^p H_{\Lambda_n}^p \right)$ where $\gamma_{\Lambda_n}^p=\exp(-\beta H_{\Lambda_n}^p)/Z$ is the thermal state that results from the choice of $\beta$.  The Hamiltonian is then diagonalized and all energy eigenvectors within the window $(u-\delta, u)$ are found.  A random state $|\psi\rangle$ is then constructed out of the span of these vectors, and then we compute 
$\left\| {\rm Tr}_{\Lambda_n\setminus\Lambda_m} |\psi\rangle\!\langle\psi| - {\rm Tr}_{\Lambda_n\setminus\Lambda_m} \frac{\exp(-\beta H_{\Lambda_n}^p)}{Z}\right\|_1$, as per Theorem~\ref{TheCanonicalTypicalityPeriodic}.  We take the subsystem to consist of a single qubit, i.e.\ $m=1$, and the bath contain $n-1$ qubits  in all these examples.
This process is repeated for many such random Hamiltonians and we compute the mean and the standard deviation of these distances, which allows us to see whether the correspondence predicted by Theorem~\ref{TheCanonicalTypicalityPeriodic} is typical for this ensemble of random local Hamiltonians.

\begin{figure}[t!]
\begin{minipage}[t]{0.45\textwidth}
\includegraphics[width=\textwidth]{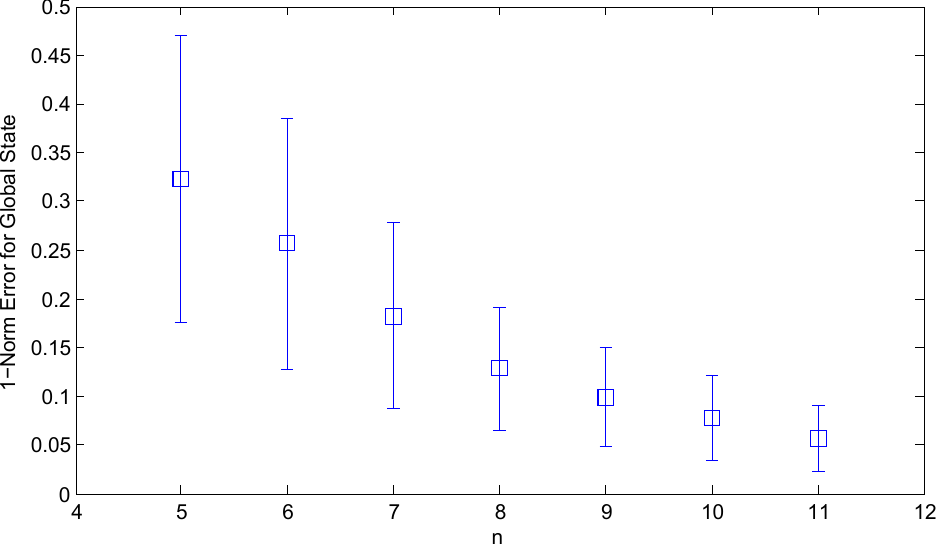}
\caption{$1$--norm difference between the reduced density operator and the reduced global Gibbs state for a system of $n$ qubits.  The squares represent the ensemble means and the error bars give the standard deviations of the differences between the Gibbs state and the subsystem trace.  The data was collected for $\beta=0.1$ and $\delta=0.02n$ and $400$ random Hamiltonians were considered for each $n$. \label{fig:global}}
\end{minipage}
\hspace{0.5cm}
\begin{minipage}[t]{0.45\textwidth}
\includegraphics[width=\textwidth]{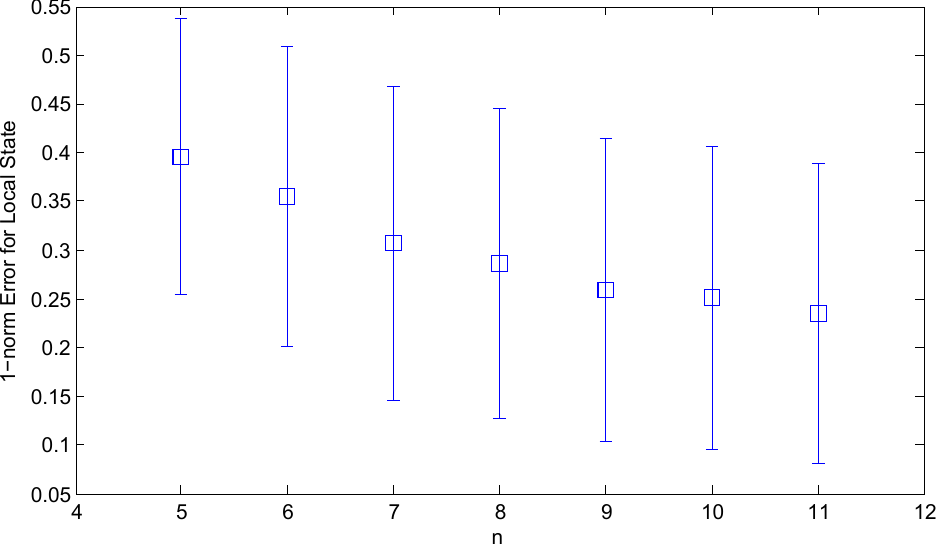}
\caption{$1$--norm difference between the reduced density operator and the reduced local Gibbs state for a system of $n$ qubits.  The squares and error bars are defined identically to those in Figure~\ref{fig:global}.  The data was collected for $\beta=0.1$ and $\delta=0.02n$ and $400$ random Hamiltonians were considered for each $n$. \label{fig:local}}
\end{minipage}
\end{figure}

The data in Figure~\ref{fig:global} shows that the distance between the reduced density matrix of the pure state and the Gibbs state shrinks as $n$ increases, roughly as $\mathcal{O}(1/n)$.  The error bars (representing the standard deviation of the discrepancy with the canonical state) also shrink as $n$ increases, illustrating that almost all such random translationally invariant $2$--local  Hamiltonians agree with the predictions of Theorem~\ref{TheCanonicalTypicalityPeriodic} and  in turn that there is a strong correspondence between the subsystem traces of the global Gibbs state and $|\psi\rangle\!\langle \psi|$.  

On the other hand, Figure~\ref{fig:local} shows that substituting the local Gibbs state for the subsystem trace of the global Gibbs state causes this correspondence to break down.  In particular, we see no clear evidence  that the ensemble mean of the differences between ${\rm Tr}_{\Lambda_n\setminus\Lambda_m} |\psi\rangle\!\langle\psi|$ and the local Gibbs state approaches zero as $n$ increases;  more tellingly, the standard deviation of the differences does not seem to decrease with $n$.  These results suggest that even as $n$ increases, ${\rm Tr}_{\Lambda_n\setminus\Lambda_m} |\psi\rangle\!\langle\psi|$ remains distinct from the local Gibbs state.  Thus the correspondence suggested by~Theorem~\ref{TheCanonicalTypicalityPeriodic} is correct and the na\"ive correspondence between the local Gibbs state and ${\rm Tr}_{\Lambda_n\setminus\Lambda_m} |\psi\rangle\!\langle\psi|$ is incorrect.

Regarding dynamical thermalization,
there are two caveats that we need to check in order to justify the applicability of Theorem~\ref{TheThermalizationPeriodic}.
First, we need to ensure that almost all Hamiltonians drawn from this random ensemble are non--degenerate,
in order to ensure thermalization for arbitrary initial states with maximal population entropy. Figure~\ref{fig:gap} shows that the probability of small eigenvalue gaps is suppressed, hence Hamiltonians that are typical of the random local Hamiltonian ensemble will be non--degenerate.  Second, we need to show that the gap degeneracy $D_G(H_{\Lambda_n}^p)$ is not too large.
Figure~\ref{fig:gapspace} shows that, with high probability, the eigenvalue gaps between any two energy levels will be distinct from any other such gap in the system, hence $D_G(H_{\Lambda_n}^p)=1$ with high probability.

These results illustrate the application of our results to a wide range of physically realistic random $2$--local Hamiltonians.  It is further reasonable to expect that broad classes of physically realistic closed quantum systems will agree with the canonical distribution, illuminating the mechanism by which thermodynamics emerges for macroscopic closed quantum systems.

\begin{figure}[t]
\begin{minipage}[t]{0.45\textwidth}
\includegraphics[width=\textwidth]{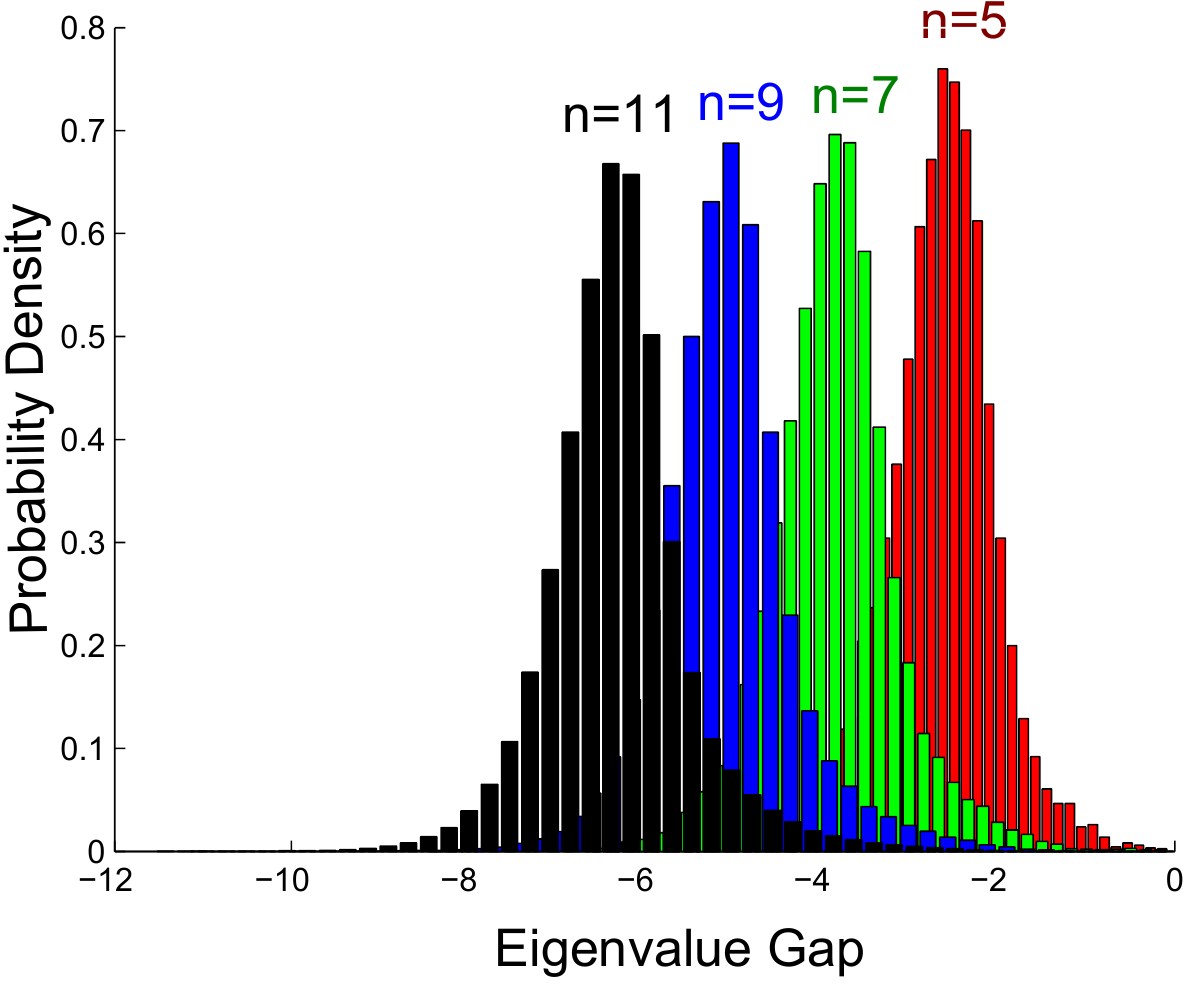}
\caption{Probability density of eigenvalue gaps for random Hamiltonians with $n=5,7,9$ and $11$ qubits.  The $x$--axis is $\log_{10}({\rm gap})$ for $100$ random Hamiltonians.  No degenerate eigenvalues were ever detected in this sample within numerical error.\label{fig:gap}}
\end{minipage}
\hspace{0.5cm}
\begin{minipage}[t]{0.45\textwidth}
\includegraphics[width=\textwidth]{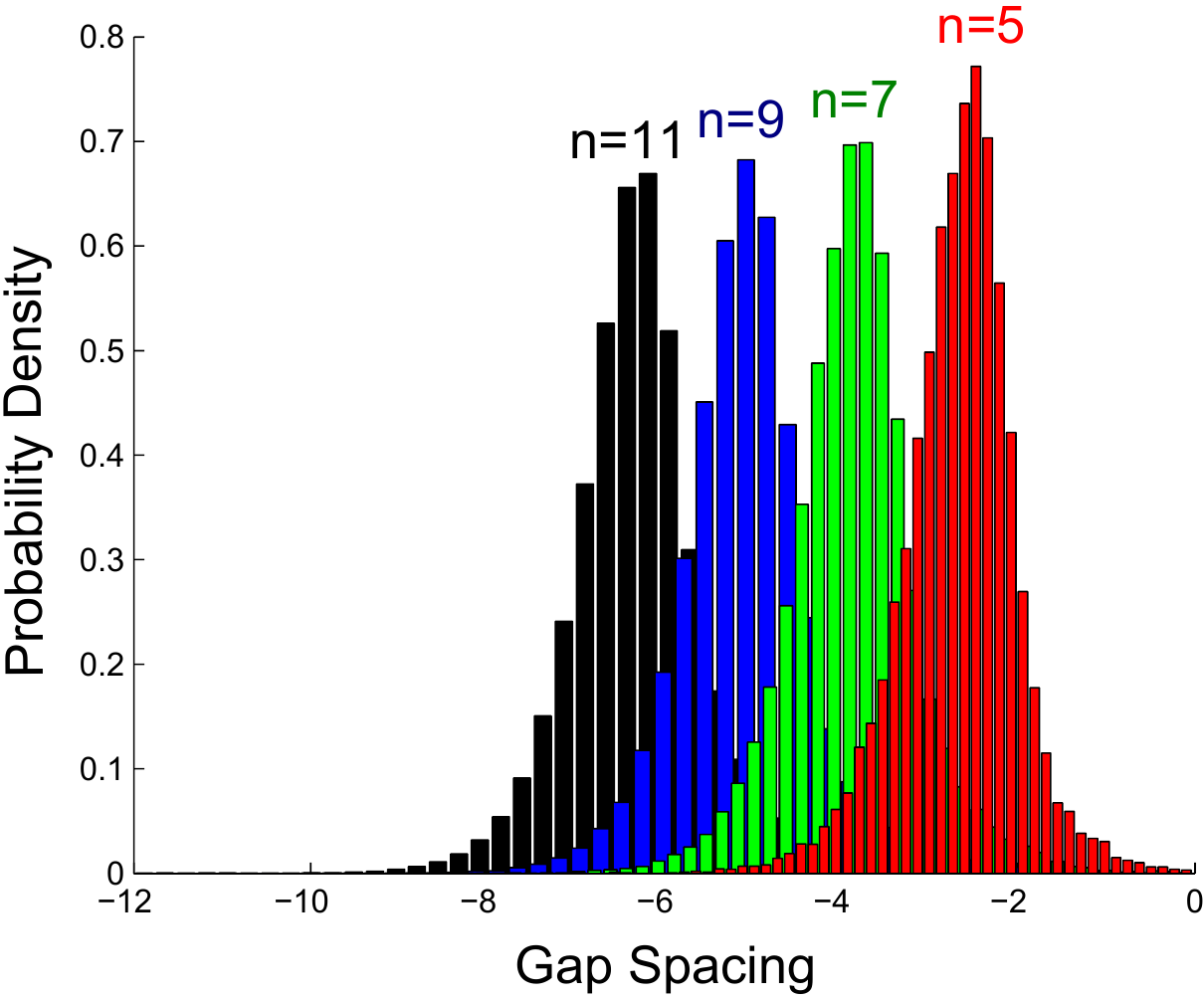}
\caption{Probability density of for the eigenvalue gap spacings for random Hamiltonians with $n=5,7,9$ and $11$ qubits.  The $x$--axis is $\log_{10}({\rm gap}(\rm{gap}))$ for $100$ random Hamiltonians.  No degenerate eigenvalue gaps were ever detected in this sample within numerical error.\label{fig:gapspace}}
\end{minipage}
\end{figure}

\subsection{Local diagonality of energy eigenstates}
\label{SecLocalDiag}

A strong sense in which the eigenstates of a local Hamiltonian $H$ could thermalize is that their reduced density matrix of a region $\Lambda$ (much smaller than the full lattice $\Lambda_{\rm lattice} =\Lambda_n$) is approximately equal to a thermal state in that region,
\begin{equation}\label{nt}
  {\rm Tr}_{\bar \Lambda} |E\rangle\! \langle E| 
  \ \approx\ 
  \frac{e^{-\beta H_\Lambda} }{{\rm tr}\, e^{-\beta H_\Lambda}}\ ,
\end{equation}
where ${\rm Tr}_{\bar \Lambda}$ denotes trace on the Hilbert space associated to the complementary region $\bar\Lambda= \Lambda_{\rm lattice} \setminus \Lambda$, and $H_\Lambda$ is the sum of all terms of $H$ which are fully contained in the region $\Lambda$. The inverse temperature $\beta$ should be chosen such that $\langle E| H_\Lambda |E\rangle = {\rm tr}( H_\Lambda e^{-\beta H_\Lambda}) /{\rm tr}\, e^{-\beta H_\Lambda}$ holds.

\begin{figure}[!hbt]
\begin{center}
\includegraphics[angle=0, width=4cm]{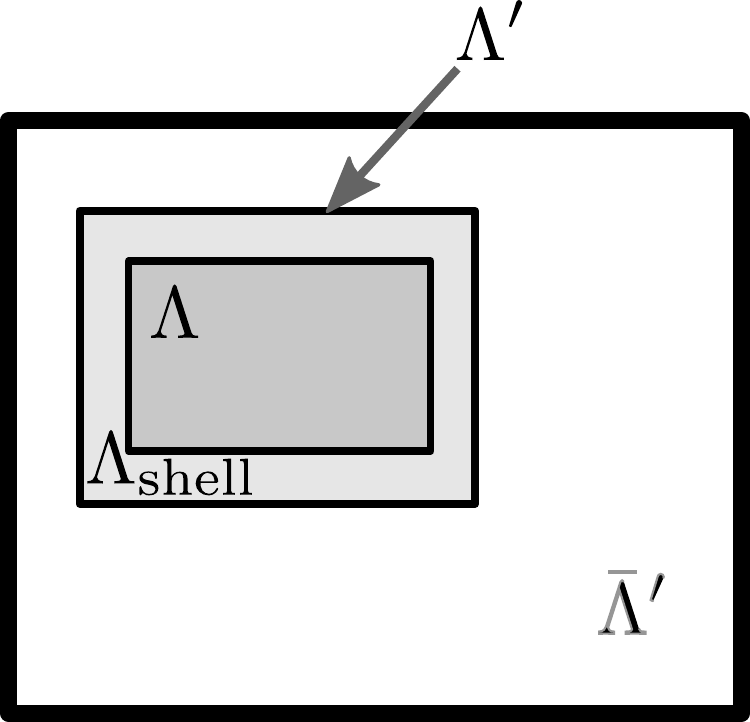}
\caption{Subdivision of the whole lattice, $\Lambda_{\rm lattice}=\Lambda_n$, into regions as used in this subsection. We have $\Lambda'=\Lambda\cap\Lambda_{\rm shell}$.}
\label{fig_regions}
\end{center}
\end{figure}
A possible concern is that the Hamiltonian $H_\Lambda$ has open boundary conditions, hence we expect boundary effects in the eigenstates of $H_\Lambda$ which are not present in ${\rm Tr}_{\bar \Lambda} |E\rangle\! \langle E|$; and this makes unlikely that relation~\eqref{nt} holds. 
A way to get rid of the boundary effects is by defining a slightly larger region $\Lambda'$ which includes a shell of width $l$ around $\Lambda$, cf.\ Figure~\ref{fig_regions}; that is
\begin{equation}\label{lp}
  \Lambda' := \{ x\in \Lambda_{\rm lattice} : \exists y\in \Lambda : {\rm dist}(x,y) \leq l \}\ .
\end{equation}
If instead of~\eqref{nt} we consider the thermal state in $\Lambda'$ and trace out the shell $\Lambda_\mathrm{shell}:= \Lambda' \setminus \Lambda$, then the approximate equality
\begin{equation}\label{yt}
  {\rm Tr}_{\bar \Lambda} |E\rangle\! \langle E| 
  \ \approx\ 
  {\rm Tr}_{\Lambda_\mathrm{shell}}\!
  \left(
  \frac{e^{-\beta H_{\Lambda'}} }{{\rm tr}\, e^{-\beta H_{\Lambda'}}} \right)
\end{equation}
is more likely to hold in generic systems, because by tracing out the shell we may eliminate the boundary effects of the eigenvectors of $H_{\Lambda'}$. (As before, we denote by $H_{\Lambda'}$ the sum of all terms in $H$ which are fully contained in $\Lambda'$.)

It is expected that the relation~\eqref{yt} holds for generic local Hamiltonians, but not for all local Hamiltonians. For example, consider the translational-invariant quantum Ising Hamiltonian in one dimension that we analyzed in Subsection~\ref{SecIsing}. This is a Hamiltonian without interaction terms, such as $H_{\Lambda_{\rm shell}}=\sum_{i=1}^n h_i$ for
$\Lambda_{\rm shell}=[1,n]$, with constant single-site
terms $h_i$. If, for example, $h_i=\left(\begin{array}{cc} 1 & 0 \\ 0 & -1 \end{array}\right)$, the computational basis vectors $|E\rangle=|x_1 x_2\ldots x_n\rangle$ with $x_i\in\{0,1\}$
are energy eigenstates. Even for those eigenstates that correspond to finite energies $E>0$ with corresponding inverse temperature $\beta<\infty$, the local reduced state on
$\Lambda=[1,m]$, $m\ll n$, is $\Tr_{\bar\Lambda}|E\rangle\langle E|=|x_1\ldots x_m\rangle\langle x_1\ldots x_m|$. This is a pure state, far away from any thermal state
of temperature $\beta$. Thus, \eqref{yt} does not hold for the Ising model.

In summary, extra conditions are necessary for~\eqref{yt} to hold. Folk wisdom tells us that such conditions could be along the lines of non-integrability, although this is not yet a clear and mathematically well-defined concept within quantum theory. In this work, we follow a different approach: instead of looking for additional conditions, we relax the statement~\eqref{yt}. One way to do this is by noticing that the state $e^{-\beta H_{\Lambda'}} /{\tr}\, e^{-\beta H_{\Lambda'}}$ is diagonal in the eigenbasis of $H_{\Lambda'}$. Our weakened statement is informally the following:

\begin{quote}{\em
For any eigenvalue $E$ of $H$ there is a density matrix $\omega_E$ defined in the extended region $\Lambda'$ which is weakly diagonal in the eigenbasis of $H_{\Lambda'}$ and satisfies}
\[
  {\rm Tr}_{\bar \Lambda} |E\rangle\! \langle E| 
  \ \approx\ 
  {\rm Tr}_{\Lambda_\mathrm{shell}}
  \omega_E \ .
\]
\end{quote}
The meaning of weakly diagonal will be made precise in the statement of the theorem below. But before, let us specify the type of systems that we are considering.
Exactly as explained at the beginning of Subsection~\ref{SecEquiv}, we consider local Hamiltonians on a cubic lattice, with a finite-dimensional Hilbert space at each site.
By local we mean that the Hamiltonian $H$ has finite interaction range $r$. This means that if we write it as
\[
  H = \sum_{{\cal X} \subseteq \Lambda_\mathrm{lattice}} \Phi({\cal X})\ ,
\]
where $\Phi({\cal X})$ has only support on the region ${\cal X}$, then for any region ${\cal X} \subseteq \Lambda_\mathrm{lattice}$ such that ${\rm diam}\,{\cal X}:=\max_{x,x' \in {\cal X}} \mathrm{dist}(x,x') >r$ we have $\Phi({\cal X}) =0$ (the definition of ${\rm dist}$ is given in~(\ref{eqDefBoundary})).
However, in contrast to the previous subsection, we do not need to assume that the interaction is translation-invariant.
This type of Hamiltonian satisfies a Lieb-Robinson bound~\cite{LiebRobinson,Bravyi}
(see~\cite{Masanes} for a simpler proof). That is, let $X,Y$ be two matrices acting non-trivially in the regions ${\cal X,Y} \subseteq \Lambda_{\rm lattice}$ which are separated by a distance ${\rm dist}({\cal X,Y})$, and let $X(t)= e^{iHt}X e^{-iHt}$. There are positive constants $C,c,v$ such that
\begin{equation}\label{lrb}
  \| [X(t),Y] \|_\infty \ \leq\ 
  C\, \|X\|_\infty \|Y\|_\infty \min\{{\cal |X|,|Y|}\}\, e^{-c[{\rm dist}({\cal X,Y}) -v |t|]}\ .
\end{equation}
The constants $C,c,v$ only depend on coarse features of the lattice and the Hamiltonian, like the interaction length, and the $\|\cdot \|_\infty$-norm of the local terms in the Hamiltonian. The constant $v$ is called the Lieb-Robinson velocity, and it is an upper-bound for the speed at which information travels through the lattice.

\vspace{5mm}\noindent
\begin{theorem}[Weak local diagonality]
\label{TheWeakLocalDiagonality}
Let $\Phi$ be any finite-range interaction (not necessarily translation-invariant),
let $\Lambda \subseteq \Lambda_{\rm lattice}$ be any region of the lattice, and let $\Lambda' \subseteq \Lambda_{\rm lattice}$ be the set of points at distance not larger than $l$ from $\Lambda$, as defined in~\eqref{lp}.
Define the regions $\Lambda_\mathrm{shell}= \Lambda' \setminus \Lambda$ and the complements $\bar\Lambda= \Lambda_{\rm lattice} \setminus \Lambda$ and $\bar\Lambda' = \Lambda_{\rm lattice} \setminus \Lambda'$. Let $H$ be a local Hamiltonian as defined above, with finite interaction range $r\leq l$.
For each eigenvector $|E\rangle$ of the Hamiltonian $H$ we define the state $\omega_E$ in the region $\Lambda'$ as
\[
  \omega_E := \int_{-\infty}^\infty\!\! dt\, g(t)\, e^{-i H_{\Lambda'} t}\, {\rm Tr}_{\bar \Lambda'} (|E\rangle\! \langle E|)\,  e^{i H_{\Lambda'} t}\ ,
\]
where $g(t) = (2\pi \sigma^2)^{-1/2}\, e^{-t^2/(2 \sigma^2)}$
and $\sigma^2 = (l-r)/(4cv^2)$.
The state $\omega_E$ is weakly diagonal in the eigenbasis of $H_{\Lambda'}$, denoted $|e\rangle$, in the sense that
\begin{equation}\label{ineq1}
  \left| \langle e_1 | \omega_E |e_2 \rangle \right|
\leq
  e^{-(l-r) (e_1 -e_2)^2 /(8cv^2)}
  \ .
\end{equation}
The state $\omega_E$ is almost indistinguishable from $|E\rangle\! \langle E|$ inside the region $\Lambda$, that is
\begin{equation}\label{t2}
  \left\| {\rm Tr}_{\Lambda_\mathrm{shell}} (\omega_E) - 
  {\rm Tr}_{\bar\Lambda}(|E\rangle\! \langle E|) \right\|_1
\ \leq \frac 2 {\sqrt{2\pi}} A J \sigma(CA+2)e^{-c(l-r)/2},
\end{equation}
where $A$ is the number of subsets $X$ with $\Phi(X)\neq 0$ that have
non-empty intersection with both $\Lambda'$ and $\bar\Lambda'$.
\end{theorem}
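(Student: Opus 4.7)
The theorem makes two claims, which I would prove separately. The diagonality bound~\eqref{ineq1} is essentially a direct calculation. Writing $\rho := \Tr_{\bar\Lambda'}|E\rangle\langle E|$ and expanding in the eigenbasis $\{|e\rangle\}$ of $H_{\Lambda'}$, the phases from $e^{\pm iH_{\Lambda'}t}$ factor out, giving $\langle e_1|\omega_E|e_2\rangle = \langle e_1|\rho|e_2\rangle\int g(t)\,e^{-i(e_1-e_2)t}\,dt$. Since $g$ is a unit-mass Gaussian, its Fourier transform is again a Gaussian, yielding $e^{-\sigma^2(e_1-e_2)^2/2}$. Combined with $|\langle e_1|\rho|e_2\rangle|\leq\|\rho\|_\infty\leq 1$ and the choice $\sigma^2 = (l-r)/(4cv^2)$, this immediately gives~\eqref{ineq1}.

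For the closeness bound~\eqref{t2}, I would first dualize via $\|X\|_1 = \sup_{\|A\|_\infty\leq 1}|\Tr(AX)|$ for hermitian $X$. For any hermitian observable $A$ on $\Lambda$, extending by identity and using cyclicity yields $\Tr(A\cdot\Tr_{\Lambda_{\rm shell}}\omega_E) = \int g(t)\,\langle E|A(t)_{\Lambda'}|E\rangle\,dt$ with $A(t)_{\Lambda'} := e^{iH_{\Lambda'}t}Ae^{-iH_{\Lambda'}t}$; stationarity of $|E\rangle$ under the full $H$ then gives $\Tr(A\cdot\Tr_{\bar\Lambda}|E\rangle\langle E|) = \langle E|A|E\rangle = \int g(t)\,\langle E|A(t)|E\rangle\,dt$ with $A(t):=e^{iHt}Ae^{-iHt}$. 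Subtracting reduces the problem to estimating $\int g(t)\,\|A(t)-A(t)_{\Lambda'}\|_\infty\,dt$. Decomposing $H = H_{\Lambda'}+H_{\bar\Lambda'}+H_\partial$, with $H_\partial$ the sum of the $A$ boundary interaction terms, and noting that $H_{\bar\Lambda'}$ commutes with the $\Lambda'$-supported operator $A(\tau)_{\Lambda'}$, Duhamel's formula yields
\begin{equation*}
  A(t)-A(t)_{\Lambda'} \;=\; i\int_0^t e^{iHs}\,[H_\partial, A(t-s)_{\Lambda'}]\,e^{-iHs}\,ds.
\end{equation*}

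Each boundary region $X$ meets $\bar\Lambda'$ and has diameter at most $r$, so $\mathrm{dist}(\Lambda,X)\geq l-r$; the Lieb--Robinson bound~\eqref{lrb} applied to evolution under $H_{\Lambda'}$ (which has the same locality constants as $H$) therefore gives $\|[\Phi(X),A(\tau)_{\Lambda'}]\|_\infty\leq CJ\|A\|_\infty\,e^{-c[(l-r)-v|\tau|]}$. Summing over the $A$ boundary terms and integrating in $s$ produces the LR estimate $\|A(t)-A(t)_{\Lambda'}\|_\infty\lesssim (ACJ/(cv))\,\|A\|_\infty\,e^{-c(l-r)+cv|t|}$, to be combined with the universal bound $\leq 2\|A\|_\infty$ whenever the former exceeds it. The endgame is to integrate against $g(t)$ by splitting at $T_0:=(l-r)/(2v)$: on $|t|\leq T_0$ the LR bound supplies the factor $e^{-c(l-r)/2}$, while on $|t|>T_0$ the Gaussian tail yields the same $e^{-c(l-r)/2}$, because the variance $\sigma^2=(l-r)/(4cv^2)$ is chosen precisely so that $T_0^2/(2\sigma^2)=c(l-r)/2$. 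Taking the supremum over $\|A\|_\infty\leq 1$ then produces~\eqref{t2}. The hard part is not conceptual but purely bookkeeping: reconciling the two regime estimates with matching prefactors so as to obtain exactly $\frac{2}{\sqrt{2\pi}}AJ\sigma(CA+2)e^{-c(l-r)/2}$, where the $CA$ in the additive factor tracks the LR contribution from the $A$ boundary commutators inside the Gaussian core, and the additive $2$ tracks the trivial contribution from the Gaussian tail.
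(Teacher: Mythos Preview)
Your proposal is correct and follows essentially the same route as the paper: the Gaussian Fourier transform for~\eqref{ineq1}, and dualization plus stationarity of $|E\rangle$ plus a Duhamel-type identity plus the Lieb--Robinson bound plus a split at $t_0=(l-r)/(2v)$ for~\eqref{t2}. The only cosmetic difference is in how the exact constant is extracted: the paper bounds the time integral of the commutator crudely by $|t|$ times its maximum, $\int_0^{|t|}\|[H_A,e^{iH_{\Lambda'}t_1}Xe^{-iH_{\Lambda'}t_1}]\|_\infty\,dt_1 \leq \|H_A\|_\infty\min\{2|t|,\,CA|t|e^{-c(l-r)+cv|t|}\}$, so that both regimes carry a factor of $|t|$ which then pairs with $g(t)$ via the exact identities $\int_0^\infty t\,g(t)\,dt=\sigma/\sqrt{2\pi}$ and $\int_{t_0}^\infty t\,g(t)\,dt=(\sigma/\sqrt{2\pi})e^{-t_0^2/(2\sigma^2)}$, yielding precisely the stated prefactor $\frac{2}{\sqrt{2\pi}}AJ\sigma(CA+2)$; your sharper Duhamel integration with the $1/(cv)$ factor would give an equivalent but cosmetically different constant.
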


Note that the number $A$ quantifies the size of the boundary of $\Lambda'$; so for a three-dimensional lattice, $A$ is an area. Also, we stress the fact that closeness in $\|\cdot \|_1$-norm is a very strong feature, and it really implies that the two states in the left-hand side of~\eqref{t2} are almost indistinguishable. The right-hand side of~\eqref{t2} can be made small by choosing the thickness of the shell to be
\[
  l \gtrsim \frac{6}{c} \log A +r\ .
\]
Still, for large regions $\Lambda$, the relative volume of the shell $lA/|\Lambda|$ vanishes. 

If the local dimension is $d$, then the dimension of the Hilbert space associated to the region $\Lambda'$ is $d^{|\Lambda'|}$. Hence, the expected size of the entries of $\omega_E$ is of the order of $d^{-|\Lambda'|}$, which is very small. This may rise the concern that bound~\eqref{ineq1} is trivial. To see that this is not the case, we note that the largest entry of $\omega_E$ is at least $d^{-|\Lambda'|}$. Also, since $H_{\Lambda'}$ is a local Hamiltonian, the range of energies is $\Delta e \sim J |\Lambda'|$. This implies that the exponent of~\eqref{ineq1} is proportional to $|\Lambda'|^2$, while the exponent of the largest entry is proportional to $|\Lambda'|$, which is much smaller. In summary, for large enough regions $|\Lambda|$,
the bound~\eqref{ineq1} is non-trivial. It is a consequence of the locality of interactions as expressed by the Lieb-Robinson bound.

\begin{proof}
Using the fact that the $|e\rangle$ are the eigenvectors of $H_{\Lambda'}$ we obtain
\begin{eqnarray*}
&&
  \langle e_1 | \omega_E |e_2 \rangle
=
  \int\!\! dt\, g(t)\, e^{-i (e_1 - e_2) t}\,   
  \langle e_1 | {\rm Tr}_{\bar \Lambda'} (|E\rangle\! \langle E|) |e_2 \rangle
=
  e^{-(e_1 -e_2)^2 \sigma^2/2}
  \langle e_1 | {\rm Tr}_{\bar \Lambda'} (|E\rangle\! \langle E|) |e_2 \rangle\ ,
\end{eqnarray*}
which implies~\eqref{ineq1}. Using the triangle inequality for the norm $\|\cdot\|_1$ we obtain
\begin{eqnarray}
\nonumber
  \left\| {\rm Tr}_{\Lambda_\mathrm{shell}} (\omega_E) - 
  {\rm Tr}_{\bar\Lambda}(|E\rangle\! \langle E|) \right\|_1 &=&
  \left\|   \int\!\! dt\, g(t)\,
  {\rm Tr}_{\bar \Lambda}\! \left( e^{-i H_{\Lambda'} t} |E\rangle\! \langle E|e^{i H_{\Lambda'} t} - |E\rangle\! \langle E| \right) \right\|_1
\\ 
&\leq &
  \int\!\! dt\, g(t)\, \left\|   
  {\rm Tr}_{\bar \Lambda}\! \left( e^{-i H_{\Lambda'} t} |E\rangle\! \langle E|e^{i H_{\Lambda'} t} - |E\rangle\! \langle E| \right) \right\|_1\ .
  \label{E7} 
\end{eqnarray}
Next, we use the identity $\|Y\|_1= \max_X |{\rm Tr}(XY)|$, where the maximum is over all Hermitian matrices $X$ which satisfy
$-\mathbf{1}\leq X\leq \mathbf{1}$. Since we apply this to an observable on $\Lambda$, it follows that $X$ is fully supported on $\Lambda$.
We also use the fact that $e^{i H t} |E\rangle\! \langle E| e^{-i H t}= |E\rangle\! \langle E|$ for any $t$, obtaining
\begin{eqnarray}
\nonumber
  \left\|   
  {\rm Tr}_{\bar \Lambda}\! \left( e^{-i H_{\Lambda'} t} |E\rangle\! \langle E|e^{i H_{\Lambda'} t} - |E\rangle\! \langle E| \right) \right\|_1
&=&
  \max_X \left|   
  {\rm Tr}\! \left[ X\left( e^{-i H_{\Lambda'} t}  e^{i H t} |E\rangle\! \langle E| e^{-i H t} e^{i H_{\Lambda'} t} - |E\rangle\! \langle E| \right) \right] \right|
\\
&= &
  \max_X \left|   
  \langle E| e^{-i H t} e^{i H_{\Lambda'} t} X e^{-i H_{\Lambda'} t} e^{i H t} - X |E\rangle \right|.
  \label{E9}
\end{eqnarray}
Now we use the inequality $|\langle\alpha| Y |\beta\rangle |\leq \| Y\|_\infty$ for any pair of unit vectors $|\alpha\rangle, |\beta\rangle$. Also, we use the fact that
$[X,H_{\bar\Lambda'}]=[H_{\Lambda'},H_{\bar\Lambda'}]=0$, and define $H_A := H- H_{\bar\Lambda'} -H_{\Lambda'}$. We obtain
\begin{equation}\label{E10}
\left|   
  \langle E| e^{-i H t} e^{i H_{\Lambda'} t} X e^{-i H_{\Lambda'} t} e^{i H t} - X |E\rangle \right|
\ \leq\  
  \left\|   
  e^{-i H t} e^{i (H-H_A) t} X e^{-i (H-H_A) t} e^{i H t} - X\right\|_\infty
\end{equation}
Next, we use the matrix identity $M(t)- M(0) = \int_0^t dt_1 \frac{\partial}{\partial t_1} M(t_1)$, the triangle inequality, and the unitary invariance of the operator norm,
$\| e^{-iHt_1} Y e^{iHt_1} \|_\infty = \| Y\|_\infty$. If $t\geq 0$ then
\begin{eqnarray*}
  \left\|   
  e^{-i H t} e^{i (H-H_A) t} X e^{-i (H-H_A) t} e^{i H t} - X\right\|_\infty  &=&
  \left\| \int_0^t \!\! dt_1 \frac{\partial}{\partial t_1} \left(   
  e^{-i H t_1} e^{i (H-H_A) t_1} X e^{-i (H-H_A) t_1} e^{i H t_1} 
  \right) \right\|_\infty
\\ 
&\leq &
  \int_0^{|t|} \!\! dt_1 \left\|  \left[   
  H_A\, , e^{i H_{\Lambda'} t_1} X e^{-i H_{\Lambda'} t_1}
  \right] \right\|_\infty\ .
\end{eqnarray*}
If $t< 0$, then the substitution $t_2 := -t_1$ in the integral yields
\begin{eqnarray*}  
\left\|   
  e^{-i H t} e^{i (H-H_A) t} X e^{-i (H-H_A) t} e^{i H t} - X\right\|_\infty=
&&
  \left\| \int_0^{|t|} \!\! dt_2 \frac{\partial}{\partial t_2} \left(   
  e^{i H t_2} e^{-i (H-H_A) t_2} X e^{i (H-H_A) t_2} e^{-i H t_2} 
  \right) \right\|_\infty
\\
&\leq &
  \int_0^{|t|} \!\! dt_2 \left\|  \left[   
  H_A\, , e^{-i H_{\Lambda'} t_2} X e^{i H_{\Lambda'} t_2}
  \right] \right\|_\infty\ .
\end{eqnarray*}
In both cases, we can apply the Lieb-Robinson bound to the two regions ${\cal X} = \Lambda$ and $\cal Y$ the support region of $H_A$ (covering the
boundary of $\Lambda'$ and of $\bar\Lambda'$). For all $t\in\R$, we get
\[
  \left\|  \left[   
  H_A\, , e^{i H_{\Lambda'} t} X e^{-i H_{\Lambda'} t}
  \right] \right\|_\infty
\ \leq\  
  \| H_A \|_\infty \min\! \left\{2,
  C A\, e^{-c(l-r) +cv |t|}
  \right\},
\]
which implies
\begin{eqnarray*}
     \left\|e^{-i H t} e^{i (H-H_A) t} X e^{-i (H-H_A) t} e^{i H t} - X\right\|_\infty &\leq& \|H_A\|_\infty \min\left\{\int_0^{|t|} dt_1\cdot 2, \enspace \int_0^{|t|} dt_1\, C A e^{-c(l-r)+cv|t_1|}\right\}\\
     &\leq& \|H_A\|_\infty\min\left\{2|t|,\enspace CA|t| e^{-c(l-r)+cv|t|}\right\}.
\end{eqnarray*}
Combining this with~\eqref{E7}, \eqref{E9}, \eqref{E10}, and dividing the integration \eqref{E7} into two intervals, we get for $t_0\geq 0$
\begin{eqnarray*}
  \left\| {\rm Tr}_{\Lambda_\mathrm{shell}} (\omega_E) - 
  {\rm Tr}_{\bar\Lambda}(|E\rangle\! \langle E|) \right\|_1
  &\leq& 2 \|H_A\|_\infty \int_0^\infty dt\, g(t) \min\{2t,\enspace CA t \, e^{-c(l-r)+cvt}\}\\
&\leq&
  2\| H_A \|_\infty \left( \int_0^{t_0}\!\! dt\, g(t)\, 
   CAt\, e^{-c(l-r) +cvt} + \int_{t_0}^\infty \!\! dt\, g(t)\, 
  2t\right)
\\ 
&\leq &
  2 \| H_A \|_\infty \left( 
  C A\, e^{-c(l-r)+c v t_0} \int_0^\infty dt\, g(t) t + 
  \frac{\sigma}{\sqrt{2\pi}}2\, e^{-t_0^2/(2\sigma^2)} \right)
\\
&\leq &
  2 \| H_A \|_\infty \left( 
  \frac{\sigma}{\sqrt{2\pi}} C A e^{-c(l-r)+cv t_0} + 
  \frac{\sigma}{\sqrt{2\pi}}2\, e^{-t_0^2/(2\sigma^2)} \right).
\end{eqnarray*}
Now choose $t_0:=(l-r)/(2v)$ such that $-c(l-r)+cv t_0 = -t_0^2/(2\sigma^2)$, and use $\sigma^2 = (l-r)/(4cv^2)$.
Furthermore,
\[
   H_A=H_{\Lambda_{\rm lattice}}-H_{\bar\Lambda'}-H_{\Lambda'}=\sum_{X\subset\Lambda_{\rm lattice}:\, X\cap\Lambda'\neq\emptyset\mbox{ and }
   X\cap\bar\Lambda'\neq\emptyset} \Phi(X),
\]
such that $\|H_A\|_\infty\leq A\,J$, where $J=\max_{X\subset\Z^\nu}\|\Phi(X)\|_\infty$, and $A$ is the number of subsets $X$ with $\Phi(X)\neq 0$ that have
non-empty intersection with both $\Lambda'$ and $\bar\Lambda'$.
\end{proof}

\section{Conclusions}
Our work provides a significant step towards a rigorous understanding for how closed quantum systems thermalize.  Our key innovations come from combining 
methods from quantum information theory and from more traditional mathematical physics techniques to address the problem.  Through this approach, we find that small subsystems of closed translation-invariant quantum systems
with finite-range interaction thermalize, in the sense that they relax towards the reduction of the global Gibbs state. In doing so, we not only provide a rigorous explanation for how a wide
class of physically significant Hamiltonians thermalize, but also show that the correct correspondence is with a reduction of the global system's Gibbs state,
not its local Gibbs state.

This work opens a number of interesting avenues for future work. One open problem is to obtain more explicit finite-size bounds, but these may
well depend on details of the specific model or interaction. Similarly, an interesting open question is whether $\omega_E$ in Theorem~\ref{TheETHMain}
has Boltzmann weights on its diagonal.
However, rigorously answering this question in the affirmative, and thus proving a complete version of the eigenstate thermalization hypothesis,
seems to require additional assumptions along the lines of nonintegrability. Thus, one may hope that attempts to prove the ETH for quantum lattice systems
will also lead to a better understanding and rigorous mathematical definition of the notion of integrability in the quantum case. We further believe that the methodology 
we provide will lead to further applications to be discovered in the future.  In particular, it may turn out that giving finite versions of asymptotic mathematical
physics results will prove to be as promising as using asymptotic results to prove statements on finite systems, which was the approach taken in this paper.

\section*{Acknowledgments}
MM would like to thank Jens Eisert, Joe Emerson, Patrick Hayden, and Sandu Popescu for discussions in early stages of this project, and Oscar Dahlsten for comments.
Research at Perimeter Institute is supported by the Government of Canada through Industry Canada and by the Province of Ontario through the Ministry of Research and Innovation.
LM acknowledges support from the EU ERC Advanced Grant NLST (PHYS RQ8784), EU Qessence project, EPSRC and the Templeton Foundation.
This work was partially supported by the COST Action MP1209.

\bigskip

\section{Correction (added March 30, 2021)}
This correction has also been published here:

M.\ P.\ M\"uller, E.\ Adlam, Ll.\ Masanes, and N.\ Wiebe, \emph{Correction to: Thermalization and Canonical Typicality in Translation-Invariant Quantum Lattice Systems}, Commun.\ Math.\ Phys.\ (2021), \href{https://doi.org/10.1007/s00220-021-04014-0}{DOI:10.1007/s00220-021-04014-0}

\bigskip

In our result about dynamical thermalization, the proof of the upper bound on the time average of the distance between the local evolved state $\rho^{(n)}(t)$ and the time-averaged state $\rho_{\rm avg}^{(n)}$ is wrong. While it is correct that this distance tends to zero for block size $|\Lambda_n|\to\infty$ (see corrected proof below), it is unclear whether it can be shown that this happens \emph{exponentially fast} in $|\Lambda_n|$. This affects Theorem 31, and hence also Theorem 3 (the summary of Theorem 31) and Theorem 33 (a small modification of Theorem 31).

This mistake is due to an error in Ref.~[C3] which we have used in our proof of Lemma 30. Ref.~[C3] claims that the R\'enyi entropy $H_q$ is convex in its parameter $q$, which is incorrect. This claim has been corrected in an erratum published on the author's homepage~[C4], but we became aware of this only recently.\\

We give a corrected version of Theorem 31 of our paper~[C1] in Theorem~\ref{TheCorrection} below. Its summary (and hence the correction of Theorem 3 of our paper) reads as follows.

\begin{theorem}[Correction of~{[C1, Theorem 3]}]
\label{TheMainCorrection}
If there is a unique equilibrium state around inverse temperature $\beta:=\lim_{n\to\infty}\beta_n$, if the (possibly pure) initial state has close to maximal population entropy, in the sense that
\[
   \bar S(\rho_0^{(n)})\geq S(\gamma_{\Lambda_n}^p (\beta_n))-o(|\Lambda_n|),
\]
and if each $H_{\Lambda_n}^p$ is non-degenerate with uniformly bounded gap degeneracy $\sup_n D_G(H_{\Lambda_n}^p)<\infty$, then unitary time evolution thermalizes the subsystem $\Lambda$ for most times $t$:
\begin{eqnarray*}
\left\langle \left\| {\rm Tr}_{\Lambda_n\setminus\Lambda} \rho^{(n)}(t)-{\rm Tr}_{\Lambda_n\setminus\Lambda} \frac{\exp(-\beta_n H_{\Lambda_n}^p)}{Z_n}\right\|_1\right\rangle &\stackrel{n\to\infty}\longrightarrow & 0.
\end{eqnarray*}
The gap degeneracy~[C5] is defined as $D_G(H_{\Lambda_n}^p):=\max_E|\{(i,j)\,\,|\,\, i\neq j, E_i-E_j=E\}|$, with $E_i$ the eigenvalues of $H_{\Lambda_n}^p$.
\end{theorem}

This formulation differs from the old one in the following two ways. First, it does not give concrete bounds on the time-averaged distance between $\rho^{(n)}(t)$ and its time average (it only says that this distance tends to zero for $n\to\infty$); second, it presumes that the gap degeneracy is uniformly bounded.

To prove its formal version (Theorem~\ref{TheCorrection} below), we need two elementary lemmas.
\begin{lemma}
\label{LemStrictConcavity}
Let $\Phi$ be a translation-invariant finite-range interaction which is not physically equivalent to zero, and let $\bar u$ be some energy density for which there is a unique Gibbs state at inverse temperature $\beta(\bar u)$. Then the real function $u\mapsto s(u)$ defined in [C1, Lemma 9] is strictly concave at $\bar u$ in the following sense: If $\bar u =\lambda u_0+(1-\lambda)u_1$ for some $u_0<u_1$ and $\lambda\in (0,1)$ then $s(\bar u)>\lambda s(u_0)+(1-\lambda)s(u_1)$.
\end{lemma}
\begin{proof}
Let $u_0<u_1$ and $u=\lambda u_0+(1-\lambda)u_1$ for some $\lambda\in (0,1)$. Let $\omega_{\beta(u_0)}$ be an arbitrary Gibbs state with energy density $u_0$ at inverse temperature $\beta(u_0)$, and similarly $\omega_{\beta(u_1)}$. Set $\omega:=\lambda\omega_{\beta(u_0)}+(1-\lambda)\omega_{\beta(u_1)}$, a translation-invariant state. Since the entropy density is affine on the translation-invariant states ([C2, Thm.\ IV.2.4]), we have
\[
   s(\omega)=\lambda\, s(\omega_{\beta(u_0)})+(1-\lambda)s(\omega_{\beta(u_1)})=\lambda s(u_0)+(1-\lambda) s(u_1).
\]
By construction, $u(\omega)=u$. Thus, due to~[C1, Lemma 9], we have $s(\omega)\leq s(u)$, hence $u\mapsto s(u)$ is concave.

Let us now apply the previous argumentation to the special case $u:=\bar u$, an energy density with a unique Gibbs state. Suppose that $s(\bar u)=s(\omega)$. Then the variational principle ([C1, Definition 6]) implies that $\omega$ is a Gibbs state at inverse temperature $\beta(\bar u)$. But the set of Gibbs states at inverse temperature $\beta(\bar u)$ is a face of the set of all translation-invariant states [C2, p.\ 348], hence $\omega_{\beta(u_0)}$ and $\omega_{\beta(u_1)}$ must both be Gibbs states at inverse temperature $\beta(\bar u)$, too. But these are distinct states, since they have different energy densities, contradicting the uniqueness of the Gibbs state at $\beta(\bar u)$. Therefore $s(\bar u)>s(\omega)$, and we get the statement of strict concavity as claimed.
\end{proof}

\begin{lemma}
\label{LemSmallOverlap}
Let $\Phi$ be a translation-invariant finite-range interaction which is not physically equivalent to zero. Suppose that the maximal energy degeneracy of $H_{\Lambda_n}^p$ grows at most subexponentially in $|\Lambda_n|$, i.e.\ $\log \max\{\tr(\pi_i^{(n)})\}=o(|\Lambda_n|)$, where $(\pi_i^{(n)})_i$ denotes the eigenprojectors of $H_{\Lambda_n}^p$. Let $(\rho^{(n)})_{n\in\mathbb{N}}$ be any sequence of $\Lambda_n$-translation-invariant states with
\[
   [\rho^{(n)},H_{\Lambda_n}^p]=0,\quad S(\rho^{(n)})\geq s\cdot |\Lambda_n|+o(|\Lambda_n|), \quad {\rm tr}(\rho^{(n)} H_{\Lambda_n}^p)=u\cdot |\Lambda_n|+o(|\Lambda_n|),
\]
where $u\in(u_{\min}(\Phi),u_{\max}(\Phi))$ is an energy density such that there is a unique Gibbs state at inverse temperature $\beta(u)$, and $s=s(u)$. Then $\max_i \tr(\rho^{(n)}\pi_i^{(n)})\stackrel{n\to\infty}\longrightarrow 0$.
\end{lemma}
\begin{proof}
We can write $u$ as some convex combination of two distinct energy densities in a small neighborhood of $u$, and then Lemma~\ref{LemStrictConcavity} implies that $s=s(u)>0$. Let us now argue by contradiction. Suppose that $\lambda^{(n)}:=\max_i \tr(\rho^{(n)}\pi_i^{(n)})$ does not converge to zero. Decompose the state $\rho^{(n)}$ as follows:
\begin{equation}
\label{eqDecomp}	
   \rho^{(n)}=\lambda^{(n)}\tau^{(n)}+(1-\lambda^{(n)})\sigma^{(n)},
\end{equation}
where $\tau^{(n)}=\pi_i^{(n)}\rho^{(n)}\pi_i^{(n)}/\lambda^{(n)}$ (note that $\lambda^{(n)}>0$), with $\pi_i^{(n)}$ the maximizing projector. If $\lambda^{(n)}\neq 1$, define $\sigma^{(n)}:=\bar\pi_i^{(n)}\rho^{(n)}\bar\pi_i^{(n)}/(1-\lambda^{(n)})$, where $\bar\pi_i^{(n)}:=\mathbf{1}-\pi_i^{(n)}$; if $\lambda^{(n)}=1$, set $\sigma^{(n)}=\bar\pi_i^{(n)}/\tr(\bar\pi_i^{(n)})$ (if $n$ is large enough, then $\pi_i^{(n)}\neq\mathbf{1}$, hence this is well-defined). It follows that $\tau^{(n)}$ and $\sigma^{(n)}$ are mutually orthogonal $\Lambda_n$-translation-invariant states that commute with $H_{\Lambda_n}^p$.

The sequences of real numbers $S(\sigma^{(n)})/|\Lambda_n|$, ${\rm tr}(\sigma^{(n)}H_{\Lambda_n}^p)/|\Lambda_n|$, ${\rm tr}(\tau^{(n)}H_{\Lambda_n}^p)/|\Lambda_n|$ and $\lambda^{(n)}$ are all bounded (the latter sequence bounded away from zero by assumption). Thus, we can find a subsequence $(n_k)_{k\in\N}$ such that
\[
   \lambda^{(n_k)}\stackrel{k\to\infty}\longrightarrow \delta>0,\quad\frac 1 {|\Lambda_{n_k}|} S(\sigma^{(n_k)}) \stackrel{k\to\infty}\longrightarrow s_1,\quad \frac 1 {|\Lambda_{n_k}|} \tr(\tau^{(n_k)}H_{\Lambda_{n_k}}^p)\stackrel{k\to\infty}\longrightarrow u_0,\quad \frac 1 {|\Lambda_{n_k}|} \tr(\sigma^{(n_k)}H_{\Lambda_{n_k}}^p)\stackrel{k\to\infty}\longrightarrow u_1,
\]
where $s_1$, $u_0$, $u_1$ are real numbers, and $0<\delta\leq 1$. Due to~(\ref{eqDecomp}), computing the von Neumann entropy, we have $S(\rho^{(n_k)})=\lambda^{(n_k)} S(\tau^{(n_k)})+(1-\lambda^{(n_k)})S(\sigma^{(n_k)})+\mathcal{O}(1)$. Since $S(\tau^{(n_k)})\leq\log\tr(\pi_i^{(n_k)})=o(|\Lambda_{n_k}|)$, this implies $s\leq (1-\delta)s_1$. Thus, $s>0$ yields $\delta<1$. Similarly, computing the energy expectation value, we obtain $u=\delta u_0+(1-\delta)u_1$. 

Suppose that $s_1\geq s(u_1)$, then $s_1-\beta u_1\geq p(\beta,\Phi)$ for $\beta:=\beta(u_1)$, hence~[C1, Lemma 8] implies that we must have equality, i.e.\ $s_1=s(u_1)$. In summary, we conclude that $s_1\leq s(u_1)$. Therefore
\[
   s(u)=s\leq (1-\delta)s_1 \leq \delta\, s(u_0)+(1-\delta)s(u_1).
\]
Since $s$ is strictly concave at $u$ due to Lemma~\ref{LemStrictConcavity} above, this is only possible if $u_0=u_1=u$. Hence
\[
   0<s(u)\leq (1-\delta)s_1 \leq (1-\delta)s(u_1)=(1-\delta)s(u)
\]
which is a contradiction.
\end{proof}
This allows us to obtain a corrected version of~[C1, Theorem 31].
\begin{theorem}[{Correction of~{[C1, Theorem 31]}: Thermalization, periodic boundary conditions}]
\label{TheCorrection}
Let $\Phi$ be a translation-invariant finite-range interaction which is not physically equivalent to zero. Suppose that the maximal energy degeneracy of $H_{\Lambda_n}^p$ grows at most subexponentially in $|\Lambda_n|$, i.e.\ $\log \max\{\tr(\pi_i^{(n)})\}=o(|\Lambda_n|)$, where $(\pi_i^{(n)})_i$ denotes the eigenprojectors of $H_{\Lambda_n}^p$, and $\sup_n D_G(H_{\Lambda_n}^p)<\infty$. Let $(\rho_0^{(n)})_{n\in\N}$ be some sequence of initial states on $\Lambda_n$ which have energy expectation value $U_n:=\tr(\rho_0^{(n)}H_{\Lambda_n}^p)$ with density $U_n/|\Lambda_n|$ converging to some value $u\in (u_{\min}(\Phi),u_{\max}(\Phi))$ as $n\to\infty$, such that there is a unique Gibbs state around inverse temperature $\beta(u)$.

Define the `population entropy'' $\bar S(\rho_0^{(n)}):=S(\lambda_1,\ldots,\lambda_N)$, where $S$ is Shannon entropy, and $\lambda_i:=\tr(\rho_0^{(n)}\pi_i^{(n)})$ is the probability that the $i$-th level is populated. Suppose that for every $n$ large enough, either $H_{\Lambda_n}^p$ is non-degenerate or every $\pi_i^{(n)}\rho_0^{(n)}\pi_i^{(n)}$ is $\Lambda_n$-translation-invariant. Then, determine the inverse temperature $\beta_n$ for which
\[
   \tr(H_{\Lambda_n}^p \gamma_{\Lambda_n}^p(\beta_n))=U_n,\quad \mbox{where }\gamma_{\Lambda_n}^p(\beta_n):=\frac{\exp(-\beta_n H_{\Lambda_n}^p)}{Z_n}.
\]
If the initial states have close to maximal population entropy in the sense that
\[
   \bar S(\rho_0^{(n)})\geq S(\gamma_{\Lambda_n}^p(\beta_n))-o(|\Lambda_n|),
\]
then unitary time evolution $\rho^{(n)}(t):=\exp(-itH_{\Lambda_n}^p)\rho_0^{(n)}\exp(it H_{\Lambda_n}^p)$ thermalizes the subsystem $\Lambda_m$ for most times $t$:
\[
   \lim_{n\to\infty}\left\langle \left\| {\rm Tr}_{\Lambda_n\setminus\Lambda_m} \rho^{(n)}(t)-{\rm Tr}_{\Lambda_n\setminus\Lambda_m} \frac{\exp(-\beta_n H_{\Lambda_n}^p)}{Z_n}\right\|_1\right\rangle =0,
\]
where $Z_n=\tr(\exp(-\beta_n H_{\Lambda_n}^p))$, and $\langle\cdot\rangle$ denotes the average over all times $t\geq 0$. Furthermore, in this statement, $\beta_n$ can be replaced by $\beta:=\beta(u)$.
\end{theorem}
\begin{proof}
The only ingredient in the proof of~[C1, Theorem 31] that has to be corrected is the argument that lower-bounds the ``effective dimension'' $d_{\rm eff}$. The old proof erroneously claimed that $d_{\rm eff}$ grows exponentially in $|\Lambda_n|$, but this relied on a wrong claim about the R\'enyi entropy of Ref.~[C3]. We now give a simple alternative argument which makes use of the R\'enyi entropy $S_\infty(\lambda_1,\ldots,\lambda_N)=-\log\max_i\lambda_i$ and the inequality $S_2\geq S_\infty$~[C4]. Namely,
\[
   d_{\rm eff}=\exp(S_2(\lambda_1,\ldots,\lambda_N))\geq \exp(S_\infty(\lambda_1,\ldots,\lambda_N))=\left(\max_i\lambda_i\right)^{-1}\stackrel{n\to\infty}\longrightarrow \infty
\]
according to Lemma~\ref{LemSmallOverlap} above, applied to the sequence of states $\bar\rho_0^{(n)}=\sum_i \pi_i^{(n)}\rho_0^{(n)}\pi_i^{(n)}$. Since we have assumed that the gap degeneracy is uniformly bounded, this is enough to show that $\rho^{(n)}(t)$ is close to its time average for most times $t$ if $n$ is large. The rest of the proof works without modification (note that $\rho(\beta_n)$ should read $\gamma_{\Lambda_n}^p(\beta_n)$).
\end{proof}
Finally, [C1, Theorem 33] has to be corrected analogously. We omit the obvious details.

\bigskip
\textbf{Acknowledgments.} We are grateful to Henrik Wilming for pointing out the mistake in Lemma 30 of the old version, and for further helpful discussions.

\noindent\makebox[\linewidth]{\resizebox{0.3333\linewidth}{1pt}{$\bullet$}}\bigskip

\begin{tabular}{p{0.05\textwidth}p{0.9\textwidth}}
[C1] & M.\ P.\ M\"uller, E.\ Adlam, Ll.\ Masaanes, and N.\ Wiebe, \emph{Thermalization and Canonical Typicality in Translation-Invariant Quantum Lattice Systems}, Commun.\ Math.\ Phys.\ \textbf{340}, 499--561 (2015).\\
\strut [C2] & B.\ Simon, \emph{The Statistical Mechanics of Lattice Gases, Vol.\ 1}, Princeton University Press, Princeton, 1993.\\
\strut [C3] & K.\ \.{Z}yczwkoski, \emph{R\'enyi extrapolation of Shannon entropy}, Open Sys.\ Inf.\ Dyn.\ \textbf{10}, 297--310 (2003).\\
\strut [C4] & K.\ \.{Z}yczwkoski, \emph{R\'enyi extrapolation of Shannon entropy}, corrigendum, \url{http://www.cft.edu.pl/~karol/pdf/Zy03b.pdf}, 2005.\\
\strut [C5] & A.\ J.\ Short and T.\ C.\ Farrelly, \emph{Quantum equilibration in finite time}, New J.\ Phys.\ \textbf{14}, 013063 (2012).
\end{tabular}


\begin{thebibliography}{99}

\bibitem{Trotzky} S.\ Trotzky, Y-A.\ Chen, A.\ Flesch, I.\ P.\ McCulloch, U.\ Schollw\"ock, J.\ Eisert, and I.\ Bloch, \emph{Probing the relaxation
towards equilibrium in an isolated strongly correlated one-dimensional Bose gas}, Nat.\ Phys.\ \textbf{8}, 325 (2012).
\bibitem{Banuls} M.\ C.\ Ba\~nuls, J.\ I.\ Cirac, and M.\ B.\ Hastings, \emph{Strong and Weak Thermalization of Infinite Nonintegrable Quantum Systems},
Phys.\ Rev.\ Lett.\ \textbf{106}, 050405 (2011).
\bibitem{Goldstein} S.\ Goldstein, J.\ L.\ Lebowitz, R.\ Tumulka, and N.\ Zanghi, \emph{Canonical Typicality}, Phys.\ Rev.\ Lett.\ \textbf{96}, 050403 (2006).
\bibitem{Popescu} S.\ Popescu, A.\ J.\ Short, and A.\ Winter, \emph{Entanglement and the foundations of statistical mechanics}, Nature Physics \textbf{2}, 754 (2006).
\bibitem{Reimann} P.\ Reimann, \emph{Foundation of Statistical Mechanics under Experimentally Realistic Conditions}, Phys.\ Rev.\ Lett.\ \textbf{101}, 190403 (2008).
\bibitem{LindenPopescu} N.\ Linden, S.\ Popescu, A.\ J.\ Short, and A.\ Winter, \emph{Quantum mechanical evolution towards
thermal equilibrium}, Phys.\ Rev.\ E \textbf{79}, 061103 (2009).
\bibitem{Short} A.\ J.\ Short, \emph{Equilibration of quantum systems and subsystems}, New J.\ Phys.\ \textbf{13}(5), 053009 (2011).
\bibitem{ShortFarrelly} A.\ J.\ Short and T.\ C.\ Farrelly, \emph{Quantum equilibration in finite time}, New J.\ Phys.\ \textbf{14}, 013063 (2012).
\bibitem{UWE13} C.\ Ududec, N.\ Wiebe and J. Emerson, \emph{Information-theoretic equilibration: the appearance of irreversibility under complex quantum dynamics, Phys.\ Rev.\ Lett. \textbf{111}, 080403 (2013).}
\bibitem{Riera} A.\ Riera, C.\ Gogolin, and J.\ Eisert, \emph{Thermalization in nature and on a quantum computer}, Phys.\ Rev.\ Lett.\ \textbf{108},
080802 (2012).
\bibitem{Lima1} R.\ Lima, \emph{Equivalence of ensembles in quantum lattice systems}, Annales de l'I.\ H.\ P.\ \textbf{15} (1), 61--68 (1971).
\bibitem{Lima2} R.\ Lima, \emph{Equivalence of Ensembles in Quantum Lattice Systems: States}, Commun.\ Math.\ Phys.\ \textbf{24}, 180--192 (1972).
\bibitem{Simon} B.\ Simon, \emph{The Statistical Mechanics of Lattice Gases, Vol.\ 1}, Princeton University Press, Princeton, 1993.
\bibitem{BratteliRobinson} O.\ Bratteli and D.\ W.\ Robinson, \emph{Operator Algebras and Quantum Statistical Mechanics I and II}, Springer, New York, 2002.
\bibitem{Araki69} H.\ Araki, \emph{Gibbs States of a One Dimensional Quantum Lattice}, Commun.\ Math.\ Phys.\ \textbf{14}, 120--157 (1969).
\bibitem{Araki74} H.\ Araki, \emph{On the Equivalence of the KMS Condition and the Variational Principle for Quantum Lattice Systems},
Commun.\ Math.\ Phys.\ \textbf{38}, 1--10 (1974).
\bibitem{Araki75} H.\ Araki, \emph{On Uniqueness of KMS States of One-dimensional Quantum Lattice Systems}, Commun.\ Math.\ Phys.\ \textbf{44}, 1--7 (1975).
\bibitem{Kliesch} M.\ Kliesch, C.\ Gogolin, M.\ J.\ Kastoryano, A.\ Riera, and J.\ Eisert, \emph{Locality of temperature}, arXiv:1309.0816.
\bibitem{deRoeck} W.\ De Roeck, C.\ Maes, and K.\ Neto\v{c}n\'y, \emph{Quantum Macrostates, Equivalence of Ensembles and an H-Theorem},
J.\ Math.\ Phys.\ \textbf{47}, 073303 (2006).
\bibitem{BhatiaPerturb} R.\ Bhatia, \emph{Perturbation Bounds for Matrix Eigenvalues}, SIAM Classics in Applied Mathematics, 2007.
\bibitem{CoverThomas} T.\ M.\ Cover and J.\ A.\ Thomas, \emph{Elements of Information Theory}, Wiley, 2006.
\bibitem{Diaconis} P.\ Diaconis and D.\ Freedman, \emph{Finite Exchangeable Sequences}, The Annals of Probability \textbf{8} (4), 745--764 (1980).
\bibitem{Deserno} M.\ Deserno, \emph{Microcanonical and canonical two-dimensional Ising model: An example},
http://www.cmu.edu/biolphys/deserno/pdf/microcan.pdf
\bibitem{Huang} K.\ Huang, \emph{Statistical Mechanics}, John Wiley \& Sons, 1987.
\bibitem{Brandao} F.\ G.\ S.\ L.\ Brand\~ao, A.\ W.\ Harrow, and M.\ Horodecki, \emph{Local random quantum circuits are approximate polynomial-designs},
arXiv:1208.0692.
\bibitem{Low} R.\ A.\ Low, \emph{Large deviation bounds for $k$-designs}, Proceedings of the Royal Society A: Mathematical, Physical and Engineering
Science \textbf{465}(2111), 3289-3308 (2009).
\bibitem{AmbainisEmerson} A.\ Ambainis and J.\ Emerson, \emph{Quantum $t$-designs: $t$-wise independence in the quantum world},
49th Annual IEEE Symposium on Foundations of Computer Science, 0:813-822 (2008).
\bibitem{Petz} D.\ Petz, \emph{Quantum Information Theory and Quantum Statistics}, Springer, Berlin-Heidelberg, 2008.
\bibitem{Zyczkowski} K.\ Zyczkowski, \emph{R\'enyi Extrapolation of Shannon Entropy}, Open Sys.\ \& Inf.\ Dyn.\ \textbf{10}, 297--310 (2003).
\bibitem{LiebRobinson} E.\ H.\ Lieb and D.\ W.\ Robinson, \emph{The Finite Group Velocity of Quantum Spin Systems},
Commun.\ Math.\ Phys.\ \textbf{28}, 251--257 (1972).
\bibitem{Bravyi} S.\ Bravyi, M.\ B.\ Hastings, and F.\ Verstraete, \emph{Lieb-Robinson Bounds and the Generation of Correlations
and Topological Quantum Order}, Phys.\ Rev.\ Lett.\ \textbf{97}, 050401 (2006).
\bibitem{Masanes} Ll.\ Masanes, \emph{Area law for the entropy of low-energy states}, Phys.\ Rev.\ A \textbf{80}, 052104 (2009).
\bibitem{Deutsch} J.\ M.\ Deutsch, \emph{Quantum statistical mechanics in a closed system}, Phys.\ Rev.\ A \textbf{43}, 2046 (1991).
\bibitem{Srednicki} M.\ Srednicki, \emph{Chaos and Quantum Thermalization}, Phys.\ Rev.\ E \textbf{50}, 888 (1994).

\end{thebibliography}
\end{document}